    \newcommand{\overbar}[1]{\mkern 1.5mu\overline{\mkern-1.5mu#1\mkern-1.5mu}\mkern 1.5mu}
	\newlength\longest
	\renewcommand{\subsectionmark}[1]{}
\tiny\color{gray},  
    \pgfplotsset{
      compat=newest,
      xlabel near ticks,
      ylabel near ticks
    }
    \definecolor{Gray}{gray}{0.9}
    \title{\textbf{Pricing of Mexican Interest Rate Swaps in Presence of Multiple Collateral Currencies}}
    \author{Jorge Íñigo Martínez\thanks{Email: \texttt{inigo\_89@ciencias.unam.mx}}\\
            Maestría en Finanzas Cuantitativas\\
            Facultad de Ciencias Actuariales\\
            Universidad Anáhuac\thanks{Av. Universidad Anáhuac 46, Lomas Anáhuac, 52786 Huixquilucan, México.}\\
            \\
            Supervisor: Jos\'e Luis Manrique Medina, MSc\thanks{Universidad Anáhuac México Norte, email: \texttt{luis.manrique.medina@gmail.com}}}
    \date{March 2017}
\begin{document}

    	  \newdimen\origiwspc%
          \newdimen\origiwstr%
          \origiwspc=\fontdimen2\font
          \origiwstr=\fontdimen3\font
          
          \fontdimen2\font=\origiwspc
          \fontdimen3\font=0.1em
    	\pagenumbering{roman}

    	\thispagestyle{empty}
        \maketitle
        \thispagestyle{empty}

        \renewcommand{\contentsname}{Contents}
    	
    	\renewcommand{\figurename}{Figure}
    	\renewcommand{\tablename}{Table}
    	\renewcommand\bibname{Bibliography}
    	\newtheorem{teo}{Theorem}[section]
    	\newtheorem{prop}{Proposition}[section]
    	\newtheorem{cor}{Corollary}[section]
    	\newtheorem{lema}{Lemma}[section]

        \theoremstyle{definition}
        \newtheorem{defi}{Definition}[section]
    	\newtheorem{nota}{Notation}[section]
    	\newtheorem{rem}{Remark}[section]
    	\newtheorem{assum}{Assumption}[section]
    	\newtheorem{example}{Example}[section]
        
    	\numberwithin{equation}{section}

	
    
    \newpage
    \thispagestyle{empty}
    
    \null\vspace{\fill}
    \begin{abstract}
    The financial crisis of 2007/08 caused catastrophic consequences and brought a bunch of changes around the world. Interest rates that were known to follow or behave similarly of each other diverged. Furthermore, the regulation and in particular the counterparty credit risk began to to be considered and quantified. Consequently, pre-crisis models are no longer valid. Indeed, this work sets the basis to define a valid model that considers the post-crisis world assumptions for the Mexican swap market. The model used in this work was the proposed by Fujii, Shimada and Takahashi in \cite{fujii2010note}. This model allow us to value interest rate derivatives and future cash flows with the existence of a collateral agreement (with a collateral currency). In this document we build the discounting and projection curves for MXN interest rate derivatives considering the collateral currencies: USD, EUR and MXN. Also, we present the pricing when the derivative is uncollateralized. Finally, we show the effect of the cross-currency swaps when valuing through different collateral currencies.
    \end{abstract}
    \noindent \textbf{Keywords:} interest rate swap, cross-currency swap, overnight index swap, collateral, discount curve, forward curve, TIIE, LIBOR, fed funds rate\par \vspace{1.5cm}
    \renewcommand{\abstractname}{Resumen}
    \begin{abstract}
    La crisis financiera del 2007-2008 trajo consigo varias consecuencias en el mundo de las finanzas. En particular varios niveles de tasas de interés y \emph{spreads} dejaron de comportarse de la forma que solían hacerlo. Además desde este evento la regulación y en particular el riesgo crediticio tomó mayor énfasis a la hora de definirlo y cuantificarlo. En consecuencia, los modelos  de valuación de derivados, usados antes de la crisis, dejaron de ser válidos. Este trabajo tiene como objetivo definir un modelo de valuación coherente que considere los supuestos del mundo actual. El modelo usado para definir la valuación de productos denominados en pesos (MXN) es el presentado por Fujii, Shimada y Takahashi en \cite{fujii2010note}. De hecho, este modelo demuestra que la moneda del colateral define la forma de descontar flujos de efectivo futuros en un mundo realista. En este documento se construyen las curvas para descontar flujos en pesos (MXN) cuando la moneda de colateral es: dólar americano (USD), euro (EUR) y pesos (MXN). Además, se presenta el caso de valuación cuando los swaps o en particular cualquier flujo de efectivo no tiene colateral. Finalmente se presenta un análisis de los factores que afectan a la construcción de curvas como los son los swaps de divisas.
    \end{abstract}
    \noindent \textbf{Palabras clave:} swap de tasa de interés, swap de divisas, swap de tasa de interés a un día, colateral, curva de descuento, curva de proyección, TIIE, LIBOR, tasa de fondos federales 
    \vspace{\fill}

    \newpage
    \thispagestyle{myplain}
    \renewcommand{\cftpartleader}{\cftdotfill{\cftdotsep}} 
	\renewcommand{\cftsecleader}{\cftdotfill{\cftdotsep}}
    \tableofcontents
    \thispagestyle{myplain}
    
    \newpage
    \thispagestyle{myplain}
    \addcontentsline{toc}{section}{List of Figures}
    \listoffigures
    \thispagestyle{myplain}
    
    \newpage
    \thispagestyle{myplain}
    \addcontentsline{toc}{section}{List of Tables}
    \listoftables
    \thispagestyle{myplain}

    \newpage
    \thispagestyle{myplain}
    \addcontentsline{toc}{section}{List of Abbreviations}
    \section*{List of Abbreviations}
    \noindent BS: Black and Scholes\par \medskip
    \noindent EUR: Euro code (ISO 4217)\par \medskip
    \noindent USD: United States Dollar code (ISO 4217)\par \medskip
    \noindent MXN: Mexican Peso code (ISO 4217)\par \medskip
    \noindent JPY: Japanese Yen code (ISO 4217)\par \medskip
    \noindent FX: Foreign Exchange\par \medskip
    \noindent IRS: Interest Rate Swap\par \medskip
    \noindent TS: Tenor Swap\par \medskip
    \noindent OIS: Overnight Index Swap\par \medskip
    \noindent FFS: Federal Funds Swap\par \medskip
    \noindent XCS: Cross-Currency Swap\par \medskip
    \noindent cnXCS: Constant Notional Cross-Currency Swap\par \medskip
    \noindent mtmXCS: Mark-to-Market Cross-Currency Swap\par \medskip
    \noindent OTC: Over The Counter\par \medskip
    \noindent IBOR: Interbank Offered Rate\par \medskip
    \noindent LIBOR: London Interbank Offered Rate\par \medskip
    \noindent EURIBOR: European Interbank Offered Rate\par \medskip
    \noindent TIIE: Tasa de Interés Interbancaria de Equilibrio\par \medskip
    \noindent PV: Net Present Value\par \medskip
    \noindent ISDA: International Swaps and Derivatives Association\par \medskip
    \noindent CSA: Credit Support Annex\par \medskip
    \noindent CVA: Credit Valuation Adjustment\par \medskip
    \thispagestyle{myplain}
	
   
    \newpage
	\pagenumbering{arabic}
    \section{Introduction}
    Since the publication of the well-known and famous paper \cite{black1973pricing}, the theory regarding the pricing of derivative securities has been developed through the time using this seminal paper as a main basis. Among all the papers published by these authors, Fischer Black and Myron Scholes, between 1973 and 1977, there were many assumptions that \emph{simplified}\footnote{The word \emph{simplify} is used in an academic context, i.e. in a critical and not in a derogatory way.} their Black-Scholes (BS) model \cite{hens2010financial}, such as:
    \begin{itemize}[noitemsep]
    \item Trading in the assets is continuous in time
    \item The market is arbitrage free
    \item There is a constant \emph{risk-free} rate for which banks can borrow and lend money (no limit amount!)
    \item There are no short-selling constraints
    \item There are no frictions, like transaction costs and taxes
    \item There is no dividend payments
    \item Neither counterparty to the transaction is at risk of default
    \end{itemize}
    It is important to point out that even in the mid-1970s, Black and Scholes were aware that their assumptions did (and do) not reflect the financial markets reality. Since then, these assumptions have been weakened by researchers with post-BS model papers suggesting modifications of the BS formula. For instance, in 1973, Robert C. Merton \cite{merton1973theory} removed the restriction of constant interest rates; in 1977, Jonathan Ingersoll \cite{ingersoll1977contingent} relaxed the assumption of no taxes and transaction costs; in 1979, John Cox, Stephen Ross and Mark Rubinstein \cite{cox1979option} presented a model that incorporates the timing and size of dividend payments.\par\smallskip
    Aside from these papers with variations of the BS model, derivatives valuation theory has been an active and popular research topic in financial engineering for both industry and academia. Indeed, many banks and financial institutions have been investing large amounts of money in research and development of software used for numerical calculations and simulations for pricing and risk management of derivatives. However, another major event ---and perhaps the most important--- that marked a milestone in the history of derivatives valuation was the Lehman Brothers collapse in 2008.\par\smallskip
    The financial crisis in 2007-2008 arose problems in many latitudes such as public policy, monetary policy and regulation of financial markets, without mentioning the historical plunge of stock markets and the paralysis of credit markets. Financial engineering was not exempt of the crisis. In fact, derivatives valuation frameworks entered on a process of changes in methodology and assumptions. To illustrate some of these changes we will present how the typical variables, considered for pricing a plain vanilla interest rate swap\footnote{Interest rate swaps (IRS) can be divided in three main types; plain vanilla IRS where two parties exchange fixed and floating rate payments, the tenor swap (TS) where they exchange interest rate payments based on different floating reference rates of the same currency and the cross-currency swap (XCS) where they exchange floating interest rate payments in two different currencies. See section \ref{sec:IRproducts}.} (IRS) in a pre-crisis world, have changed notably in a post-crisis world. This comparison was mostly taken from \cite{green2015xva} among other references cited below.\par\smallskip
    \subsection{IRSs in a Pre-crisis World}
    Before we analyze the main characteristics of IRSs in a pre-crisis world, let us explain briefly why swaps have become popular instruments. Around the world, companies and institutions typically own debt linked to interest rates (fixed or floating). Since there exist a lot of uncertainty about future interest rates levels, swaps allow them to hedge their interest rate exposures by exchanging a fixed rate for a floating rate, typically an Ibor\footnote{By Ibor (Interbank offered rate) we mean any reference rate which is fixed in a way similar to LIBOR, EURIBOR, TIBOR, TIIE, etc. See section \ref{section:iborrates}} rate, or vice versa. To illustrate this we present the following example in which a company uses an IRS to hedge its position.
    \begin{example}\textbf{(Example of using IRSs to hedge a loan)} Suppose a car manufacturer in Mexico enters into a loan offered by the Bank B. This loan has to be fully paid in 5 years and every month the manufacturer has to pay interest of the remaining capital. These interests are referenced to TIIE 28d plus 50 basis points due to the credit risk of the manufacturer. Companies (in particular the car manufacturer) do not like uncertainty, and even less uncertainty that is outside their core business. As a result, the car manufacturer will be happy to hedge out this interest rate risk, so it enters into a payer swap with Bank C in which the car manufacturer pays a fixed interest rate, say 7\%, and receives the floating rate (TIIE 28d) plus the 50 basis points to fulfill its debt obligation (interests of the loan with Bank B) see Figure \ref{swapexample}.
    \begin{figure}[h]
        \centering
        \vspace{-2mm}
        \includegraphics[scale=0.82]{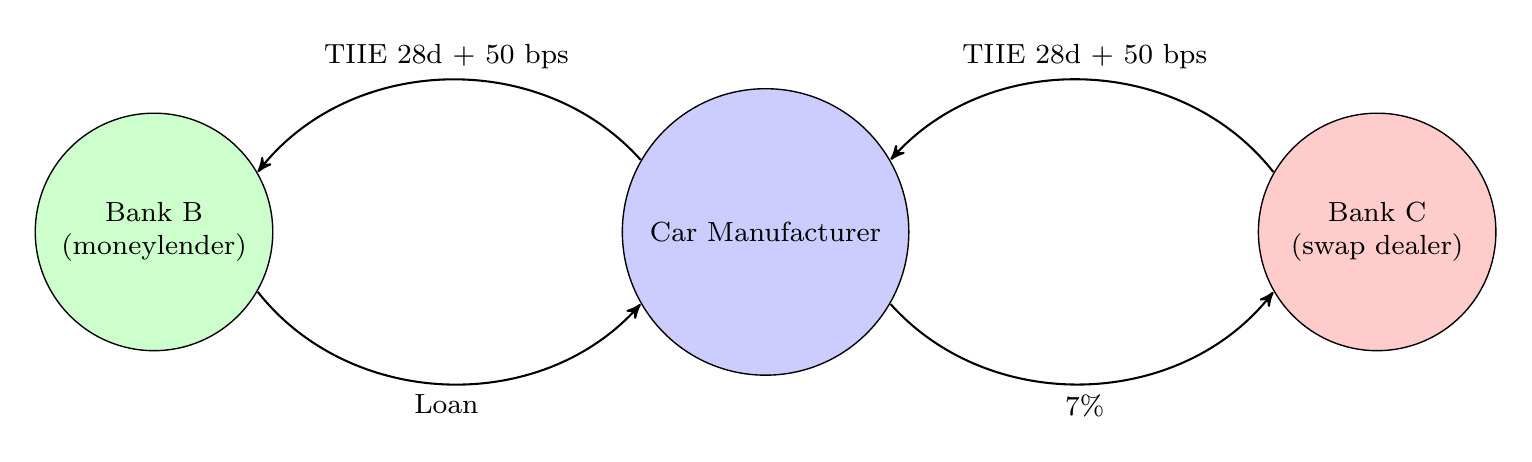}
        \caption[Example of an IRS]{Example of an IRS.}
        \label{swapexample}
        \vspace{-2mm}
     \end{figure}
     In summary, the car manufacturer will be happy because it knowns that it has to pay a 7\% annual rate regardless of what happens to the reference rate TIIE 28d during the term of the loan.
    \end{example}
    In a pre-crisis world, the traditional reference papers for pricing IRSs were \cite{bicksler1986economic} and \cite{miron1991pricing}, without mentioning the classical textbook \cite{hull1997options}. The general approach for pricing IRSs in mid-1990s was done as follows: every future cash flow was discounted using a discount curve and floating cash flows were projected using a forward curve. One motivation behind this approach was the assumption of the existence of a unique \emph{risk-free} rate at which one could borrow and lend any amount of money \cite{piterbarg2010funding}. In fact, in this framework the forward curve is totally defined from the discount curve which in turn is determined by a yield curve\footnote{In this work, when we use the term yield curve we are referring to a zero rate curve, i.e. the zero rate implied from the injective mapping $T \mapsto P(t,T)$ where $P(t,T)$ is the discount factor or zero coupon bond (see section \ref{sec:backtobasics}).}. The construction of the yield curve is done by using a simple bootstrap based on prices (quotes) of market instruments such as: cash deposits, interest rates futures, bonds and IRSs. In section \ref{pricingIRSprecrisis} we will present how a pre-crisis world simple bootstrapping is done. Using this yield curve, the valuation of any IRSs was relatively simple. Similarly, cross-currency swaps (XCSs) were valued with this \emph{single-curve} approach in each currency leg, which led into differences on mark-to-market. Also, swaps linked to the same interest reference rate but with different tenors (known as Tenor Swaps (TSs) see section \ref{sec:tenorswaps}), say three months Ibor versus six months Ibor, were typically priced with the same quote and sometimes a \emph{small} spread (premia) was charged on the shorter tenor leg to reflect operational costs. However these type of swaps should  trade flat (with no spread) if, and only if, they trade in a default-free market. Therefore they were mispriced since quotes did not consider liquidity and credit issues of the counterparty.\par\smallskip 
    Additionally, in this pre-crisis framework, counterparty risk was managed through a traditional credit limits set and, likewise funding costs, were not explicitly considered in the pricing of the swap. In 1996, the Basel I framework had already been introduced but capital management was considered a back office function \cite{green2015xva}.    
    \subsection{IRSs in a Post-crisis World}
    As the crisis hits in 2007, a deep evolution phase of the classical (pre-crisis world) framework was adopted. Indeed, market participants reacted rapidly and assumptions and negligible facts of financial models entered into a correction stage. One of the most important impacts of the financial crisis over the interest rate market dynamics was the explosion of the basis between Ibor rates and overnight indexed rates \cite{bianchetti2012markets}. These spreads widened rapidly since Ibor rates soared due to the credit and liquidity risk of the interbank market, whereas overnight reference rates deviated substantially from its target level since monetary policy decisions were adopted by international authorities in response to the financial turmoil. Indeed, the USD three month LIBOR fixing and the three month maturity overnight index swap (OIS) reached a peak of 365bps just after the collapse of Lehman Brothers in September 2008, having averaged 10bps prior to August 2007 \cite{sengupta2008libor} (see Figure \ref{plotLIBOROIS}). The reason of this widening was the cost of funding, since lending at a longer tenor (quarterly) is associated with a more counterparty risk than lending at a shorter tenor (daily). Academics and practitioners realized that Ibor rates were/are risky rates since the probability of default of leading banks cannot be neglected anymore.\par\smallskip
   \begin{figure}[h]
       \centering
       \vspace{-2mm}
       \includegraphics[scale=0.58]{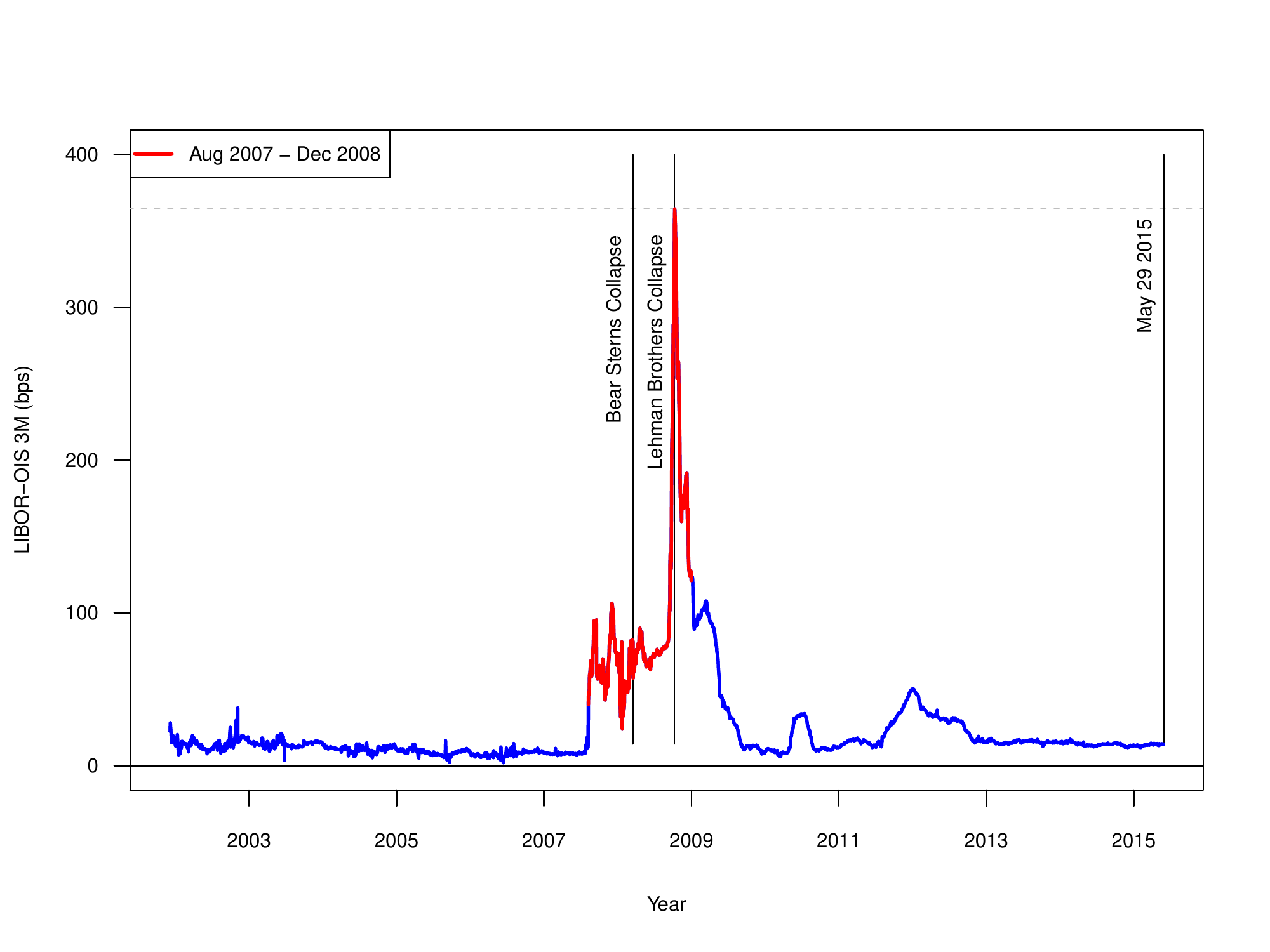}
       \caption[LIBOR-OIS 3m Spread (2001-2015)]{In this figure we can see the LIBOR-OIS 3m spread between Dec 05 2001 and May 29 2015. The spread is determined by the difference between the LIBOR 3m fixing (published rate) and the 3m swap indexed to Fed Fund Overnight Rate (end of day quote). The red line indicates the financial crisis period (August 2007 - December 2008). The Bloomberg Ticker for this spread is \texttt{LOIS Index}.}
       \label{plotLIBOROIS}
       \vspace{-2mm}
   \end{figure}
    Consequently, in the USD market, for example, due to the tenor spreads widening the unique projection/forward curve was replaced by three different projection curves, one for each LIBOR tenor: 1m, 3m and 6m. This new methodology leads to a tenor swaps par valuation (see \cite{henrard2010irony}). Moreover, traders realized that the discounting curve based on LIBOR 3m was not a good idea specially for trades under Credit Support Annex\footnote{A Credit Support Annex (CSA), is a legal document which regulates credit support (collateral) for derivative transactions. It is one of the four parts that make up an ISDA (International Swaps and Derivatives Association) Master Agreement but is not mandatory. CSAs are characterized by various clauses and parameters, such as margination frequency, margination rate, threshold, minimum transfer amount, eligible collateral, collateral currency, asymmetry, etc. See section \ref{pricingCollProd}.} (CSA) agreements. In fact, CSA agreements pay an interest rate on posted collateral typically equal to an overnight rate of the collateral currency. Therefore markets migrated rapidly to the usage of overnight rates for discounting collateralized cash flows of interest rate derivatives \cite{piterbarg2010funding}. The incorporation of tenor basis spreads and collateral rates forced to create a valuation framework that considers multiple rate curves known as \emph{multi-curve} framework. \par\smallskip
    Furthermore in a post-crisis world, credit and liquidity costs have started to be considered when swaps are priced. Indeed, CVA (Credit Valuation Adjustment) and FVA (Funding Valuation Adjustment) variables should be made for quantifying the costs of credit and funding of unsecured derivative transactions. In addition, capital management is no longer a back office function. Capital requirements are now an expensive resource that front office has to manage carefully as core activity. Hence every new transaction offered to the clients has to be priced considering the cost of capital, in order to determine whether it is profitable or not. This cost of capital is know as KVA (Capital Valuation Adjustment). \par\smallskip
    In summary, in a post-crisis world for the valuation of an IRS we need to consider a large number of interest rate instruments (cash deposits, futures, IRSs, XCSs, foreign exchange swaps, etc.) to fulfill multiple bootstrappings for projections and discounting curves and also we have to perform Monte-Carlo simulations for every counterparty to manage and calculate XVA variables: CVA (DVA)\footnote{In a bilateral model, DVA (Debt Valuation Adjustment) is considered a mirror of CVA since the valuation adjustment has to be symmetric among counterparties.}, FVA, KVA and MVA (Margin Valuation Adjustment). For further references in XVA variables we invite the reader to check \cite{gregory2012counterparty}, \cite{green2015xva}, \cite{ruiz2015xva} and \cite{lichters2015modern}.\par\smallskip
    In this thesis we will focus on the construction of the discount and forward curves in presence of a CSA agreement (collateral agreements) through different currencies. It is important to state that the transition away from risk-neutral valuation framework (pre-crisis world) to a more realistic valuation framework (post-crisis world) is not yet completed. Therefore the reader should know that the methodology presented here is not definitely and can be modified through the time once the rules of the game change or more assumptions start to weaken. 
    \subsection{The Scope of this Thesis}
    The aim of this work is to describe the construction under discount and projection curves used for the valuation of IRSs in MXN currency (Mexican Peso) for different collateral currencies such as: USD, MXN and EUR. Also we will explain how to price uncollateralized IRSs and how the \emph{multi-curve} framework affects IRSs in the absence of a collateral agreement. This work is a novelty since there are not known publications that take into account curve construction in a market where overnight index swaps do not exist. Furthermore, this could be considered a theoretical extension of the white paper \textsl{Análisis Comparativo de las Metodologías de Valuación de Swaps de TIIE} \cite{mexder2014} published by Mexican Derivatives Exchange (MexDer)\footnote{The Mexican Derivatives Exchange (MexDer) is an options and futures exchange in Mexico, located in the same building as the Mexican Stock Exchange (Bolsa Mexicana de Valores, BMV) and a subsidiary of the same owning group.} in 2014. Moreover we will justify mathematically every step and also to present explicit formulas for pricing IRSs. Finally this thesis distinguishes from \cite{mexder2014} since we detail all steps to compute a dual (or multiple) bootstrapping.\par \smallskip
   	This thesis is structured as follows: in section \ref{sec:literaturereview} we summarize a brief literature review of the main publications that manage topics such as: curve construction with collateral, bootstrapping and interpolation of curves. In section \ref{sec:backtobasics} we begin with the most basic concepts in mathematical finance i.e. discount factors, yield curves, forward rate, zero rates and how to use a rate curve. Additionally we introduce financial products such as IRSs, TSs, OISs and XCSs. In section \ref{pricingIRSprecrisis} we explain the pre-crisis methodology for pricing IRSs in the MXN market and describe formally every formula and the steps to perform a bootstrapping. Section \ref{sec:multicurveframework} is divided in three subsections. At the beginning of the section we summarize the general collateral valuation framework considered in \cite{fujii2010note} and \cite{piterbarg2010funding}. In the second subsection we present the curve construction of IRSs and OISs when the currency of the swap and the collateral currency are the same, detailing explicitly formulas for the calibration of the OIS-discount curve and LIBOR 1m forward curve in the USD market. In the third subsection we discuss the case when the payoffs currency of the swaps are different from the collateral currency, also we explain briefly the difference between pricing IRSs based on EUR and MXN when the collateral is posted in USD. Then we analyze why liquidity in the market and the non-existence of an overnight swap market affects the MXN interest rate curves construction. In section \ref{sec:MXNIRSUnderDifColl} we develop the main formulas and methodology for the valuation and pricing of MXN IRSs based on TIIE 28d with multiple collateral currencies, such as: in USD, MXN, EUR and non collateralized contracts. In section \ref{sec:results} we present the main results of the thesis; first we show the historical differences in swap rates considering the collateral currencies and we also analyze how the size of the spread of XCSs affects the differences of swap rates in each collateral currency. Finally in section \ref{sec:conclusions} we present our conclusions and further research to manage better the curve calibration in distinct currencies.
    
    \newpage
    \section{Literature Review}\label{sec:literaturereview}
    As we saw in the introduction, throughout this work we will entirely focus on the construction of interest rate curves in presence of collateral. Fortunately for the mathematical finance theory, the construction of interest rates curves has been an active area of research in both the industry and academia. A great variety of papers and research documents have been published treating rate curve construction through different currencies, mostly for EUR, USD and JPY (Japanese Yen) currencies. However, as in all mathematical finance research topic, most of the statistical and mathematical models that are applied in the industry are developed in-house and typically are not published for academic purposes. Nonetheless, across this section, we will survey some of the available literature for pricing collateralized interest rates swaps. This literature review has been organized into three main parts. First, the main papers treating curves construction in a \emph{multi-curve} framework with the incorporation of collateral are explored. Secondly we discuss the evolution of the papers published by Masaaki Fujii, Yasufumi Shimada and Akihiko Takahashi\footnote{University of Tokyo, Faculty of Economics} \cite{fujii2010collateral}, \cite{fujii2010note}, \cite{fujii2015choice}, followed by the applications of this framework through multiple currencies such as EUR, JPY and SEK (Swedish Krona). It is important to point out that \cite{fujii2010note} is our main reference and almost all the theory involved in this work is based on it. Finally we examine the basic literature for interpolation and bootstrapping methods.\par \smallskip
    The post-crisis world brought to financial theory a \emph{multi-curve} framework which is nowadays the standard pricing framework. In terminology, the multi-curve characteristics plus the existence of collaterals through CSA agreements, has often been reduced to the term OIS-discounting. However, we will not use this term since the property of using an overnight rate as collateral rate is just one of the characteristics of the multi-curve framework. One of the first papers that explores a multi-curve framework without collateral was \cite{henrard2007irony}. In this publication, the author questioned the way derivatives cash flows were discounted. He realized that, since counterparties have different credit ratings or default risks, each of them must have associated a unique discount curve for the pricing of its derivatives position. Nevertheless, in practice, for a counterparty who traded an OIS and a IRS based on LIBOR, different discounting curves were often used for the valuation of the mark-to-markets. Indeed, the IRS cash flows were valued using a discount curve implied from the LIBOR curve, whereas the OIS used an implied LIBOR $-$ 12.5bps ($\approx$ Fed Funds rate) curve.  In 2010 the author published a second part of this publication (\cite{henrard2010irony}), he wrote that \emph{ironically} his work \cite{henrard2007irony} was published in July 2007, just one month before the financial turmoil which leaded to the widening of LIBOR-OIS spread. Many other papers were published after the crisis to enhance the theory behind multi-curve frameworks. In \cite{bianchetti2008two}, the multi-curve framework is presented for pricing coherently IRS taking into account the forward basis spread taken from the TSs market. In \cite{pallavicini2010interest} the authors explore market evidences through swaptions and contant maturity swaps (CMSs) quoted in the market that suggest the existence of multiple yield curves that avoids arbitrage among products. In \cite{ametrano2013everything}, the authors discuss the multi-curve framework in a detailed way. They present the EUR market case, specifically which products have to be considered for the construction of multiple curves, how to perform bootstrappings and how to compute delta sensitivities.\par\smallskip 
    The model that we will discuss in this thesis is completely based on \cite{fujii2010note}. In this famous paper, Fujii, Shimada and Takahashi explain the method to construct multiple swap curves consistently with all the relevant swaps, say IRSs, TSs, XCSs, with and without a collateral agreement. They also consider the method to construct the term structures of collateralized swaps in the multi-currency setup. They present formulas that could be used for pricing swaps in USD and JPY currencies. In a later published article \cite{fujii2010collateral}, they show the importance of the \emph{choice} of collateral currency. They discuss the implications in market risk management when derivative contracts allow multiple currencies as eligible collateral and a free replacement among them. In their more recent papers \cite{fujii2015choice} and \cite{fujii2015general}, they extend their previous works. In particular they develop a formulation for the funding spread dynamics which is more suitable in the presence of non-zero correlation to the collateral rates. In \cite{gunnarsson2013curve} the author implements the \cite{fujii2010note} pricing framework for the EUR and USD markets. He also presents how to derive the discounting curve for EUR derivatives that are collateralized in USD. In \cite{essay63747}, the author presents how to bootstrap multiple discount curves using market quotes of collateralized interest rate products. He also develops how to compute the convexity adjustment between forward and future rates, while using the Eurodollar futures to bootstrap the three month EUR forward curve. Similarly, in \cite{lidholm2014implications}, the authors applied \cite{fujii2010note} collateralized pricing framework to the Swedish Krona (SEK) swap market. They also analyze the choice of collateral when SEK and USD are eligible.\par \smallskip
    In a \emph{multi-curve} framework the way to compute bootstrappings and interpolations require a robust and capable algorithms to perform the task. In fact, negative overnight rates have change the theory behind interpolation methods since forward negative rates are now allowed. The main references for interpolation algorithms are \cite{hagan2006interpolation} and \cite{hagan2008methods}. However, in \cite{du2011investigation} an extended analysis of a great variety of interpolation methods is presented. Throughout this work. the natural cubic splines algorithm is used.
    
    \newpage
    \section{Back to Basics}\label{sec:backtobasics}
    This work is entirely focused on valuation and risk management of interest rate derivatives. As we will see later, the pricing of an interest rate derivative reduces to the valuation of future cash flows, which are not necessarily known. Thus we require the following basic financial concepts:
    \begin{enumerate}
	\item Discount factors: allow us to calculate the present value of a cash flow received in the future.
    \item Forward rates: allow us to make assumptions of the future level of interest rates.
    \end{enumerate}
    Discount factors are also known as zero coupon bonds \cite{brigo2007interest}, recalling that these are the most simple product in the fixed income world, we defined them as follows:
    \begin{defi}\textnormal{\textbf{(Zero coupon bond)}} A $T$-maturity \emph{zero coupon bond} is a contract that guarantees its holder the payment of one unit of currency at time $T$, with no intermediate payments. The contract value at time $t<T$ is denoted by $P(t,T)$.
    \end{defi}
    To avoid arbitrage we need that $P(t,T)<1$ for all $t<T$ and $P(t,T)=1$ for all $t\geq T$. Note that if $C$ is a cash flow happening at time $T$, then $C \cdot P(t,T)$ gives the value at time $t$ (present value) of the cash flow $C$. Therefore, zero coupon bonds can be treated as discount factors. It is important to point out that the property of $P(t,T)<1$ in some markets, such as Europe and Japan, has been violated due to negative rates. For a more deep and focused discussion on the theme see \cite{hannoun2015ultra}, \cite{arteta2015negative} and \cite{jackson2015international}. Now we are able to define forward zero coupon bond.
    \begin{defi}\textnormal{\textbf{(Forward zero coupon bond)}} A $(T+\alpha)$-maturity \emph{forward zero coupon bond} is a contract observed at $t$ that pays $P(t,T,T+\alpha)$ to the issuer and guarantees its holder the payment of $1$ at time $T+\alpha$, with no intermediate payments.
    \end{defi}
    A forward zero coupon bond is the price at the date the contract is made for buying a zero coupon bond at a later date, but before its maturity. The next result defines the \emph{fair} price of a forward zero coupon bond.
    \begin{teo}
     The price of a forward zero coupon bond $P(t,T,T+\alpha)$ is given by,
        \begin{equation}
            P(t,T,T+\alpha)=\frac{P(t,T+\alpha)}{P(t,T)}.
        \end{equation}
    \end{teo}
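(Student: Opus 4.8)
The plan is to establish the formula by a static replication argument together with the no-arbitrage assumption invoked throughout the introduction (in particular the absence of short-selling constraints). The idea is to construct, at time $t$, a portfolio of ordinary zero coupon bonds whose cash flows at every future date coincide exactly with those of the forward zero coupon bond, and then to appeal to the law of one price to identify the contractual forward price with the portfolio's replication cost.

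First I would specify the replicating portfolio. At time $t$ I take a long position in one zero coupon bond maturing at $T+\alpha$, costing $P(t,T+\alpha)$, and simultaneously short $N$ units of the bond maturing at $T$, which raises $N\,P(t,T)$ in cash. Choosing $N=P(t,T+\alpha)/P(t,T)$ makes the net outlay for entering this portfolio at time $t$ equal to zero, mirroring the fact that no money changes hands when the forward contract is agreed at $t$.

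Next I would track the portfolio's cash flows at the two subsequent dates. At time $T$ the short position must be closed out, requiring a payment of $N=P(t,T+\alpha)/P(t,T)$; at time $T+\alpha$ the long bond matures and delivers one unit of currency to its holder. These are precisely the cash flows of the forward zero coupon bond as defined above: a payment to the issuer at $T$ and the receipt of $1$ at $T+\alpha$. Hence the portfolio replicates the contract exactly.

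Finally, I would close the argument by no-arbitrage. Since the replicating portfolio and the forward contract produce identical cash flows at every date and both cost nothing to initiate at $t$, the payment $P(t,T,T+\alpha)$ that the holder makes at $T$ must equal the payment $N$ needed to unwind the short $T$-bond leg; any discrepancy would let one go long the cheaper side and short the dearer side for a riskless profit, contradicting the no-arbitrage hypothesis. This forces $P(t,T,T+\alpha)=P(t,T+\alpha)/P(t,T)$. The only point demanding care is the bookkeeping of payment dates: one must confirm that the forward price is disbursed at $T$ rather than at $t$, so that it aligns with the cost of closing the short position at $T$; once this timing convention is pinned down from the definition, the remainder of the argument is routine.
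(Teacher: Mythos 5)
Your replication argument is exactly the one the paper uses: long one $(T+\alpha)$-bond, short $P(t,T+\alpha)/P(t,T)$ units of the $T$-bond, zero initial cost, matching cash flows at $T$ and $T+\alpha$, and the law of one price to conclude. The proposal is correct and essentially identical to the paper's proof, including the careful remark that the forward price is paid at $T$.
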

    \begin{proof}
        To prove this formula we have to build a trading strategy that replicates the cash flows associated to the definition of forward zero coupon bond. Consider that at time $t$ we buy $1$ unit of a $(T+\alpha)$-maturity zero coupon bond and sell short $\frac{P(t,T+\alpha)}{P(t,T)}$ units of $T$-maturity zero coupon bond. The cost of this strategy is calculated as follows,
        $$-P(t,T+\alpha)+\frac{P(t,T+\alpha)}{P(t,T)}P(t,T)=-P(t,T+\alpha)+P(t,T+\alpha)=0.$$
        The cost of the strategy is equal to zero, thus we do not have cash flows at time $t$. Then at time $T$ the sell short transaction matures and we have to pay a cash flow of
        $$\frac{P(t,T+\alpha)}{P(t,T)}.$$
        Finally, at time $T+\alpha$ the $(T+\alpha)$-maturity zero coupon bond matures and we receive a cash flow of $1$. This bring us at time $t$ a strategy with the same cash flows for a long position in a forward zero coupon bond. Therefore, by no-arbitrage arguments,
        \begin{equation}\label{fwdzerocouponbond}
        P(t,T,T+\alpha)=\frac{P(t,T+\alpha)}{P(t,T)}.
        \end{equation}
    \end{proof}
        We will call these forward zero coupon bonds as forward discount factors, interchangeably. Discount factors can be expressed in terms of interest rates. This zero interest rate (associated to the zero coupon bond) could be simply-compounded or continuously-compounded.
        \begin{defi}\textnormal{\textbf{(Simply-compounded zero interest rate)}} The \emph{simply-compounded zero interest rate} prevailing at time $t$ for the maturity $T$ is denoted by $L(t,T)$ and is the constant rate at which an investment of $P(t,T)$ at time $t$ accrues proportionally to the investment time and yields to a unit at maturity $T$. In formula:
        \begin{equation}\label{DiscSimpleZeroRate}
            L(t,T):=\frac{1}{\tau(t,T)}\bigg( \frac{1}{P(t,T)}-1 \bigg),
        \end{equation}
    where $\tau(t,T)=T-t$ is the time difference expressed in years.
    \end{defi}
    Again, substituting \eqref{DiscSimpleZeroRate} in \eqref{fwdzerocouponbond} yields
    \begin{equation}\label{forwardsimply}
    \begin{aligned}
    L(t,T,T+\alpha)&=\frac{1}{\alpha}\Big[ \frac{1+L(t,T+\alpha)(T+\alpha-t)}{1+L(t,T)(T-t)} - 1\Big] \\
    			&=\frac{1}{\alpha}\Big[ \frac{P(t,T)}{P(t,T+\alpha)} - 1\Big],
    \end{aligned}
    \end{equation}
    where $L(t,T+\alpha)$ and $L(t,T)$ are simply compounded zero rates. We define $L(t,T,T+\alpha)$ as the simply compounded forward rate for the period $[T,T+\alpha]$ seen at time $t$.
    \begin{defi}\textnormal{\textbf{(Continuosly-compounded zero interest rate)}} The \emph{continuously-compounded zero interest rate} prevailing at time $t$ for the maturity $T$ is denoted by $R(t,T)$ and is the constant rate at which an investment of $P(t,T)$ at time $t$ accrues continuously to yield a unit at maturity $T$. In formula:
        \begin{equation}
            R(t,T):=-\frac{\ln P(t,T) }{\tau(t,T)},
        \end{equation}
    where $\tau(t,T)=T-t$ is the time difference expressed in years (according to a day count convention).
    \end{defi}
    It is easy to see that,
    \begin{equation}\label{DiscContinuousZeroRate}
        P(t,T)=e^{-R(t,T)(T-t)}.
    \end{equation}
    Note that if we substitute \eqref{DiscContinuousZeroRate} in \eqref{fwdzerocouponbond} we get
    $$R(t,T,T+\alpha)=\frac{R(t,T+\alpha)(T+\alpha-t) - R(t,T)(T-t)}{\alpha},$$
    where $R(t,T+\alpha)$ and $R(t,T)$ are continuously compounded zero rates. And $R(t,T,T+\alpha)$ is defined as the continuously compounded forward rate for the period $[T,T+\alpha]$ seen at time $t$. Let us now present the definition of each of these rates.
    
    \subsection{The Ibor Rates}\label{section:iborrates}
    The Ibor rates are daily reference rates based on average interest rates at which banks offer to lend unsecured funds to other banks, the name Ibor is the acronym for InterBank Offered Rate. Ibor rates are usually computed as the trimmed average between rates contributed by the participant banks. The lending period could be from one day to one year, the most common tenors are: 1 week, 1 month, 3 months and 6 months. The main usage of this rates, besides the depos (lend/borrow between financial institutions), is in swaps, caps, floors and other interest rate derivatives. Examples of Ibor rates are the following:
    \begin{itemize}
    \item LIBOR (London Interbank Offered Rate), which is determined by London banks and published by the British Banking Association at 11 a.m. GMT on each London business day. The tenors published are: 1m, 3m, 6m and 12m. There are 19 banks involved in setting the LIBOR rate: 3 US Banks and 16 non-US banks.
    \item EURIBOR (Euro Interbank Offered Rate) is determined by Eurozone banks and is published by the European Money Market Institute at 11 a.m. GMT$+2$ on each TARGET\footnote{TARGET (Trans-European Automated Real-Time Gross-Settlement Express Transfer) is an interbank payment system for the real-time processing of cross-border transfers throughout the European Union.} business day. The tenors published are: 1m, 3m and 6m.
     \item TIIE (Tasa de Interés Interbancaria de Equilibrio) is determined by Mexican banks and published by Banco de México at 12 p.m. GMT$-5$ on each Mexico business day. The tenors published are: 28d, 91d and 182d.
    \end{itemize}
    Other main currencies Ibor rates are: British Pound Sterling Ibor (GBP LIBOR), Swiss Franc Ibor (CHF LIBOR), Japanese Yen Ibor (TIBOR), Canadian Dollar Ibor (CDOR) and Hong Kong Dollar (HIBOR). Ibor rates have some conventions such as day count, spot lag and date rolling. The day count convention is the way for counting the days in a year, the most common are: ACT/360 and 30/360, see appendix \ref{App:AppendixA}. The spot lag is the number of days between the fixing date and the value or payment date. Finally, the date rolling convention determines the payment date (forward or backward) when the spot lag falls in a business day. For example, the LIBOR 3m has a spot lag of 2 days, an ACT/360 day count convention and a modified following business day convention; whereas TIIE 28d has a spot lag of 1 day, an ACT/360 day count convention and a following business day convention.\par\smallskip
    The fixings of Ibor rates tend to be constant (or with small changes) compared to IRSs or other interest rates derivatives since Ibor rates represent deposit rates with wider bid-offer spread than derivatives. However, Ibor rates adjust noticeably after central banks meeting; even when monetary policy changes are entirely expected, since banks participants generally want to see rate changes before they adjust rates.\par\smallskip
    Let us present some mathematical notations, specifically for Ibor rates, that will be used throughout this work.
    \begin{defi}\textnormal{\textbf{(Ibor rate)}}
    An \emph{Ibor rate} with fixing at time $\widetilde{S}$, accrual period $[S,T]$ and payment at time $\widetilde{T}$ is denoted by
    \begin{equation}
    \textnormal{\textbf{Ibor}}(S,T):=\textnormal{\textbf{Ibor}}_{\widetilde{S},\widetilde{T}}(S,T),
    \end{equation}
    where $\widetilde{S} \leq S < T \leq \widetilde{T}$.
    \end{defi}
    \begin{defi}\textnormal{\textbf{(Forward Ibor rate)}}
    A \emph{forward Ibor rate} at time $t$, with fixing at time $\widetilde{S}$, accrual period $[S,T]$ and payment $\widetilde{T}$ is denoted by
    \begin{equation}
    \mathbb{E}_t^{\mathbb{Q}_{\widetilde{T}}}(\textnormal{\textbf{Ibor}}_{\widetilde{S},\widetilde{T}}(S,T)):= \mathbb{E}_t^{\widetilde{T}} (\textnormal{\textbf{Ibor}}(S,T)),
    \end{equation}
    where $t < \widetilde{S} \leq S < T \leq \widetilde{T}$ and $\mathbb{Q}_{\widetilde{T}}$ is the forward measure associated with the \emph{numéraire} $P(t,\widetilde{T})$ ($\widetilde{T}$-maturity zero coupon bond).
    \end{defi}
    Note that the forward Ibor rate $\mathbb{E}_t^{\widetilde{T}} (\textnormal{\textbf{Ibor}}(S,T))$ is a zero rate for the period $[S,T]$. Hence, using the fact that Ibor rates are simply-compounded and the equation \eqref{forwardsimply} we get the following definition.
    \begin{assum}
    	Under the assumption that the probability of default of the banks on which Ibor rates are based can be negleted, we can express the forward Ibor rate in terms of zero coupon bonds, i.e.
        \begin{equation}\label{assumptionForwards}
        	\mathbb{E}_t^{\widetilde{T}} (\textnormal{\textbf{Ibor}}(S,T)):=\frac{1}{\tau(S,T)}\bigg( \frac{P(t,S)}{P(t,T)} - 1 \bigg), \hspace{5mm} t \leq \widetilde{S} \leq S < T \leq \widetilde{T},
        \end{equation}
        where $\tau(S,T)$ is the year fraction between $S$ and $T$.
    \end{assum}
    
    \subsection{Interest Rate Derivatives}\label{sec:IRproducts}
    In this subsection we will define some interest rate products mentioning the basic characteristics of them, also we provide term sheets samples of the plain vanilla derivatives. The products explained will be used throughout this work, therefore it is important to fully understand them. It is important to highlight that most of these products are traded over-the-counter (OTC). OTC trading is done directly between two parties, without any supervision of an exchange or a clearing house. On the contrary of OTC trading, exchanges have the following benefits: facilitate liquidity, mitigate credit risk concerning the default of one party and provide transparency. Therefore in an OTC market, contracts are less liquid compared to exchange trading, however they can be tailored for the clients. Moreover, prices of OTC contracts are not necessarily published in the market. The products that we present are types of interest rate swaps. We do not introduce products such as: FRAs, caps, floors, swaptions, digital caps, digital floors and other interest rate derivatives with volatility or an inflation index.
    
    \subsubsection{Interest Rate Swaps}
    An interest rate swap (IRS) is a derivative instrument in which both parties agree to make interest payments at fixed dates in the future. Normally, one party pays the other a fixed interest rate, while the other party makes interest payments in line with the future interest rate trend. Suppose counterparties A and B enter into an IRS contract in which A agrees to pay a fixed rate (receive a floating rate), whereas B agrees to pay floating rate (receive fixed rate). The attribute \emph{payer/receiver}, by convention, refers to the fixed leg of the swap, hence A enter into a payer swap while B enters into a receiver swap. The floating leg is typically based on Ibor rates.\par \medskip
    \noindent Let $\textnormal{PV}_{\textnormal{Payer}}(t)$ be the present value of a payer interest rate swap. Then
    \begin{equation}\label{payerIRS}
        \textnormal{PV}_{\textnormal{Payer}}(t)=\textnormal{\textbf{FloatLeg}}(t)-\textnormal{\textbf{FixedLeg}}(t),
    \end{equation}
    with
    \begin{equation}\label{legsIRS}
    	\begin{aligned}
        \textnormal{\textbf{FloatLeg}}(t) &=\sum_{i=1}^{M} \alpha(t_{i-1},t_{i}) \mathbb{E}_t^{\widetilde{t}_i}(\textnormal{\textbf{Ibor}}(t_{i-1},t_{i})) P(t,\widetilde{t}_i)\\
        \textnormal{\textbf{FixedLeg}}(t) &= k\sum_{j=1}^{N} \beta(s_{j-1},s_{j}) P(t,\widetilde{s}_j),
        \end{aligned}
    \end{equation}
    where:\par \smallskip
    \noindent \makebox[2.9cm][l]{$k$:} fixed rate of the interest rate swap \par
    \noindent \makebox[2.9cm][l]{$M,N$:} number of floating coupons (resp. fixed coupons) \par 
    \noindent \makebox[2.9cm][l]{$t_i,s_j$:} coupon periods of floating leg (resp. fixed leg) \par 
    \noindent \makebox[2.9cm][l]{$\widetilde{t}_i,\widetilde{s}_j$:} payment time of the $i$th coupon (resp. $j$th coupon) \par 
    \noindent \makebox[2.9cm][l]{$\alpha(t_{i-1},t_i))$:} accrual factor of the $i$th coupon \par
    \noindent \makebox[2.9cm][l]{$\beta(s_{j-1},s_j))$:} accrual factor of the $j$th coupon \par
    \noindent \makebox[2.9cm][l]{$\mathbb{E}_t^{\widetilde{t}_i}(\textnormal{\textbf{Ibor}}(t_{i-1},t_{i}))$:} the forward Ibor rate of the $i$th coupon. \par \medskip
   Now let us present a term sheet example of a plain vanilla IRS based on the MXN reference rate TIIE 28d:
    \FloatBarrier
        \begin{table}[H]
        \footnotesize
            \begin{center}
                \begin{tabular}{|c|c|}
                 \hline
                 \multicolumn{2}{|c|}{\textbf{MXN IRS Contract}} \\
                 \hline
                Trade Date & $t$ \\
                Spot Lag & 1 days\\
                Start Date & $t+1$ \\
                Tenor & 1820d \\
                Payer of Fixed Rate & Bank A \\
                Receiver of Fixed Rate & Bank B \\
                Fixed Rate & 4.87\% \\
                Index Rate & TIIE 28d\\
                Notional Value & MXN 10Mio\\
                Payment Frequency & 28 days\\
                Day Count Convention & ACT/360\\
                Business Days Calendar & Mexico City\\
                Date Roll Convention & Following\\
                \hline
                \end{tabular}
            \end{center}
          \caption{Term sheet sample of a plain vanilla IRS based on TIIE 28d with maturity of 5y.}
          \label{IRS-Contract}
        \end{table}
    \FloatBarrier
    Suppose that the trade date is 29-Jan-2015 (Thursday) thus the Start Date = Trade Date + Spot Lag = 29-Jan-2015 + 1 day (using following date rolling convention and Mexico City calendar) = 30-Jan-2015. Then, the end date is calculated as End Date = Start Date + 1820 days, using following date rolling convention and Mexico City calendar, we have that End Date = 24-Jan-2020. Finally we have that $1820/28=65$, which means that this IRS has 65 coupons, hence the present value of the payer swap is given by,
    \begin{equation}
        \textnormal{PV}_{\textnormal{Payer}}(t)=\textnormal{\textbf{FloatLeg}}(t)-\textnormal{\textbf{FixedLeg}}(t),
    \end{equation}
    with
    \begin{align}
        \textnormal{\textbf{FloatLeg}}(t) &=\num[group-separator={,}]{10000000} \sum_{i=1}^{65} \tau(t_{i-1},t_{i}) \mathbb{E}_t(L(t_{i-1},t_{i})) P(t,t_i)\label{exampleIRS1}\\
        \textnormal{\textbf{FixedLeg}}(t) &=\num[group-separator={,}]{10000000} \cdot 4.87\% \cdot \sum_{i=1}^{65} \tau(t_{i-1},t_{i}) P(t,s_j),\label{exampleIRS2}
    \end{align}
    where $$\tau(t_{i-1},t_{i})=\frac{\textnormal{Actual days between } t_{i-1} \textnormal{ and } t_i }{360}.$$
    Note that for all $i$ the accrual factor $\tau(t_{i-1},i_{j})$ is totally determined since the swap calendar does not depend on the market quotes. Indeed, swap calendars are built using their own characteristics i.e. business day calendars, rolling date conventions and payment frequencies.\par\smallskip
    From equations \eqref{exampleIRS1} and \eqref{exampleIRS2}, it is easy to see that the present value of the IRS we ``just'' need the discount factors and the forward rates. In a single curve framework the forward curve can be obtained by the discount curve and vice versa. However, in a multi curve framework the forward curve obtained implicitly from the discount curve typically is different for the forward rate. It important to point out that in the multi curve framwork the forward rates are preferably called as the index (or reference) rate, whereas the discount factors are not renamed but they have to be necessarily linked with the rates that will be used for managing the collaterals---previously agreed in a CSA agreement.
    
    \subsubsection{Tenor Swaps}\label{sec:tenorswaps}
    A tenor swap (TS) is a contract where the two parties exchange interest rates amounts based on floating (Ibor) reference rates of the same currency but with different tenors. Recall that Ibor rates are meant to mirror unsecured deposit rates, therefore a credit premium for long term lending versus shorter terms has to be payed. This premium is directly included on the spread or basis that is added on one of the TS legs. The general convention is to add the tenor basis spread to the leg with the shorter tenor \cite{fujii2011market}, while the payment frequency is determined by the longer tenor leg. For example, in the case of the USD market, one party agrees to pay LIBOR 3m quarterly and receive LIBOR 1m plus the tenor spread; in the latter leg we have to accumulate the monthly payments with compound interest and settle quarterly to match the 3m tenor leg.\par
    \begin{rem}
    In \textnormal{EUR} currency market, the tenor swaps are conventionally quoted as two swaps. Hence a quote for paying \textnormal{EURIBOR 3m} $+$ $12$bps versus receiving \textnormal{EURIBOR 6m} has the following meaning. In the first swap you pay \textnormal{EURIBOR 3m} versus receive a fixed rate (in an annual base). In the second swap you pay the same fixed rate plus the spread of $12$bps (in an annual base) and receive \textnormal{EURIBOR 6m} \textnormal{\cite{henrard2014book}}. Note that in this convention the tenor spread is paid on an annual basis, whereas in \textnormal{USD} market the tenor spread is paid quarterly.
    \end{rem}
    The TSs market could be consider as market indications of lending period preferences due to credit and liquidity risks. Another characteristic of TS is that the basis (or spread) typically is downward sloping i.e. the greater the maturity of the swap is, the smaller the spread is.\par \smallskip
    Let $\textnormal{PV}_{\textnormal{Payer}}(t)$ be the present value of a payer tenor swap. Then
    \begin{equation}\label{generalpayerTS}
        \textnormal{PV}_{\textnormal{Payer}}(t)=\textnormal{\textbf{Leg}}_{(A)}(t)-\textnormal{\textbf{Leg}}_{(B)}(t),
    \end{equation}
    with
    \begin{align}\label{generalTSlegs}
        \textnormal{\textbf{Leg}}_{(A)}(t) &=\sum_{i=1}^{M} \alpha(t_{i-1},t_{i}) \mathbb{E}_t^{\widetilde{t}_i}(\textnormal{\textbf{Ibor}}_{(A)}(t_{i-1},t_{i})) P(t,\widetilde{t}_i)\\
        \textnormal{\textbf{Leg}}_{(B)}(t) &= \sum_{j=1}^{N} \beta(s_{j-1},s_{j})\bigg[ \frac{1}{\sum_{k=1}^{N_j}\rho(u_{k-1},u_k)} \bigg( \prod_{k=1}^{N_j} \Big(1+\rho(u_{k-1},u_k)\mathbb{E}_t^{\widetilde{u}_k}(\textnormal{\textbf{Ibor}}(u_{k-1},u_{k}))\Big) \bigg)\notag\\
 & \hspace{35mm}+ B\bigg] P(t,\widetilde{s}_j)
    \end{align}
    where:\par \smallskip
    \noindent \makebox[3.75cm][l]{$B$:} fixed tenor spread \par
    \noindent \makebox[3.75cm][l]{$M,N$:} number of coupons of leg $A$ (resp. leg $B$) \par
    \noindent \makebox[3.75cm][l]{$N_j$:} number of fixings for the $j$th coupon of leg $B$ \par
    \noindent \makebox[3.75cm][l]{$t_i,s_j$:} coupon periods of leg $A$ (resp. leg $B$) \par
    \noindent \makebox[3.75cm][l]{$u_k$:} fixing periods of leg $B$ \par 
    \noindent \makebox[3.75cm][l]{$\widetilde{t}_i,\widetilde{s}_j$:} payment time of the $i$th coupon of leg $A$ (resp. $j$th coupon of leg $B$) \par 
    \noindent \makebox[3.75cm][l]{$\alpha(t_{i-1},t_i))$:} accrual factor of the $i$th coupon of leg $A$\par
    \noindent \makebox[3.75cm][l]{$\beta(s_{j-1},s_j))$:} accrual factor of the $j$th coupon of leg $B$\par 
    \noindent \makebox[3.75cm][l]{$\rho(u_{k-1},u_k))$:} accrual factor of the $k$th fixing of leg $B$ used to compute the compounded rate\par
    \noindent \makebox[3.75cm][l]{$\mathbb{E}_t^{\widetilde{t}_i}(\textnormal{\textbf{Ibor}}_{(A)}(t_{i-1},t_{i}))$:} the forward Ibor rates of leg $A$. \par 
    \noindent \makebox[3.75cm][l]{$\mathbb{E}_t^{\widetilde{u}_k}(\textnormal{\textbf{Ibor}}_{(B)}(u_{k-1},u_{k}))$:} the forward Ibor rates of leg $B$. \par \medskip
    \FloatBarrier
        \begin{table}[H]
        \footnotesize
            \begin{center}
                \begin{tabular}{|c|c|}
                 \hline
                 \multicolumn{2}{|c|}{\textbf{LIBOR 3m vs 1m 5y Contract}} \\
                 \hline
                Trade Date & $t$ \\
                Spot Lag & 2 days\\
                Start Date & $t+2$ \\
                Tenor & 5y \\
                Spread (Leg) & +0.10\% (LIBOR 1m) \\
                Index Rates & LIBOR 1m \& LIBOR 3m\\
                Notional Value & USD 10Mio\\
                Payment Frequency & 3 Months\\
                Day Count Convention & ACT/360\\
                Business Days Calendar & New York \& London\\
                Date Roll Convention & Modified Following\\
                \hline
                \end{tabular}
            \end{center}
          \caption{Example of a plain vanilla 5y tenor swap contract.}
          \label{Tenor-Contract}
        \end{table}
    \FloatBarrier
    
    \subsubsection{Overnight Index Swaps and Federal Funds Swaps} \label{OISsFFSs}
    As we will see throughout this work, overnight index swaps (OISs) play an important role in the construction of discount factors in a collateralized world. This kind of swap has two legs: floating rate leg and fixed rate leg, with coupon payments between the spot date and the maturity date. The main difference with a plain vanilla IRS is that the floating leg is linked to an overnight index instead of an Ibor index rate. \par 
    Overnight rates are published every business day as Ibor rates, however they are effective for only one day, this is why they are called \emph{overnight rates}. Recall that in a plain vanilla IRS, the calendar of the floating leg is scheduled with periods of the same length of the tenor of the Ibor index rate\footnote{In case the maturity of the swap is not a multiple of the Ibor rate tenor, then it is defined a stub period coupon (short or long) scheduled at the beginning (up front) or at the end (in arrears) of the swap. For example a swap which matures in 20 months based on LIBOR 3m could have either a 2 month coupon (short stub period) or a 4 month coupon (long stub period). Additionally, this stub period could be either scheduled at the beginning of the swap followed by 3-months coupons or at the end of the swap preceded by 3-months coupons.}. For instance, an IRS based on LIBOR 3m has payments every 3 months, hence it is natural to think that the floating leg of an OIS would have daily payments. Although it is not practical to have daily payments in swaps or any financial instrument, so floating leg payments in OISs are scheduled yearly or quarterly and the amount paid is computed by compounding or averaging the overnight rates. \par 
    Throughout this work, we only consider the USD OISs market. It is important to say that OISs are relatively liquid up to 30 years (see \cite{CMEProductScope}). However, for maturities longer than 10 years, prices are in the market quoted as Federal Funds Swaps (FFSs) which are a type of tenor swaps due to they exchange an Ibor payment for an overnight index based payment. In a Federal Funds Swap (FFS), the overnight indexed leg is computed different from OIS. Indeed, the payment is computed as the arithmetic mean of the overnight rates, while in OIS the payment is computed by compounding daily the overnight rates. In this work we used OISs for curve calibration and ignore quotes of FFSs. For more insight, regarding valuation and properties of FFSs, see \cite{takada2011valuation}. \par
    Likewise a payer IRS, a payer OIS is a contract in which we pay fixed payments and receive floating payments based on overnight rates. The spot lag, i.e. the difference in days between the trade date and the start date, is commonly two business days. \par \medskip
        \noindent Let $\textnormal{PV}_{\textnormal{Payer}}(t)$ be the present value of a payer overnight index swap. Then
    \begin{equation}
        \textnormal{PV}_{\textnormal{Payer}}(t)=\textnormal{\textbf{FloatLeg}}(t)-\textnormal{\textbf{FixedLeg}}(t),
    \end{equation}
    with
    \begin{align}
        \textnormal{\textbf{FloatLeg}}(t) &=\sum_{i=1}^{M} \alpha(t_{i-1},t_{i}) \mathbb{E}_t^{\widetilde{t}_i}(\textnormal{\textbf{DailyCompOI}}(t_{i-1},t_{i})) P(t,\widetilde{t}_i)\\
        \textnormal{\textbf{FixedLeg}}(t) &= k\sum_{j=1}^{N} \beta(s_{j-1},s_{j}) P(t,\widetilde{s}_j),
    \end{align}
    where:\par \smallskip
    \noindent \makebox[4.65cm][l]{$t_0-t$:} spot lag (difference between trade date and start daet)\par
    \noindent \makebox[4.65cm][l]{$k$:} fixed rate of the overnight index swap\par
    \noindent \makebox[4.65cm][l]{$M,N$:} number of floating coupons (resp. fixed coupons) \par 
    \noindent \makebox[4.65cm][l]{$t_i,s_j$:} coupon periods of floating leg (resp. fixed leg) \par 
    \noindent \makebox[4.65cm][l]{$\widetilde{t}_i,\widetilde{s}_j$:} payment time of the $i$th coupon (resp. $j$th coupon) \par 
    \noindent \makebox[4.55cm][l]{$\alpha(t_{i-1},t_i))$:} accrual factor of the $i$th coupon \par
    \noindent \makebox[4.65cm][l]{$\beta(s_{j-1},s_j))$:} accrual factor of the $j$th coupon \par
    \noindent \makebox[4.65cm][l]{$\mathbb{E}_t^{\widetilde{t}_i}(\textnormal{\textbf{DailyCompOI}}(t_{i-1},t_{i}))$:} the daily compounded overnight index rate. \par \medskip
    \noindent The daily compounded overnight index rate is defined as
    \begin{equation}
		\mathbb{E}_t^{\widetilde{t}_i}(\textnormal{\textbf{DailyCompOI}}(t_{i-1},t_{i})):=\frac{1}{\tau(t_{i-1},t_{i})} \Big( \prod_{h=0}^{K_i-1} \big( 1+\alpha(t_{i,h},t_{i,h+1}) \mathbb{E}_t^{\widetilde{t}_{i,h+1}} (\textnormal{\textbf{OI}} (t_{i,h},t_{i,h+1})) \big) -1 \Big)
    \end{equation}
    where:\par \smallskip
    \noindent \makebox[3.75cm][l]{$K_i$:} number of business days between $t_{i-1}$ and $t_i$ \par 
    \noindent \makebox[3.75cm][l]{$\{ t_{i,h} \}_{h=0}^{K_i}$:} is the collection of all business days in the accrual period $[t_{i-1},t_{i}]$\par
    \noindent \makebox[3.75cm][l]{$\alpha(t_{i,h},t_{i,h+1})$:} accrual factor of the $h$th business day \par 
    \noindent \makebox[3.75cm][l]{$\mathbb{E}_t^{\widetilde{t}_{i,h+1}} (\textnormal{\textbf{OI}} (t_{i,h},t_{i,h+1}))$:} overnight index rate effectively for the period $[t_{i,h},t_{i,h+1}]$ and paid at $\widetilde{t}_{i,h+1}$ \par \medskip
    \noindent Examples of OIS contracts are given in tables \ref{OIS-FF-Contract} and \ref{OIS-EONIA-Contract}.
    
        \begin{table}[H]
        \footnotesize
            \begin{center}
                \begin{tabular}{|c|c|}
                 \hline
                 \multicolumn{2}{|c|}{\textbf{OIS (Fed Fund) 5y Contract}} \\
                 \hline
                	Trade Date & $t=$ 02-Jun-2015 \\
                	Spot Lag & 2 days\\
                	Start Date & $t+2=$ 04-Jun-2015 \\
                	Tenor & \hspace{2mm} 5y $=$  04-Jun-2020\\
                	Fixed Rate & 1.5078\% \\
                	Overnight Index Rate & Fed Funds\\
                	Notional Value & USD 10Mio\\
                	Payment Frequency & Annual (both legs)\\
                	Day Count Convention & ACT/360 (both legs)\\
                	Business Days Calendar & New York\\
                	Date Roll Convention & Modified Following\\
                \hline
                \end{tabular}
            \end{center}
          \caption{Example of a plain vanilla OIS 5y contract linked to Fed Funds Overnight Rate.}
          \label{OIS-FF-Contract}
        \end{table}
    
        \begin{table}[H]
        \footnotesize
            \begin{center}
                \begin{tabular}{|c|c|}
                 \hline
                 \multicolumn{2}{|c|}{\textbf{OIS (Eonia) 2y Contract}} \\
                 \hline
                Trade Date & $t=$ 02-Jun-2015 \\
                Spot Lag & 2 days\\
                Start Date & $t+2=$ 02-Jun-2015 \\
                Tenor & \hspace{2mm} 2y $=$  05-Jun-2017\\
                Fixed Rate & -0.1020\% \\
                Overnight Index Rate & Eonia\\
                Notional Value & EUR 10Mio\\
                Payment Frequency & Annual (both legs)\\
                Day Count Convention & ACT/360 (both legs)\\
                Business Days Calendar & TARGET\\
                Date Roll Convention & Modified Following\\
                \hline
                \end{tabular}
            \end{center}
          \caption{Example of a plain vanilla OIS 2y contract linked to Eonia Overnight Rate.}
          \label{OIS-EONIA-Contract}
        \end{table}
    
    \subsubsection{Cross-Currency Swaps}\label{XCSs}
    A cross-currency swap (XCS) is a contract between two parties to exchange interest rate payments in two different currencies. Plain vanilla XCS exchanges floating payments linked to Ibor rates, in which one of the legs adds a fair basis spread that is traded in the market. For example an EURUSD XCS exchanges LIBOR 3m for EURIBOR 3m plus an additional basis spread. In contrast to the three previous interest rate contracts, the notional amounts switch hands at the initiation of the swap and then switch back at the maturity of the contract. There exist two types of plain vanilla XCS:
    \begin{itemize}
    	\item cnXCS (Constant Notional Cross-Currency Swap): In a cnXCS is a XCS where the notionals remain constant along the maturity of the swap.
        \item mtmXCS (Mark-to-Market Cross-Currency Swap): In a mtmXCS the notional of one leg is adjusted at each payment date by the current FX rate, and the interest to be paid is computed using the adjusted notional. For pricing a mtmXCS we must think a mtmXCS as a collection of cnXCS. The importance of this swap relies on reducing the credit exposure of both parties arose by fluctuations on the FX rate.
    \end{itemize}
    \noindent Throughout this work we focus only in cnXCSs, despite mtmXCSs are becoming more popular and liquid especially for G10 currencies. Let $\textnormal{PV}_{\textnormal{Payer}}^{A}(t)$ be the present value of a payer cnXCS swap in terms of currency $A$. Then
    \begin{equation}\label{xcs:eq1}
        \textnormal{PV}_{\textnormal{Payer}}^{A}(t)=\textnormal{\textbf{Leg}}_{A}(t)-f^{B\rightarrow A}(t) \textnormal{\textbf{Leg}}_{B}(t),
    \end{equation}
    with
    \begin{align}
        \textnormal{\textbf{Leg}}_{A}(t) &=\textnormal{N}_A \sum_{i=1}^{M} \alpha(t_{i-1},t_{i}) \mathbb{E}_t^{\widetilde{t}_i}(\textnormal{\textbf{Ibor}}_{A}(t_{i-1},t_{i})) P(t,\widetilde{t}_i) \label{xcs:eq2} \\
        \textnormal{\textbf{Leg}}_{B}(t) &=\textnormal{N}_B \sum_{j=1}^{N} \beta(s_{j-1},s_{j}) (\mathbb{E}_t^{\widetilde{s}_j}(\textnormal{\textbf{Ibor}}_{B}(s_{j-1},s_{j})) + S ) P(t,\widetilde{s}_j),\label{xcs:eq3}
    \end{align}
    where:\par \smallskip
    \noindent \makebox[3.75cm][l]{$\textnormal{N}_A$:} Notional of leg $A$ \par
    \noindent \makebox[3.75cm][l]{$\textnormal{N}_B$:} Notional of leg $B$ \par
    \noindent \makebox[3.75cm][l]{$f^{B\rightarrow A}(t)$:} spot FX exchange rate \par
    \noindent \makebox[3.75cm][l]{$S$:} basis spread \par
    \noindent \makebox[3.75cm][l]{$M,N$:} number of coupons  of leg $A$ (resp. leg $B$) \par 
    \noindent \makebox[3.75cm][l]{$t_i,s_j$:} coupon periods of leg $A$ (resp. leg $B$) \par 
    \noindent \makebox[3.75cm][l]{$\widetilde{t}_i,\widetilde{s}_j$:} payment time of the $i$th coupon of leg $A$ (resp. $j$th coupon of leg $B$) \par 
    \noindent \makebox[3.75cm][l]{$\alpha(t_{i-1},t_i))$:} accrual factor of the $i$th coupon of leg $A$\par
    \noindent \makebox[3.75cm][l]{$\beta(s_{j-1},s_j))$:} accrual factor of the $j$th coupon of leg $B$\par
    \noindent \makebox[3.75cm][l]{$\mathbb{E}_t^{\widetilde{t}_i}(\textnormal{\textbf{Ibor}}_{(A)}(t_{i-1},t_{i}))$:} the forward Ibor rate of leg $A$. \par 
    \noindent \makebox[3.75cm][l]{$\mathbb{E}_t^{\widetilde{s}_j}(\textnormal{\textbf{Ibor}}_{(B)}(s_{j-1},s_{j}))$:} the forward Ibor rate of leg $B$. \par \medskip
    
    \noindent For XCS quotes displayed in trading screens we have that $\textnormal{N}_A=\textnormal{N}_B f^{B\rightarrow A}(t) $.\medskip
    
    In the case of mtmXCS with constant notional in the leg $A$ and updates of notional in leg $B$, the present value of leg $A$ remains equal as in cnXCS, nevertheless leg $B$ present value has to capture the FX rate dynamics along the maturity of the swap. Thus the present value of a payer mtmXCS is given by
    \begin{equation}
        \textnormal{PV}_{\textnormal{Payer}}^{A}(t)=\textnormal{\textbf{Leg}}_{A}(t)-f^{B\rightarrow A}(t) \textnormal{\textbf{Leg}}_{B}(t),
    \end{equation}
    with
    \begin{align}
        \textnormal{\textbf{Leg}}_{A}(t) &= \sum_{i=1}^{M} \textnormal{N}_A \cdot \alpha(t_{i-1},t_{i}) \mathbb{E}_t^{\widetilde{t}_i}(\textnormal{\textbf{Ibor}}_{A}(t_{i-1},t_{i})) P(t,\widetilde{t}_i)\\
        \textnormal{\textbf{Leg}}_{B}(t) &=\sum_{j=1}^{N} \textnormal{N}_A \cdot f^{A\rightarrow B}(\widetilde{s}_j) \cdot \beta(s_{j-1},s_{j}) (\mathbb{E}_t^{\widetilde{s}_j}(\textnormal{\textbf{Ibor}}_{B}(s_{j-1},s_{j})) + S ) P(t,\widetilde{s}_j),
    \end{align}
    where:\par \smallskip
    \noindent \makebox[3.75cm][l]{$f^{A\rightarrow B}(\widetilde{s}_j)$:} FX forward rate with delivery at time $\widetilde{s}_j$ \par \medskip
    \noindent Examples of mtmXCS and cnXCS are given in tables \ref{mtmXCS-Contract} and \ref{cnXCS-Contract}, respectively.
    
    \FloatBarrier
        \begin{table}[H]
        \footnotesize
            \begin{center}
                \begin{tabular}{|c|c|}
                 \hline
                 \multicolumn{2}{|c|}{\textbf{Basis EURUSD 5y Contract}}\\
                 \hline
                Trade Date & $t$ \\
                Spot Lag & 2 days\\
                Start Date & $t+2$ \\
                Tenor & 5y \\
                Basis Spread (Leg) & 0.65\% (EUR)\\
                Index Rates & EURIBOR 3m \& LIBOR 3m\\
                Notional Value & EUR 10Mio\\
                Payment Frequency & Quarterly\\
                FX Reset Frequency & Quarterly\\
                Day Count Convention & ACT/360 (both legs) \\
                Business Days Calendar & New York, TARGET \& London\\
                Date Roll Convention & Modified Following\\
                Forex Rate Reset & Yes\\
                \hline
                \end{tabular}
            \end{center}
          \caption{Example of a plain vanilla mtmXCS EURUSD 5y Contract.}
          \label{mtmXCS-Contract}
        \end{table}
    \FloatBarrier
    
    \FloatBarrier
        \begin{table}[H]
        \footnotesize
            \begin{center}
                \begin{tabular}{|c|c|}
                 \hline
                 \multicolumn{2}{|c|}{\textbf{Basis USDMXN 10y Contract}} \\
                 \hline
                Trade Date & $t$ \\
                Spot Lag & 2 days\\
                Start Date & $t+2$ \\
                Tenor & 1820d ($\approx$5y) \\
                Basis Spread (Leg) & 0.65\% (USD)\\
                Index Rates & LIBOR 1m \& TIIE 28d\\
                Notional Value & USD 10Mio\\
                Payment Frequency & 28 days\\
                Day Count Convention & ACT/360\\
                Business Days Calendar & New York, London \& Mexico City\\
                Date Roll Convention & Following\\
                Forex Rate Reset & No\\
                \hline
                \end{tabular}
            \end{center}
          \caption{Example of a plain vanilla cnXCS 10y Contract.}
          \label{cnXCS-Contract}
        \end{table}
    \FloatBarrier
    
    \newpage
    \section{Pricing IRS in Single-curve Framework}\label{pricingIRSprecrisis}
    In this section we present the formulas for pricing an interest rate swap in a \emph{single-curve} framework, specifically in the MXN currency market. This framework is important from a historical point of view since they explain and serve as a basis of the \emph{multi-curve} framework. The general idea of the \emph{single-curve} framework is that all interest rate derivative, in the same currency, depend on only one curve, which is supposed to be the discount curve and Ibor index curve.
    \subsection{Case of MXN}
    This section presents the construction the MXN yield curve under the assumption that the plain vanilla swaps traded on the market do not have a collateral agreement. Therefore, we are assuming that the TIIE 28d rate (mexican interbank offered rate) is \emph{risk-free} and illiquidity or credit issues of participant banks are neglected, i.e., we do not need to incorporate a collateral rate. This framework is known as \emph{single-curve} framework, since one unique curve is used for extract discount factors and forward rates. Before the financial crisis in 2007, this framework was widely used and was considered the \emph{correct} method for pricing and valuation of interest rate derivatives. It is important to remind that this model assumes that a financial institution borrows and lends money with the same \emph{risk-free} rate, in this case TIIE 28d rate.\par\medskip
    \noindent From equations \eqref{payerIRS} and \eqref{legsIRS}, we have that the present value of a payer IRS of TIIE28d is given by  
    \begin{equation}\label{payerTIIE28D}
        \textnormal{PV}(t)= \sum_{i=1}^{M} \alpha(t_{i-1},t_{i}) \mathbb{E}_t^{\widetilde{t}_i}(\textnormal{\textbf{TIIE28D}}(t_{i-1},t_{i})) P(t,\widetilde{t}_i) - k\sum_{j=1}^{N} \beta(s_{j-1},s_{j}) P(t,\widetilde{s}_j),
    \end{equation}
    Since in a plain vanilla IRS linked to TIIE 28d the number of coupons in both legs are the same, the end date of each coupon period equals the payment date i.e. $\widetilde{t}_i=t_i$ for all $i$, and day count convention is the same for each leg, then equation \eqref{payerTIIE28D} reduces to
     \begin{equation}\label{reducedpayerTIIE28D}
        \textnormal{PV}(t)= \sum_{i=1}^{N} \alpha(t_{i-1},t_{i}) \mathbb{E}_t^{t_i}(\textnormal{\textbf{TIIE28D}}(t_{i-1},t_{i})) P(t,t_i) - k\sum_{i=1}^{N} \alpha(t_{i-1},t_{i}) P(t,t_i),
    \end{equation}
    where:\par \smallskip
    \noindent \makebox[3.8cm][l]{$k$:} fixed rate of the plain vanilla interest rate swap \par
    \noindent \makebox[3.8cm][l]{$N$:} number of coupons \par 
    \noindent \makebox[3.8cm][l]{$t_i$:} coupon periods (start date, end date and payment date)\par 
    \noindent \makebox[3.8cm][l]{$\alpha(t_{i-1},t_i))$:} accrual factor of the $i$th coupon \par
    \noindent \makebox[3.8cm][l]{$\mathbb{E}_t^{t_i}(\textnormal{\textbf{TIIE28D}}(t_{i-1},t_{i}))$:} the forward TIIE 28d rate of the $i$th coupon. \par \medskip
    \noindent Now, in a single-curve framework we have that
    \begin{equation}\label{TIIEfwdzerocoupon}
    	\mathbb{E}_t^{t_i}(\textnormal{\textbf{TIIE28D}}(t_{i-1},t_{i})) = 
        \bigg( \frac{1}{\tau(t_{i-1},t_i)} \bigg( \frac{P(t,t_{i-1})}{P(t,t_{i})} - 1 \bigg) \bigg).
    \end{equation}
    Note that in equation \eqref{TIIEfwdzerocoupon} we use $\tau$ as day count factor instead of $\alpha$ that is used in the swap. It is important to point out that, in general, $\alpha (t_{i-1},t_{i})) \neq \tau (t_{i-1},t_{i}))$, this is because $\alpha(t_{i-1},t_{i}))$ is the accrual factor (year fraction) used for the payment of the $i$th coupon, whereas $\tau(t_{i-1},t_{i}))$ is the day count (year fraction) used for the interpolation and construction of zero curve. In the swaps market, the accrual factors of payments usually are based on an ACT/360  or 30/360 convention, while interpolation methods for the construction of zero curves typically used an ACT/ACT or ACT/365 convention.\par \medskip
    \noindent Assume that the fixed rate $k$ is a mid market quote, hence the present value of the payer swap equals zero. If we substitute \eqref{TIIEfwdzerocoupon} in \eqref{reducedpayerTIIE28D} and then we equalize it to zero we obtain
    \begin{equation}
    	\sum_{i=1}^{N} \alpha_i \bigg( \frac{1}{\tau_i} \bigg( \frac{P(t,t_{i-1})}{P(t,t_{i})} - 1 \bigg) \bigg) P(t,t_i) - k\sum_{i=1}^{N} \alpha_i P(t,t_i)=0.
    \end{equation}
    Solving for $P(t,t_N)$ we get
    \begin{equation}
		P(t,t_N)=\frac{\sum_{i=1}^N \Big[ \frac{\alpha_i}{\tau_i} \big( P(t,t_{i-1}) - P(t,t_i) \big) - k \alpha_i P(t,t_i) \Big] + \frac{\alpha_N}{\tau_N} P(t,t_{N-1}) }{ \frac{\alpha_N}{\tau_N} + k \alpha_N}.
	\end{equation}
    If we assume that $\alpha_i=\tau_i$ for all $i$, then
    \begin{equation}\label{discBootstrapping}
    P(t,t_N)=\frac{P(t,t_0) - k \sum_{i=1}^N \alpha_i P(t,t_i)}{1 + k \alpha_N}.
    \end{equation}
    This equation is known as the bootstrapping equation associated to the quoted IRS contract of $N$-coupons with swap rate $k$. Note this equation has $N+1$ unknown variables, namely $$P(t,t_0),P(t,t_1),\dots,P(t,t_N).$$
    \noindent In table \ref{TIIEQuotesBloomberg} we present the maturities of the plain vanilla IRSs quoted in the market, the $k$ swap rate and the number of unknown variables in each bootstrapping equation. Note that from the IRS market we have in total 390 unknown variables and only 14 equations (since 14 swaps are quoted in the market).
    \begin{table}
    \scriptsize
    \centering
    \begin{tabular}{|r c c | c | c |}
      \hline
     \multicolumn{1}{|>{\centering\arraybackslash}m{8mm}}{\textbf{Tenor}} & 
     \multicolumn{1}{>{\centering\arraybackslash}m{10mm}}{\textbf{Rate (\%)}} &
     \multicolumn{1}{>{\centering\arraybackslash}m{10mm}|}{\textbf{Type}} & 
     \multicolumn{1}{>{\centering\arraybackslash}m{25mm}|}{\textbf{Number of Unknown Variables}} &
     \multicolumn{1}{>{\centering\arraybackslash}m{45mm}|}{\textbf{Unknown Variables}} \\ \hline
     
      ON & 3.0500 & Cash & 1 & $P(t,t+1)$ \\
      TN & 3.0500 & Cash & 1 & $P(t,t+2)$ \\ 
      28D & 3.2950 & Cash & 1 & $P(t,T_{\textnormal{28D}})$\\ \hline
      84D & 3.3200 & Swap & 3 & $P(t,t+1),P(t,T_{\textnormal{28D}}),\dots,P(t,T_{\textnormal{84D}})$\\ 
      168D & 3.4300 & Swap & 6 & $P(t,t+1),P(t,T_{\textnormal{28D}}),\dots,P(t,T_{\textnormal{168D}})$\\ 
      252D & 3.5620 & Swap & 9 & $P(t,t+1),P(t,T_{\textnormal{28D}}),\dots,P(t,T_{\textnormal{252D}})$\\ 
      364D & 3.7350 & Swap & 13 & $P(t,t+1),P(t,T_{\textnormal{28D}}),\dots,P(t,T_{\textnormal{364D}})$\\ 
      728D & 4.2360 & Swap & 26 & $P(t,t+1),P(t,T_{\textnormal{28D}}),\dots,P(t,T_{\textnormal{728D}})$\\ 
      1092D & 4.6710 & Swap & 39 & $P(t,t+1),P(t,T_{\textnormal{28D}}),\dots,P(t,T_{\textnormal{1092D}})$\\ 
      1456D & 5.0510 & Swap & 52 & $P(t,t+1),P(t,T_{\textnormal{28D}}),\dots,P(t,T_{\textnormal{1456D}})$\\ 
      1820D & 5.3610 & Swap & 65 & $P(t,t+1),P(t,T_{\textnormal{28D}}),\dots,P(t,T_{\textnormal{1820D}})$\\ 
      2548D & 5.8630 & Swap & 91 & $P(t,t+1),P(t,T_{\textnormal{28D}}),\dots,P(t,T_{\textnormal{2548D}})$\\ 
      3640D & 6.2380 & Swap & 130 & $P(t,t+1),P(t,T_{\textnormal{28D}}),\dots,P(t,T_{\textnormal{3640D}})$\\ 
      4368D & 6.4280 & Swap & 156 & $P(t,t+1),P(t,T_{\textnormal{28D}}),\dots,P(t,T_{\textnormal{4368D}})$\\ 
      5460D & 6.6320 & Swap & 195 & $P(t,t+1),P(t,T_{\textnormal{28D}}),\dots,P(t,T_{\textnormal{5260D}})$\\ 
      7280D & 6.8310 & Swap & 260 & $P(t,t+1),P(t,T_{\textnormal{28D}}),\dots,P(t,T_{\textnormal{7280D}})$\\ 
      10920D & 7.0210 & Swap & 390 & $P(t,t+1),P(t,T_{\textnormal{28D}}),\dots,P(t,T_{\textnormal{10920D}})$\\ 
       \hline
    \end{tabular}
    \caption{Quoted TIIE 28D Swaps (\%) on May 29 2015 (Source: Bloomberg). Every swap trades with the convention of 28 days Following with accrual period of ACT/360.}
    \label{TIIEQuotesBloomberg}
    \end{table}
    To solve this system of equations we will need an interpolation scheme. As has been mentioned previously, many interpolation methods for curve construction are available, see \cite{hagan2006interpolation}, \cite{hagan2008methods} and \cite{du2011investigation}. Besides the interpolation method, some short-term products must be included for the construction of the curve. For example, some zero coupon bonds might trade giving us an exact rate for the curve, however in some markets, where there is insufficient liquidity, some interbank money-market rates will be used. In table \ref{TIIEQuotesBloomberg} we consider cash instruments (money-market) for maturities: overnight, tomorrow-next and 28d. It is important to point out that these short-term rates are also helpful since conditions of smoothness and continuity in interpolation methods are required.\par 
    Throughout this work we will present results considering cubic splines in the zero rates as interpolation method. In appendix \ref{App:AppendixB} we explain in detail three different interpolation methods.\par
    Note that the bootstrapping equation \eqref{discBootstrapping} is expressed in terms of discount factors, however this equation can be expressed in term of zero rates
    \begin{equation}\label{bootstrappingEqSC}
    	\begin{aligned}
			R(t,t_N)&=-\frac{1}{\tau_N} \ln \bigg[ \frac{P(t,t_0) - k \sum_{i=1}^N \alpha_i P(t,t_i)}{1 + k \alpha_N} \bigg]\\
            &=\frac{1}{\tau_N} \ln \bigg[ \frac{1+k\alpha_N} { e^{-\tau_{t_0}R(t,t_0)} - k \sum_{i=1}^N \alpha_i e^{-\tau_{t_i}R(t,t_i)} } \bigg].
        \end{aligned}
	\end{equation}
    Hence the system of equations of table \ref{TIIEQuotesBloomberg} is given by
	\begin{equation}\label{systemTIIE28dSingleCurve}
    \left\{
    \begin{aligned}
        & R(t,t_{N_l})=\frac{1}{\tau_{N_l}} \ln \bigg[ \frac{1+k\alpha_{N_l}} { e^{-\tau_{t_0}R(t,t_0)} - k \sum_{i=1}^{N_l} \alpha_i e^{-\tau_{t_i}R(t,t_i)} } \bigg] \hspace{5mm} l=1,2,\dots,14 \hspace{2mm} \textnormal{(swaps)}\\
        & R(t,t_0)=R_{\textnormal{ON}}, \hspace{2mm} R(t_0,t_0+\textnormal{1D})=R_{\textnormal{TN}}, \hspace{2mm} R(t_0,t_0+\textnormal{28D})=R_{\textnormal{28D}}. \hspace{2mm} \textnormal{(cash)}
    \end{aligned}
    \right.
	\end{equation}
    With this equation we proceed to the bootstrapping algorithm which relies on an iterative solution algorithm. The idea is the following:
    \begin{enumerate}
    	\item Take the rates $R(t,x)$ already known from the money market
        \item Guess initial values for $\{R(t,t_N)\}$ where $t_N$ are the maturities of the 14th swaps
        \item With the interpolation method we calculate $R(t,t_i)$ for all $i$th coupon date
        \item Insert these rates into the right-hand side of equation \eqref{bootstrappingEqSC} and solve for $\{R(t,t_N)\}$
        \item We take these new guesses and again apply the interpolation algorithm
    \end{enumerate}
    Using this iterative algorithm we bootstrapped the TIIE 28d yield curve $R(t,T)$ using natural cubic splines on the yield curve. We used ACT/360 as day count convention to determine the year fractions used for discounting. 
	 \begin{figure}[H]
        \centering
        \vspace{-2mm}
        \includegraphics[scale=0.58]{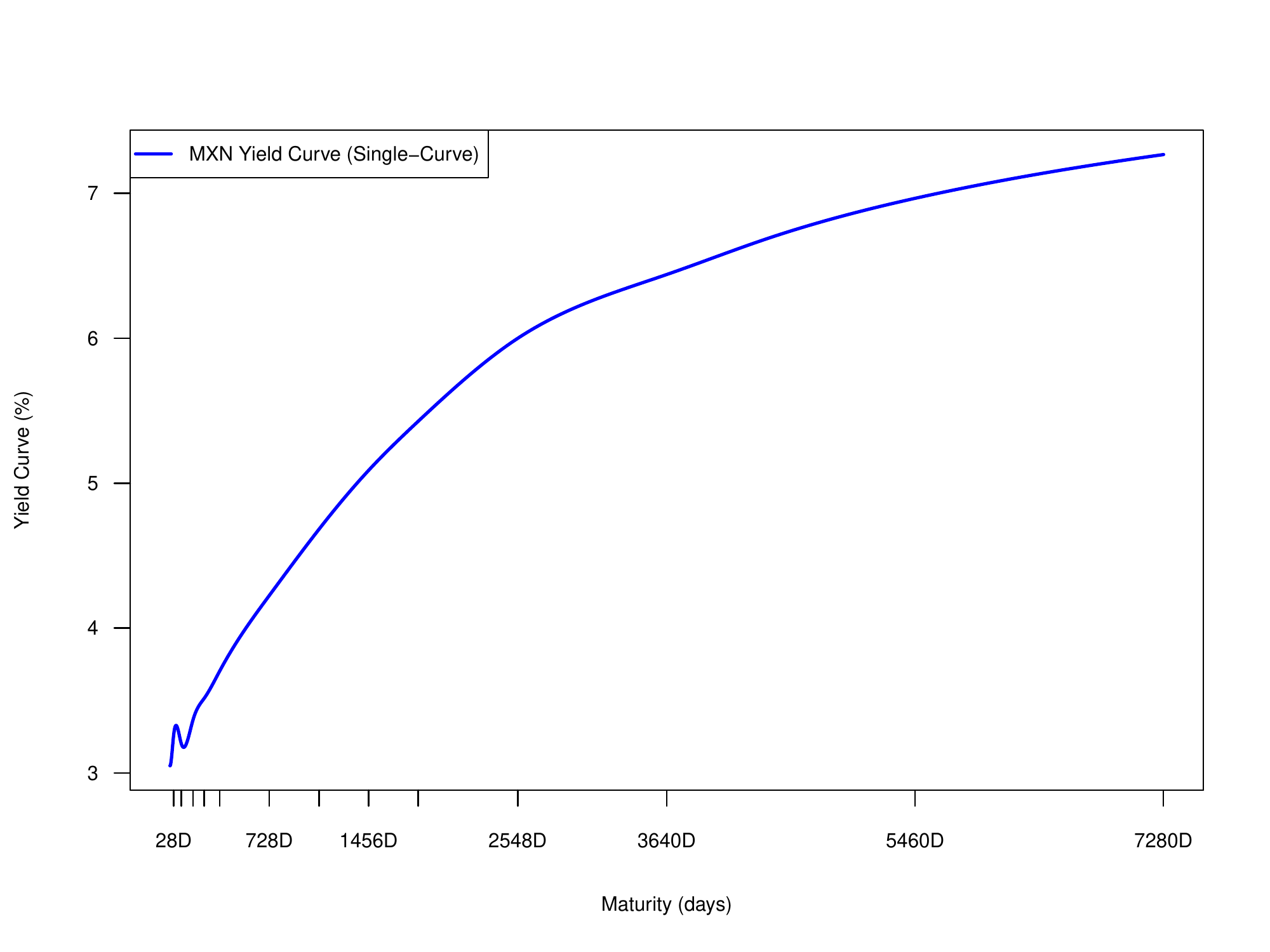}
        \caption[TIIE 28d yield curve in a \emph{single-curve} framework]{TIIE 28d yield curve $R(t,T)$ in a \emph{single-curve} framework using natural cubic splines interpolation in the yield rates. The swap rates used for the construction are in table \ref{TIIEQuotesBloomberg}.}
        \label{TIIE28D_YieldCurveSC}
        \vspace{-2mm}
     \end{figure}    

	\begin{figure}[H]
        \centering
        \vspace{-2mm}
        \includegraphics[scale=0.58]{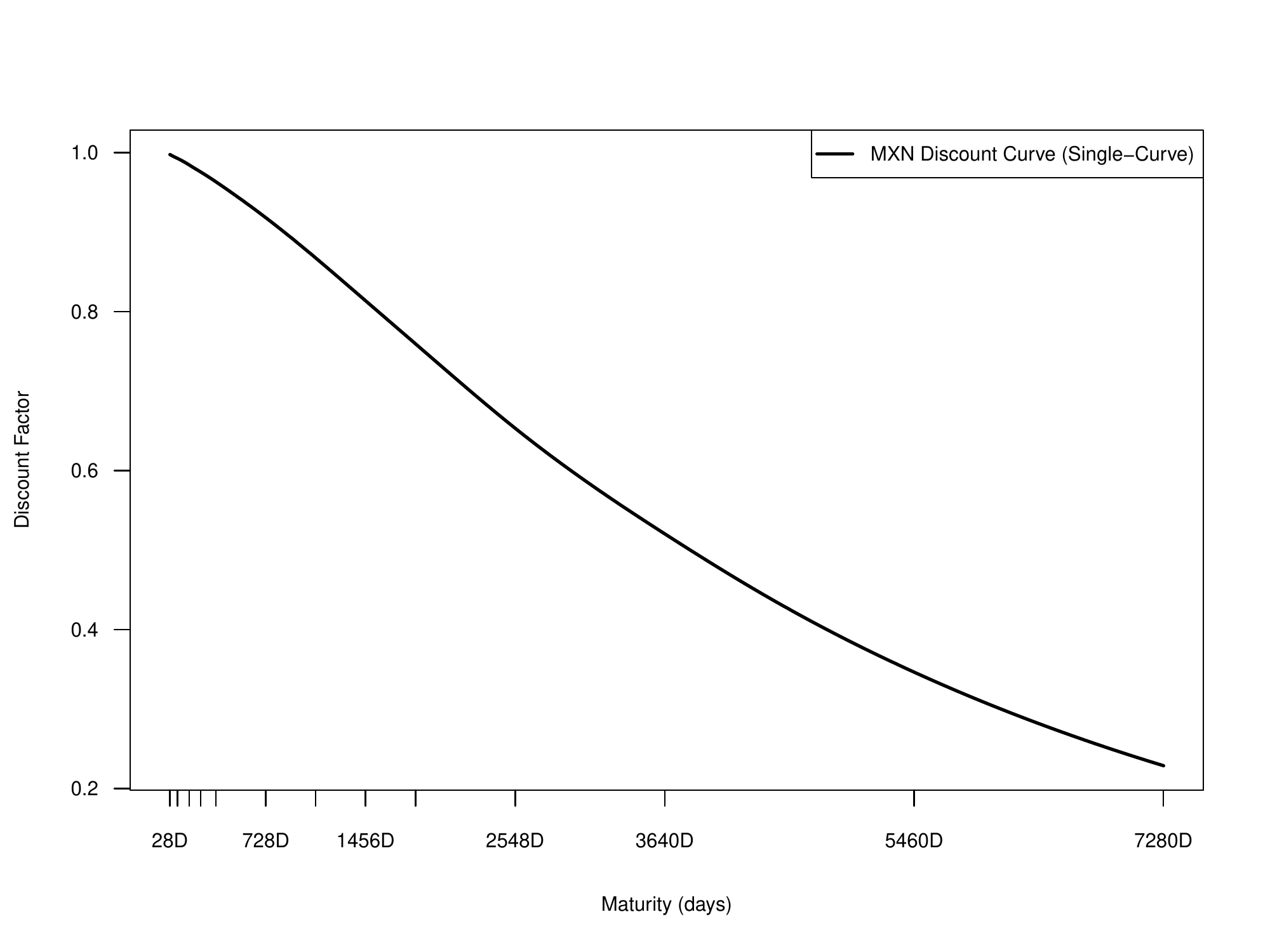}
        \caption[TIIE 28d discount curve in a \emph{single-curve} framework]{TIIE 28d discount curve $P(t,T)$ in a \emph{single-curve} framework using natural cubic splines interpolation in the yield rates. The swap rates used for the construction are in table \ref{TIIEQuotesBloomberg}.}
        \label{TIIE28D_DiscountCurveSC}
        \vspace{-2mm}
     \end{figure}
    
      \begin{figure}[H]
        \centering
        \vspace{-2mm}
        \includegraphics[scale=0.58]{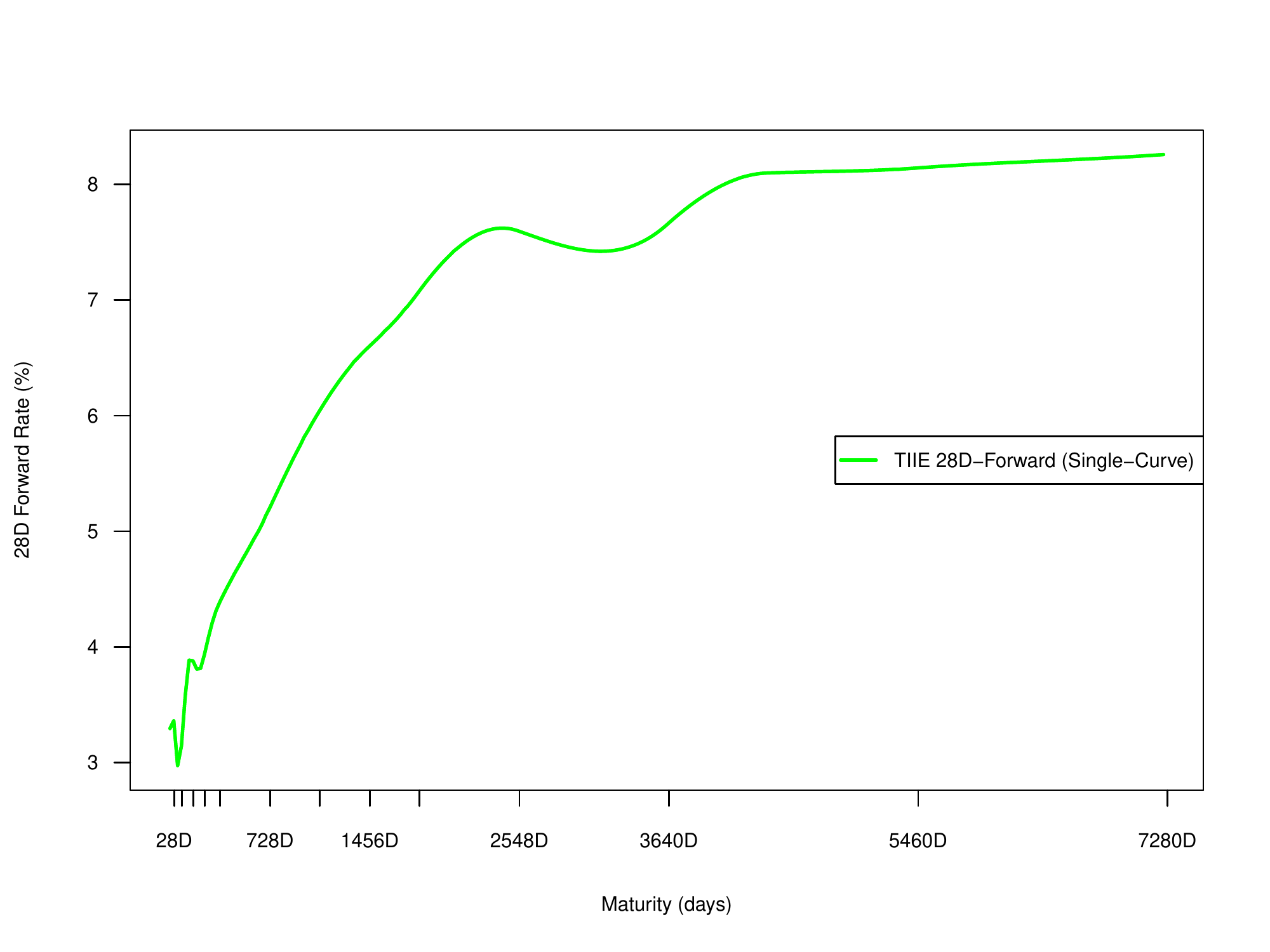}
        \caption[TIIE 28d forward curve in a \emph{single-curve} framework]{TIIE 28d forward curve $\mathbb{E}_t^{T_{i-1}}(\textnormal{\textbf{TIIE28D}}(T_{i-1},T_{i}))$ in a \emph{single-curve} framework using natural cubic splines interpolation in the yield rates. The swap rates used for the construction are in table \ref{TIIEQuotesBloomberg}.}
        \label{TIIE28D_ForwardCurveSC}
        \vspace{-2mm}
     \end{figure}

    \newpage
    \section{Pricing IRS in a Multi-curve Framework (in Presence of Collateral)}\label{sec:multicurveframework}
    In this section we present the valuation frameworks for pricing interest rate swaps (IRSs) in a \emph{multi-curve} framework with a collateral account associated to the derivative. The section is divided in three main subsections. In the first subsection we establish the general collateral framework, i.e. we explain how a a collateral account works, which are the advantages and disadvantages for having a collateral framework and what assumptions do we have to make for pricing collateralized IRSs. In the second subsection we focus on the easiest case: when the currency of payoffs of the derivative and the collateral currency coincide. Indeed we present the curve calibration of IRSs and OISs in USD currency. In the third and last subsection we give the multiple currencies collateral framework, i.e. when the payoffs' currency is different of the collateral currency. Furthermore we exemplify the differences between EUR and MXN pricing of IRSs when the collateral currency is USD. The material and results presented in this section were  mostly taken from \cite{fujii2010note}, \cite{fujii2010collateral}, \cite{fujii2011market}, \cite{piterbarg2010funding}, \cite{piterbarg2012cooking} and \cite{green2015xva}.
    \subsection{Pricing of Collateralized Products}\label{pricingCollProd}
    As we saw in the introduction there have been a lot of changes in the market since the financial turmoil in 2008. One of the most important questions that has to be answered since then is: what is the \emph{risk-free} rate? Before we try to answer this question let us present a quote taken from \cite{green2015xva}.
    \begin{center}
    Nothing in life and nothing that we do is \emph{risk-free}
    \end{center}
    \begin{flushright}
    Ken Salazar, US Politician\par
    \end{flushright}
    We have already seen that LIBOR rates in the USD market are not a good proxy anymore after the Lehman Brothers default. Recall that the world has entered into a new phase in which high-credit rating banks are able to default in matter of weeks. Once we accepted that LIBOR rate is not a good choice of \emph{risk-free} rate, it is normal to think on yield rates of government bonds. However governments also default, as the case of Greece in the Eurozone or Argentina in Latin America. Another alternative for \emph{risk-free} rate might be the repo rate. The repo rate is an interest rate that is paid on a collateralized loan and therefore should be very close to being \emph{risk-free}, unfortunately the repo market is only liquid for maturities up to one year, for our purpose for the valuation of long-term IRS we need a market with long-dated maturities. Hence the best candidate for this purpose is the OIS market. There are many valid reasons for using overnight rates as \emph{risk-free} rates, let us present some of these reasons:
    \begin{enumerate}[noitemsep]
    	\item Overnight Rates such as Fed Funds and Eonia are based on actual trades, indeed these rates are calculated by the average rate at which these transactions occur
        \item The OIS market is active and liquid in several currencies and have maturities of up to 30 years.
        \item Lending and borrowing money in this market has a low counterparty credit risk since transactions occur in a daily basis so the counterparty might change also daily
        \item ISDA contracts typically used this rates as the cash collateral rate
    \end{enumerate}
    So now we know that overnight rates are widely used as collateral rates. In this section we will prove that the collateral rate in the presence of a perfect CSA is the curve used for discounting. But before we proof this fact let us explain briefly how does the collateral works and why it is important in the valuation of interest rate derivatives.\par\medskip
    \noindent Suppose that a Bank B has a big and positive exposure $X$ (sum of all derivatives transactions) against Bank A. There is clearly a strong risk for the Bank B if the Bank A is to default. With a collateral agreement Bank B limits this exposure since Bank A has to post this mark-to-market (exposure) $X$ as a collateral. The collateral receiver, in this case Bank B due to having a positive mark-to-market, becomes the economic owner of the collateral if, and only if, the collateral giver (Bank A) defaults. While Bank A is away to enter into default, the collateral belongs to Bank A but it is in hands of Bank B. Hence, as a reward for posting a big amount of money as collateral, Bank A should receive from Bank B an interest rate $c$ (known as the collateral rate) periodically. In other words, Bank A receives $Xc\tau$ where $\tau$ is the year fraction of the periodicity of the collateral rate. Being rigorous the collateral giver is not Bank A, it is specifically the trading desk of Bank A who manage all the transactions against Bank B. Since the trading desks generally do not manage cash, then the trading desk of Bank A has to borrow $X$ from the funding desk of Bank A. For borrowing $X$ the trading desk has to pay an interest rate $r$ known as the funding rate. In figure \ref{collateralABC} we present in a diagram how the flows are exchanged between Bank B, Bank A's trading desk and Bank A's funding desk. To finish this example we may suppose that the Bank A has a positive mark-to-market against Bank C, but dealt without a collateral agreement. Therefore if Bank C defaults then Bank A (trading desk) would have potential losses.\par
	 \begin{figure}[h]
        \centering
        \vspace{-2mm}
        \includegraphics[scale=0.85]{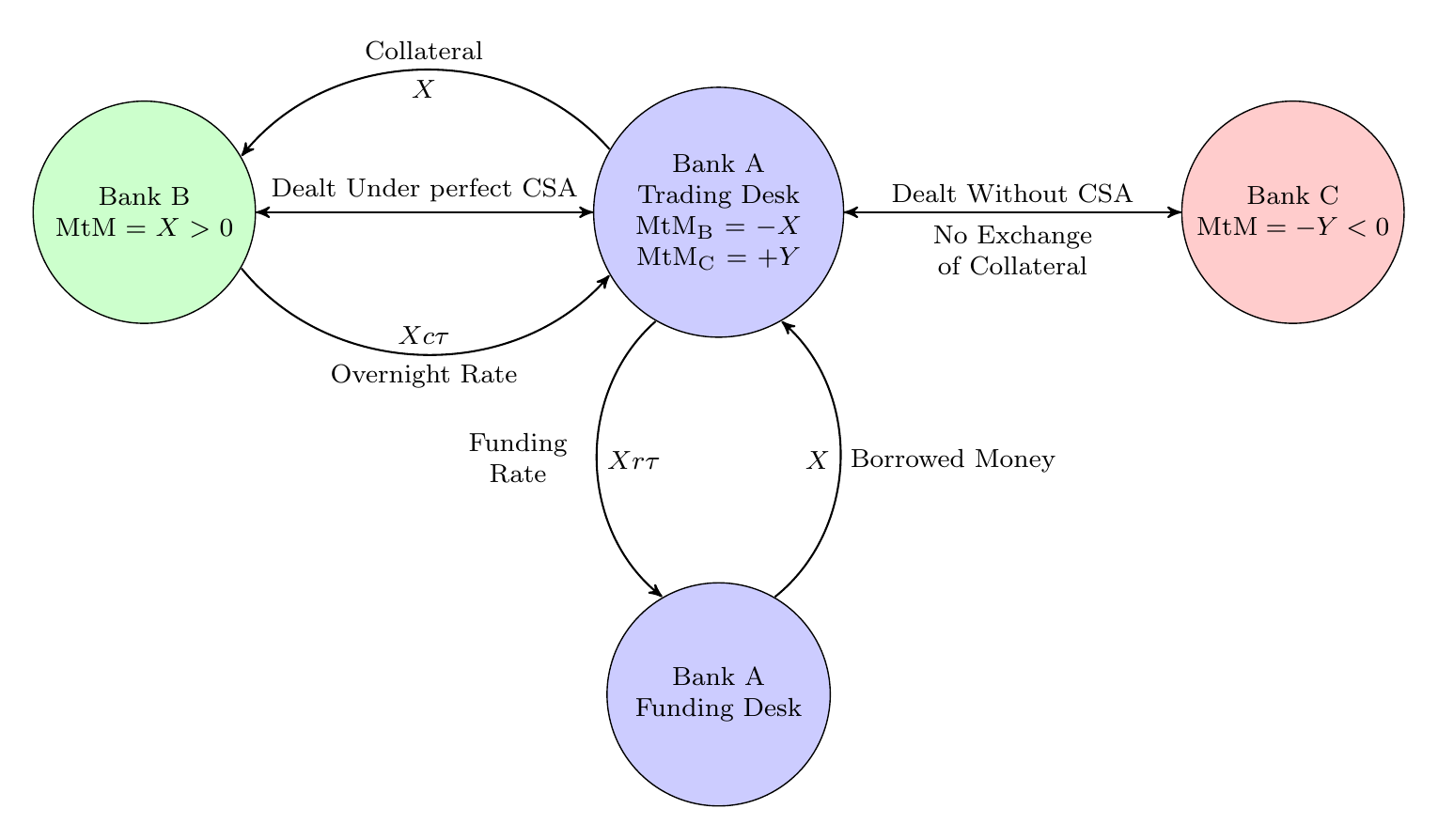}
        \caption[]{Illustration of the basic principle of collateralisation and non-collateralisation.}
        \label{collateralABC}
        \vspace{-2mm}
     \end{figure}    
    \noindent As there exist a large number of conditions within the CSA agreements, the collateralisation is critically defined by a large number of key parameters \cite{gregory2012counterparty}. In this work we will assume that the collateralisation is under a perfect CSA agreement. The next definition was taken from \cite{ametrano2013everything}.
    \begin{defi}\textnormal{\textbf{(Perfect CSA)}} We define \emph{perfect CSA} an ideal collateral agreement with the following characteristics:\par
        \begin{itemize}[noitemsep]
    \item fully symmetric
    \item zero initial margin 
    \item cash collateral
    \item zero threshold
    \item zero minimum transfer amount
    \item continuous margination and instantaneous margination rate.
    \end{itemize}
    \label{defPerfectCSA}
    \end{defi}
    Let us explain briefly each of the characteristics of a perfect CSA contract. In a CSA agreement, conditions are not necessarily equal for the two counterparties, however if we assume that the CSA is fully symmetric hence both parties have the same conditions and indeed both have to post collateral. The initial margin is the amount of money that the counterparties have to post when closing the deal. A zero initial margin implies that neither of the counterparties have to post an initial margin. The threshold is a level of exposure below which collateral will not be called. The threshold therefore represents an amount of undercollateralized exposure. If the exposure is above the threshold, only the incremental exposure will be collateralized. In the case of threshold equal to zero then every movement in mark-to-market should be collateralised. The minimum transfer amount defines the minimum amount of collateral that can be called for at a time. Collateral cannot be transferred in blocks that are smaller than the minimum transfer amount, but with this minimum transfer amount equal to zero then every update of mark-to-market has to be transferred. Finally the margination frequency refers to the periodic timescale which may be called and returned. Intraday margination is common for plain vanilla products such as repos and for derivatives cleared via central counterparties. For the perfect CSA, the continuous margination assumption is in practice operationally impossible, however this assumption will help us for the mathematical model. In the next subsection, we present the valuation framework for pricing any derivative with a given collateral account, this model is useful only when the derivative currency coincides with the collateral currency.
    
    \subsection{Valuation Framework in a Single Currency}
    Let $X$ be a derivative (of some underlying product) with maturity at time $T$ and with $(h(t))_{t \geq 0}$ price process. Let $V(t)$ be a stochastic self-financing collateral account under a perfect CSA agreement for a position of derivatives $X$\footnote{We will assume the conditions under a perfect CSA (see definition \ref{defPerfectCSA}). Nevertheless the model is built thinking the derivative cab be partially collateralised. The idea behind this assumption is to show the difference of valuation between collateralised and non-collateralised.}. Assume that the derivative payoff and the collateral account are posted in the same currency, later we will present the case of different currencies. Then the stochastic process of the collateral account is given by
    \begin{equation}\label{eq:dV}
        dV(s)=\alpha \left[ r(s)-c(s) \right] V(s) ds + a(s)dh(s),
    \end{equation}
    where, $\alpha$ is the percentage of collateralization on the derivative $X$, $r(s)$ and $c(s)$ are the funding rate and the collateral rate at time $s$, respectively, $h(s)$ is the value of the derivative at time $s$ which matures at time $T$ with cash flow $h(T)$, and $a(s)$ represents the number of positions of the derivative at time $s$. Equation \eqref{eq:dV} can be interpreted as follows: the change of the value of the collateral account depends on the interest differential earned on the partial posted collateral over time as well as the change in the value of the $a(s)$ underlying derivatives. Note that equation \eqref{eq:dV} has the form $dQ(t)=(\mu(t) - r_f(t)) Q(t) dt+\sigma_Q(t) dW(t)$ which is similar to the dynamic for pricing future contracts of stocks in a Black-Scholes setting. To solve \eqref{eq:dV} we have to multiply the equation by $\exp({\int_s^T \alpha y(\eta) d\eta})$, where $y(s)=r(s)-c(s)$, to get
    \begin{equation}\label{eq:expdV}
        e^{\int_s^T \alpha y(\eta) d\eta} dV(s)=e^{\int_s^T \alpha y(\eta) d\eta} \alpha y(s) V(s) ds + e^{\int_s^T \alpha y(\eta) d\eta} a(s)dh(s).
    \end{equation}
    Integrating \eqref{eq:expdV} over $[t,T]$ give us
    \begin{equation}\label{eq:intexpdV1}
        \int_t^T e^{\int_s^T \alpha y(\eta) d\eta} dV(s)= \int_t^T e^{\int_s^T \alpha y(\eta) d\eta} \alpha y(s) V(s) ds + \int_t^T e^{\int_s^T \alpha y(\eta) d\eta} a(s)dh(s).
    \end{equation}
    Let us define $u=e^{\int_s^T \alpha y(\eta) d\eta}$ and $dv=dV(s)$, using integration by parts formula we obtain
    \begin{align}
        \int udv &= uv - \int vdu\\
        \int_t^T e^{\int_s^T \alpha y(\eta) d\eta} dV(s) &= e^{\int_s^T \alpha y(\eta) d\eta} V(s)\Big|_t^T + \int_t^T V(s)e^{\int_s^T \alpha y(\eta) d\eta} \alpha y(s) ds. \label{intByParts}
    \end{align}
    Then using \eqref{eq:intexpdV1} and \eqref{intByParts} we have
    \begin{equation}\label{eq:intexpdV2}
        V(T)=e^{\int_t^T \alpha y(\eta) d\eta}V(t) + \int_t^T e^{\int_s^T \alpha y(\eta) d\eta} a(s)dh(s).
    \end{equation}
    As in \cite{fujii2010note}, we adopt the trading strategy specified by  
    \begin{equation}\label{tradingstrat}
    \left\{
    \begin{aligned}
        V(t)&=h(t)\\
        a(s)&=e^{\int_t^s \alpha y(\eta) d\eta}
    \end{aligned}
    \right.
	\end{equation}

    If we substitute the trading strategy \eqref{tradingstrat} in \eqref{eq:intexpdV2} we get
    \begin{align}
        V(T)&=e^{\int_t^T \alpha y(\eta) d\eta}V(t) + \int_t^T e^{\int_s^T \alpha y(\eta) d\eta} a(s)dh(s)
 \nonumber \\
            &=e^{\int_t^T \alpha y(\eta) d\eta}h(t) + \int_t^T e^{\int_s^T \alpha y(\eta) d\eta + \int_t^s \alpha y(\eta) d\eta} dh(s)
 \nonumber \\
            &=e^{\int_t^T \alpha y(\eta) d\eta}h(t) + e^{\int_t^T \alpha y(\eta) d\eta} \int_t^T dh(s)
 \nonumber \\
            &=e^{\int_t^T \alpha y(\eta) d\eta}h(t)+e^{\int_t^T \alpha y(\eta) d\eta}h(T)-e^{\int_t^T \alpha y(\eta) d\eta}h(t)
 \nonumber \\
            &=e^{\int_t^T \alpha y(\eta) d\eta}h(T)
    \end{align}
    Now we can calculate the present value $h(t)$ of the derivative $X$ considering the collateral account $V$ associated to it. Using the risk-neutral measure $\mathbb{Q}$ with numéraire $B(T)=\exp({\int_t^T r(s)ds})$ (money-market account) yields
    \begin{align}
        h(t) &= B(t) \mathbb{E}^\mathbb{Q}_t \left[ \frac{V(T)}{B(T)} \right]
 \nonumber \\
             &= \mathbb{E}^\mathbb{Q}_t \left[ \frac{e^{\int_t^T \alpha y(s) ds}h(T)}{e^{\int_t^T r(s)ds}} \right]
 \nonumber \\
             &= \mathbb{E}^\mathbb{Q}_t \left[ e^{-\int_t^T r(s) - \alpha (r(s) - c(s)) ds} h(T) \right]
 \nonumber \\
             &= \mathbb{E}^\mathbb{Q}_t \left[ e^{-\int_t^T (1 - \alpha)r(s) + \alpha c(s) ds} h(T) \right]. \label{eq5.9}
    \end{align}
    Therefore, if $X$ is fully collateralized, i.e. $\alpha=1$, then
    \begin{equation}\label{fullycoll}
        h(t)= \mathbb{E}^\mathbb{Q}_t \left[ e^{-\int_t^T c(s) ds} h(T) \right],\\
    \end{equation}
    which implies that collateralized cash flows must be discounted considering the collateral rate. Unlike this case, if $X$ is not collateralized, i.e. $\alpha=0$, then
        \begin{equation}
        h(t)= \mathbb{E}^\mathbb{Q}_t \left[ e^{-\int_t^T r(s) ds} h(T) \right],\\
    \end{equation}
    which implies that uncollateralized cash flows must be discounted considering the funding rate.
    \begin{defi}\label{Def:FullyCollZeroCoupon}
        We denote the price of a $T$-maturity fully-collateralized zero coupon bond by
        \begin{equation}
            P^{c}(t,T)=\mathbb{E}^\mathbb{Q}_t \left[ e^{-\int_t^T c(s) ds} \right].
        \end{equation}
    \end{defi}
    We will use this definition to introduce the forward measure under collateralization.
    \subsubsection{Forward Measure}
    Before we can go on, let us present in an informal way the concept of Radon-Nikod\'ym derivative and two important results that are used to build the collateralized forward measure. Consider a general finite probability space $(\Omega,\mathfrak{F},\mathbb{P})$. Suppose that on this space we have another probability measure $\mathbb{Q}$. Let us assume that $\mathbb{P}>0$ and $\mathbb{Q}>0$ for every $\omega \in \Omega$, so we can define
    \begin{equation}
        Z(\omega)=\frac{\mathbb{Q}(\omega)}{\mathbb{P}(\omega)}.
    \end{equation}
    Since $Z>0$ for all $\omega \in \Omega$ and $Z$ is a random variable we can compute the expected value of $Z$ under measure $\mathbb{P}$
    \begin{equation}
        \mathbb{E}^\mathbb{P}(Z)=\sum_{\omega \in \Omega} Z(\omega)\mathbb{P}(\omega)=\sum_{\omega \in \Omega} \frac{\mathbb{Q}(\omega)}{\mathbb{P}(\omega)} \mathbb{P}(\omega) = \sum_{\omega \in \Omega} \mathbb{Q}(\omega)=1.
    \end{equation}
    Now for any random variable $Y$,
    \begin{equation}
        \mathbb{E}^\mathbb{P}(ZY)=\sum_{\omega \in \Omega} Z(\omega)Y(\omega)\mathbb{P}(\omega)=\sum_{\omega \in \Omega} \frac{\mathbb{Q}(\omega)}{\mathbb{P}(\omega)} Y(\omega) \mathbb{P}(\omega) = \sum_{\omega \in \Omega} Y(\omega) \mathbb{Q}(\omega)=\mathbb{E}^\mathbb{Q}(Y).
    \end{equation}
    In this case, the random variable $Z$ is called the Radon-Nikod\'ym derivative of $\mathbb{Q}$ with respect to $\mathbb{P}$. The name of $Z$ is a consequence of its definition in the continuous case, since 
    \begin{align*}
        & Z(\omega):=\frac{d\mathbb{Q}}{d\mathbb{P}}\\
        \Rightarrow \hspace{5mm} & d\mathbb{Q}=Z(\omega)d\mathbb{P}\\
        \Rightarrow \hspace{5mm} & Y(\omega)d\mathbb{Q}=Y(\omega)Z(\omega)d\mathbb{P}\\
        \Rightarrow \hspace{5mm} & \int_{\Omega} Y(\omega) d\mathbb{Q} = \int_{\Omega}Y(\omega)Z(\omega)d\mathbb{P}\\
        \Rightarrow \hspace{5mm} & \mathbb{E}^\mathbb{Q}(Y)=\mathbb{E}^\mathbb{P}(YZ).
    \end{align*}
    Now we are able to present a formal definition of the Radon-Nikod\'ym derivative. 
    \begin{teo}\label{teozeta}
    Given $\mathbb{P}$ and $\mathbb{Q}$ equivalent probability measures and a time horizon $T$, we can define a random variable $\frac{d\mathbb{Q}}{d\mathbb{P}}$ defined on $\mathbb{P}$-possible paths, taking positive real values, such that
    \begin{enumerate}[label=(\roman*)]
    \item $\mathbb{E}^\mathbb{Q}(X(T))=\mathbb{E}^\mathbb{P} \bigg( \dfrac{d\mathbb{Q}}{d\mathbb{P}} X(T) \bigg)$, \hspace{3mm} for all random variables $X(T)$;
    \item $\mathbb{E}^\mathbb{Q}(X(T) | \mathfrak{F}(t) )=\zeta^{-1}(t) \mathbb{E}^\mathbb{P} (\zeta(T) X(T) | \mathfrak{F}(t)),$ \hspace{5mm} $t \leq T,$
    \end{enumerate}
    where $\zeta(t)$ is the process $\mathbb{E}^\mathbb{P} ( \frac{d\mathbb{Q}}{d\mathbb{P}} | \mathfrak{F}(t) )$.
    \end{teo}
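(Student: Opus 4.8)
The plan is to treat the two assertions separately: part (i) is the existence statement for the density, which is essentially the Radon-Nikod\'ym theorem, while part (ii) is the abstract Bayes rule for conditional expectations and carries the real content.

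For part (i), since $\mathbb{P}$ and $\mathbb{Q}$ are equivalent (each absolutely continuous with respect to the other), the Radon-Nikod\'ym theorem supplies a $\mathbb{P}$-almost surely strictly positive random variable $Z:=\frac{d\mathbb{Q}}{d\mathbb{P}}$ satisfying $\mathbb{Q}(A)=\mathbb{E}^{\mathbb{P}}(Z\,\mathbf{1}_A)$ for every event $A$. I would then extend this identity from indicators to a general integrable random variable $X(T)$ by the standard three-step argument: linearity handles simple functions, monotone convergence handles nonnegative $X(T)$, and splitting into positive and negative parts gives the general case. This is exactly the continuous-case analogue of the finite-space computation carried out informally just before the statement, so I would only sketch it.

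For part (ii), I would set $\zeta(t):=\mathbb{E}^{\mathbb{P}}(Z\mid\mathfrak{F}(t))$, a strictly positive $\mathbb{P}$-martingale, the positivity coming from $Z>0$ together with equivalence of the measures, which is what makes $\zeta^{-1}(t)$ well defined. Writing $Y:=\zeta^{-1}(t)\,\mathbb{E}^{\mathbb{P}}(\zeta(T)X(T)\mid\mathfrak{F}(t))$, the goal is to identify $Y$ with $\mathbb{E}^{\mathbb{Q}}(X(T)\mid\mathfrak{F}(t))$. Since $Y$ is manifestly $\mathfrak{F}(t)$-measurable, it suffices by the defining property of conditional expectation under $\mathbb{Q}$ to verify that $\mathbb{E}^{\mathbb{Q}}(\mathbf{1}_A Y)=\mathbb{E}^{\mathbb{Q}}(\mathbf{1}_A X(T))$ for every $A\in\mathfrak{F}(t)$. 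Starting from the left-hand side, I would convert to $\mathbb{P}$ via part (i), apply the tower property to condition on $\mathfrak{F}(t)$, pull the $\mathfrak{F}(t)$-measurable factor $\mathbf{1}_A Y$ out of the inner expectation to produce $\zeta(t)$, cancel $\zeta(t)$ against $\zeta^{-1}(t)$, and use the definition of conditional expectation to strip the remaining conditioning; converting back through part (i), and using $\zeta(T)=Z$ since $Z$ is $\mathfrak{F}(T)$-measurable for the horizon $T$, returns $\mathbb{E}^{\mathbb{Q}}(\mathbf{1}_A X(T))$ and closes the argument.

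The chain of equalities in that last step is routine bookkeeping; the points deserving the most care are ensuring $\zeta(t)>0$ almost surely so that $\zeta^{-1}(t)$ exists and the cancellation is legitimate, and identifying $\zeta(T)$ with $Z$ via measurability of the density at the horizon. Both follow from the equivalence of $\mathbb{P}$ and $\mathbb{Q}$ and from the convention that the density is taken over the time-$T$ information set.
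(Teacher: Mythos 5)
Your argument is correct, but it is worth noting that the paper does not actually prove this theorem: it only gives the informal finite-$\Omega$ computation preceding the statement, which establishes the unconditional identity (i) by summing $Z(\omega)Y(\omega)\mathbb{P}(\omega)$ over sample points, and then states the theorem (a standard textbook result) without any argument for part (ii). Your proposal therefore goes well beyond the paper. For (i) you replace the discrete sum by the Radon--Nikod\'ym theorem plus the usual extension from indicators to integrable $X(T)$, which is the right general-measure-space analogue of the paper's heuristic. For (ii) you give the standard abstract Bayes-rule verification: check that the candidate $Y=\zeta^{-1}(t)\,\mathbb{E}^{\mathbb{P}}(\zeta(T)X(T)\mid\mathfrak{F}(t))$ is $\mathfrak{F}(t)$-measurable and satisfies $\mathbb{E}^{\mathbb{Q}}(\mathbf{1}_A Y)=\mathbb{E}^{\mathbb{Q}}(\mathbf{1}_A X(T))$ for $A\in\mathfrak{F}(t)$ by converting to $\mathbb{P}$, using the tower property to produce $\zeta(t)$, cancelling, and identifying $\zeta(T)$ with $Z$. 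The chain of equalities is sound, and you correctly isolate the two points that need care (strict positivity of $\zeta(t)$, which follows from $Z>0$ a.s.\ under equivalence, and the $\mathfrak{F}(T)$-measurability of the density). The only mild caveat is that part (i) cannot hold for literally ``all random variables'' as the statement claims; your restriction to integrable $X(T)$ is the correct reading, and you should perhaps also note that in (ii) one needs $\zeta(T)X(T)$ to be $\mathbb{P}$-integrable, which is equivalent to $X(T)$ being $\mathbb{Q}$-integrable.
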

    \begin{teo}\label{ChangeMeasureTheo}
        Consider two numéraires $N(t)$ and $M(t)$, inducing equivalent martingale measures $\mathbb{Q}^N$ and $\mathbb{Q}^M$, respectively. If the market is complete, then the density of Radon-Nikodým derivative relating the two measures is uniquely given by
        \begin{equation}
            \zeta(t)=\mathbb{E}^{\mathbb{Q}^N} \bigg( \frac{d\mathbb{Q}^M}{d\mathbb{Q}^N} \bigg| \mathfrak{F}(t) \bigg) = \frac{M(t)/M(0)}{N(t)/N(0)}.
        \end{equation}
    \end{teo}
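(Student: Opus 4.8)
The plan is to exhibit the claimed process $\zeta(t)=\frac{M(t)/M(0)}{N(t)/N(0)}$ explicitly and verify that it satisfies the abstract characterisation of a Radon--Nikod\'ym density process from Theorem \ref{teozeta}, so that uniqueness then forces the two expressions in the statement to agree. Throughout I would lean on the defining property of a numéraire-induced martingale measure: for any tradeable asset with price process $X(t)$, the deflated price $X(t)/N(t)$ is a $\mathbb{Q}^N$-martingale, and likewise $X(t)/M(t)$ is a $\mathbb{Q}^M$-martingale. This is precisely the content of the phrase ``inducing equivalent martingale measures.''

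First I would check that $\zeta(t)$ is a legitimate density process. Since the numéraire $M$ is itself a tradeable asset, the property above applied with $X=M$ shows that $M(t)/N(t)$ is a $\mathbb{Q}^N$-martingale; as $\zeta(t)=\frac{N(0)}{M(0)}\,\frac{M(t)}{N(t)}$ is just a positive constant multiple of this process, $\zeta$ is a strictly positive $\mathbb{Q}^N$-martingale with $\zeta(0)=1$. Such a process defines, via $d\mathbb{Q}/d\mathbb{Q}^N=\zeta(T)$, a measure $\mathbb{Q}$ equivalent to $\mathbb{Q}^N$ whose density process, in the sense of Theorem \ref{teozeta}, is exactly $\zeta(t)=\mathbb{E}^{\mathbb{Q}^N}(d\mathbb{Q}/d\mathbb{Q}^N\mid\mathfrak{F}(t))$.

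The heart of the argument is then to show that this $\mathbb{Q}$ is none other than $\mathbb{Q}^M$. I would fix an arbitrary tradeable $X$ and compute $\mathbb{E}^{\mathbb{Q}}[X(T)/M(T)\mid\mathfrak{F}(t)]$ using the change-of-measure formula of Theorem \ref{teozeta}(ii). The key simplification is that $\zeta(T)\,\frac{X(T)}{M(T)}=\frac{N(0)}{M(0)}\,\frac{X(T)}{N(T)}$, which collapses the $\mathbb{Q}^N$-conditional expectation onto the deflated price $X(t)/N(t)$ by the martingale property; after substituting $\zeta(t)^{-1}=\frac{N(t)M(0)}{N(0)M(t)}$ the numéraire factors cancel and I obtain $\mathbb{E}^{\mathbb{Q}}[X(T)/M(T)\mid\mathfrak{F}(t)]=X(t)/M(t)$. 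Hence every deflated tradeable is a $\mathbb{Q}$-martingale, which is the defining property of the martingale measure attached to the numéraire $M$.

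Finally I would invoke completeness: in a complete market the equivalent martingale measure attached to a given numéraire is unique, so the measure $\mathbb{Q}$ constructed above must coincide with $\mathbb{Q}^M$, identifying $\zeta(t)=\mathbb{E}^{\mathbb{Q}^N}(d\mathbb{Q}^M/d\mathbb{Q}^N\mid\mathfrak{F}(t))$ with $\frac{M(t)/M(0)}{N(t)/N(0)}$ and yielding uniqueness of the density. The step I expect to be the main obstacle is this last identification: everything preceding it merely verifies that the candidate is \emph{a} consistent density, and it is only the completeness hypothesis --- through uniqueness of the numéraire measure --- that guarantees the constructed $\mathbb{Q}$ equals $\mathbb{Q}^M$ rather than some other equivalent measure. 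I would therefore state the uniqueness-of-martingale-measure fact carefully, since without it the chain of martingale identities pins down $\zeta$ only up to the choice of pricing measure.
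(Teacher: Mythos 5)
Your proof is correct, but note that the paper itself offers no proof of this theorem: it is stated as a known change-of-num\'eraire result (the classical Geman--El Karoui--Rochet theorem, as found e.g.\ in Brigo--Mercurio) and is simply invoked later to construct the collateralized forward measure. Your argument is the standard one and is sound: you verify that $\zeta(t)=\frac{N(0)}{M(0)}\frac{M(t)}{N(t)}$ is a strictly positive $\mathbb{Q}^N$-martingale with $\zeta(0)=1$ (because the num\'eraire $M$ is a tradeable asset, so $M/N$ is a $\mathbb{Q}^N$-martingale), let it define a measure $\mathbb{Q}$, and then use the abstract Bayes formula of Theorem \ref{teozeta}(ii) together with the cancellation $\zeta(T)\,X(T)/M(T)=\frac{N(0)}{M(0)}\,X(T)/N(T)$ to show that every $M$-deflated tradeable is a $\mathbb{Q}$-martingale. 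Your closing appeal to completeness --- uniqueness of the equivalent martingale measure associated with a given num\'eraire, i.e.\ the second fundamental theorem of asset pricing --- is exactly the right place to use that hypothesis, and you are right to flag it as the step that pins down $\mathbb{Q}=\mathbb{Q}^M$ rather than merely some equivalent measure. The only cosmetic caveat is that the martingale property of $X/N$ should be invoked for the class of tradeables for which the model asserts it (self-financing, non-dividend-paying portfolios); with that understood, your proposal supplies a complete proof of a statement the paper leaves unproved.
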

    Then let us define the forward measure $\mathbb{T}^c$ associated with the numéraire $P^{c}(t,T)$. Using theorem \ref{ChangeMeasureTheo} and the risk neutral measure $\mathbb{Q}$ with the numéraire $C(t):=$ $\exp(-\int_t^T c(s)ds)$ we have that
    \begin{equation}
        \zeta(t)=\mathbb{E}^{\mathbb{Q}}_t \bigg( \frac{d\mathbb{T}^c}{d\mathbb{Q}} \bigg)=\frac{P^c(t,T)/P^c(0,T)}{C(t)/C(0)}=\frac{P^c(t,T)C(0)}{P^c(0,T)C(t)}.
    \end{equation}
    Using equation \eqref{fullycoll} and theorem \ref{teozeta} we have
    \begin{align*}
        h(t)&=\mathbb{E}^\mathbb{Q}_t \left[ e^{-\int_t^T c(s) ds} h(T) \right]\\
            &=\mathbb{E}_t^\mathbb{Q}\left[ C(t)h(T) \right]\\
            &=\mathbb{E}_t^\mathbb{Q}\left[ P^c(t,T)\frac{\zeta(T)}{\zeta(t)} h(T) \right]\\
            &=\zeta^{-1}(t) \mathbb{E}_t^\mathbb{Q}\left[ \zeta(T) P^c(t,T)h(T) \right]\\
            &=\mathbb{E}_t^{\mathbb{T}^c}\left[ P^c(t,T) h(T) \right]\\
            &=P^c(t,T) \mathbb{E}_t^{\mathbb{T}^c}\left[ h(T) \right].
    \end{align*}
    This bring us to the following definition.
    \begin{defi}\label{Def:FullyColl} The price of a fully-collateralized derivative at time $t$ which matures at time $T$ with payoff $h(T)$ is given by,
    \begin{equation}\label{eq:PriceCollDer1Curr}
        h(t)=P^c(t,T) \mathbb{E}_t^{\mathbb{T}^c}\left[ h(T) \right],
    \end{equation}
    where $P^c(t,T)=\mathbb{E}^\mathbb{Q}_t \left[ \exp({-\int_t^T c(s) ds}) \right]$ is the $T$-maturity fully-collateralized zero coupon bond and $\mathbb{T}^c$ is the forward measure associated with the numéraire $P^c(t,T)$.
    \end{defi}
    
    Repeating the previous arguments that yield in definitions \ref{Def:FullyCollZeroCoupon} and \ref{Def:FullyColl} we are able to define the zero coupon bond and forward measure when the derivative $X$ is partially-collateralized.
    
    \begin{defi}\label{Def:AlphaColl}
        We denote the price of a $T$-maturity $\alpha$-collateralized zero coupon bond by
        \begin{equation}
            P^{\alpha c}(t,T)=\mathbb{E}^\mathbb{Q}_t \left[ e^{-\int_t^T (1 - \alpha)r(s) + \alpha c(s) ds} \right].
        \end{equation}
        where $r(s)$ denotes the funding rate, $c(s)$ the collateral rate and $\alpha \in [0,1]$.
    \end{defi}
    
    \begin{teo}\label{Teo:AlphaColl} The price of an $\alpha$-collateralized derivative at time $t$, which matures at time $T$ with payoff $h(T)$, is given by
    \begin{equation}\label{eq:PriceAlphaCollDer1Curr}
        h(t)=P^{\alpha c}(t,T) \mathbb{E}_t^{\mathbb{T}^{\alpha c}}\left[ h(T) \right],
    \end{equation}
    where $P^{\alpha c}(t,T)$ is the $T$-maturity $\alpha$-collateralized zero coupon bond and $\mathbb{T}^{\alpha c}$ is the forward measure associated with the numéraire $P^{\alpha c}(t,T)$.
    \end{teo}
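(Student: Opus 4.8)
The plan is to exploit the fact that the partially-collateralized valuation formula \eqref{eq5.9} has exactly the same structure as the fully-collateralized formula \eqref{fullycoll} underlying Definition \ref{Def:FullyColl}, with the collateral rate $c(s)$ replaced by the effective rate $(1-\alpha)r(s)+\alpha c(s)$. So rather than redoing the self-financing construction, I would transcribe the change-of-measure computation already carried out for the $\alpha=1$ case, now with this convex-combination rate in place of $c$.

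First I would set $C_\alpha(t):=\exp\!\big(-\int_t^T [(1-\alpha)r(s)+\alpha c(s)]\,ds\big)$, so that Definition \ref{Def:AlphaColl} reads $P^{\alpha c}(t,T)=\mathbb{E}^\mathbb{Q}_t[C_\alpha(t)]$ and the starting point \eqref{eq5.9} becomes $h(t)=\mathbb{E}^\mathbb{Q}_t[C_\alpha(t)\,h(T)]$. This places us in precisely the setting of the fully-collateralized derivation, with $C_\alpha$ playing the role of the numéraire $C$ and $P^{\alpha c}$ the role of $P^c$.

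Next I would introduce the forward measure $\mathbb{T}^{\alpha c}$ associated with the numéraire $P^{\alpha c}(t,T)$. Invoking Theorem \ref{ChangeMeasureTheo} with the two numéraires $P^{\alpha c}(t,T)$ and $C_\alpha(t)$, the Radon-Nikodým density relating $\mathbb{T}^{\alpha c}$ to $\mathbb{Q}$ is
$$\zeta(t)=\mathbb{E}^{\mathbb{Q}}_t\bigg(\frac{d\mathbb{T}^{\alpha c}}{d\mathbb{Q}}\bigg)=\frac{P^{\alpha c}(t,T)/P^{\alpha c}(0,T)}{C_\alpha(t)/C_\alpha(0)}.$$
Applying property (ii) of Theorem \ref{teozeta} to change measure then gives the chain
\begin{align*}
h(t)&=\mathbb{E}^\mathbb{Q}_t\big[C_\alpha(t)\,h(T)\big]\\
    &=\mathbb{E}_t^\mathbb{Q}\Big[P^{\alpha c}(t,T)\frac{\zeta(T)}{\zeta(t)}h(T)\Big]\\
    &=\zeta^{-1}(t)\,\mathbb{E}_t^\mathbb{Q}\big[\zeta(T)\,P^{\alpha c}(t,T)\,h(T)\big]\\
    &=P^{\alpha c}(t,T)\,\mathbb{E}_t^{\mathbb{T}^{\alpha c}}\big[h(T)\big],
\end{align*}
where the final step pulls the $\mathfrak{F}(t)$-measurable factor $P^{\alpha c}(t,T)$ out of the conditional expectation.

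The argument is essentially mechanical once the reinterpretation of the discount rate is made, so I do not anticipate a genuine obstacle. The one point requiring care is the legitimacy of using $P^{\alpha c}(t,T)$ as a numéraire in Theorem \ref{ChangeMeasureTheo}: this asks that it be a strictly positive traded quantity, equivalently that the convex combination $(1-\alpha)r+\alpha c$ defines an admissible discounting, and one should note that market completeness is being assumed exactly as in the $\alpha=1$ derivation. Granting that, the partially-collateralized formula \eqref{eq:PriceAlphaCollDer1Curr} follows verbatim from the fully-collateralized one, with $\alpha=1$ recovering Definition \ref{Def:FullyColl} and $\alpha=0$ recovering the uncollateralized discounting by the funding rate.
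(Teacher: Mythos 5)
Your proposal is correct and follows essentially the same route as the paper's proof in Appendix A: both define the numéraire $X(t)=C_\alpha(t)=\exp\left(-\int_t^T \left[(1-\alpha)r(s)+\alpha c(s)\right]ds\right)$, compute the Radon--Nikod\'ym density via Theorem \ref{ChangeMeasureTheo}, and apply Theorem \ref{teozeta} to equation \eqref{eq5.9} exactly as in the fully-collateralized case. Your version even corrects a small slip in the paper's density formula (which writes $P^c$ where $P^{\alpha c}$ is meant), and your remark on the admissibility of $P^{\alpha c}(t,T)$ as a numéraire is a point the paper leaves implicit.
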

    \begin{proof}
    See appendix \ref{App:Appendix0}.
    \end{proof}
    
    Note that when $\alpha=1$ we obtain the result of definition \ref{Def:FullyColl}. In contrast, when $\alpha=0$ we get that $P^{\alpha c}(t,T)=\mathbb{E}^\mathbb{Q}_t \big[ \exp({-\int_t^T r(s)ds}) \big]=\mathbb{E}^\mathbb{Q}_t \big[ (B(T))^{-1} \big]$, but $B(T)$ is the numéraire associated to the risk measure $\mathbb{Q}$, hence $\mathbb{T}^{\alpha c} \sim \mathbb{Q}$ and $P^{\alpha c}(t,T)=(B(T))^{-1}$.
    
    \subsubsection{Calibration of the USD discount curve}\label{USDDiscountCurveSection}
    Up to now we have defined fully-collateralized discount factors $P^c(t,T)$, but we have not described how to obtain them from the market. Under the assumption that the collateral rates are equal to the overnight rates, we can build the collateral discount curve using the Overnight Index Swaps (OISs) that are liquid market instruments. OIS cash flows are linked to overnight rates since the floating leg of the swap compounds them. It is important to point out that typically these swaps are fully-collateralized and the collateral currency coincides with the currency of the overnight rate.\par\smallskip 
    It is important to point out that, regardless the creditworthiness of the counterparty, the correct discount curve in a collateralized world is always the collateral rate (in this case the OISs implied curve) since the risk-neutral pricing framework, presented in the previous section, is always based on discounting with the closest \emph{risk-free} rate. However, in practice, we have to consider the counterparty characteristics for a correct valuation of the derivative. To do that we have to price every derivative in a \emph{risk-free} environment and then the price has to be adjusted considering every x-valuation adjustments (XVA: CVA, DVA, FVA, KVA) rather than simply modifying the discount curve according to the credit quality of the counterparty. In other words, the price of a derivative may be seen as the following equation:
    \begin{equation}
    	\textnormal{Price}_\textnormal{Ctpy} :=  \textnormal{Price}_\textnormal{Risk free} + \underbrace{ \textnormal{CVA}_\textnormal{Ctpy} + \textnormal{FVA}_\textnormal{Ctpy} + \textnormal{KVA}_\textnormal{Ctpy} }_{\text{Valuation Adjustments}}.
    \end{equation}
    As we state on the introduction, in this work we will only focus on the \emph{risk-free} price, i.e. the price when the counterparty is under a perfect CSA and all valuation adjustments are negligible.\par\smallskip
    The overnight rate in the USD market is the Federal Funds Rate (Fed Fund Rate or FF Rate) that is published every business day with a publication lag of 1 day\footnote{Publication lag is the number of days between the start date of the period and the rate publication. A lag of 0 means on the start date, a lag of 1 means on the period end date.}, therefore OISs based on Fed Funds rates are fully collateralized in USD. Therefore, since the estimation curve (overnight rates) and the discount curve (collateral rates) are the same curve, with a simple bootstrapping we could construct the complete overnight rate curve. OISs are defined for a great variety of currencies, in this work we only focus on USD swaps market\footnote{The overnight rates could be either overnight (ON) loans or tomorrow/next (TN) loans. Main currencies overnight rates are: USD (Fed Fund), EUR (EONIA), GBP (SONIA), CHF (TOIS), JPY (TONAR), CAD (CORRA), HKD (HONIX) (see \cite{henrard2012interest}).}.\par
    \begin{rem}
    The existence of an overnight rate does not necessarily means the existence of an OIS market. For example, in the MXN interest rates market the overnight rate is called Tasa de Fondeo Bancario\footnote{The Tasa de Fondeo Bancario is defined every business day with a publication lag of 1 day.} and is computed/published by Banco de M\'exico. However, OISs using Tasa de Fondeo Bancario do not exist or at least are not quoted in the market.
    \end{rem}
	\begin{assum}
    For overnight curve calibration we assume that OISs are quoted and have liquidity in the market for maturities up to 50 years. We do not use quotes of Federal Fund Swaps (FFSs) (see section \ref{OISsFFSs}). Note that if FFSs are used, the discounting, overnight forward and LIBOR 3m forward curves need to be calibrated simultaneously using a dual bootstrapping.
    \end{assum}
   An OIS is a swap that exchanges a fixed rate coupon for a daily compounded overnight rate coupon, where the dates of the two coupon payments typically coincide. Hence, the floating payment between two dates $t$ and $T$ is given by
    \begin{equation}\label{OISfedfundpayment}
        \prod_{i=0}^{K-1} \left( 1 + \alpha(t_i,t_{i+1}) \textnormal{\textbf{FF}}(t_i,t_{i+1}) \right) - 1,
    \end{equation}
    \noindent \makebox[2.8cm][l]{$K$:} number of all the business days in the time interval $[t,T]$\par
     \noindent \makebox[2.8cm][l]{$\{t_i\}_{i=0}^K$:} all the business days in the accrual period $[t,T]$ with $t_0=t$ and $t_K=T$\par
     \noindent \makebox[2.8cm][l]{$\textnormal{\textbf{FF}}(t_i,t_{i+1})$:} represents the overnight interest rate for the period $(t_i,t_{i+1})$\par
    \noindent \makebox[2.8cm][l]{$\alpha(t_i,t_{i+1})$:} denotes the year fraction between $t_i$ and $t_{i+1}$ according to a market convention.\par\medskip
    Let us calculate the net present value of a payer OIS contract. Assume that at time $t$ we enter into a payer OIS with $N$ coupons, payments dates at $T_1<T_2< \dots <T_N$ and $T_0-t$ days of spot lag. Suppose that we pay a fixed rate $k$ and receive a floating rate (daily compounded overnight rate). Using equation \eqref{OISfedfundpayment} the $i$th floating leg rate $\textnormal{\textbf{FFComp}}(T_{i-1},T_i)$ could be write as
    \begin{equation}\label{FFCompDef}
        \textnormal{\textbf{FFComp}}(T_{i-1},T_i)=\frac{1}{\sum_{j=0}^{K_i-1} \alpha(t_j,t_{j+1})} \bigg (\prod_{j=0}^{K_i-1} \big( 1 + \alpha(t_j,t_{j+1}) \textnormal{\textbf{FF}}(t_j,t_{j+1}) \big) - 1 \bigg),
    \end{equation}
    \noindent \makebox[2.8cm][l]{$K_i$:} number of all the business days in the time interval $[T_{i-1},T_i]$\par
     \noindent \makebox[2.8cm][l]{$\{t_j\}_{j=0}^{K_i}$:} all business day in the accrual period $[T_{i-1},T_i]$ with $t_0=T_{i-1}$ and $t_{K_i}=T_i$\par
     \noindent \makebox[2.8cm][l]{$\textnormal{\textbf{FF}}(t_i,t_{i+1})$:} represents the overnight interest rate for the period $(t_j,t_{j+1})$\par
    \noindent \makebox[2.8cm][l]{$\alpha(t_j,t_{j+1})$:} denotes the year fraction between $t_j$ and $t_{j+1}$ according to a market convention.\par\medskip 
    It is easy to see that $\sum_{j=0}^{K_i} \alpha(t_j,t_{j+1})=\alpha(T_{i-1},T_i)$. Hence, since we can define $\delta$ such that $e^{\delta t}=1+it$, then we have that
\begin{align}
    \prod_{j=0}^{K_i-1} \big( 1 + \alpha(t_j,t_{j+1}) \textnormal{\textbf{FF}}(t_j,t_{j+1}) \big) - 1 &=e^{\ln \big(\prod_{j=0}^{K_i-1} \big( 1 + \alpha(t_j,t_{j+1}) \textnormal{\textbf{FF}}(t_j,t_{j+1}) \big) \big)} - 1 \nonumber \\
    &=e^{\sum_{j=0}^{K_i-1} \ln \left( 1 + \alpha(t_j,t_{j+1}) \textnormal{\textbf{FF}}(t_j,t_{j+1})\right)} -1 \nonumber \\
    &=e^{\sum_{j=0}^{K_i-1} \alpha(t_j,t_{j+1}) \delta(t_j,t_{j+1}) }-1. \label{sumRiemann}
\end{align}
    Note that in the equation \eqref{sumRiemann} the term $\sum_{j=0}^{K_i-1} \alpha(t_j,t_{j+1}) \delta(t_j,t_{j+1})$ is a Riemann sum of function $\delta$ with partition $\mathcal{P}=\{[t_0,t_1],[t_1,t_2],\dots,[t_{K_i-1},t_{K_i}] \}$. Hence $$\sum_{j=0}^{K_i-1} \alpha(t_j,t_{j+1}) \delta(t_j,t_{j+1}) \approx \int_{T_{i-1}}^{T_{i}} \delta(s)ds$$ and we define $c(s):=\delta(s)$ as the collateral curve.
    Now, assuming that this contract if fully collateralized by \eqref{fullycoll} we have that the present value of a payer OIS contract is given by,
    \begin{align}
        \textnormal{PV}(t)&=\sum_{i=1}^N \tau(T_{i-1},T_i) \mathbb{E}_t \left[ e^{-\int_t^{T_i} c(s)ds} \big( \textnormal{\textbf{FFComp}}(T_{i-1},T_i) - k \big) \right] \nonumber \\
             &=\sum_{i=1}^N \tau(T_{i-1},T_i) \mathbb{E}_t \left[ e^{-\int_t^{T_i} c(s)ds} \bigg( \frac{1}{\sum_j \alpha(t_j,t_{j+1})} \bigg (\prod_{j=0}^{K_i-1} \big( 1 + \alpha(t_j,t_{j+1}) \textnormal{\textbf{FF}}(t_j,t_{j+1}) \big) - 1 \bigg) - k \bigg) \right]
 \nonumber \\
             &=\sum_{i=1}^N \tau(T_{i-1},T_i) \mathbb{E}_t \left[ e^{-\int_t^{T_i} c(s)ds} \bigg( \frac{1}{\tau(T_{i-1},T_i)} \bigg( e^{\int_{T_{i-1}}^{T_i} c(s)ds} - 1 \bigg) - k \bigg) \right]
 \nonumber \\
             &=\sum_{i=1}^N \mathbb{E}_t \left[ e^{-\int_t^{T_i} c(s)ds} \bigg( e^{\int_{T_{i-1}}^{T_i} c(s)ds} - 1 \bigg) \right] - k \sum_{i=1}^N \tau(T_{i-1},T_i) \mathbb{E}_t \left[ e^{-\int_t^{T_i} c(s)ds} \right]
 \nonumber \\
             &=\sum_{i=1}^N \mathbb{E}_t \left[ e^{-\int_t^{T_{i-1}} c(s)ds} - e^{\int_t^{T_i} c(s)ds} \right] - k \sum_{i=1}^N \tau(T_{i-1},T_i) \mathbb{E}_t \left[ e^{-\int_t^{T_i} c(s)ds} \right]
 \nonumber \\
             &=\sum_{i=1}^N (P^c(t,T_{i-1})-P^c(t,T_i)) - k \sum_{i=1}^N \tau(T_{i-1},T_i) P^c(t,T_i)
 \nonumber \\
             &=P^c(t,T_0)-P^c(t,T_N) - k \sum_{i=1}^N \tau(T_{i-1},T_i) P^c(t,T_i). \label{OIS_PV}
    \end{align}
    Therefore, if we assume that the fixed rate $k$ is a mid-market quote then by no-arbitrage arguments we have that the present value of the OIS is equal to zero. Setting equation \eqref{OIS_PV} equal to zero give us the following equation
    \begin{equation}\label{nCouponOISBootstrapping}
        P^c(t,T_N)=\frac{P^c(t,T_0)-k\sum_{i=1}^{N-1} \tau(T_{i-1},T_i) P^c(t,T_i)}{1+k\tau(T_{N-1},T_N)}.
    \end{equation}
    This equation is called the bootstrapping equation associated to the $N$-coupons OIS contract. It is easy to see that we have $N+1$ variables $P^c(t,T_0),P^c(t,T_1),\dots,P^c(t,T_N)$ and only one equation. Thus, we have to define more OIS contracts to obtain more bootstrapping equations and a method to solve this system of equations.\par \smallskip
    In the USD OIS market the swaps with maturities no longer than a year normally have one payment at maturity, while swaps with a maturity over a year normally have yearly payments. In the next table (Table \ref{OISQuotesBloomberg}), we present the most liquid tenors for OIS contracts, these swaps are used for the construction of the collateralized discount curve. 
    
    \begin{table}
    \scriptsize
    \centering
    \begin{tabular}{|c c c | c | c |}
      \hline
     \multicolumn{1}{|>{\centering\arraybackslash}m{10mm}}{\textbf{Tenor}} & 
     \multicolumn{1}{>{\centering\arraybackslash}m{10mm}}{\textbf{Rate}} &
     \multicolumn{1}{>{\centering\arraybackslash}m{10mm}|}{\textbf{Type}} & 
     \multicolumn{1}{>{\centering\arraybackslash}m{20mm}|}{\textbf{Number of Variables}} &
     \multicolumn{1}{>{\centering\arraybackslash}m{30mm}|}{\textbf{Variables}} \\ \hline
     
      ON & 0.1300 & Cash & 1 & $P^c(t,t+1)$ \\ 
      TN & 0.1300 & Cash & 1 & $P^c(t,t+2)$\\ \hline
      1W & 0.1340 & Swap & 1 & $P^c(t,T_{\textnormal{1w}})$ \\ 
      2W & 0.1338 & Swap & 1 & $P^c(t,T_{\textnormal{2w}})$ \\ 
      3W & 0.1340 & Swap & 1 & $P^c(t,T_{\textnormal{3w}})$ \\ 
      1M & 0.1340 & Swap & 1 & $P^c(t,T_{\textnormal{1m}})$ \\ 
      2M & 0.1420 & Swap & 1 & $P^c(t,T_{\textnormal{2m}})$ \\ 
      3M & 0.1469 & Swap & 1 & $P^c(t,T_{\textnormal{3m}})$ \\ 
      4M & 0.1760 & Swap & 1 & $P^c(t,T_{\textnormal{4m}})$ \\ 
      5M & 0.1990 & Swap & 1 & $P^c(t,T_{\textnormal{5m}})$ \\ 
      6M & 0.2190 & Swap & 1 & $P^c(t,T_{\textnormal{6m}})$ \\ 
      7M & 0.2460 & Swap & 1 & $P^c(t,T_{\textnormal{7m}})$ \\ 
      8M & 0.2740 & Swap & 1 & $P^c(t,T_{\textnormal{8m}})$ \\ 
      9M & 0.3000 & Swap & 1 & $P^c(t,T_{\textnormal{9m}})$ \\ 
      10M & 0.3275 & Swap & 1 & $P^c(t,T_{\textnormal{10m}})$ \\ 
      11M & 0.3560 & Swap & 1 & $P^c(t,T_{\textnormal{11m}})$ \\ 
      1Y & 0.3860 & Swap & 1 & $P^c(t,T_{\textnormal{1y}})$ \\ 
      18M$^{\ast}$ & 0.5795 & Swap & 2 & $P^c(t,T_{\textnormal{6m}}),P^c(t,T_{\textnormal{18m}})$\\ 
      2Y & 0.7850 & Swap & 2 & $P^c(t,T_{\textnormal{1y}}),P^c(t,T_{\textnormal{2y}})$\\ 
      3Y & 1.1500 & Swap & 3 & $P^c(t,T_{\textnormal{1y}}),\dots,P^c(t,T_{\textnormal{3y}})$\\ 
      4Y & 1.4460 & Swap & 4 & $P^c(t,T_{\textnormal{1y}}),\dots,P^c(t,T_{\textnormal{4y}})$\\ 
      5Y & 1.6870 & Swap & 5 & $P^c(t,T_{\textnormal{1y}}),\dots,P^c(t,T_{\textnormal{5y}})$\\ 
      6Y & 1.8790 & Swap & 6 & $P^c(t,T_{\textnormal{1y}}),\dots,P^c(t,T_{\textnormal{6y}})$\\ 
      7Y & 2.0350 & Swap & 7 & $P^c(t,T_{\textnormal{1y}}),\dots,P^c(t,T_{\textnormal{7y}})$\\ 
      8Y & 2.1535 & Swap & 8 & $P^c(t,T_{\textnormal{1y}}),\dots,P^c(t,T_{\textnormal{8y}})$\\ 
      9Y & 2.2530 & Swap & 9 & $P^c(t,T_{\textnormal{1y}}),\dots,P^c(t,T_{\textnormal{9y}})$\\ 
      10Y & 2.3320 & Swap & 10 & $P^c(t,T_{\textnormal{1y}}),\dots,P^c(t,T_{\textnormal{10y}})$\\ 
      12Y & 2.4625 & Swap & 12 & $P^c(t,T_{\textnormal{1y}}),\dots,P^c(t,T_{\textnormal{12y}})$\\ 
      15Y & 2.5815 & Swap & 15 & $P^c(t,T_{\textnormal{1y}}),\dots,P^c(t,T_{\textnormal{15y}})$\\ 
      20Y & 2.6950 & Swap & 20 & $P^c(t,T_{\textnormal{1y}}),\dots,P^c(t,T_{\textnormal{20y}})$\\ 
      25Y & 2.7470 & Swap & 25 & $P^c(t,T_{\textnormal{1y}}),\dots,P^c(t,T_{\textnormal{25y}})$\\ 
      30Y & 2.7720 & Swap & 30 & $P^c(t,T_{\textnormal{1y}}),\dots,P^c(t,T_{\textnormal{30y}})$\\ 
      40Y & 2.7790 & Swap & 40 & $P^c(t,T_{\textnormal{1y}}),\dots,P^c(t,T_{\textnormal{40y}})$\\ 
      50Y & 2.7651 & Swap & 50 & $P^c(t,T_{\textnormal{1y}}),\dots,P^c(t,T_{\textnormal{50y}})$\\ 
       \hline
    \end{tabular}
    \caption{Quoted USD OIS (\%) on May 29 2015 (Source: Bloomberg). $^{\ast}$The 18m OIS swap convention has an upfront short stub, i.e., each leg has two coupons: the first with an accrual period of 6M and the second with an accrual period of 12m (1m).}
    \label{OISQuotesBloomberg}
    \end{table}
    
    \paragraph{Bootstrapping}
    First, we see that using the overnight loan (ON) and tommorow/next (TN) loans we could calculate $P^c(t,T_0)$, since
    \begin{equation}
        P^c(t,T_0)=P^c(t,t+2)=P^c(t,t+1)P^c(t+1,t+2)=\frac{1}{(1+\tau(t,t+1)\textnormal{ON})} \cdot \frac{1}{(1+\tau(t+1,t+2)\textnormal{TN})}.
    \end{equation}
    Now, for the swaps with only one payment at the maturity the equation \eqref{nCouponOISBootstrapping} reduces to
    \begin{equation}\label{1CouponOISBootstrapping}
        P^c(t,T_X)=\frac{P^c(t,T_0)}{1+k_X\tau(T_0,T_X)},
    \end{equation}
    this expression is valid for maturities up to one year (see table \ref{OISQuotesBloomberg}), i.e. for $X=\textnormal{1w, 2w,}$ $\textnormal{3w, 1m,}$ $\dots,$ $\textnormal{11m, 1y}$. Then, for maturities $X=\textnormal{18m, 2y, 3y,}\dots,\textnormal{9y, 10y}$ we use equation \eqref{nCouponOISBootstrapping} and a forward substitution to obtain the values of $P^c(t,T_X)$ for all $X$. So far we have obtained discount factors without using the bootstrapping method. However it is easy to see that equation \eqref{nCouponOISBootstrapping} for the maturity of 12y has two unknown variables: $P^c(t,T_{\textnormal{11y}})$ and $P^c(t,T_{\textnormal{12y}})$. That leads us to a single equation with two variables, we use the bootstrapping method to resolve this issue. The idea of bootstrapping is to make an initial guess for the variable $P_0^c(t,T_{\textnormal{12y}})$\footnote{The subscript number 0 indicates that the number is an initial condition or an initial guess. If the subscript is a number $i$, then it would indicates the number of iteration.} and with an interpolation method calculate the value of $P_0^c(t,T_{\textnormal{11y}})$. It is important to mention that in the financial industry is commonly seen that the initial guess and interpolation is done in zero coupon rates. Therefore, for the case of maturity of 12y, first we make an initial guess for the zero coupon rate $R_0^c(t,T_{\textnormal{12y}})$ and then calculate $R_0^c(t,T_{\textnormal{11y}})$ by interpolation. We calculate the discount factors $P_0^c(t,T_{\textnormal{11y}})$ and $P_0^c(t,T_{\textnormal{12y}})$ using these zero rates. Then we assume that $P_0^c(t,T_{\textnormal{11y}})$ is ``correct''\footnote{We write ``correct'' since this value depends directly in the assumptions of the model, the instruments involved in it and naturally the interpolation. Recall that many choices of interpolation functions are possible and, according to the nature of the problem, we impose requirements such as: continuity, differentiability, twice differentiability, conditions at the boundary, and so on.} and we calculate the ``real'' value of $P_1^c(t,T_{\textnormal{12y}})$ using \eqref{nCouponOISBootstrapping}, i.e.,
    \begin{equation}\label{12CouponOISBootstrapping}
        P_1^c(t,T_{\textnormal{12y}})=\frac{P^c(t,T_0)-k_{\textnormal{12y}}\sum_{i=1}^{10} \tau(T_{(i-1){\textnormal{y}}},T_{i\textnormal{y}}) P^c(t,T_i)-k_{\textnormal{12y}} \tau(T_{{\textnormal{10y}}},T_{\textnormal{11y}}) P_0^c(t,T_{\textnormal{11y}}) }{1+k_{\textnormal{12y}}\tau(T_{\textnormal{11y}},T_{\textnormal{12y}})}.
    \end{equation}
    Next we calculate the new value of $R_1^c(t,T_{\textnormal{12y}})$ using $P_1^c(t,T_{\textnormal{12y}})$, also we calculate and $R_1^c(t,T_{\textnormal{11y}})$ again with the interpolation method. We iterate this procedure until it converges. For instance, the stop condition is given as
    \begin{equation}
         | P_{i+1}^c(t,T_{\textnormal{12y}})- P_{i}^c(t,T_{\textnormal{12y}}) | < \epsilon, \hspace{10mm} \epsilon > 0,
    \end{equation}
    Additionally, it is convenient to include a maximum number of iteration, i.e., iterate until one of the following conditions is met,
    \begin{equation}
         | P_{i+1}^c(t,T_{\textnormal{12y}})- P_{i}^c(t,T_{\textnormal{12y}}) | < \epsilon, \hspace{2mm} \epsilon > 0 \hspace{4mm} \textnormal{or} \hspace{4mm} i > N_{\max} > 0.
    \end{equation}
    This bootstrapping method is used similarly for all the OISs maturities up to 50y. In the appendix \ref{App:AppendixC} we present a pseudocode for the bootstrapping of the USD OIS Curve.\par
    \begin{rem}
    	Yield rates or zero rates in this work are treated as continuously compounded rates, i.e.
        \begin{equation}
        	P(t,T)=e^{-\tau(t,T)R(t,T)}.
        \end{equation}
    \end{rem}
            Using the data displayed in table \ref{OISQuotesBloomberg} we build in R the OIS yield curve and the discount curve. The resulting calibrated curves are presented in the next charts \ref{OvernightYldCurve:Fig} and \ref{OvernightDiscCurve:Fig}, where additionally are included the curves that we obtained using the software SuperDerivatives, see table \ref{OIS:SuperDerivatives}.
        \begin{figure}[H]\label{OISyieldcurvechart1}
        \centering
        \vspace{-2mm}
        \includegraphics[scale=0.58]{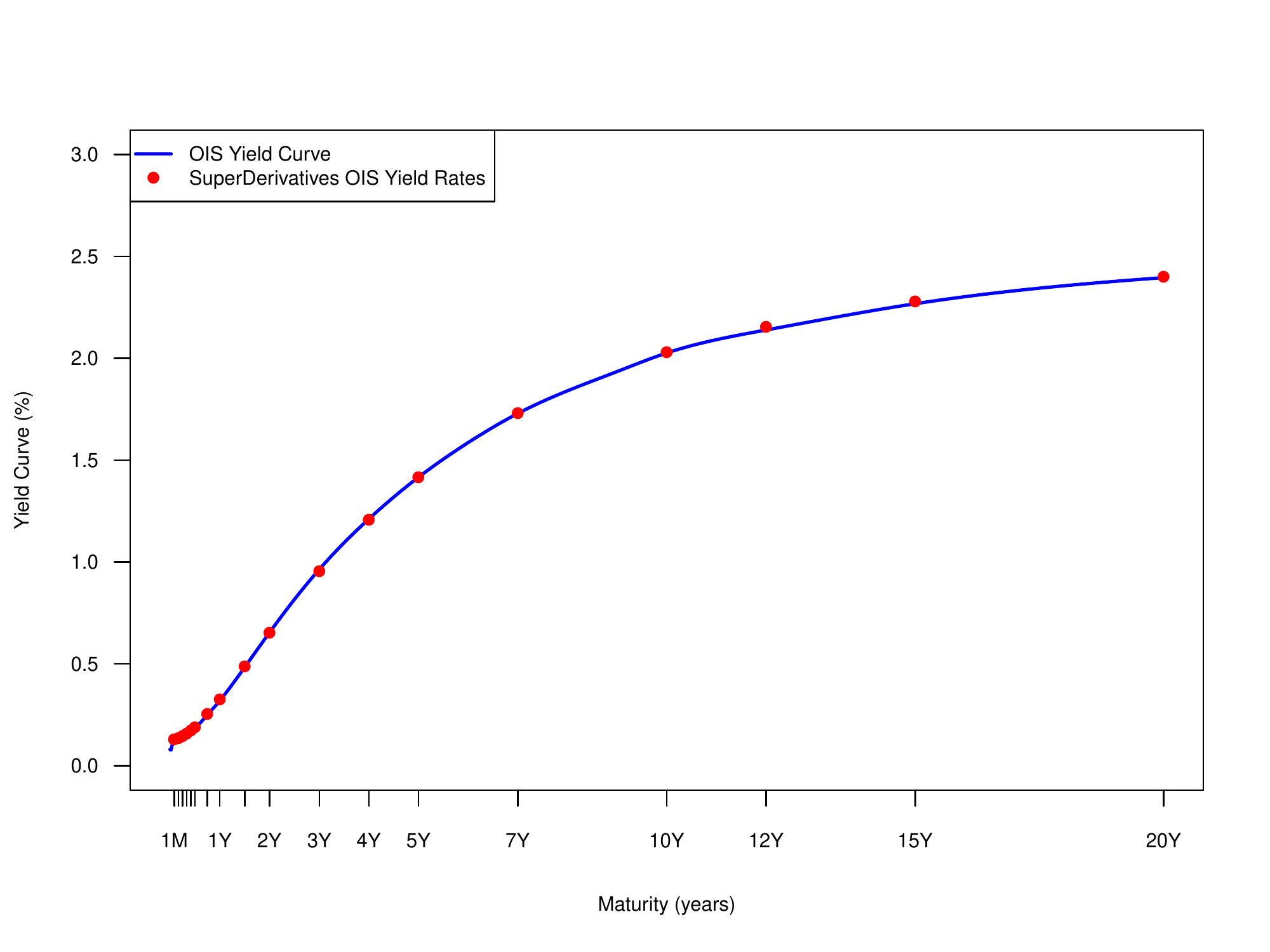}
        \caption[OIS Yield Curve]{OIS Yield Curve: $R(T)=R^c(t,T)$. The graph includes the yield rates of SuperDerivatives using the swap rates for the same date (May 29 2015).}
        \label{OvernightYldCurve:Fig}
        \vspace{-2mm}
        \end{figure}
        
        \begin{figure}[H]
        \centering
        \vspace{-2mm}
        \includegraphics[scale=0.58]{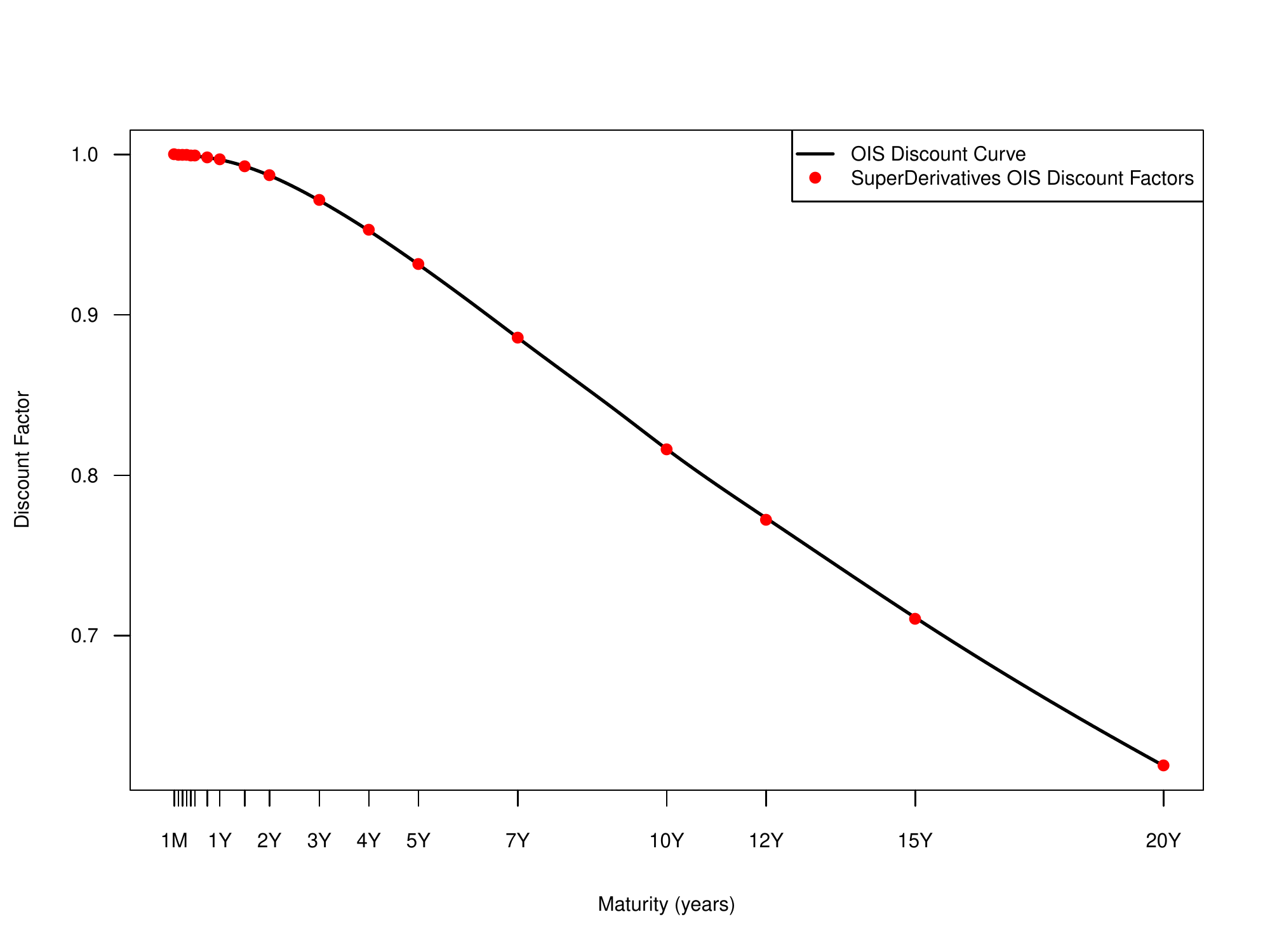}
        \caption[OIS Discount Curve]{OIS Discount Curve: $P(T)=P^c(t,T)$. The importance of this curve lies in that every USD dollar cash flow, inside a contract with CSA in USD, is discounted with it.}
        \label{OvernightDiscCurve:Fig}
        \vspace{-2mm}
        \end{figure}
  
    Recall the definition of instantaneous forward rate that is given by
    \begin{equation}
    \begin{aligned}
        f(t,T)&=-\frac{\partial}{\partial T}\big( \ln (P(t,T)) \big)\\
              &=- \lim_{\epsilon \rightarrow 0} \frac{\ln(P(t,T+\epsilon))-\ln(P(t,T))} {\epsilon}.\label{limit_instan_fwd}
    \end{aligned}
    \end{equation}
    Considering that the OIS yield curve $R^c(t,T))$ is given by a cubic splines function, then $R^c(t,T) \in \mathcal{C}^\infty$. Additionally, $P^c(t,T)$ is $\mathcal{C}^\infty$ function since $\mathcal{C}^\infty$ is closed under composition. Using the same argument $\ln(P^c(t,T))$ is $\mathcal{C}^\infty$ since $P^c(t,T)>0$ for all $T>t$. Therefore, the OIS instantaneous forward curve exists is defined by a piecewise function. Since the expression of this curve would be complicated, instead we calculate the daily forward curve $d(t,T)$. Using \eqref{limit_instan_fwd} we have that
    \begin{equation}
        d(t,T)=-\frac{\ln(P(t,T+1))-\ln(P(t,T))}{T+1-T}=\ln \bigg( \frac{P(t,T)}{P(t,T+1)} \bigg).
    \end{equation}
    In figure \ref{OvernightFwdRate:Fig} we present the daily forward overnight rate.
        \begin{figure}[H]
        \centering
        \vspace{-2mm}
        \includegraphics[scale=0.58]{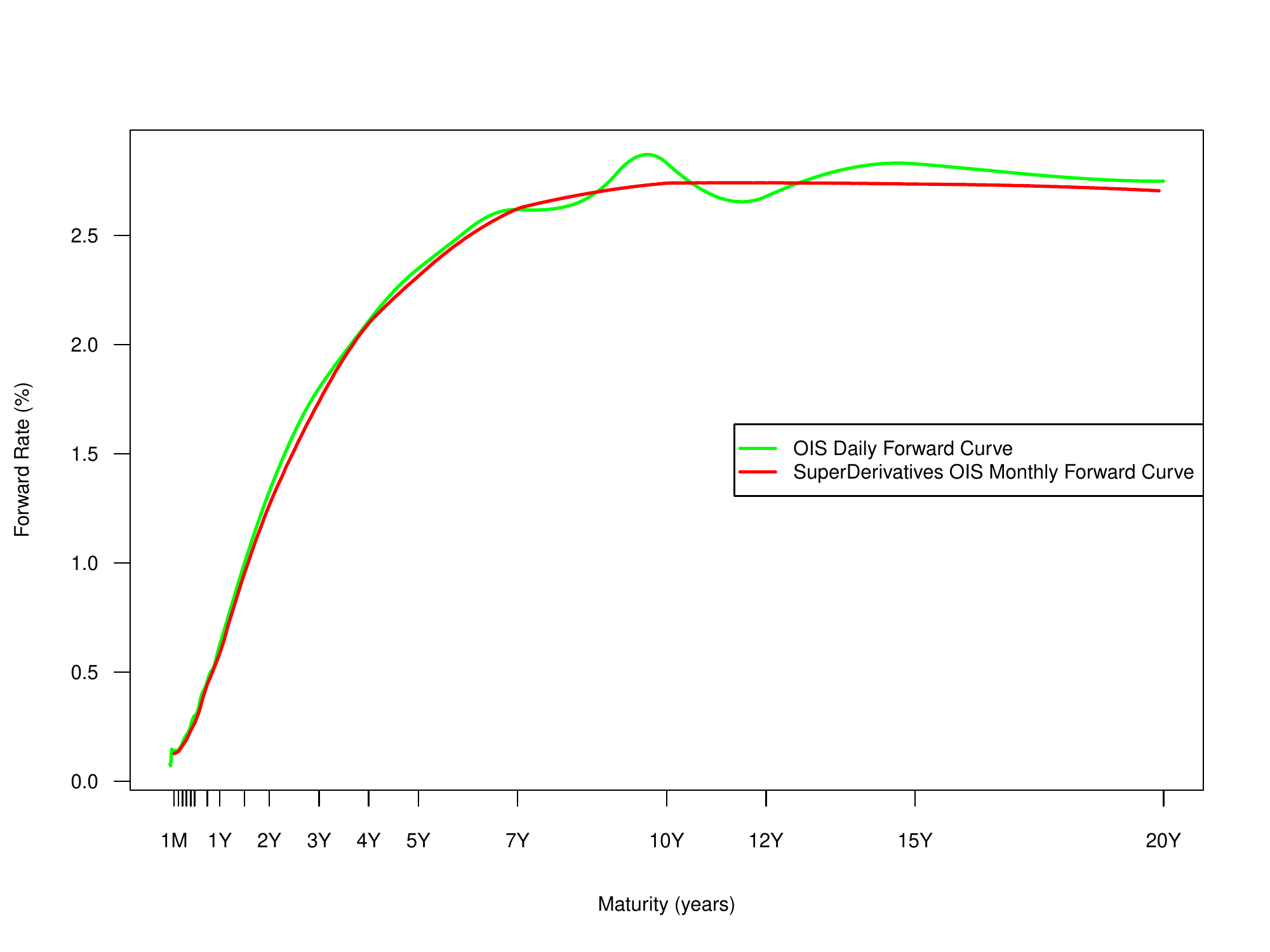}
        \caption[OIS 1d-Forward Curve]{OIS Daily Forward Curve. This curve is the implicit Fed Funds daily rate defined from the OIS discount curve. Since our model has to be consistent this forward rates guarantee that overnight index swaps have present value equal to zero.}
        \label{OvernightFwdRate:Fig}
        \vspace{-2mm}
        \end{figure}
    
    \paragraph{Forward Rates between FOMC Meetings Dates}
        The choice of which interpolation method we used will always be subjective and needs to be decided on a case by case basis. Indeed, the interpolation methods define the quality of the curve, particularly of the forward curve. Therefore for pricing IRSs or OISs we need a good fit and quality of the forward curve since with this curve we will project the future rates levels. In the case of overnight rates, we have that these rates are relatively constant between dates of \emph{Monetary Policy} meetings. For instance in the USD market, there are 8 meetings on each year and are performed by the Federal Open Market Committee (FOMC) in which they publish the target range (min-max) of the federal funds rate. Likewise, Banco de México has 8 meetings in a year in which they publish a target rate for the interbank overnight rate.\par \medskip
        Given the fact overnight rates, in particular the fed fund, are constant within a period of time some models are fitted to capture short-term monetary policy decisions. In \cite{clarkeconstructing}, Justin Clarke used an additive seasonality adjustment for building the short dated portion of the OIS curve. The idea is simple, he used market OIS quotes for pricing OISs with tenors equal the FOMC meeting dates and then keep constant these rates between meetings; the resulting forward curve is discontinuous in some short dated FOMC meetings. The longer term of the curve is calibrated using the simple bootstrapping presented above.\par \medskip
        There are other models that use Federal Fund Futures to capture the probability of a rate cut or hike, see \cite{robertson1997using}, \cite{kuttner2001monetary} and \cite{labuszewski2014fed}. However, for pricing purposes we could define a finite set of \emph{Monetary Policy} scenarios with a probability that corresponds to a economical view. In figure \ref{FOMCscenario}, we present an example of a scenario of FOMC rate hikes. This scenario was defined on May 29 2015 with five 2015-FOMC meetings to go. The idea of the model is to price an IRSs for each scenario and then calculated the expected value using the given probabilities.
        
        \begin{figure}[H]
        \centering
        \vspace{-2mm}
        \includegraphics[scale=0.58]{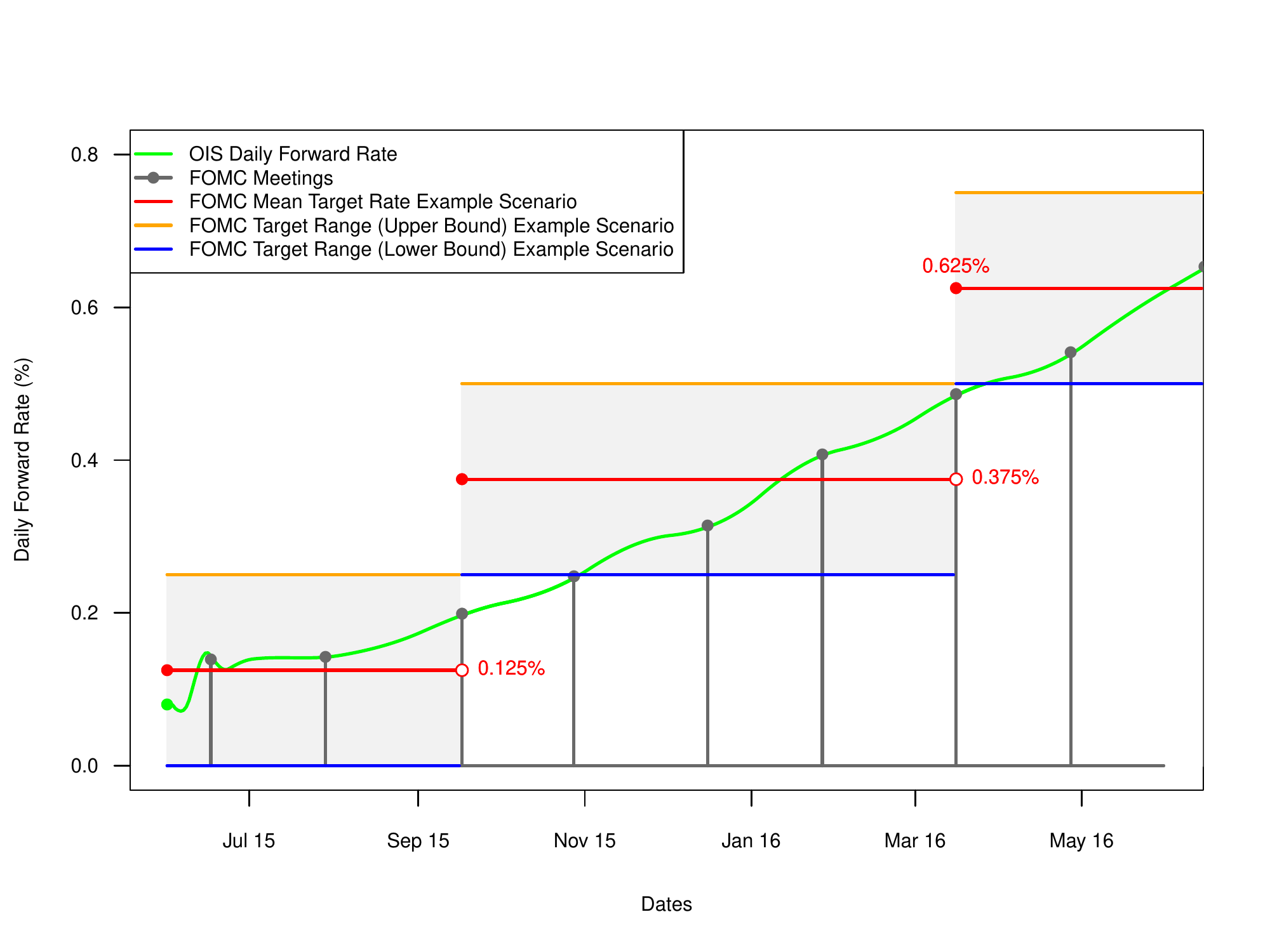}
        \caption[FOMC Meetings, OIS forward curve and FOMC monetary policy scenario]{FOMC Meetings and OIS forward curve and FOMC monetary policy scenario. In this chart we present the OIS forward curve calibrated when we perform a bootstrapping as in section. We also include a scenario of FOMC monetary policy decisions.}
        \label{FOMCscenario}
        \vspace{-2mm}
        \end{figure}
    
    \subsubsection{Calibration of the USD forward curve}\label{USDForwardCurveSection}
    In the previous section we explained the methodology used for the construction of the USD discounting curve. We remind that this curve is extremely important because every cash flow denominated in USD currency is discounted with this curve on the assumption that the CSA agreement is in USD currency. In this section we present the methodology for building index forward curves. We only describe how to estimate the forward curve for LIBOR 3m and LIBOR 1m. For the construction of the forward index LIBOR 3m curve we use the plain vanilla IRS and for the forward index LIBOR 1m we use plain vanilla IRSs for the short-term of the curve and TS for the longer part. In tables , we present the maturities and swap rates that we will use for the forward curve calibration.\par \medskip
    Let $\textnormal{PV}(t)$ be the present value of a payer IRS denominated in USD based on LIBOR 3m with maturity of $y$ years, hence
    \begin{equation}\label{payerLIBOR3M}
        \textnormal{PV}(t)= \sum_{i=1}^{Q_y} \alpha(t_{i-1},t_{i}) \mathbb{E}_t^{t_i}(\textnormal{\textbf{LIBOR3M}}(t_{i-1},t_{i})) P^{c}(t,\widetilde{t}_i) - k_y\sum_{j=1}^{S_y} \beta(s_{j-1},s_{j}) P^{c}(t,\widetilde{s}_j),
    \end{equation}
    where:\par \smallskip
    \noindent \makebox[4cm][l]{$k_y$:} fixed rate of the plain vanilla interest rate swap with maturity in $y$ years\par
    \noindent \makebox[4cm][l]{$Q_y$:} number of quarters in $y$ years \par
    \noindent \makebox[4cm][l]{$S_y$:} number of semesters in $y$ years \par
    \noindent \makebox[4cm][l]{$t_i,\widetilde{t}_i$:} coupon periods (start date, end date and payment date) for the floating leg\par 
    \noindent \makebox[4cm][l]{$s_j,\widetilde{s}_j$:} coupon periods (start date, end date and payment date) for the fixed leg\par 
    \noindent \makebox[4cm][l]{$\alpha(t_{i-1},t_i))$:} accrual factor of the $i$th coupon of the floating leg (ACT/360)\par
    \noindent \makebox[4cm][l]{$\beta(s_{j-1},s_j))$:} accrual factor of the $j$th coupon of the fixed leg (30/360)\par
    \noindent \makebox[4cm][l]{$P^{c}(t,x)$:} discount factors collateralized in USD\par
    \noindent \makebox[4cm][l]{$\mathbb{E}_t^{t_i}(\textnormal{\textbf{LIBOR3M}}(t_{i-1},t_{i}))$:} the LIBOR 3m forward rate of the $i$th coupon. \par \medskip
    This swap with the above characteristics is the \emph{most} plain vanilla IRS in the USD market. It exchanges LIBOR 3m payable quarterly versus a semiannual fixed rate coupon with a day count of 30/360 that is used primarily in government bonds. We say that it is the most plain vanilla since is the most standard and liquid swap in the market.\par\medskip
    Now, let us write $\mathbb{E}_t^{t_i}(\textnormal{\textbf{LIBOR3M}}(t_{i-1},t_{i}))$ from equation \eqref{payerLIBOR3M} in terms of a \emph{discount}\footnote{Note that the discount curve of LIBOR 3m is valid for discounting flows when the collateral rate is the LIBOR 3m index. It is important to remind that for uncollateralized deals  typically the discounting curve is the one based on LIBOR 3m i.e. $P^{3m}(t,x)$.} curve, i.e.
    \begin{equation}\label{LIBOR3MCoupon}
    	\mathbb{E}_t^{t_i}(\textnormal{\textbf{LIBOR3M}}(t_{i-1},t_{i})) = 
        \bigg( \frac{1}{\tau(t_{i-1},t_i)} \bigg( \frac{P^{3m}(t,t_{i-1})}{P^{3m}(t,t_{i})} - 1 \bigg) \bigg),
    \end{equation}
    where $\tau(t_{i-1},t_i)=\tau_i$ is the day count convention to determine the year fraction for discounting and curve building.\par \medskip
    Substituting equation \eqref{LIBOR3MCoupon} into \eqref{payerLIBOR3M} and solving for $P^{3m}(t,t_{Q_y})$ yields
    \begin{equation}\label{bootstrappingLIBOR3M}
    	P^{3m}(t,t_{Q_y})=\frac{P^{3m}(t,t_{Q_y-1})}{1+\frac{\tau_i \big( k_y\sum_{j=1}^{S_y} \beta_j P^{c}(t,\widetilde{s}_j) - \sum_{i=1}^{Q_y-1} \frac{\alpha_i}{\tau_i} \big( \frac{P^{3m}(t,t_{i-1})}{P^{3m}(t,t_{i})} - 1 \big) P^{c}(t,\widetilde{t}_i)\big)}{\alpha_i P^c(t,\widetilde{t}_{Q_y})}}
    \end{equation}
    This equation allows us to find the discount curve based on LIBOR 3m using a simple bootstrapping and an interpolation method. The equation \eqref{bootstrappingLIBOR3M} seems more complicated than the previous bootstrapping equations nevertheless the iterative process is the same as in OIS-USD curve or as in \emph{single-curve framework}.\par \medskip
    Once we have calibrated the forward LIBOR 3m curve we are able to build the LIBOR 1m forward curve. For the IRSs with maturities less than (or equal) 1 year, we could proceed for the curve construction exactly as in LIBOR 3m. Indeed, IRSs based on LIBOR 1m are quoted in the market for these maturities and the plain vanilla convention is: the fixed leg has annual payments (ACT/360); the floating leg has monthly payments with same day count convention. However for maturities greater than 1 year we have to use TSs quoted in the market that exchanges LIBOR 1m for LIBOR 3m. Let us present the bootstrapping equations for both cases.\par \medskip
    Suppose we enter into a payer short-term IRS (maturities of 2m, 3m, 4m, $\dots$, 12m) based on LIBOR 1m. For these swaps the fixed leg has only one coupon at maturity. The present value of the IRS is given by
    \begin{equation}\label{payerLIBOR1M}
        \textnormal{PV}(t)= \sum_{i=1}^{M} \alpha(t_{i-1},t_{i}) \mathbb{E}_t^{t_i}(\textnormal{\textbf{LIBOR1M}}(t_{i-1},t_{i})) P^{c}(t,\widetilde{t}_i) - k_M \beta(t_0,t_M) P^{c}(t,\widetilde{t}_M),
    \end{equation}
    where:\par \smallskip
    \noindent \makebox[4cm][l]{$k_M$:} fixed rate of the plain vanilla IRS with $M$ coupons\par
    \noindent \makebox[4cm][l]{$M$:} number of months (and coupons in floating leg) of the IRS \par
    \noindent \makebox[4cm][l]{$t_i,\widetilde{t}_i$:} coupon periods for the floating leg\par 
    \noindent \makebox[4cm][l]{$t_0,t_M$:} start date and end date of the IRS\par
    \noindent \makebox[4cm][l]{$\alpha(t_{i-1},t_i))$:} accrual factor of the $i$th coupon of the floating leg (ACT/360)\par
    \noindent \makebox[4cm][l]{$\beta(t_{0},t_M))$:} accrual factor of the fixed leg coupon (ACT/360)\par
    \noindent \makebox[4cm][l]{$P^{c}(t,x)$:} discount factors collateralized in USD\par
    \noindent \makebox[4cm][l]{$\mathbb{E}_t^{t_i}(\textnormal{\textbf{LIBOR1M}}(t_{i-1},t_{i}))$:} the LIBOR 1m forward rate of the $i$th coupon. \par \medskip
    Note that by setting equation \eqref{payerLIBOR1M} equal to zero and solving for $\mathbb{E}_t^{t_M}(\textnormal{\textbf{LIBOR1M}}(t_{M-1},t_{M}))$ yields,
    \begin{equation}\label{eq:libor1mfwdsubs}
    \mathbb{E}_t^{t_M}(\textnormal{\textbf{LIBOR1M}}(t_{M-1},t_{M}))=\frac{k_M \beta_{0,M} P^{c}(t,\widetilde{t}_M)-\sum_{i=1}^{M-1} \alpha_i \mathbb{E}_t^{t_i}(\textnormal{\textbf{LIBOR1M}}(t_{i-1},t_{i})) P^{c}(t,\widetilde{t}_i)}{\alpha_M P^{c}(t,\widetilde{t}_M)}
    \end{equation}
    Now, for maturity of 2m the forward rate of LIBOR 1m is easily determined since all the right side of equation is known. Taking advantage that the quotes of IRSs based on LIBOR 1m are available for every month then we could get all forward LIBOR 1m rates up to one year by forward substitution using equation \eqref{eq:libor1mfwdsubs}. Now for maturities greater than one year we use TSs quoted in the market. Recall that the most popular TSs in the USD market are: 1m vs 3m, 3m vs 6m and 3m vs 12m. All of them are traded against LIBOR 3m since it is the most liquid and traded IRSs, see table \ref{TSSuperDerivatives}. The TS spread should be positive, hence it is required to be added in the shorter tenor leg. Also, the payment frequency is determined by the longer tenor. In this work we only focus on the LIBOR 1m vs LIBOR 3m TS.\par\medskip
    
        \begin{table}
    \footnotesize
    \centering
    \begin{tabular}{|c c c c c|}
      \hline
     \multicolumn{1}{|>{\centering\arraybackslash}m{10mm}}{\textbf{Tenor}} & 
     \multicolumn{1}{>{\centering\arraybackslash}m{20mm}}{\textbf{End Date}} &
     \multicolumn{1}{>{\centering\arraybackslash}m{25mm}}{\textbf{1m vs 3m}} & 
     \multicolumn{1}{>{\centering\arraybackslash}m{25mm}}{\textbf{3m vs 6m}} &
     \multicolumn{1}{>{\centering\arraybackslash}m{25mm}|}{\textbf{3m vs 12m}}\\ \hline
     
    1Y & 2016-06-02 & -10.5000 & 8.3750 & 24.5000 \\
    2Y & 2017-06-02 & -12.1250 & 7.8750 & 23.2500 \\
    3Y & 2018-06-04 & -13.1250 & 7.8750 & 23.0000 \\
    4Y & 2019-06-03 & -13.5000 & 7.7500 & 23.0000 \\
    5Y & 2020-06-02 & -13.7500 & 7.7500 & 22.6250 \\
    6Y & 2021-06-02 & -13.8000 & 7.8000 & 22.5000 \\
    7Y & 2022-06-02 & -13.6250 & 7.8750 & 22.2500 \\
    8Y & 2023-06-02 & -13.3000 & 8.1000 & 21.8750 \\
    9Y & 2024-06-03 & -12.9000 & 8.4000 & 21.6250 \\
    10Y & 2025-06-02 & -12.7550 & 8.6250 & 21.3750 \\
    12Y & 2027-06-02 & -12.1000 & 9.1000 & 21.2500 \\
    15Y & 2030-06-03 & -11.6250 & 9.3750 & 21.5000 \\
    20Y & 2035-06-04 & -11.7500 & 9.5000 & 22.0000 \\
    25Y & 2040-06-04 & -11.7000 & 10.2500 & 21.7500 \\
    30Y & 2045-06-02 & -12.0000 & 9.7500 & 21.6250 \\
    40Y & 2055-06-02 & -12.6010 & 8.7490 & 21.3750 \\
    50Y & 2065-06-02 & -13-2020 & 7.7480 & 21.1240 \\
    60Y & 2075-06-03 & -13.8020 & 6.7460 & 20.8740 \\
    \hline
    \end{tabular}
    \caption{SuperDerivatives market data of the USD Tenor Swaps (1mv3m, 3mv6m, 3mv12m) (see Section \ref{USDForwardCurveSection}). Quotes are End of Day prices from May 29, 2015. The quotes were taken from \url{www.superderivatives.com} on June 21, 2015.}
    \label{TSSuperDerivatives}
    \end{table}
    
    Let $\textnormal{PV}(t)$ be the present value of a payer TS that exchanges LIBOR 1m with LIBOR 3m. Since the longer tenor is three months then both legs pay in a quarterly basis the coupons of the TS with the same business day calendar and the same day count convention ACT/360. Considering that the LIBOR 1m fixing is typically payable every month, we need to compound this monthly payments and pay them quarterly\footnote{There exist various types for compounding rates, in the plain vanilla TS the market convention is to use the compounding with simple spread i.e. compound the rates and then apply the spread. The other two commonly used types of spread are: 1) Compounding with spread in which the spread is applied to the rate and then we compound them; 2) Flat compounding in which we applied the spread before the compounding and also we consider the interests of the previous coupons while compounding. See \cite{ISDACompounding}.}. Hence, following equations \eqref{generalpayerTS}-\eqref{generalTSlegs} of section \ref{sec:IRproducts} we have that the present value is given by
     \begin{equation}\label{payerTS1mv3m}
        \begin{split}
        \textnormal{PV}(t)&= \sum_{i=1}^{M} \alpha(t_{i-1},t_{i}) \mathbb{E}_t^{t_i}(\textnormal{\textbf{LIBOR3M}}(t_{i-1},t_{i})) P^{c}(t,\widetilde{t}_i)\\
        &- \sum_{i=1}^{M} \alpha(t_{i-1},t_{i})\bigg[ \frac{1}{\sum_{j=1}^{N_i}\beta(s_{j-1},s_j)} \bigg( \prod_{j=1}^{N_i} \Big(1+\beta(s_{j-1},s_j)\mathbb{E}_t^{s_j}(\textnormal{\textbf{LIBOR1M}}(s_{j-1},s_{j}))\Big) \bigg)\\
        &+ B_M\bigg] P^{c}(t,\widetilde{t}_i)
        \end{split}
    \end{equation}
    where:\par \smallskip
    \noindent \makebox[4.9cm][l]{$B_M$:} TS spread\par
    \noindent \makebox[4.9cm][l]{$M$:} number of quarters of the TS\par
    \noindent \makebox[4.9cm][l]{$N_i$:} number of months in the $i$th quarter\par
    \noindent \makebox[4.9cm][l]{$t_i,\widetilde{t}_i$:} coupon periods for both legs\par 
    \noindent \makebox[4.9cm][l]{$t_0,t_M$:} start date and end date of the TS\par
    \noindent \makebox[4.9cm][l]{$\alpha(t_{i-1},t_i))$:} accrual factor of the $i$th coupon of the TS (ACT/360)\par
    \noindent \makebox[4.9cm][l]{$\beta(s_{j-1},s_j)$:} accrual factor of the $j$th monthly coupon (ACT/360)\par
    \noindent \makebox[4.9cm][l]{$P^{c}(t,x)$:} discount factors collateralized in USD\par
    \noindent \makebox[4.9cm][l]{$\mathbb{E}_t^{t_i}(\textnormal{\textbf{LIBOR3M}}(t_{i-1},t_{i}))$:} the LIBOR 3m forward rate of the $i$th quarter\par
    \noindent \makebox[4.9cm][l]{$\mathbb{E}_t^{s_j}(\textnormal{\textbf{LIBOR1M}}(s_{j-1},s_{j}))$:} the LIBOR 1m forward rate of the $j$th month.\par\medskip
Note that $\sum_{j=1}^{N_i}\beta(s_{j-1},s_j)=\alpha(t_{i-1},t_{i})$ since the monthly and quarterly calendars have the same conventions. Therefore if we assume that the tenor spread $B_M$ is a mid-market quote, then by no-arbitrage arguments we have that the present value of the TS is equal zero. Setting equation \eqref{payerTS1mv3m} equal to zero we get,
     \begin{equation}\label{payerTS1mv3mboot}
        \begin{split}
        \sum_{i=1}^{M}& \alpha_i \mathbb{E}_t^{t_i}(\textnormal{\textbf{LIBOR3M}}(t_{i-1},t_{i})) P^{c}(t,\widetilde{t}_i)\\
        &- \sum_{i=1}^{M} \alpha_i\bigg[ \frac{1}{\alpha_i} \bigg( \prod_{j=1}^{N_i} \Big(1+\beta_j\mathbb{E}_t^{s_j}(\textnormal{\textbf{LIBOR1M}}(s_{j-1},s_{j}))\Big) \bigg) + B_M\bigg] P^{c}(t,\widetilde{t}_i)=0\\
        \Longrightarrow \hspace{3mm} \sum_{i=1}^{M}& \alpha_i\bigg[ \mathbb{E}_t^{t_i}(\textnormal{\textbf{LIBOR3M}}(t_{i-1},t_{i}))\\
        &- \frac{1}{\alpha_i} \bigg( \prod_{j=1}^{N_i} \Big(1+\beta_j\mathbb{E}_t^{s_j}(\textnormal{\textbf{LIBOR1M}}(s_{j-1},s_{j}))\Big) \bigg) - B_M\bigg] P^{c}(t,\widetilde{t}_i)=0
        \end{split}
    \end{equation}
    This equation will help us to get the forward rates of the LIBOR 1m index. Indeed, the unknown values of equation \eqref{payerTS1mv3mboot} are only the forwards of LIBOR 1m index since, as we saw in the previous subsections, $P^c(t,x)$ values were calibrated using OISs and the forward rates of LIBOR 3m index were calibrated using the plain vanilla IRSs. To calibrate the forward rates of LIBOR 1m again we will use an interpolation method and the bootstrapping algorithm. It is important to remind that the interpolation method that we use in this work is natural cubic splines and is applied in the yield curves. Then it is convenient to write the unknown forward rates of LIBOR 1m in terms of a yield curve $R^{1m}$, so using equation \eqref{assumptionForwards} we have that
        \begin{align}\label{LIBOR1MCoupon}
    	\mathbb{E}_t^{t_i}(\textnormal{\textbf{LIBOR1M}}(t_{i-1},t_{i})) &= 
        \bigg( \frac{1}{\tau(t_{i-1},t_i)} \bigg( \frac{P^{1m}(t,t_{i-1})}{P^{1m}(t,t_{i})} - 1 \bigg) \bigg)\\
        &= \bigg( \frac{1}{\tau(t_{i-1},t_i)} \bigg( \frac{\exp(-\tau(t,t_i)R^{1m}(t,t_i))}{\exp(-\tau(t,t_{i-1})R^{1m}(t,t_{i-1}))} - 1 \bigg) \bigg)\label{LIBOR1MCoupon2}
    \end{align}
    In this case the bootstrapping is straightforward because in equation \eqref{payerTS1mv3mboot} we can substitute every forward rate in terms of the yield curve $R^{1m}(t,x)$ and in terms of the coefficients of the interpolation algorithm. Consequently we will get an equation with only one variable $R^{1m}(t,\widetilde{t}_M)$ that helped by the bootstrapping or an efficient root-finding method we get the yield curve of the LIBOR 1m index.\par\medskip
    
    For example, when $M=6$ i.e. the TS has a maturity of 18 months, then in equation \eqref{payerTS1mv3mboot} we have the following six unknown variables: $\mathbb{E}_t^{s_j}(\textnormal{\textbf{LIBOR1M}}(s_{j-1},s_{j}))$ when $s_{j} \in \{ t_\textnormal{13m}, t_\textnormal{14m}, t_\textnormal{15m}, t_\textnormal{16m}, t_\textnormal{17m}, t_\textnormal{18m}\}$. Based on the assumption that the yield curve $R^{1m}$ follows the natural cubic splines conditions (see appendix \ref{app:naturalcubic}) we have that:
        \begin{enumerate}
    	\item $R^{1m}(t,x)=a+b(x-t)+c(x-t)^2+d(x-t)^3$ with $t_\textnormal{12m} \leq x \leq t_\textnormal{18m}$ and $a,b,c,d \in \mathbb{R}$
        \item $R^{1m}(t,t_\textnormal{12m})=r_{t_\textnormal{12m}}$
        \item $R^{1m}(t,x)\in\mathcal{C}^2$ with $\frac{d^2R^{1m}}{dx^2}(t,t_\textnormal{12m})=0$ and $\frac{d^2R^{1m}}{dx^2}(t,t_\textnormal{18m})=0$.
    \end{enumerate}
    Note that the value of $r_{t_\textnormal{12m}}$ is known since the IRSs up to 1 year can be bootstrapped easily as we said earlier. So, if we define an initial value for $R^{1m}(t,t_\textnormal{18m})$, say $r_{0}$, then we are able to find the values of $a,b,c,d$ since we have the following system of equations:
    \begin{align}
    a+b(t_\textnormal{12m}-t)+c(t_\textnormal{12m}-t)^2+d(t_\textnormal{12m}-t)^3&=r_{t_\textnormal{12m}}\label{eq1:sys} \\
        a+b(t_\textnormal{18m}-t)+c(t_\textnormal{18m}-t)^2+d(t_\textnormal{18m}-t)^3&=r_{t_\textnormal{18m}}\label{eq2:sys} \\
        2c+6d(t_{\textnormal{12m}}-t) &=0\label{eq3:sys} \\ 
        2c+6d(t_{\textnormal{18m}}-t) &=0\label{eq4:sys}
    \end{align}
    The solution of this system of equations is easy to get since equations \eqref{eq3:sys} and \eqref{eq4:sys} defines the values of $c$ and $d$. By substituting $c$ and $d$ in \eqref{eq1:sys} and \eqref{eq2:sys} we obtain the values of $a$ and $b$. With this values we are able to get $R^{1m}(t,x)$ for all $x \in (t,t_\textnormal{18m}]$. However, as we make an initial guess for $R^{1m}(t,t_\textnormal{18m})$ then we cannot guarantee that equation \eqref{payerTS1mv3mboot} holds, since we are not sure that the implied forwards 
$\mathbb{E}_t^{s_j}(\textnormal{\textbf{LIBOR1M}}(s_{j-1},s_{j}))$ when $s_{j} \in \{ t_\textnormal{13m}, t_\textnormal{14m}, t_\textnormal{15m}, t_\textnormal{16m}, t_\textnormal{17m}, t_\textnormal{18m}\}$ from the yield curve $R^{1m}(t,x)$ function, hold the conditions and assumptions of our calibration model. So the process that we have to follow to get a solution is the same that we applied for the calibration of the OIS-USD curve. Indeed, we will assume that the implied forward rates up to $t_\textnormal{17m}$ are the \emph{correct} forward rates, so we have just to find the value of $\mathbb{E}_t^{t_\textnormal{18m}}(\textnormal{\textbf{LIBOR1M}}(t_\textnormal{17m},t_\textnormal{18m}))$ that makes equation \eqref{payerTS1mv3mboot} equal to zero. Once we get this value (with the help of an equation root finding method, see \cite{burden2010numerical}) we are able to calculate the new value of $R^{1m}(t,t_\textnormal{18m})$ using equation \eqref{LIBOR1MCoupon2}. Using this zero rate we set a new system of equations with the idea of getting new values of $a,b,c,d$. Finally we calculate the new implied forward rates and we iterate this method until we converge for a solution. Again, as in the method for the calibration of the OIS-USD discount curve, we have to define a rule of convergence that not allow us to iterate 
indefinitely. So we include the next condition to the iterative process:
    \begin{equation}
         | R^{1m}_{i}(t,t_{\textnormal{18m}})- R^{1m}_{i+1}(t,t_{\textnormal{18m}}) | < \varepsilon, \hspace{2mm} \varepsilon > 0 \hspace{4mm} \textnormal{or} \hspace{4mm} i > N_{\max} > 0,
    \end{equation}
    The first condition guarantees us that the change of the zero rate $R^{1m}(t,t_{\textnormal{18m}})$ between two iterations is sufficiently small, whereas the second condition assures us to iterate at the most $N_{\max}$ times.\par\medskip
    Once that we have presented an example for the bootstrapping when the TS has $M=6$ coupons or a maturity length of 18 months, we have to introduce an efficient method for the calibration of the curve considering all the TSs maturities. The process is the same, however it operates along all TSs maturities with the idea of make more efficiently the number of iterations, without mentioning an efficient method for the solution of the coefficients of the cubic splines among tenors. Let us present the steps to follow for the calibration of the LIBOR 1m curve:
    \begin{enumerate}
    	\item Find the zero rates $R^{1m}(t,x)$ for the LIBOR 1m swap market (2 months up to 1 year) 
        \item Guess the initial values for $\{R^{1m}_0(t,t_N)\}$ where $t_N$ are all the maturities of the TS quotes
        \item With an interpolation method (natural cubic splines) calculate $R^{1m}_0(t,t_i)$ for all $i$th coupon date (every month) and get the implied forward LIBOR 1m rates $\{ \mathbb{E}_t^{t_i}(\textnormal{\textbf{LIBOR1M}}(t_{i-1},t_i)) \}_0$
        \item Insert these forward rates into the equation \eqref{payerTS1mv3mboot} and solve for $\{R^{1m}(t,t_N)\}$
        \item We take these new zero rates $\{R^{1m}_1(t,t_N)\}$ and again apply the interpolation method and calculate the new implied forwards rates
        \item Repeat Steps 3, 4 and 5 until the following condition is met: \begin{equation}
        \sum_{j=1}^N | R^{1m}_{i}(t,t_{j})- R^{1m}_{i+1}(t,t_{j}) | < \varepsilon, \hspace{2mm} \varepsilon > 0 \hspace{4mm} \textnormal{or} \hspace{4mm} i > N_{\max} > 0,
    \end{equation} 
    \end{enumerate}
    For the interpolation method we use the algorithm presented in appendix \ref{app:naturalcubic}.
     \begin{figure}[H]
        \centering
        \vspace{-2mm}
        \includegraphics[scale=0.58]{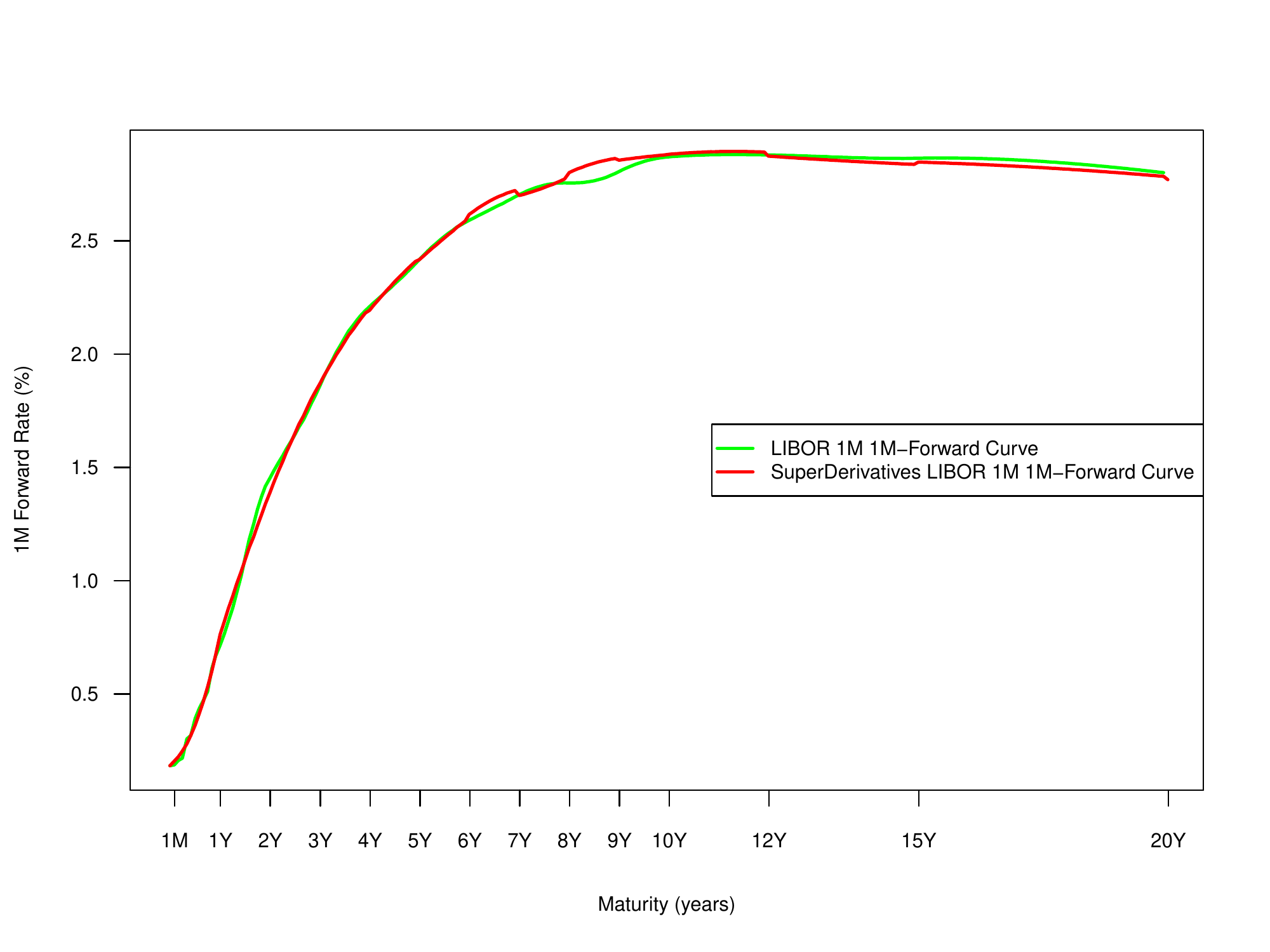}
         \caption[LIBOR 1m 1m-Forward Curve]{LIBOR 1m Forward Curve: $f(t,x)=
\mathbb{E}_t^{x}(\textnormal{\textbf{LIBOR1M}}(x,x+\textnormal{1m}))$. This curve give us the forward LIBOR 1m rate for any given day $x$ that is effective for the time interval $[x,x+\textnormal{1m}]$.}
        \vspace{-2mm}
     \end{figure}

        \begin{figure}[H]
        \centering
        \vspace{-2mm}
        \includegraphics[scale=0.58]{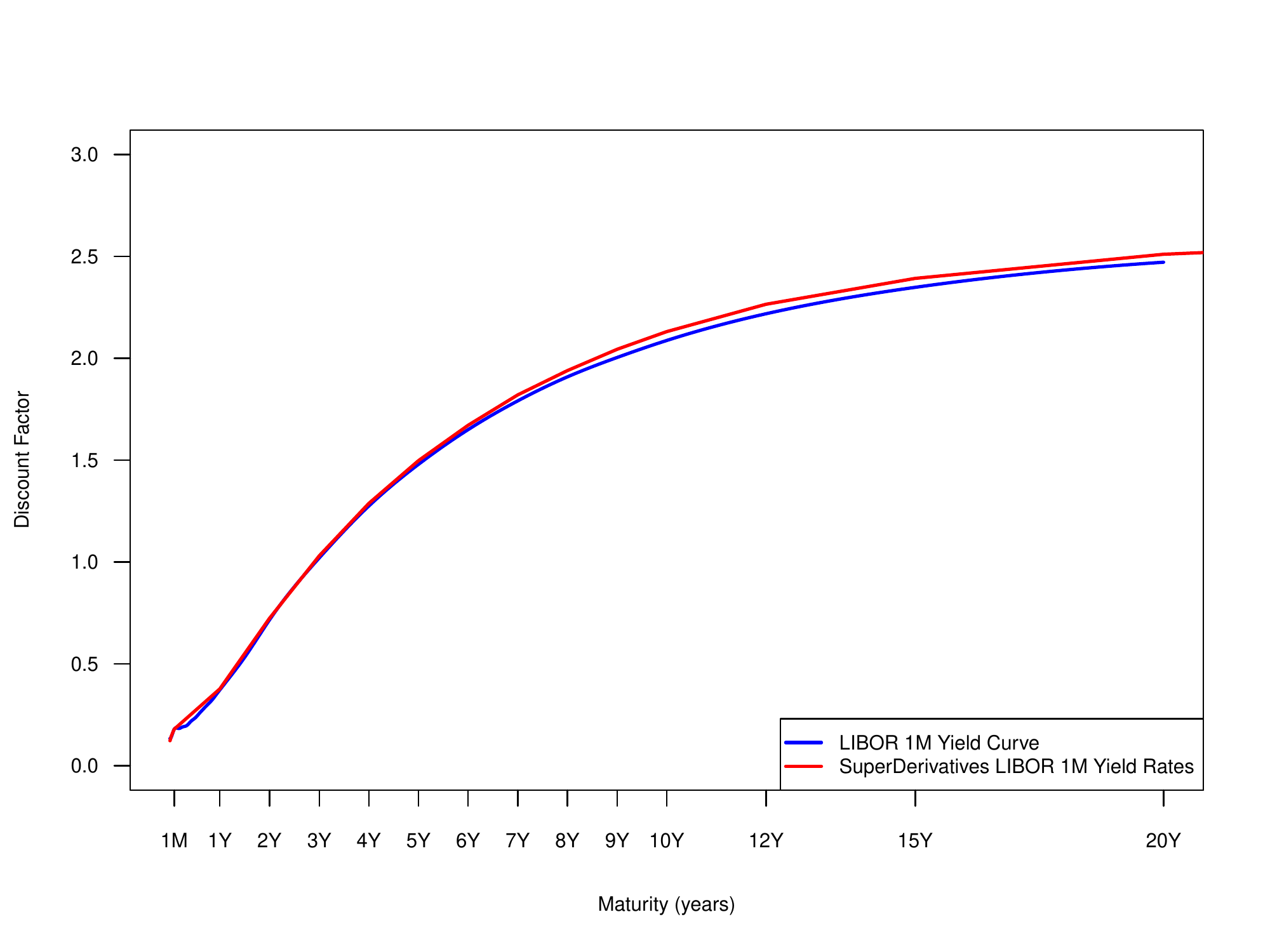}
        \caption[LIBOR 1m Yield Curve]{LIBOR 1m Yield Curve: $R^{1m}(T)=R^{1m}(t,T)$. This curve is used for the bootstrapping of the LIBOR 1m forward rates. $R^{1m}$ curve is built as a piecewise-defined function with the natural cubic splines interpolation method. Recall that the yields curves help us to obtain the forward rates based on an Ibor index rate.}
        \vspace{-2mm}
        \end{figure}
     
        \begin{figure}[H]
        \centering
        \vspace{-2mm}
        \includegraphics[scale=0.58]{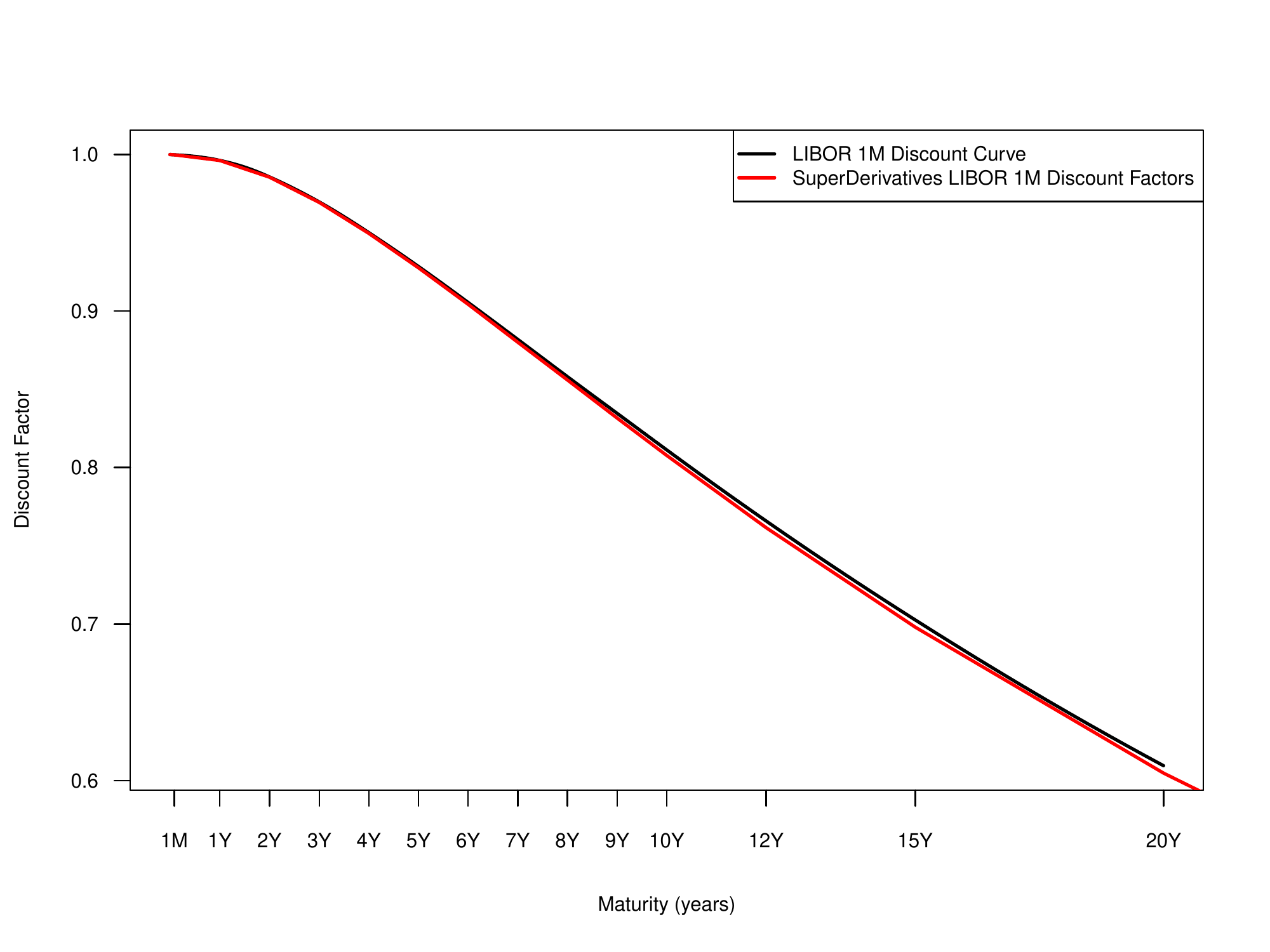}
        \caption[LIBOR 1m Discount Curve]{LIBOR 1m Discount Curve: $P^{1m}(T)=P^{1m}(t,T)$. In this work this curve is not used for discounting cash flows, however this curve helps us to calculate the 1m-forward and the instantaneous forward curve for LIBOR 1m index as in equation \eqref{LIBOR1MCoupon}.}
        \vspace{-2mm}
        \end{figure}

    \newpage
    
    \subsection{Valuation Framework in Multiple Currencies}
    We saw before that the collateral determines the discount rate, i.e. a USD swap which is collateralized (cash) in USD is priced using the OIS (Fed Funds) curve. Now we have to answer the following question: what if the swap is collateralized in EUR? And in JPY?\par\smallskip
    Let $V^{(j)}$ be the value collateral account for a derivative $X$ in terms of currency $(j)$. The stochastic process of the collateral account is given by
    \begin{equation}\label{eq:dV(j)}
        dV^{(j)}(s)=\big( r^{(j)}(s)-c^{(j)}(s) \big) V^{(j)}(s) ds + a(s)d\big( h^{(i)}(s) \cdot f^{i \rightarrow j}(s) \big),
    \end{equation}
    where, $r^{(j)}(s)$ and $c^{(j)}(s)$ are the risk-free rate and the collateral rate at time $s$ of currency $(j)$, respectively, $h^{(i)}(s)$ is the value of the derivative $X$ in terms of currency $(i)$ that matures at time $T$ with cash flow $h^{(i)}(T)$, $f^{i \rightarrow j}(s)$ is the foreign exchange (FX) rate at time $s$ representing the price of the unit amount of currency $(i)$ in terms of currency $(j)$. Finally, $a(s)$ represents the number of positions of the derivative at time $s$. To solve \eqref{eq:dV(j)} we have to multiply the equation by $e^{\int_s^T y^{(j)}(\eta) d\eta}$, where $y^{(j)}(s)=r^{(j)}(s)-c^{(j)}(s)$, and we get
    \begin{equation}\label{eq:expdV(j)}
        e^{\int_s^T y^{(j)}(\eta) d\eta} dV^{(j)}(s)=e^{\int_s^T y^{(j)}(\eta) d\eta} y^{(j)}(s) V^{(j)}(s) ds + e^{\int_s^T y^{(j)}(\eta) d\eta} a(s)d \big( h^{(i)}(s) \cdot f^{i \rightarrow j}(s) \big).
    \end{equation}
    Then, by integrating \eqref{eq:expdV(j)} we have
    \begin{equation}\label{eq:intexpdV(j)}
        \int_t^T e^{\int_s^T y^{(j)}(\eta) d\eta} dV^{(j)}(s)= \int_t^T e^{\int_s^T y^{(j)}(\eta) d\eta} y^{(j)}(s) V^{(j)}(s) ds + \int_t^T e^{\int_s^T y^{(j)}(\eta) d\eta} a(s)d \big( h^{(i)}(s) \cdot f^{i \rightarrow j}(s) \big).
    \end{equation}
    Let define $u=e^{\int_s^T y^{(j)}(\eta) d\eta}$ and $dv=dV^{(j)}(s)$, using integration by parts formula we obtain
    \begin{align}
        \int udv &= uv - \int vdu\\
        \int_t^T e^{\int_s^T y^{(j)}(\eta) d\eta} dV^{(j)}(s) &= e^{\int_s^T y^{(j)}(\eta) d\eta} V^{(j)}(s)\Big|_t^T + \int_t^T V^{(j)}(s)e^{\int_s^T y^{(j)}(\eta) d\eta} y^{(j)}(s) ds. \label{intByParts(j)}
    \end{align}
    Then, using \eqref{eq:intexpdV(j)} and \eqref{intByParts(j)} we get
    \begin{equation}\label{eq:intexpdV(j)2}
        V^{(j)}(T)=e^{\int_t^T y^{(j)}(\eta) d\eta}V^{(j)}(t) + \int_t^T e^{\int_s^T y^{(j)}(\eta) d\eta} a(s)d \big( h^{(i)}(s) \cdot f^{i \rightarrow j}(s) \big).
    \end{equation}
    As in \cite{fujii2010note}, by adopting the trading strategy specified by
    \begin{equation}
    \left\{
    \begin{aligned}
        V^{(j)}(t)&=h^{(i)}(t)\cdot f^{i \rightarrow j}(t)\\
        a(s)&=e^{\int_t^s y^{(j)}(\eta) d\eta}
    \end{aligned}
    \right.
    \end{equation}
    By substituting the trading strategy in \eqref{eq:intexpdV(j)2} we obtain
    \begin{align}
  V^{(j)}(T)&=e^{\int_t^T y^{(j)}(\eta) d\eta}V^{(j)}(t) + \int_t^T e^{\int_s^T y^{(j)}(\eta) d\eta} a(s) d \big( h^{(i)}(s) \cdot f^{i \rightarrow j}(s) \big)
 \nonumber \\
            &=e^{\int_t^T y^{(j)}(\eta) d\eta}h^{(i)}(t)\cdot f^{i \rightarrow j}(t) + \int_t^T e^{\int_s^T y^{(j)}(\eta) d\eta + \int_t^s y^{(j)}(\eta) d\eta} d \big( h^{(i)}(s) \cdot f^{i \rightarrow j}(s) \big)
 \nonumber \\
            &=e^{\int_t^T y^{(j)}(\eta) d\eta}h^{(i)}(t)\cdot f^{i \rightarrow j}(t) + e^{\int_t^T y^{(j)}(\eta) d\eta} \int_t^T d \big( h^{(i)}(s) \cdot f^{i \rightarrow j}(s) \big)
 \nonumber \\
            &=e^{\int_t^T y^{(j)}(\eta) d\eta}h^{(i)}(t)\cdot f^{i \rightarrow j}(t) + e^{\int_t^T y^{(j)}(\eta) d\eta} \bigg( \big( h^{(i)}(T) \cdot f^{i \rightarrow j}(T) \big) - \big( h^{(i)}(t) \cdot f^{i \rightarrow j}(t) \big) \bigg)
 \nonumber \\
            &=e^{\int_t^T y^{(j)}(\eta) d\eta}\big( h^{(i)}(T) \cdot f^{i \rightarrow j}(T) \big).
    \end{align}
    Now we have that,
    \begin{equation}
        h^{(i)}(T)=V^{(j)}(T)f^{j \rightarrow i}(T)e^{-\int_t^T y^{(j)} (\eta)d\eta}=V^{(i)}(T)e^{-\int_t^T y^{(j)} (\eta)d\eta}.
    \end{equation}
    We are able to compute the present value of $h^{(i)}(t)$ of the derivative $X$ using the risk neutral measure $\mathbb{Q}_{(i)}$ associated to the numéraire (money-market account) $B^{(i)}(t)=e^{\int_t^T r^{(i)}(s)ds}$,i.e.,
    \begin{align}
        h^{(i)}(t)&=\mathbb{E}_t^{\mathbb{Q}_{(i)}} \left[ \frac{V^{(i)}(T)}{B^{(i)}(T)} \right]
 \nonumber \\
                  &=\mathbb{E}_t^{\mathbb{Q}_{(i)}} \left[ \big( e^{-\int_t^T r^{(i)}(s)ds} \big) \big( e^{\int_t^T y^{(j)}(s)ds} \big) h^{(i)}(T) \right]
    \end{align}
    If we define $y^{(i,j)}(s)=y^{(i)}(s)-y^{(j)}(s)$ then we could express $h^{(i)}(t)$ as follows
    \begin{align}
        h^{(i)}(t)&=\mathbb{E}_t^{\mathbb{Q}_{(i)}} \left[ \big( e^{-\int_t^T r^{(i)}(s)ds} \big) \big( e^{\int_t^T y^{(j)}(s)ds} \big) h^{(i)}(T) \right]
 \nonumber \\
                  &=\mathbb{E}_t^{\mathbb{Q}_{(i)}} \left[ \big( e^{-\int_t^T c^{(i)}(s)ds} \big) \right] \mathbb{E}_t^{\mathbb{T}^c_{(i)}} \left[ \big( e^{-\int_t^T y^{(i,j)}(s)ds} \big) h^{(i)}(T) \right]
 \nonumber \\
                  &=P_{(i)}^c(t,T) \mathbb{E}_t^{\mathbb{T}^c_{(i)}} \left[ \big( e^{-\int_t^T y^{(i,j)}(s)ds} \big) h^{(i)}(T) \right], \label{eq:PriceCollDer2Curr}
    \end{align}
    where $P_{(i)}^c(t,T)$ is the collateralized zero coupon bond of currency $(i)$ and $\mathbb{T}^c_{(i)}$ is the collateralized forward measure of the same currency where $P_{(i)}^c(t,T)$ is used as numéraire. Note that when $(i)=(j)$ then from \eqref{eq:PriceCollDer2Curr} we have that $h^{(i)}(t)=P_{(i)}^c(t,T) \mathbb{E}_t^{\mathbb{T}^c_{(i)}} \left[ h^{(i)}(T) \right]$ which coincides with equation \eqref{eq:PriceCollDer1Curr}.\par \medskip
    
    In the next subsection we will present the differences between pricing IRSs denominated in EUR and MXN when the collateral currency is USD. We will see that the existence of an OISs market eases the curve construction.
    \subsubsection{Case of EUR}\label{caseOfEUR}
    We have already built curves in USD, therefore the next step is to build curves in other currency ---namely EUR--- but keeping the same collateral currency (in this case USD). The EUR interest rate market instruments are mostly collateralized in EUR. In this market the overnight rate is called Eonia (acronym of Euro OverNight Index Average), and the OISs based on Eonia have the same characteristics of the OISs based on Fed Funds in the USD market.\par \medskip    
    Therefore with a list of EUR cash deposits and OISs swaps based on Eonia we are able to build and calibrate ---using a simple bootstrapping--- the EUR discount curve when the cash flows are collateralized in EUR. In fact, the methodology is identical to the one used in section \ref{USDDiscountCurveSection} to build USD discount curve. Once we have the discount curve (collateralized in EUR) we can use quotes from the EUR market based on the two dominant tenors: EURIBOR 3m and EURIBOR 6m; to build both forward curves. For EURIBOR 3m curve we use short-term interest rates (STIR) futures and tenor basis swaps (6m vs 3m), while for EURIBOR 6m we use forward rates agreements (FRAs) and vanilla IRSs.
    \begin{rem}
    As in USD market, EUR curves (Eonia (discount) EURIBOR 3m (forward) and EURIBOR 6m (forward)) can be build one by one with simple bootstrappings. The order is discounting, forward 6m and forward 3m.
    \end{rem}
    Now to build the discount curves of EUR when the collateral is USD or vice versa we need the XCSs market. Recall that a plain vanilla EURUSD XCS exchanges LIBOR 3m flat for EURIBOR 3m plus an additional spread, as is mention on section \ref{XCSs}. In the interest rate market there exists two types of XCSs: cnXCSs (constant notional XCSs) and mtmXCSs (mark-to-market XCSs)\footnote{Also classified as: resettable XCSs or non-resetable XCSs, since the \emph{reset} of the notional amount.}, see section \ref{XCSs}.  In a cnXCS the notionals of both legs are fixed using a FX rate agreed in the inception of the trade and kept constant until its maturity despite the FX rate moves. On the contrary, a mtmXCS resets the notional of one leg at the start of every coupon period (including the final notional exchange) while the other leg notional keeps constant. Even though the mtmXCS have better liquidity in almost every currency, in this work we will asume that the XCSs have constant notionals due to USDMXN XCSs are still quoted with constant notionals. Therefore consider a cnXCS of a pair of currencies $(i,j)$ and assume that the collateral is posted in currency $j$. Then the net present value of leg $j$ is calculated as: (suppose that the payment dates and coupon accrual periods are the same for both legs)\footnote{This assumption is, in fact, true for plain vanilla XCSs, i.e. the payment calendars for both legs coincide, either the day count conventions.}
    \begin{equation}
    \begin{split}
		\textnormal{\textbf{Leg}}_{(j)}(t)&=\textnormal{N}_{(j)} \Big[ -P^c_{(j)}(t,\widetilde{t}_0) + P^c_{(j)}(t,\widetilde{t}_N) \\
        & \hspace{5mm}+ \sum_{k=1}^N \beta(t_{k-1},t_{k}) P^c_{(j)}(t,\widetilde{t}_k) \mathbb{E}^{\mathbb{T}^c_{(j),k}}_t \big[ L^{(j)}(t_{k-1},t_k) \big] \Big].
    \end{split}
    \end{equation}
    where:\par
    \noindent \makebox[3.8cm][l]{$\textnormal{N}_{(j)}$:} notional of $j-$currency leg \par
    \noindent \makebox[3.8cm][l]{$N$:} number of coupons\par 
    \noindent \makebox[3.8cm][l]{$(t_{k-1},t_k)$:} period of $k$th coupon\par 
    \noindent \makebox[3.8cm][l]{$\widetilde{t}_k$:} time of payment of the $k$th coupon or notional exchange \par 
    \noindent \makebox[3.8cm][l]{$\beta(t_{k-1},t_k)$:} accrual factor of the $k$th coupon\par
    \noindent \makebox[3.8cm][l]{$\mathbb{E}^{\mathbb{T}^c_{(j),k}}_t \big[ L^{(j)}(t_{k-1},t_k) \big]$:} the $j-$forward reference rate of the $k$th coupon \par
    \noindent \makebox[3.8cm][l]{$P^c_{(j)}(t,\widetilde{t}_k)$:} $j-$currency discount factor collateralized in $j$ for time $\widetilde{t}_k$. \par \medskip
    \noindent Now, using equation \eqref{eq:PriceCollDer2Curr} we have that the net present value of leg $i$ is
    \begin{equation}\label{eq:leg(i)}
    \begin{split}
		\textnormal{\textbf{Leg}}_{(i)}(t)&=\textnormal{N}_{(i)} \Big[ -P^c_{(i)}(t,\widetilde{t}_0) \mathbb{E}^{\mathbb{T}^c_{(i),0}}_t \big[ e^{-\int_t^{\widetilde{t}_0} y_{(i,j)}(s)ds} \big] \\
        & \hspace{5mm}+ P^c_{(i)}(t,\widetilde{t}_N) \mathbb{E}^{\mathbb{T}^c_{(i),N}}_t \big[ e^{-\int_t^{\widetilde{t}_N} y_{(i,j)}(s)ds} \big] \\
        & \hspace{5mm}+ \sum_{k=1}^N \beta(t_{k-1},t_{k}) P^c_{(i)}(t,\widetilde{t}_k) \mathbb{E}^{\mathbb{T}^c_{(i),k}}_t \big[ \big( L^{(i)}(t_{k-1},t_k) + B_N\big) e^{-\int_t^{\widetilde{t}_k} y_{(i,j)}(s)ds} \big] \Big].
    \end{split}
    \end{equation}
    where:\par
    \noindent \makebox[5.7cm][l]{$\textnormal{N}_{(i)}$:} notional of $i-$currency leg \par
    \noindent \makebox[5.7cm][l]{$N$:} number of coupons\par
    \noindent \makebox[5.7cm][l]{$B_N$:} basis spread of the cnXCS with $N$ coupons\par 
    \noindent \makebox[5.7cm][l]{$(t_{k-1},t_k)$:} period of $k$th coupon\par 
    \noindent \makebox[5.7cm][l]{$\widetilde{t}_k$:} time of payment of the $k$th coupon or notional exchange \par 
    \noindent \makebox[5.7cm][l]{$\beta(t_{k-1},t_k)$:} accrual factor of the $k$th coupon\par
    \noindent \makebox[5.7cm][l]{$P^c_{(i)}(t,\widetilde{t}_k)$:} $i-$currency discount factor collateralized in $i$ for time $\widetilde{t}_k$ \par
    \noindent \makebox[5.7cm][l]{$\mathbb{E}^{\mathbb{T}^c_{(i),k}}_t \big[ L^{(i)}(t_{k-1},t_k) e^{-\int_t^{\widetilde{t}_k} y_{(i,j)}(s)ds} \big]$:} the $i-$forward reference rate of the $k$th coupon when the collateral currency is $j$. \par \medskip
     \noindent Consider that we enter in a payer cnXCS i.e. we pay the leg with the basis spread $B_N$, hence the net present value at time $t$ of the contract in terms of currency $j$ is given by
    \begin{equation}\label{cnXCS(i,j)}
    	\textnormal{\textbf{cnXCS}}_{\textnormal{Payer}}^{(i,j)}(t)=\textnormal{\textbf{Leg}}_{(j)}(t)-f^{i \rightarrow j} (t,t_0) \textnormal{\textbf{Leg}}_{(i)}(t),
    \end{equation}
	\noindent where $f^{i \rightarrow j} (t,t_0)$ is the spot FX rate.\par \medskip
    For the curve construction, we can assume that $\textnormal{N}_{(j)}=1$ and $\textnormal{N}_{(j)}=\textnormal{N}_{(i)}f^{i \rightarrow j} (t,t_0)$. Also, due to the difference between the trade date and the start date, i.e. $t_0-t$ is equal to 2 days and the cash deposits (overnight and tom-next) rates are nearly zero, then we can say that $P^c_{(j)}(t,\widetilde{t}_0)\approx 1$ and $P^c_{(i)}(t,\widetilde{t}_0)\approx 1$. Finally we will assume that $y_{(i,j)}(\cdot)$ is a deterministic function, therefore by setting equation \eqref{cnXCS(i,j)} equal to zero yields,
    \begin{equation}\label{eq:bootstrappingcnXCS}
	\begin{split}
    \hspace{2mm} & P^c_{(j)}(t,\widetilde{t}_N)-P^c_{(i)}(t,\widetilde{t}_N) f^{i \rightarrow j} (t,\widetilde{t}_0) P^y(t,\widetilde{t}_N) \\
    & \hspace{5mm} + \sum_{k=1}^N \beta(t_{k-1},t_k) P^c_{(j)}(t,\widetilde{t}_k) \mathbb{E}^{\mathbb{T}^c_{(j),N}}_t \big[ L^{(j)}(t_{k-1},t_k) \big] \\
    & \hspace{5mm} - f^{i \rightarrow j} (t,\widetilde{t}_0) \sum_{k=1}^N \beta(t_{k-1},t_k) P^c_{(i)}(t,\widetilde{t}_k) \Big[ \mathbb{E}^{\mathbb{T}^c_{(i),k}}_t \big[ L^{(i)}(t_{k-1},t_k) \big] + B_N \Big] P^y(t,\widetilde{t}_k) = 0
    \end{split}
	\end{equation}
    where $P^y(t,\widetilde{t}_k)=e^{-\int_t^{\widetilde{t}_k}y_{(i,j)}(s)ds}$ for all $k=0,1,\dots,N$. Note that when we assume that $y_{(i,j)}$ is a deterministic function then we are able to take out the term $\exp({-\int_t^{\widetilde{t}_k}y_{(i,j)}(s)ds})$ from the expectation of equation \eqref{eq:leg(i)}. Additionally, the terms
    \begin{equation}
     P^c_{(i)}(t,\widetilde{t}_k)P^y(t,\widetilde{t}_k)
    \end{equation}
    could be interpreted as the discount factors of currency $i$ fully collateralized in currency $j$.\par \medskip
    So, the only unknown factors of equation \eqref{eq:bootstrappingcnXCS} are the values of $P^y(t,\widetilde{t}_k)$. Hence using a bootstrapping with all the quotes of cnXCSs displayed in the market $\{B(i,j)_N\}$ and an interpolation method we can built the curve $\{P^y(t,T)\}$ and thus $\{y_{(i,j)}(T)\}$. We do not present results and an implementation of this bootstrapping since this work is focus on the MXN curve construction. For interested readers, see \cite{fujii2010collateral} for a deeper analysis of EURUSD and USDJPY XCSs.
    \subsubsection{Case of MXN}
    The main difference between EUR and MXN markets is the no existence of OISs in the latter. Besides that neither of the products IRSs nor XCSs are collateralized in MXN. Indeed they are collateralized in USD. Nevertheless we have to propose a \emph{multi-curve framework} for curve construction with the constraints of replicating the market quotes considering the main currency of the CSA agreements.\par \medskip 
    According to the \cite{mexder2014}, most of the banks in Mexico are funded with foreign capital and they manage their books and trading desks outside Mexico. It is estimated that around 80\% of the IRSs market operations are traded outside Mexico or by international (non-mexican) banks. For this reason IRSs and XCSs (USDMXN) based on TIIE 28d are executed under a CSA agreement with USD as collateral currency.\par \medskip
    As we will see in section \ref{sec:MXNIRSUnderDifColl}, neither the forward curve nor the discount curve could be obtained with a simple bootstrapping. In fact, we will present how to define a two-step bootstrapping to build the MXN forward curve (TIIE 28d as index rate) and the MXN discount curve collateralized in USD. In fact, the idea of this two-step bootstrapping is simple; we know that either the plain vanilla IRSs based on TIIE 28d and plain vanilla XCSs that exchange LIBOR 1m plus a spread for TIIE 28d have the same schedule i.e. coupons every 28 days with the following date convention and have the same maturities (84d, 168d, 252d and so on). However the spot lag is different since in IRSs it is of one day whereas in XCSs it is of two days; also the payments calendars are MX and US-MX, respectively. These differences forbid us to substitute the MXN floating leg of the XCS for the fixed leg of the IRS. Nevertheless if we assume that this difference can be neglected then we can replace MXN floating leg with a fixed rate leg, thus the discounting curve of MXN leg collateralized in USD could be obtained using a simple bootstrapping. Then it is immediately followed the calibration of the forward curve (TIIE 28d) from the IRSs since the discounting curve collateralized in USD is known. This calibration is again made by a simple bootstrapping. It is important to point out that in section \ref{sec:MXNIRSUnderDifColl} we will not assume that the spot lag and the calendars are equal in IRSs and XCSs, and, as a consequence, we will have to iterate the bootstrappings to converge for a solution that replicates the market swaps: IRSs and XCSs.\par \medskip
    In figure \ref{fig:flowchart} we present a flowchart with the differences between the EUR and MXN markets and how the USD market relates with them for the rate curves construction.
    
        \begin{sidewaysfigure}
    	\centering
        \begin{figure}[H]
          \centering
          \includegraphics[scale=0.75]{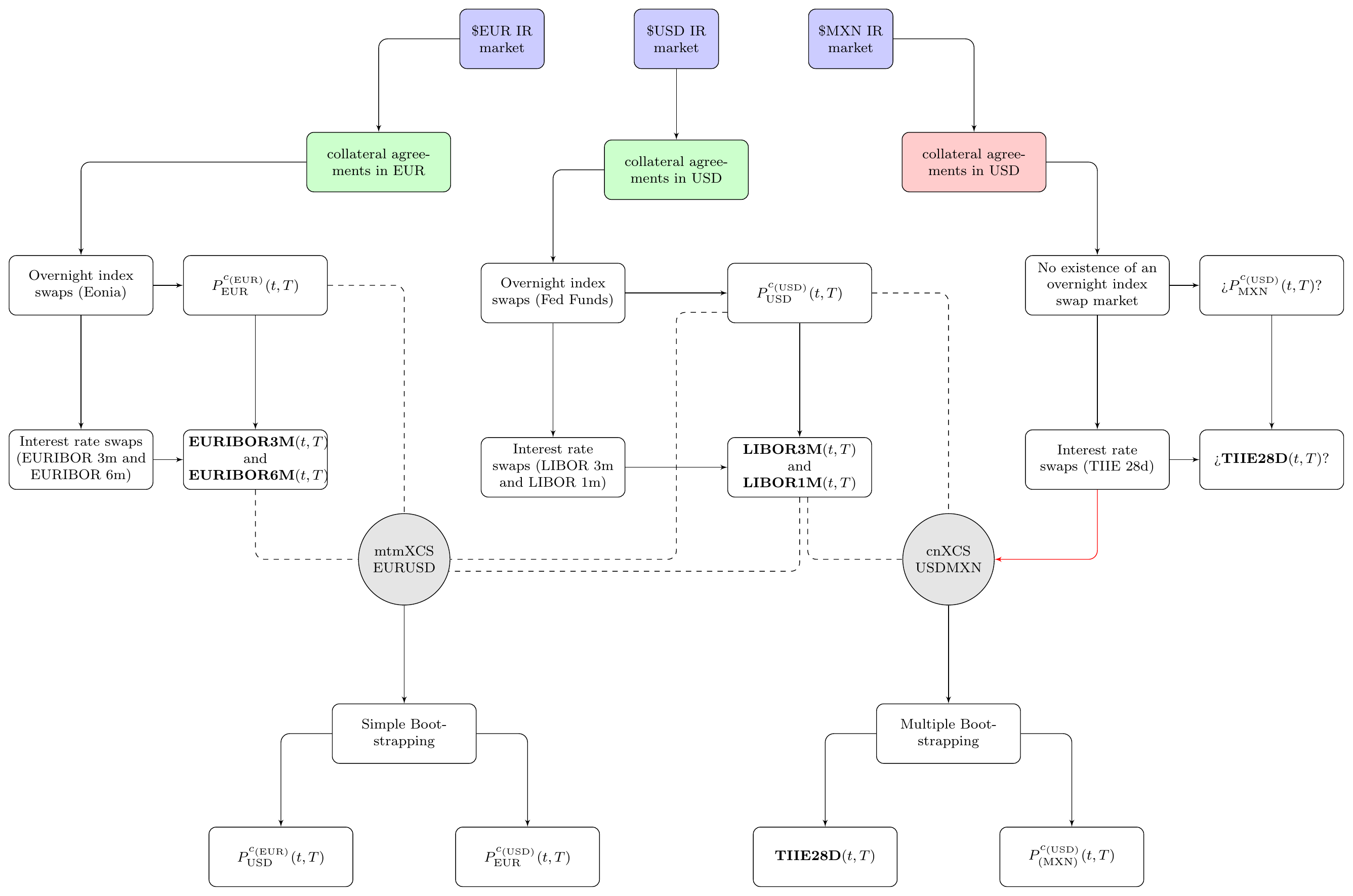}
        \end{figure}
    	\caption[Flowchart for curve construction: differences between EUR and MXN curves when the collateral currency is USD]{Flowchart for curve construction: In this flowchart we present the main differences between EUR and MXN curves. The no existence of MXN OISs market and the fact that the market IRSs based on TIIE 28d are fully collateralized in USD, incapacitated us for calibrating the MXN curves using simple bootstrappings}
    		\label{fig:flowchart}
		\end{sidewaysfigure}
    
    \newpage
    
    \section{Pricing MXN IRS Under Different Collateral Currencies}\label{sec:MXNIRSUnderDifColl}
    In this section we will present the methodology for the valuation of MXN IRSs in three different collateral currencies: USD, MXN and EUR. Also we include the methodology for pricing an IRS when it is uncollateralized i.e. without a collateral agreement. As we will see throughout this section, the calibration methods of each discount curve are different and follow distinct arguments for the construction. Indeed, when the collateral currency is USD we will use for valuation curves that are calibrated by multi-curve bootstrapping using quotes displayed in the market. Using the curves obtained in the USD case, the uncollateralized and MXN-collateral cases follow \emph{simple} arguments that are seen typically in the market, Finally when the collateral currency is EUR the curve calibration follows only arguments of a non-arbitrage market.
    \subsection{Calibration of the MXN Discount Curve Collateralized in USD and the MXN TIIE 28d Forward Curve}\label{sec:calibrationMXNCOLUSDandTIIE}
    The curves calibrated in this subsection correspond to the implied curves obtained from prices quoted in the market. Recall that the most traded interest rates derivatives in the MXN currency are: 1) plain vanilla IRSs based on TIIE 28d and 2) XCSs between USD and MXN currencies. It is important to point out that for these products, the prices displayed in trading screens of brokers and market-makers  typically correspond to prices under CSA agreements with cash collateral in USD currency. Before we present the methodology for the curve calibration let us present useful formulas and notation that we will use throughout the section. Let $\textnormal{\textbf{IRS}}_{\textnormal{Payer}}^{\textnormal{TIIE28D}}(t)$ be the present value of a plain vanilla payer $N$-coupon IRS based on TIIE 28d, then 
    \begin{equation}\label{payerIRSTIIE28D}
        \textnormal{\textbf{IRS}}_{\textnormal{Payer}}^{\textnormal{TIIE28D}}(t)=\textnormal{\textbf{FloatLeg}}(t)-\textnormal{\textbf{FixedLeg}}(t),
    \end{equation}
    with
    \begin{equation}\label{legsIRSTIIE}
    	\begin{aligned}
        \textnormal{\textbf{FloatLeg}}(t) &=\textnormal{N}_{\textnormal{MXN}} \sum_{i=1}^{N}\alpha(t_{i-1},t_{i}) \mathbb{E}_t^{\widetilde{t}_i}(\textnormal{\textbf{TIIE28D}}(t_{i-1},t_{i})) P^{c_{(\textnormal{USD})}}_{\textnormal{MXN }}(t,\widetilde{t}_i),\\
        \textnormal{\textbf{FixedLeg}}(t) &= \textnormal{N}_{\textnormal{MXN}} \cdot k \sum_{i=1}^{N} \alpha(t_{i-1},t_{i}) P^{c_{(\textnormal{USD})}}_{\textnormal{MXN}}(t,\widetilde{t}_i),
        \end{aligned}
    \end{equation}
    and where:\par \smallskip
    \noindent \makebox[3.8cm][l]{$\textnormal{N}_{\textnormal{MXN}}$:} notional of the IRS \par
    \noindent \makebox[3.8cm][l]{$k$:} fixed rate of the $N$-coupons IRS (swap rate)\par
    \noindent \makebox[3.8cm][l]{$N$:} number of coupons \par 
    \noindent \makebox[3.8cm][l]{$t_i$:} coupon periods of both legs \par 
    \noindent \makebox[3.8cm][l]{$\widetilde{t}_i$:} time of payment of the $i$th coupon \par 
    \noindent \makebox[3.8cm][l]{$\alpha(t_{i-1},t_i))$:} accrual factor of the $i$th coupon \par
    \noindent \makebox[3.8cm][l]{$\mathbb{E}_t^{\widetilde{t}_i}(\textnormal{\textbf{TIIE28D}}(t_{i-1},t_{i}))$:} the forward TIIE 28d rate of the $i$th coupon \par
    \noindent \makebox[3.8cm][l]{$P^{c_{(\textnormal{USD})}}_{\textnormal{MXN}}(t,\widetilde{t}_i)$:} MXN discount factor collateralized in USD for time $\widetilde{t}_i$. \par
    \begin{rem}\label{remark.calendar.IRS}
    In the MXN plain vanilla IRS we have that $\widetilde{t}_i=t_i$ for all $i=1,\dots,N$ since payment dates are equal to coupon end dates. The coupon dates $t_i$ are defined every 28 days with the following convention using a MX business days calendar. The spot lag (the difference between the start date and the trade date) is of one open day, i.e. the next business day after the trade date. 
    \end{rem}
    \begin{rem}
    Counterparties that have access to quotes or trading screens are typically market-makers, which do not have necessarily CSA agreements denominated in USD. Indeed, \cite{mexder2014} states that 80\% of the flows are traded outside Mexico or by foreign banks with non-local capital. Nevertheless these trading screen prices are references that are adjusted with a spread according to the collateral currency between the counterparties (valuation adjustment known as CollVA, see \cite{ruiz2015xva}).
    \end{rem}
    
    From equations \eqref{xcs:eq1}-\eqref{xcs:eq2}-\eqref{xcs:eq3} we have that the mark-to-market at time $t$ (in MXN currency) of a plain vanilla payer USDMXN cnXCS (that exchanges LIBOR 1m plus a spread for TIIE 28d) $\textnormal{\textbf{cnXCS}}_{\textnormal{Payer}}^{\textnormal{USDMXN}}(t)$ is given by
        \begin{equation}\label{payerXCSUSDMXN}
        \textnormal{\textbf{cnXCS}}_{\textnormal{Payer}}^{\textnormal{USDMXN}}(t)=\textnormal{\textbf{Leg}}_\textnormal{MXN}(t)-f^{\textnormal{USD} \rightarrow \textnormal{MXN}}(t)\textnormal{\textbf{Leg}}_\textnormal{USD}(t),
    \end{equation}
    with
    \begin{equation}\label{legXCS.MXN}
    	\begin{split}
        \textnormal{\textbf{Leg}}_\textnormal{MXN}(t) &= \textnormal{N}_{\textnormal{MXN}} \Big[ -P^{c_{(\textnormal{USD})}}_{\textnormal{MXN}}(t,\widetilde{s}_0) + P^{c_{(\textnormal{USD})}}_{\textnormal{MXN}}(t,\widetilde{s}_N) +\hspace{35mm} \\ & \hspace{5mm} \sum_{j=1}^{N}\beta(s_{j-1},s_{j}) \mathbb{E}_t^{\widetilde{s}_j}(\textnormal{\textbf{TIIE28D}}(s_{j-1},s_{j})) P^{c_{(\textnormal{USD})}}_{\textnormal{MXN}}(t,\widetilde{s}_j) \Big],
        \end{split}
    \end{equation}
    \begin{equation}\label{legXCS.USD}
    	\begin{split}
        \textnormal{\textbf{Leg}}_\textnormal{USD}(t) &= \textnormal{N}_{\textnormal{USD}} \Big[ -P^{c_{(\textnormal{USD})}}_{\textnormal{USD}}(t,\widetilde{s}_0) + P^{c_{(\textnormal{USD})}}_{\textnormal{USD}}(t,\widetilde{s}_N) + \\ & \hspace{5mm} \sum_{j=1}^{N}\beta(s_{j-1},s_{j}) \big( \mathbb{E}_t^{\widetilde{s}_j}(\textnormal{\textbf{LIBOR1M}}(s_{j-1},s_{j})) + B_N \big) P^{c_{(\textnormal{USD})}}_{\textnormal{USD}}(t,\widetilde{s}_j) \Big],
        \end{split}
    \end{equation}
    and $f^{\textnormal{USD} \rightarrow \textnormal{MXN}}(t)$ is the FX spot rate at time $t$. Additionally we have that,\par\smallskip
    \noindent \makebox[4.2cm][l]{$\textnormal{N}_{\textnormal{USD}}$:} notional of USD leg \par
    \noindent \makebox[4.2cm][l]{$\textnormal{N}_{\textnormal{MXN}}$:} notional of MXN leg \par
    \noindent \makebox[4.2cm][l]{$N$:} number of coupons\par 
    \noindent \makebox[4.2cm][l]{$B_N$:} basis spread of the $N$-coupons cnXCS\par
    \noindent \makebox[4.2cm][l]{$(s_{j-1},s_j)$:} period of $j$th coupon (in both legs) \par 
    \noindent \makebox[4.2cm][l]{$\widetilde{s}_j$:} time of payment of the $j$th coupon (in both legs) \par 
    \noindent \makebox[4.2cm][l]{$\beta(s_{j-1},s_j))$:} accrual factor of the $j$th coupon (in both legs)\par
    \noindent \makebox[4.2cm][l]{$\mathbb{E}_t^{\widetilde{s}_j}(\textnormal{\textbf{LIBOR1M}}(s_{j-1},s_{j}))$:} the forward LIBOR 1m rate of the $j$th coupon \par
    \noindent \makebox[4.2cm][l]{$\mathbb{E}_t^{\widetilde{s}_j}(\textnormal{\textbf{TIIE28D}}(s_{j-1},s_{j}))$:} the forward TIIE 28d rate of the $j$th coupon \par
    \noindent \makebox[4.2cm][l]{$P^{c_{(\textnormal{USD})}}_{\textnormal{USD}}(t,\widetilde{s}_j)$:} USD discount factor collateralized in USD for time $\widetilde{s}_j$ \par
    \noindent \makebox[4.2cm][l]{$P^{c_{(\textnormal{USD})}}_{\textnormal{MXN}}(t,\widetilde{s}_j)$:} MXN discount factor collateralized in USD for time $\widetilde{s}_j$. \par \medskip
    
    \noindent In the case of cnXCSs, no matter the number of coupons, we have that 
    \begin{equation}
    \textnormal{N}_{\textnormal{MXN}}=\textnormal{N}_{\textnormal{USD}} f_0^{\textnormal{USD} \rightarrow \textnormal{MXN}}, 
    \end{equation}
    where $f_0^{\textnormal{USD} \rightarrow \textnormal{MXN}}$ is the FX rate fixed by the two counterparties at the moment the deal is closed. Note that in general $f_0^{\textnormal{USD} \rightarrow \textnormal{MXN}} \neq f^{\textnormal{USD} \rightarrow \textnormal{MXN}}(t)$, since the first one is used to determine the notional of the MXN leg and the second is the FX spot rate used for the valuation of the mark-to-markets. However at the moment the trade is done counterparties agreed that the notionals are determines with the FX spot rate, i.e. $f_0^{\textnormal{USD} \rightarrow \textnormal{MXN}} = f^{\textnormal{USD} \rightarrow \textnormal{MXN}}(t)$.
    \begin{rem}\label{remark.calendar.XCS}
    In a plain vanilla cnXCS the payment dates are scheduled every 28 days using the following convention. However, in contrast with a plain vanilla IRS, these payment dates are determined using an US-MX business days calendar and the spot lag is of two open days, i.e. the second business day after the trade date. Additionally we have that $\widetilde{s}_j=s_j$ for all $j=1,\dots,N$.
    \end{rem}
    \begin{rem}\label{remark.fixingLIBOR.XCS}
    For a plain vanilla TS that exchanges LIBOR 1m for LIBOR 3m, the fixings of the LIBOR 1m leg are determined on a monthly basis and the accrual factors too. However on the plain vanilla USDMXN cnXCS, the applicable accrual factor is calculated considering coupons of 28 days.
    \end{rem}
    From the previous sections we know how to build the USD discount curve collateralized in USD and the LIBOR 1m forward curve. Hence the values of $P^{c_{(\textnormal{USD})}}_{\textnormal{USD}}(t,x)$ and $\mathbb{E}_t^{x}(\textnormal{\textbf{LIBOR1M}}(x,x+\textnormal{1m}))$ in equation \eqref{legXCS.USD} are known for all $x$. So the remaining curves that we have to calibrate are
    \begin{align}
		P_\textnormal{MXN}^{c_\textnormal{(USD)}}(t,x),& \hspace{4mm} \textnormal{and}\\
        \mathbb{E}_t^{x}(\textnormal{\textbf{TIIE28D}}(x,x+\textnormal{28d})),& \hspace{4mm} \textnormal{for all } x.
	\end{align}
    Fortunately, we have two swap curves as inputs (IRS and cnXCS market quotes, see table \ref{IRSXCSQuotesBloomberg}) and we have to solve two curves as outputs (discount of MXN collateralized in MXN and TIIE 28d index curve).
    \begin{table}
    \scriptsize
    \centering
    \begin{tabular}{| r c | c c | c c |}
      \hline
     \multicolumn{1}{|>{\centering\arraybackslash}m{10mm}}{\textbf{Tenor}} & 
     \multicolumn{1}{>{\centering\arraybackslash}m{20mm}|}{\textbf{Number of Coupons $(N)$}} & 
     \multicolumn{1}{>{\centering\arraybackslash}m{20mm}}{\textbf{Fixed Rate (\%)}} &
     \multicolumn{1}{>{\centering\arraybackslash}m{20mm}|}{\textbf{Type}} & 
     \multicolumn{1}{>{\centering\arraybackslash}m{20mm}}{\textbf{Spread (\%)}} &
     \multicolumn{1}{>{\centering\arraybackslash}m{20mm}|}{\textbf{Type}}\\ \hline
     
      84D & 3 & 3.3200 & IRS & 0.5400 & cnXCS \\ 
      168D & 6 & 3.4300 & IRS & 0.5900 & cnXCS \\ 
      252D & 9 & 3.5620 & IRS & 0.6400 & cnXCS \\ 
      364D & 13 & 3.7350 & IRS & 0.6800 & cnXCS \\ 
      728D & 26 & 4.2360 & IRS & 0.7200 & cnXCS \\ 
      1092D & 39 & 4.6710 & IRS & 0.8100 & cnXCS \\ 
      1456D & 52 & 5.0510 & IRS & 0.8800 & cnXCS \\ 
      1820D & 65 & 5.3610 & IRS & 0.9200 & cnXCS \\ 
      2548D & 91 & 5.8630 & IRS & 1.0050 & cnXCS \\ 
      3640D & 130 & 6.2380 & IRS & 1.0400 & cnXCS \\ 
      4368D & 156 & 6.4280 & IRS & 1.0400 & cnXCS \\ 
      5460D & 195 & 6.6320 & IRS & 1.0200 & cnXCS \\ 
      7280D & 260 & 6.8310 & IRS & 1.0250 & cnXCS \\ 
      10920D & 390 & 7.0210 & IRS & 1.0250 & cnXCS \\ 
       \hline
    \end{tabular}
    \caption{Quoted TIIE 28d IRSs and USDMXN cnXCSs on May 29 2015 (Source: Bloomberg).}
    \label{IRSXCSQuotesBloomberg}
    \end{table}
    Therefore, we get a kind of ``system of equations'' that may have a solution (not necessarily unique) that can be found easily since it is a $2 \times 2$ ``system of equations''\footnote{We are using \emph{quotation marks} because inputs and outputs do not define a real system of equations, we are abusing the language slightly.}. Before we start trying to solve this ``system of equations'' let us calculate the number of unknown variables. Note that the cnXCS with longest maturity (30 years) has 390 coupons $(=30 \textnormal{ years} \times 13 \textnormal{ coupons per year})$. For every MXN coupon we have two unknown variables: the discount factor and the forward index rate, hence we have 780 variables. However the TIIE 28d index rate for the first coupon is determined at time $t$ so it is not an unknown variable. In summary, we have 779 variables and 28 equations (14 IRS and 14 cnXCS from table \ref{IRSXCSQuotesBloomberg}). Now it is clearly that the system of equations have infinite solutions but we have find the more simple and adequate solution that is consistent with the term structure of the interest rate market. This solution is obtained through an interpolation method and a multiple bootstrapping that builds iteratively the two curves at the same time. To illustrate the idea behind this method let us suppose that our market only has one IRS and one cnXCS, both with maturity of 84 days. Also let us assume that $\textnormal{N}_\textnormal{MXN}=1$, hence the present value at time $t$ of the IRS and the cnXCS (both payers) are given by,
    \begin{equation}\label{payerIRSTIIE28D.84d}
    	\begin{split}
        \textnormal{\textbf{IRS}}_{\textnormal{Payer}}^{\textnormal{TIIE28D}}(t)=\sum_{i=1}^{3} &\alpha(t_{i-1},t_{i}) \mathbb{E}_t^{\widetilde{t}_i}(\textnormal{\textbf{TIIE28D}}(t_{i-1},t_{i})) P^{c_{(\textnormal{USD})}}_{\textnormal{MXN }}(t,\widetilde{t}_i)\\ &- k_\textnormal{84d} \sum_{i=1}^{3} \alpha(t_{i-1},t_{i}) P^{c_{(\textnormal{USD})}}_{\textnormal{MXN}}(t,\widetilde{t}_i),
        \end{split}
    \end{equation}
    and
    \begin{equation}\label{payerXCS.84d}
    	\begin{split}
        \textnormal{\textbf{cnXCS}}_{\textnormal{Payer}}^{\textnormal{USDMXN}}(t) &= \Big[ -P^{c_{(\textnormal{USD})}}_{\textnormal{MXN}}(t,\widetilde{s}_0) + P^{c_{(\textnormal{USD})}}_{\textnormal{MXN}}(t,\widetilde{s}_N)\hspace{35mm} \\ & \hspace{5mm} +\sum_{j=1}^{3}\beta(s_{j-1},s_{j}) \mathbb{E}_t^{\widetilde{s}_j}(\textnormal{\textbf{TIIE28D}}(s_{j-1},s_{j})) P^{c_{(\textnormal{USD})}}_{\textnormal{MXN}}(t,\widetilde{s}_j) \Big]\\
       & \hspace{5mm} - \frac{f^{\textnormal{USD} \rightarrow \textnormal{MXN}}(t)}{f_0^{\textnormal{USD} \rightarrow \textnormal{MXN}}} \Big[ -P^{c_{(\textnormal{USD})}}_{\textnormal{USD}}(t,\widetilde{s}_0) + P^{c_{(\textnormal{USD})}}_{\textnormal{USD}}(t,\widetilde{s}_N)\\ & \hspace{5mm} +\sum_{j=1}^{3}\beta(s_{j-1},s_{j}) \big( \mathbb{E}_t^{\widetilde{s}_j}(\textnormal{\textbf{LIBOR1M}}(s_{j-1},s_{j})) + B_\textnormal{84d} \big) P^{c_{(\textnormal{USD})}}_{\textnormal{USD}}(t,\widetilde{s}_j) \Big].
        \end{split}
    \end{equation}
    Assuming that the IRS and the cnXCS are mid market quotes we have that both equations are equal to zero, i.e.,
    \begin{align}
    	\textnormal{\textbf{IRS}}_{\textnormal{Payer}}^{\textnormal{TIIE28D}}(t) &=0\\
    	\textnormal{\textbf{cnXCS}}_{\textnormal{Payer}}^{\textnormal{USDMXN}}(t) &=0.
     \end{align}
    According to the remarks \ref{remark.calendar.IRS} and \ref{remark.calendar.XCS} we have that in general $t_i \neq s_j$, since the business days calendars are different and because the spot lag in the case of the cnXCS is one day greater. Nevertheless, for the calibration of the two curves we  will assume that this difference is sufficiently small to be negligible. Hence,
    \begin{equation}\label{negligible.diff.XCSandIRS}
    	\begin{split}
    	\sum_{i=1}^{3} \alpha(t_{i-1},t_{i}) & \mathbb{E}_t^{\widetilde{t}_i}(\textnormal{\textbf{TIIE28D}}(t_{i-1},t_{i})) P^{c_{(\textnormal{USD})}}_{\textnormal{MXN }}(t,\widetilde{t}_i) \approx\\
        & \sum_{j=1}^{3}\beta(s_{j-1},s_{j}) \mathbb{E}_t^{\widetilde{s}_j}(\textnormal{\textbf{TIIE28D}}(s_{j-1},s_{j})) P^{c_{(\textnormal{USD})}}_{\textnormal{MXN}}(t,\widetilde{s}_j),
        \end{split}
    \end{equation}
    and using equations \eqref{payerIRSTIIE28D.84d}-\eqref{negligible.diff.XCSandIRS} we get
        \begin{equation}\label{bootstrapping.XCSIRS.84d}
    	\begin{split}
        & \Big[ -P^{c_{(\textnormal{USD})}}_{\textnormal{MXN}}(t,\widetilde{s}_0) + P^{c_{(\textnormal{USD})}}_{\textnormal{MXN}}(t,\widetilde{s}_N)\hspace{35mm} \\ & \hspace{5mm} +k_\textnormal{84d} \sum_{j=1}^{3}\beta(s_{j-1},s_{j}) P^{c_{(\textnormal{USD})}}_{\textnormal{MXN}}(t,\widetilde{s}_j) \Big]\\
       & \hspace{5mm} - \frac{f^{\textnormal{USD} \rightarrow \textnormal{MXN}}(t)}{f_0^{\textnormal{USD} \rightarrow \textnormal{MXN}}} \Big[ -P^{c_{(\textnormal{USD})}}_{\textnormal{USD}}(t,\widetilde{s}_0) + P^{c_{(\textnormal{USD})}}_{\textnormal{USD}}(t,\widetilde{s}_N)\\ & \hspace{5mm} +\sum_{j=1}^{3}\beta(s_{j-1},s_{j}) \big( \mathbb{E}_t^{\widetilde{s}_j}(\textnormal{\textbf{LIBOR1M}}(s_{j-1},s_{j})) + B_\textnormal{84d} \big) P^{c_{(\textnormal{USD})}}_{\textnormal{USD}}(t,\widetilde{s}_j) \Big] =0
        \end{split}
    \end{equation}
    Note that in this equation we just have four unknown variables: 
    $$P^{c_{(\textnormal{USD})}}_{\textnormal{MXN}}(t,\widetilde{s}_0),
    P^{c_{(\textnormal{USD})}}_{\textnormal{MXN}}(t,\widetilde{s}_\textnormal{28d}),
    P^{c_{(\textnormal{USD})}}_{\textnormal{MXN}}(t,\widetilde{s}_\textnormal{56d}),
    P^{c_{(\textnormal{USD})}}_{\textnormal{MXN}}(t,\widetilde{s}_\textnormal{84d}).$$
    These four variables could be calculated using the short-term market, i.e. the depo and FX forwards markets. Let us present briefly how to perform this task. When we use the short-term markets, say FX Forwards, we are able to get the implied yield rates used for discounting MXN flows collateralized in USD from the forward points or outright rates. Indeed, we have that the USD/MXN outright FX rate at time $t$ and with delivery at time $T$ is given by the following formula:
    \begin{equation}\label{outrightFX}	f^{\textnormal{USD}\rightarrow\textnormal{MXN}}_T(t)=\frac{P^{c_(\textnormal{USD})}_\textnormal{USD}(t,T)}{P^{c_(\textnormal{USD})}_\textnormal{MXN}(t,T)} \cdot f^{\textnormal{USD}\rightarrow\textnormal{MXN}}(t),
    \end{equation}
    where $f^{\textnormal{USD}\rightarrow\textnormal{MXN}}(t)$ is the FX spot rate. This market is sufficiently liquid to get prices for many tenors, so we are able to get the following outright FX rates:
  $$f^{\textnormal{USD}\rightarrow\textnormal{MXN}}_{\widetilde{s}_0}(t),
    f^{\textnormal{USD}\rightarrow\textnormal{MXN}}_{\widetilde{s}_\textnormal{28d}}(t), 	 	 f^{\textnormal{USD}\rightarrow\textnormal{MXN}}_{\widetilde{s}_\textnormal{56d}}(t), 		f^{\textnormal{USD}\rightarrow\textnormal{MXN}}_{\widetilde{s}_\textnormal{84d}}(t).$$
    Hence, using equation \eqref{outrightFX} we can get the values of the discount factors $P^{c_{(\textnormal{USD})}}_\textnormal{MXN}(t,T)$ when $t \leq T \leq 1$. 
    One obstacle to proceeding with the short-term market method along the rest of the curve is that, even in major currencies (say G7 currencies\footnote{USD, CAD, GBP, EUR and JPY.}), FX forwards are only liquid for two or five years and quoted at most for ten years. For this reason we need long-dated market data (IRS and cnXCS) which are quoted up to 30 years. In this work we will only use IRSs and cnXCS for the curve calibration although \cite{mexder2014} suggests to use the short-term market for the calibration of the curves up to 1 year and the long-dated swaps for the rest of the curve.\par \medskip
    Before we present formally the algorithm of the multiple bootstrapping, let us continue with the curve calibration when the market only has one IRS and one XCS with maturity of 84 days (three coupons).\par \medskip
    We know that every discount factor has an associated yield rate that holds the following equation:
    \begin{equation}
    	P^{c_{(\textnormal{USD})}}_\textnormal{MXN}(t,x)=e^{-(x-t)R(t,x)}.
    \end{equation}
    Solving for $R(t,x)$ yields,
    \begin{equation}
    	R(t,x)=-\frac{\ln \left( P^{c_{(\textnormal{USD})}}_\textnormal{MXN}(t,x) \right)}{x-t}.
    \end{equation}
    In this method the value of $P^{c_{(\textnormal{USD})}}_\textnormal{MXN}(t,\widetilde{s}_0)$ is calculated by the short-term method and defines a yield rate $r_0$ associated to it. For the other three variables $P^{c_{(\textnormal{USD})}}_{\textnormal{MXN}}(t,\widetilde{s}_\textnormal{28d})$, $    P^{c_{(\textnormal{USD})}}_{\textnormal{MXN}}(t,\widetilde{s}_\textnormal{56d})$, $ P^{c_{(\textnormal{USD})}}_{\textnormal{MXN}}(t,\widetilde{s}_\textnormal{84d})$ we need to find a value of $R(t,\widetilde{s}_\textnormal{84d})$ that satisfy the following conditions:
    \begin{enumerate}
    	\item $R(t,x)=a+b(x-t)+c(x-t)^2+d(x-t)^3$ with $\widetilde{s}_0 \leq x \leq \widetilde{s}_\textnormal{84d}$ and $a,b,c,d \in \mathbb{R}$
        \item $R(t,\widetilde{s}_0)=r_0$
        \item $R(t,x)\in\mathcal{C}^2$ with $R''(t,\widetilde{s}_0)=0$ and $R''(t,\widetilde{s}_\textnormal{84d})=0$.
    \end{enumerate}
    These conditions corresponds to the natural cubic splines interpolation method. Note that the above conditions defines the next system of equations:
    \begin{align}
    	a+b(\widetilde{s}_0-t)+c(\widetilde{s}_0-t)^2+d(\widetilde{s}_0-t)^3 & =r_0\\
        2c+6d(\widetilde{s}_0-t) & = 0\\
        2c+6d(\widetilde{s}_\textnormal{84d}-t) & = 0.
    \end{align}
    So we get a system of equations with 4 variables and only 3 equations, hence we have two define a fourth equation with the intention to get a solution. So we say that
    \begin{align}
    	& R(t,\widetilde{s}_\textnormal{84d}):=k_\textnormal{84d}\\
        \Longrightarrow \hspace{7mm} & a+b(\widetilde{s}_\textnormal{84d}-t)+c(\widetilde{s}_\textnormal{84d}-t)^2+d(\widetilde{s}_\textnormal{84d}-t)^3 = k_\textnormal{84d}
    \end{align}
    In other words, we are claiming that the yield zero coupon rate at time $t$ with maturity in $\widetilde{s}_\textnormal{84d}$ is equal the swap rate for the same tenor. With this new equation the system has a unique solution given by the following vector $(a_0,b_0,c_0,d_0)$. We these coefficients we are now able to get the values of $ R_0(t,\widetilde{s}_0), R_0(t,\widetilde{s}_\textnormal{28d}), R_0(t,\widetilde{s}_\textnormal{56d}), R_0(t,\widetilde{s}_\textnormal{84d})$ and hence the values of $P^{c_{(\textnormal{USD})}}_{\textnormal{MXN},0}(t,\widetilde{s}_0)$, 
    $P^{c_{(\textnormal{USD})}}_{\textnormal{MXN},0}(t,\widetilde{s}_\textnormal{28d})$, 
    $P^{c_{(\textnormal{USD})}}_{\textnormal{MXN},0}(t,\widetilde{s}_\textnormal{56d})$,
    $P^{c_{(\textnormal{USD})}}_{\textnormal{MXN},0}(t,\widetilde{s}_\textnormal{84d})$. The subscript zero in $R_0$ and $P_0^c$ is because we want to emphasize that the values are initial values since we made an initial guess for $R_0(t,\widetilde{s}_\textnormal{84d})$. Note that we are also able to calculate the values of $ R_0(t,\widetilde{t}_0), R_0(t,\widetilde{t}_\textnormal{28d}), R_0(t,\widetilde{t}_\textnormal{56d}), R_0(t,\widetilde{t}_\textnormal{84d})$ and substitute them into equation \eqref{payerIRSTIIE28D.84d}. This give us the following equation
    \begin{equation}\label{MXN.IRS.TIIE.unknown}
    	\begin{split}
        \textnormal{\textbf{IRS}}_{\textnormal{Payer}}^{\textnormal{TIIE28D}}(t)=\sum_{i=1}^{3} &\alpha(t_{i-1},t_{i}) \mathbb{E}_t^{\widetilde{t}_i}(\textnormal{\textbf{TIIE28D}}(t_{i-1},t_{i})) P^{c_{(\textnormal{USD})}}_{\textnormal{MXN }}(t,\widetilde{t}_i)\\ &- k_\textnormal{84d} \sum_{i=1}^{3} \alpha(t_{i-1},t_{i}) P^{c_{(\textnormal{USD})}}_{\textnormal{MXN}}(t,\widetilde{t}_i).
        \end{split}
    \end{equation}
    Now the unknown variables in the equation \eqref{MXN.IRS.TIIE.unknown} are $\mathbb{E}_t^{\widetilde{t}_{\textnormal{56d}}}(\textnormal{\textbf{TIIE28D}}(t_{\textnormal{28d}},t_{\textnormal{56d}}))$ and $\mathbb{E}_t^{\widetilde{t}_{\textnormal{84d}}}(\textnormal{\textbf{TIIE28D}}(t_{\textnormal{56d}},t_{\textnormal{84d}}))$. To find the value of these two variables we are going to make the same assumptions done previously i.e. define $R^{\textnormal{TIIE}}(t,x)$ yield zero curve that replicates the TIIE 28d forward curve $\mathbb{E}_t^{x}(\textnormal{\textbf{TIIE28D}}(x,x+\textnormal{28d}))$. Likewise, we will assume that $R^{\textnormal{TIIE}}(t,x)$ is a piecewise-defined function with the natural cubic splines conditions. Therefore we have that
    \begin{enumerate}
    	\item $R^{\textnormal{TIIE}}(t,x)=e+f(x-t)+g(x-t)^2+h(x-t)^3$ with $t_0 \leq x \leq t_\textnormal{84d}$ and $e,f,g,h \in \mathbb{R}$,
        \item $R^{\textnormal{TIIE}}(t,y)=\textnormal{\textbf{TIIE28D}}(t)$ for all $y \in [t_0,t_\textnormal{28d}]$,
        \item $R^{\textnormal{TIIE}}(t,x)\in\mathcal{C}^2$ with $R''(t,t_\textnormal{28d})=0$ and $R''(t,t_\textnormal{84d})=0$.
    \end{enumerate}
    Condition 2 states that the function $R^{\textnormal{TIIE}}$ has a constant value in the interval $[t_0,t_\textnormal{28d}]$ and this value is equal to the TIIE 28d reference rate at time $t$, i.e. equal to the fixing rate published in the trade date. This assumption guarantees that the forward rate implied from the yield curve $R^{\textnormal{TIIE}}$ is the TIIE 28d fixing. Let us present the proof of this (straightforward) fact. We know that the TIIE 28d forward rates are given by the following equation
    \begin{equation}
    	\mathbb{E}_t^{T}(\textnormal{\textbf{TIIE28D}}(S,T))=-\frac{1}{\tau(S,T)} \ln \Bigg( \frac{P^\textnormal{TIIE}(t,S)}{P^\textnormal{TIIE}(t,T)}\Bigg)
    \end{equation}
    Now we have that $P^\textnormal{TIIE}(t,x)=e^{-(x-t)R^\textnormal{TIIE}(t,x)}$ and $R^\textnormal{TIIE}(t,S)=R^\textnormal{TIIE}(t,T)=\textnormal{\textbf{TIIE28D}}(t)$ hence
    \begin{align}
    	\mathbb{E}_t^{T}(\textnormal{\textbf{TIIE28D}}(S,T))&=-\frac{1}{T-S} \ln \Bigg( \frac{e^{-(S-t)\textnormal{\textbf{TIIE28D}}(t)}}{e^{-(T-t)\textnormal{\textbf{TIIE28D}}(t)}}\Bigg)\\
        &=-\frac{\ln(e^{-(T-S)\textnormal{\textbf{TIIE28D}}(t)})}{T-S}\\
        &=\textnormal{\textbf{TIIE28D}}(t).
    \end{align}
    Hence, condition 2 guarantees us that the curve replicates the known fixing at time $t$. Now if we write the system of equation that can be induced from conditions 1, 2 and 3 we obtain
    \begin{align}
    	e+f(t_0-t)+g(t_0-t)^2+h(t_0-t)^3 & =\textnormal{\textbf{TIIE28D}}(t)\\
        2g+6h(t_0-t) & = 0\\
        2g+6h(t_\textnormal{84d}-t) & = 0.
    \end{align}
    Again we have three equations and four variables $e,f,g,h$ so we have to define an equation that allow us to get a solution. So we will say that
    \begin{align}
    	& R^\textnormal{TIIE}(t,t_\textnormal{84d}):=\textnormal{\textbf{TIIE28D}}(t)+\varepsilon_0, \hspace{6mm} \varepsilon_0 \in \mathbb{R}\\
        \Longrightarrow \hspace{7mm} & e+f(t_\textnormal{84d}-t)+c(t_\textnormal{84d}-t)^2+d(t_\textnormal{84d}-t)^3 = \textnormal{\textbf{TIIE28D}}(t)+\varepsilon_0
    \end{align}
    The value of $\varepsilon_0$ is a parameter of the rate curve construction that will help us to converge rapidly to a solution. This parameter depends on the structure of the curve and in the monetary policy decisions that the central bank could take in the future. For example, if the market is pricing a rate hike in the following months then we have that $\varepsilon_0 > 0$. This parameter is only a variable that serves to ease the convergence of the algorithm. Once we have the coefficient vector $(e_0,f_0,g_0,h_0)$ we are able to calculate the forwards rates of TIIE 28d. Then we substitute these forward rates into equation \eqref{MXN.IRS.TIIE.unknown} and we generate a mark-to-market $\Pi_0^0$ for a plain vanilla IRS given by:
    \begin{equation}\label{PiIRS}
    	\begin{split}
        \Pi_0^0(R^\textnormal{TIIE}_0(t,\widetilde{s}_\textnormal{56d}), R^\textnormal{TIIE}_0(t,\widetilde{s}_\textnormal{84d}),\varepsilon_0)=\sum_{i=1}^{3} &\alpha(t_{i-1},t_{i}) \mathbb{E}_t^{\widetilde{t}_i}(\textnormal{\textbf{TIIE28D}}(t_{i-1},t_{i})) P^{c_{(\textnormal{USD})}}_{\textnormal{MXN }}(t,\widetilde{t}_i)\\ &- k_\textnormal{84d} \sum_{i=1}^{3} \alpha(t_{i-1},t_{i}) P^{c_{(\textnormal{USD})}}_{\textnormal{MXN}}(t,\widetilde{t}_i).
        \end{split}
    \end{equation}
    The value of $\Pi_0^0(R^\textnormal{TIIE}_0(t,\widetilde{s}_\textnormal{56d}), R^\textnormal{TIIE}_0(t,\widetilde{s}_\textnormal{84d}),\varepsilon_0)$ is not necessarily equal to zero, so the swap rate $k_\textnormal{84d}$ of the IRS is not at par. The idea of this algorithm is to make the mark-to-market $\Pi_0^0$ equal to zero, so we have to find the root of the equation \eqref{PiIRS}. In this work we will apply the bisection method for this task although there exists more efficient methods for finding roots such as Newton-Raphson method (see \cite{burden2010numerical}). Hence, we have to change the value of $\varepsilon_0$ and then calculate the values of $(e_0,f_0,g_0,h_0)$ until $\Pi_0^0 \approx 0$. Then we substitute the forward rates of TIIE 28d into the cnXCS equation \eqref{payerXCS.84d}. If $\textnormal{\textbf{cnXCS}}_{\textnormal{Payer}}^{\textnormal{USDMXN}}(t) =0$ then we are done with the iterations of the algorithm. However, typically before one iteration we do not have that $\textnormal{\textbf{cnXCS}}_{\textnormal{Payer}}^{\textnormal{USDMXN}}(t) =0$ therefore we have to proceed with more iterations. From equation \eqref{payerXCS.84d} we have to bootstrap the new coefficients $(a_1,b_1,c_1,d_1)$ to make it equal zero. Once we have achieved this task 
     \begin{equation}
    	\begin{split}
        \Pi_m(R_m(t,\widetilde{s}_0), R_m(t,&\widetilde{s}_\textnormal{28d}), R_m(t,\widetilde{s}_\textnormal{56d}), R_m(t,\widetilde{s}_\textnormal{84d})) = \Big[ -P^{c_{(\textnormal{USD})}}_{\textnormal{MXN},m}(t,\widetilde{s}_0) + P^{c_{(\textnormal{USD})}}_{\textnormal{MXN},m}(t,\widetilde{s}_N)\hspace{35mm} \\ & \hspace{2mm} +k_\textnormal{84d} \sum_{j=1}^{3}\beta(s_{j-1},s_{j}) P^{c_{(\textnormal{USD})}}_{\textnormal{MXN},m}(t,\widetilde{s}_j) \Big]\\
       & \hspace{2mm} - \frac{f^{\textnormal{USD} \rightarrow \textnormal{MXN}}(t)}{f_0^{\textnormal{USD} \rightarrow \textnormal{MXN}}} \Big[ -P^{c_{(\textnormal{USD})}}_{\textnormal{USD}}(t,\widetilde{s}_0) + P^{c_{(\textnormal{USD})}}_{\textnormal{USD}}(t,\widetilde{s}_N)\\ & \hspace{2mm} +\sum_{j=1}^{3}\beta(s_{j-1},s_{j}) \big( \mathbb{E}_t^{\widetilde{s}_j}(\textnormal{\textbf{LIBOR1M}}(s_{j-1},s_{j})) + B_\textnormal{84d} \big) P^{c_{(\textnormal{USD})}}_{\textnormal{USD}}(t,\widetilde{s}_j) \Big]
        \end{split}
    \end{equation}
    The value of $\Pi_0(R_0(t,\widetilde{s}_0), R_0(t,\widetilde{s}_\textnormal{28d}), R_0(t,\widetilde{s}_\textnormal{56d}), R_0(t,\widetilde{s}_\textnormal{84d}))$ is in general different to zero since we made an initial guess for the value of $R_0(t,\widetilde{s}_\textnormal{84d})$. The idea of the multiple bootstrapping is that have to iterate the values of $R_0(t,\widetilde{s}_\textnormal{84d})$ and make the IRS and cnXCS mark-to-markets equal to zero.\par \medskip
    The idea of this algorithm is to calibrate two curves by changing from an IRS to a cnXCS iteratively
    
    \begin{algorithm}[H]
     \SetAlgoLined
     \KwData{Tenors ($\overbar{X}=(\textnormal{84d,168d,}\dots\textnormal{,10920d)}$, IRS Rates ($\overbar{k_X}=(k_{\textnormal{84d}},\dots,k_{\textnormal{10920d}})$), and cnXCS Basis Spreads ($\overbar{B_X}=(B_{\textnormal{84d}},\dots,B_{\textnormal{10920d}})$)}
     \KwResult{$P_{\textnormal{MXN}}^{c_{\textnormal{(USD)}}}(t,x)$ for all $x\in [t,\textnormal{10920d}]$ and $\mathbb{E}_t^{x}(\textnormal{\textbf{TIIE28D}}(x,x+\textnormal{28d}))$ for all $x\in [t,\textnormal{10892d}]$}
     Define equations:\par $\Gamma_1:\overbar{\textnormal{\textbf{IRS}}}_{\textnormal{Payer}}^{\textnormal{TIIE28D}}(t,\overbar{X},P_{\textnormal{MXN}}^{c_{\textnormal{(USD)}}}(t,x),\mathbb{E}_t^{x}(\textnormal{\textbf{TIIE28D}}(x,x+\textnormal{28d})))\dots $ \eqref{payerIRSTIIE28D}\par
$\Gamma_2:\overbar{\textnormal{\textbf{cnXCS}}}_{\textnormal{Payer}}^{\textnormal{USDMXN}}(t,\overbar{X},P_{\textnormal{MXN}}^{c_{\textnormal{(USD)}}}(t,x),\mathbb{E}_t^{x}(\textnormal{\textbf{TIIE28D}}(x,x+\textnormal{28d})))\dots$ \eqref{payerXCSUSDMXN}\par
$\Gamma_2:\overbar{\textnormal{\textbf{cnXCS}}}_{\textnormal{Payer,MXN-FixedLeg}}^{\textnormal{USDMXN}}(t,\overbar{X},P_{\textnormal{MXN}}^{c_{\textnormal{(USD)}}}(t,x))\dots$ \eqref{bootstrapping.XCSIRS.84d}\par\medskip
$m=0$\;
1) Calculate $\{P_{\textnormal{MXN}}^{c_{\textnormal{(USD)}}}(t,x)
\}_m$ from $\Gamma_3$\;
2) Substitute $\{P_{\textnormal{MXN}}^{c_{\textnormal{(USD)}}}(t,x)
\}_m$ in $\Gamma_1$ and calculate $\{\mathbb{E}_t^{x}(\textnormal{\textbf{TIIE28D}}(x,x+\textnormal{28d})))\}_m$\;
3) Substitute $\{\mathbb{E}_t^{x}(\textnormal{\textbf{TIIE28D}}(x,x+\textnormal{28d})))\}_m$ in $\Gamma_2$ and calculate $\{P_{\textnormal{MXN}}^{c_{\textnormal{(USD)}}}(t,x)
\}_{m+1}$\;
4) Define $m:=m+1$ and repeat step 2 until convergence is met.
     \caption{Steps for the calibration of the MXN discount curve collateralized in USD and the index TIIE 28d forward rates.}
    \end{algorithm}
    
    	 \begin{figure}[H]
        \centering
        \vspace{-2mm}
        \includegraphics[scale=0.58]{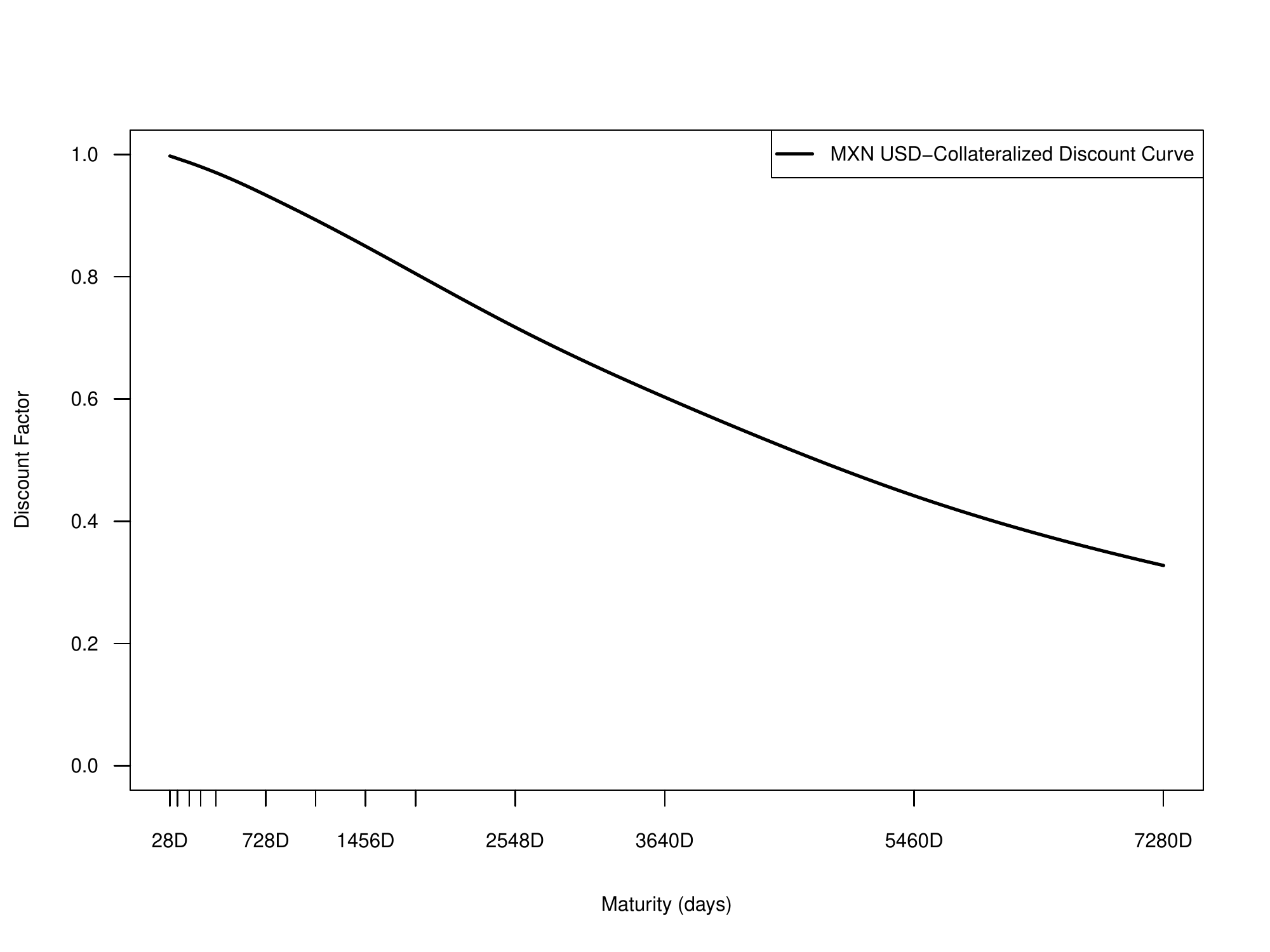}
        \caption[MXN discount curve collateralized in USD (\emph{multi-curve} framework)]{MXN Discount Curve Collateralized in USD: $P(T)=P^{c_{\textnormal{(USD)}}}_{\textnormal{MXN}}(t,x)$. The importance of this curve lies in that every MXN dollar cash flow, inside a contract with CSA in USD, is discounted with it.}
        \label{TIIE28D_YieldCurveSC}
        \vspace{-2mm}
     \end{figure}    

	\begin{figure}[H]
        \centering
        \vspace{-2mm}
        \includegraphics[scale=0.58]{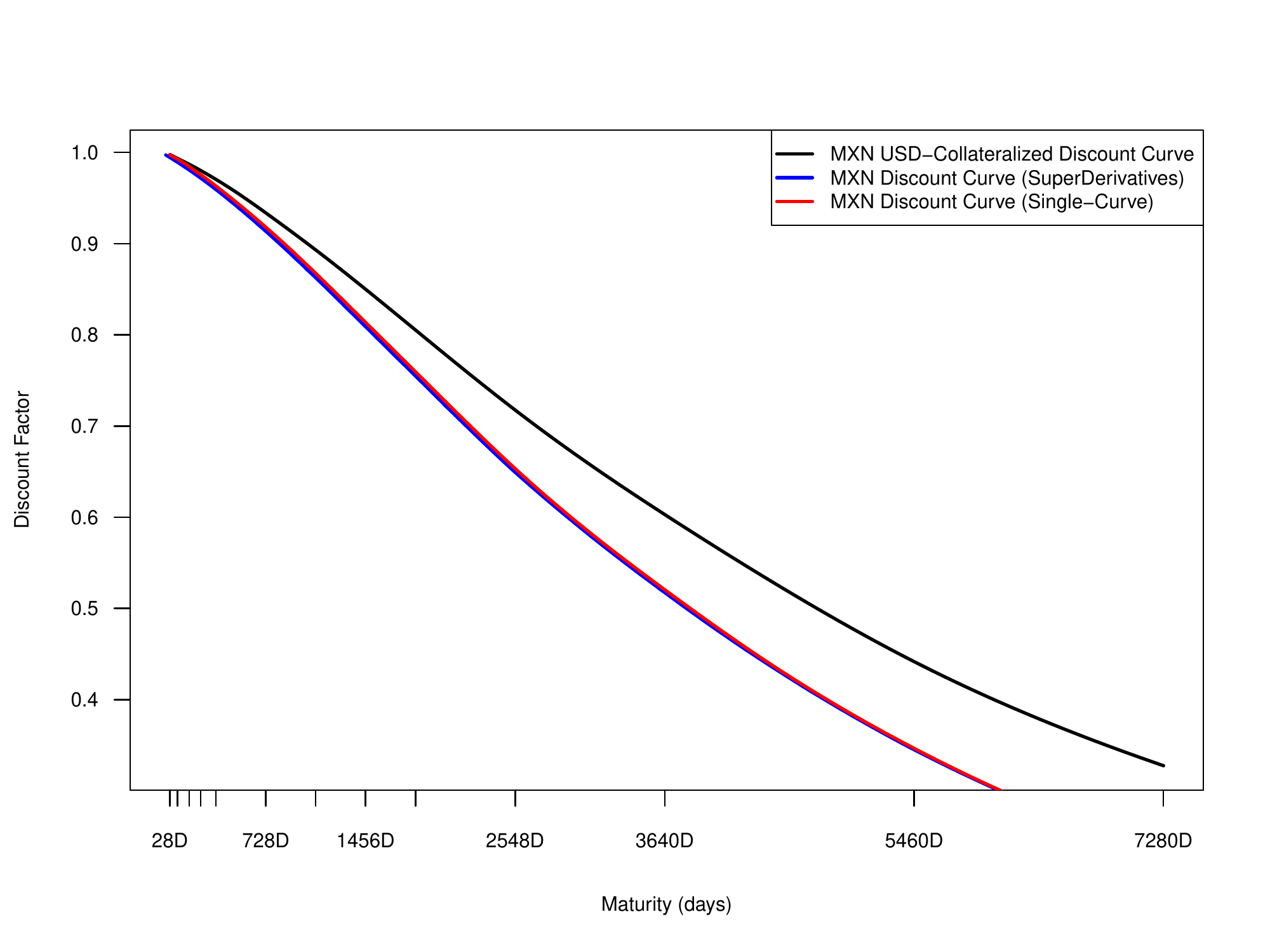}
        \caption[Comparison of MXN discount curves in a \emph{multi-curve} framework]{In this figure we present the MXN discount curves in a \emph{single-curve} framework, \emph{multi-curve} framework and the discount curve used in SuperDerivatives for discounting MXN cash flows.}
        \label{TIIE28D_DiscountCurveSC}
        \vspace{-2mm}
     \end{figure}
    
      \begin{figure}[H]
        \centering
        \vspace{-2mm}
        \includegraphics[scale=0.58]{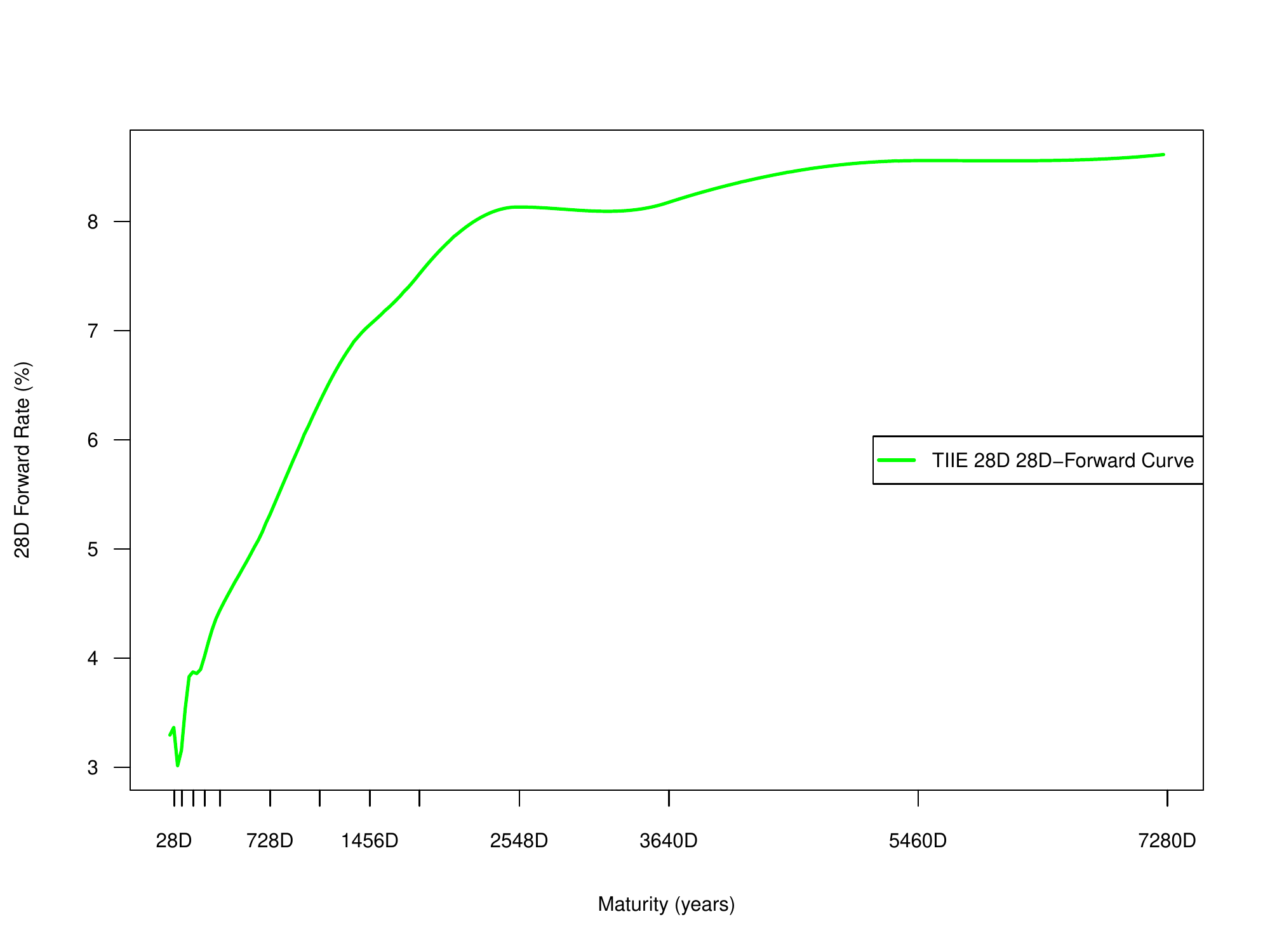}
        \caption[TIIE 28d forward curve in a \emph{multi-curve} framework]{TIIE 28d forward curve $\mathbb{E}_t^{x}(\textnormal{\textbf{TIIE28D}}(x,x+\textnormal{28d})))$ in a \emph{multi-curve} framework using natural cubic splines interpolation in the yield rates.}
        \label{TIIE28D_ForwardCurveSC}
        \vspace{-2mm}
     \end{figure}
     
     \begin{figure}[H]
        \centering
        \vspace{-2mm}
        \includegraphics[scale=0.58]{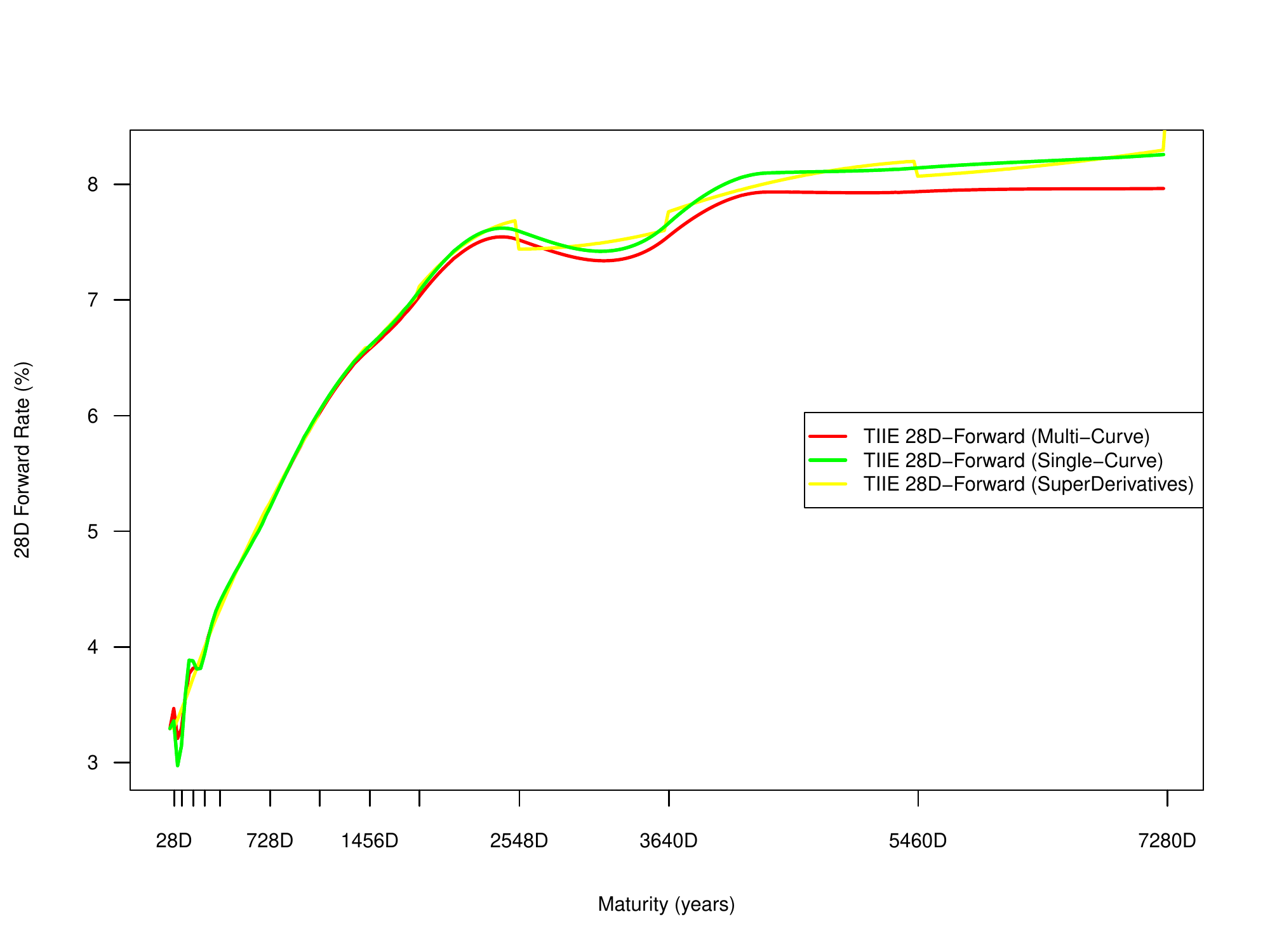}
        \caption[Comparison of TIIE 28d forward curve in a \emph{multi-curve} framework]{In this figure we present the TIIE 28d 28d-forward curves in a \emph{single-curve} framework, \emph{multi-curve} framework and the forward curve defined by SuperDerivatives for the same date (May 29, 2015).}
        \label{TIIE28D_ForwardCurveSC}
        \vspace{-2mm}
     \end{figure}
    
    \subsection{Calibration of the MXN Discount Curve Without Collateral}
    Uncollateralized or non collateral interest rates derivatives are also known as unsecured trades due to the absence of a CSA agreement or a clearing central  counterparty. The choice of which discount curve should be used for uncollateralized deals is a matter of debate among all the market participants. Indeed, since the crisis many derivatives dealers have made valuation adjustments (in particular FVAs) for uncollateralized transactions. This has an effect of increasing the discount rate to their average funding cost \cite{hull2014valuing}.\medskip
    
    In this work we present two alternatives for discounting flows in a uncollateralized world:
    
    \begin{enumerate}
    \item Use an Ibor-based discount curve. The idea is simple, use the same discount curve that was used in a pre-crisis world. Note that the usage of this curve is assuming that our funding rate is an Ibor rate (without any other cost).
    \item Use an internal discount rate. This curve is constructed internally and considers the funding costs, i.e. at what rates levels does the issuer of any derivative funds itself (lend and borrow money). This internal discount rate is the resulting discount curve of applying XVAs costs into the Ibor-based discounting curve.
    \end{enumerate}
    Therefore, in this work we will used the implied discount curve that is defined by the forward TIIE 28d curve.
    
    \subsection{Calibration of the MXN Discount Curve Collateralized in MXN}
    As we saw in the previous sections, the absence of an OIS MXN market limits us from building a MXN collateralized discount curve. Indeed, in the MXN overnight rates market the only available product is the overnight money market (borrow or lend at the overnight rate). However, if we are trying to build a MXN-collateralized discount curve then we may estimate the behavior of the daily overnight rate using the reference rate TIIE 28d. In figures \ref{TIIE-FONDEO:fig} and \ref{SPREAD-TIIE-FONDEO:fig} we could see the rate levels of both rates: TIIE 28d and Fondeo Bancario, during the period 2008-2015.
     \begin{figure}[H]
          \centering
          \vspace{-2mm}
          \includegraphics[scale=0.58]{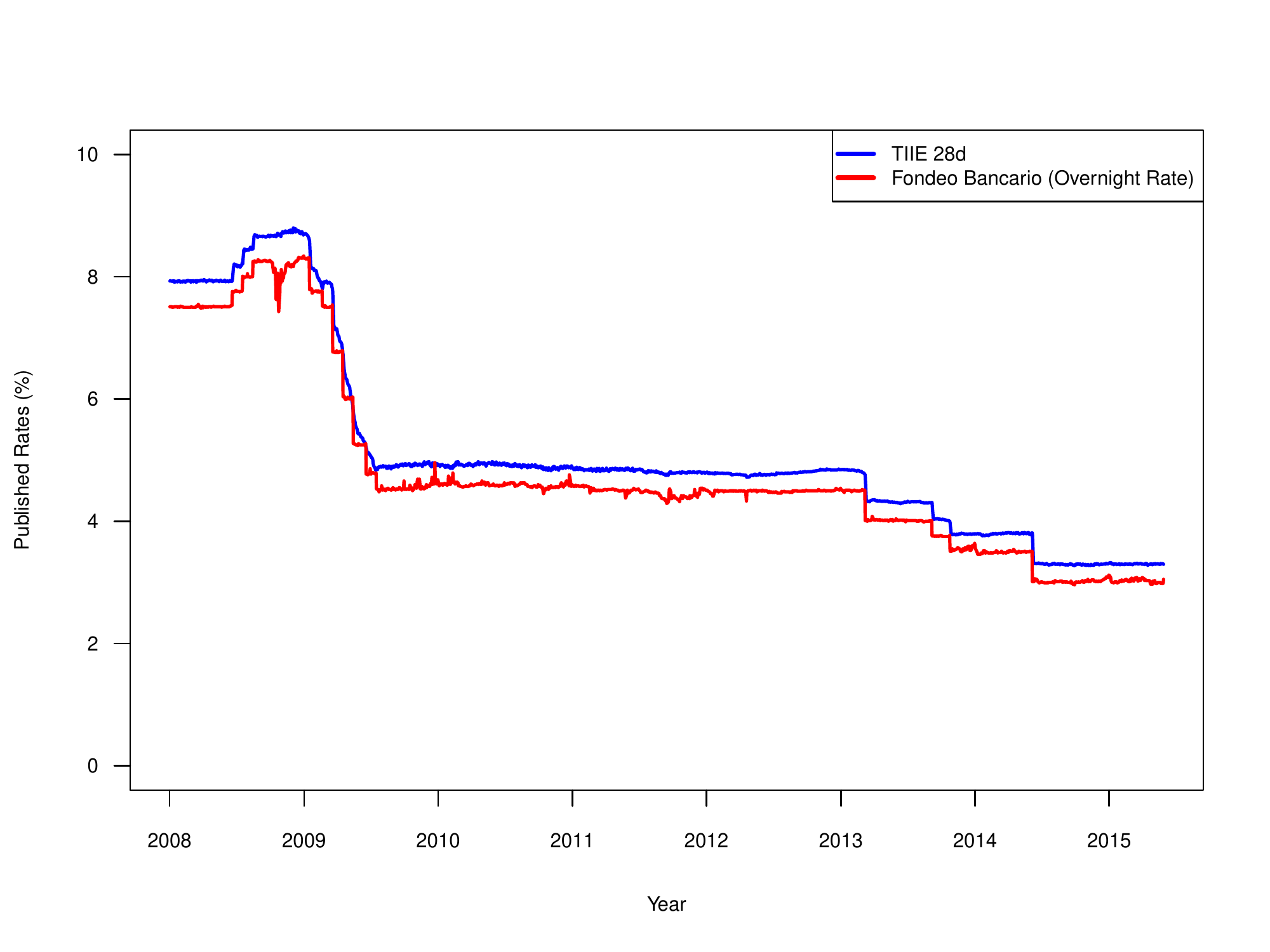}
          \caption[Overnight Rate (Fondeo Bancario) and Interbank Offered Rate (TIIE 28d) during the period 2008-2015.]{The overnight and the interbank reference rates behave similarly during the period of time 2008-2015. We could see that typically the overnight rate (Fondeo Bancario) is below the interbank offered rate (TIIE 28d), due to the lending period and the counterparty risk.}
          \label{TIIE-FONDEO:fig}
          \vspace{-2mm}
     \end{figure}
        
     \begin{figure}[H]
          \centering
          \vspace{-2mm}
          \includegraphics[scale=0.58]{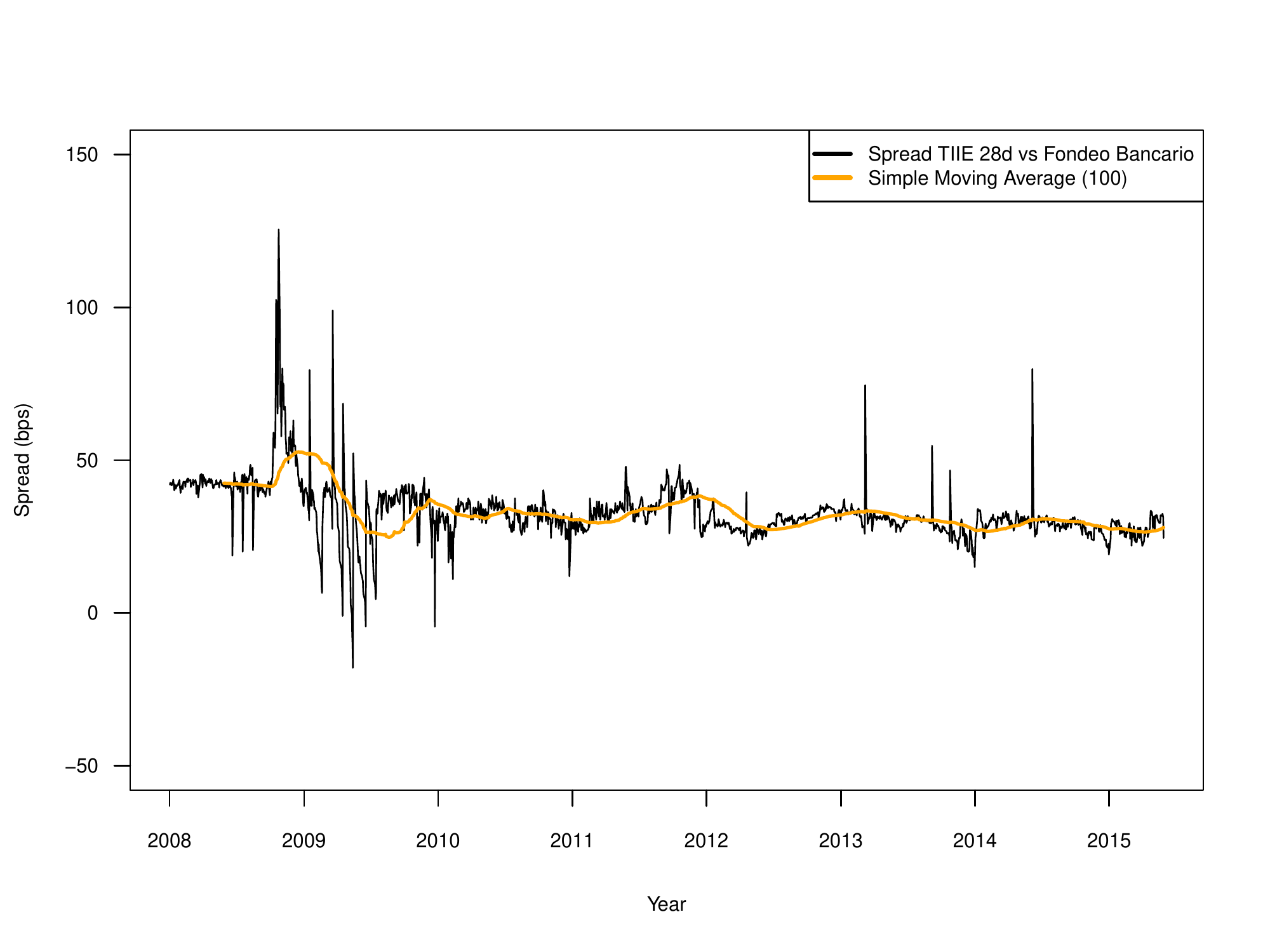}
          \caption[Spread between the Overnight Rate (Fondeo Bancario) and the Interbank Offered Rate (TIIE 28d) during the period 2008-2015.]{In this plot we could see the spread (bps) between the MXN Overnight Rate (Fondeo Bancario) and the MXN Interbank Offered Rate (TIIE 28d). The minimum value of the spread is -18bps, the maximum value is 125bps and the mean value is 32.92.}
          \label{SPREAD-TIIE-FONDEO:fig}
          \vspace{-2mm}
    \end{figure}
        
    \begin{figure}[H]
          \centering
          \vspace{-2mm}
          \includegraphics[scale=0.58]{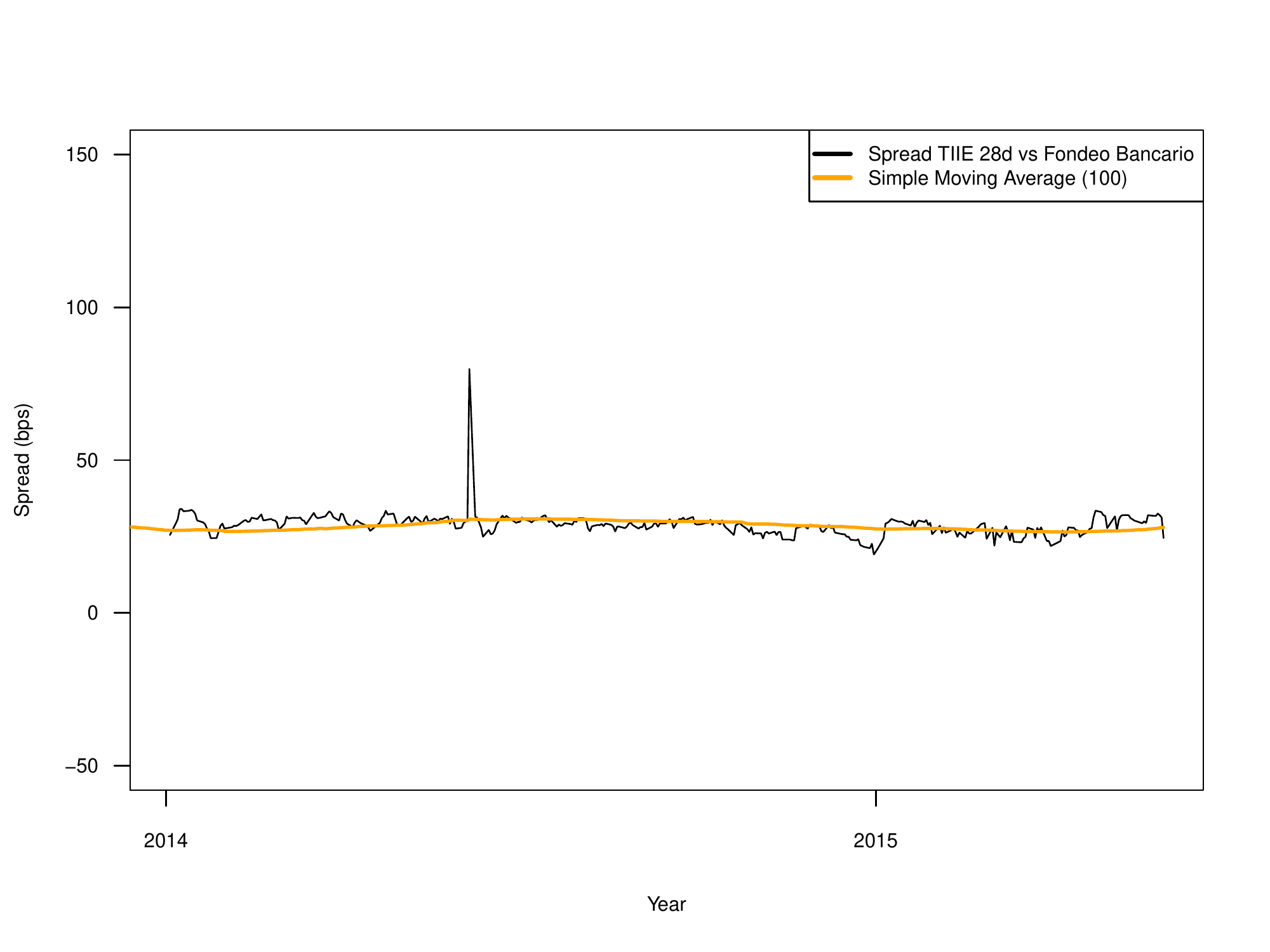}
          \caption[Spread between the Overnight Rate (Fondeo Bancario) and the Interbank Offered Rate (TIIE 28d) during the period 2014-2015.]{In this plot we present a zoom (2014-2015) of figure \ref{SPREAD-TIIE-FONDEO:fig}. The minimum value of the spread is 19bps, the maximum value is 34.05bps (excluding the value of 79.85bps) and the mean value is 28.70 (again excluding the value of 79.85bps).}
          \vspace{-2mm}
    \end{figure}
    Hence it is valid to assume that:
    \begin{equation}\label{collMXNassumption}
    	\textnormal{Overnight Rate in MXN} = \textnormal{TIIE 28d} - 29\textnormal{bps}
    \end{equation}
    Using the forward curve of TIIE 28d calibrated in section \ref{sec:calibrationMXNCOLUSDandTIIE} and equation \eqref{collMXNassumption} we can calculate the daily overnight rates (forward rates). Then it is quite simple to calculate the implied yield curve and hence we obtain the MXN-collateralized discount curve.
    \begin{rem}
    It is important to point out that the assumption of equation \eqref{collMXNassumption} could be dangerous since the spread between TIIE 28d and Fondeo Bancario can widen or reduce suddenly. Furthermore, we are unable to hedge this spread with interest rate products, hence pricing these MXN-collateralized IRSs with the assumption of constant difference between TIIE 28d and Fondeo Bancario is risky. 
    \end{rem}    
    \subsection{Calibration of the MXN Discount Curve Collateralized in EUR}
    The arguments for the construction of the MXN discount curve collateralized in EUR can be used for almost every currency in the world. Indeed, if we are able to build the discount curve for USD cash flows collateralized in a currency ABC, then by no-arbitrage arguments we could build the MXN discount curve collateralized in ABC.\par \smallskip
    Let us explain the arguments for the construction of the MXN discount curve collateralized in EUR. According to the flowchart in figure \ref{fig:flowchart}, we know that using the EURUSD mtmXCSs we get ---directly with a simple bootstrapping--- the curve $P^{c_{(\textnormal{EUR})}}_{\textnormal{USD}}(t,x)$, since the market quotes are collateralized in EUR. In fact, in section \ref{caseOfEUR} we exhibit the steps to follow for the calibration of this curve.
    \begin{prop}\label{prop:EURcoll}
    The discount curve for \textnormal{MXN} cash flows collateralized in \textnormal{EUR} is given by the following formula:
    \begin{equation}
    	P^{c_\textnormal{(EUR)}}_\textnormal{MXN}(t,x)=\frac{P^{c_\textnormal{(EUR)}}_\textnormal{USD}(t,x) P^{c_\textnormal{(USD)}}_\textnormal{MXN}(t,x)}{P^{c_\textnormal{(USD)}}_\textnormal{USD}(t,x)}.
    \end{equation}
    \end{prop}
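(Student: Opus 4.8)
The plan is to obtain the formula by chaining the single–pair collateralized discount factors of the multi-currency framework, exploiting the additivity of the funding–collateral spreads across the currency triangle $\textnormal{MXN}\to\textnormal{USD}\to\textnormal{EUR}$. The starting point is the pricing formula \eqref{eq:PriceCollDer2Curr}: for a unit cash flow in currency $(i)$ collateralized in currency $(j)$, its value is
\begin{equation*}
P^{c_{(j)}}_{(i)}(t,x)=P^{c}_{(i)}(t,x)\,\mathbb{E}^{\mathbb{T}^c_{(i)}}_t\!\left[e^{-\int_t^x y_{(i,j)}(s)\,ds}\right],
\end{equation*}
where $P^{c}_{(i)}(t,x)$ is the currency-$(i)$ bond fully collateralized in its own currency and $y_{(i,j)}=y_{(i)}-y_{(j)}$. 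As in the derivation of \eqref{eq:bootstrappingcnXCS}, I would invoke the assumption that each spread $y_{(i,j)}$ is a deterministic function of time, so that the exponential factors out of the collateralized forward expectation and, writing $P^{y_{(i,j)}}(t,x):=e^{-\int_t^x y_{(i,j)}(s)\,ds}$, the identity collapses to the factorization $P^{c_{(j)}}_{(i)}(t,x)=P^{c}_{(i)}(t,x)\,P^{y_{(i,j)}}(t,x)$ already noted just after \eqref{eq:bootstrappingcnXCS}.

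First I would specialize this identity to the pairs appearing in the statement. For the self-collateralized USD bond, $y_{(\textnormal{USD},\textnormal{USD})}\equiv 0$ gives $P^{c_{(\textnormal{USD})}}_{\textnormal{USD}}(t,x)=P^{c}_{\textnormal{USD}}(t,x)$, while the remaining three pairs yield
\begin{align*}
P^{c_{(\textnormal{EUR})}}_{\textnormal{MXN}}(t,x)&=P^{c}_{\textnormal{MXN}}(t,x)\,P^{y_{(\textnormal{MXN},\textnormal{EUR})}}(t,x),\\
P^{c_{(\textnormal{USD})}}_{\textnormal{MXN}}(t,x)&=P^{c}_{\textnormal{MXN}}(t,x)\,P^{y_{(\textnormal{MXN},\textnormal{USD})}}(t,x),\\
P^{c_{(\textnormal{EUR})}}_{\textnormal{USD}}(t,x)&=P^{c}_{\textnormal{USD}}(t,x)\,P^{y_{(\textnormal{USD},\textnormal{EUR})}}(t,x).
\end{align*}

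The key algebraic step is the cocycle (telescoping) property of the spreads across the triangle: since $y_{(\textnormal{MXN},\textnormal{EUR})}=\big(y_{(\textnormal{MXN})}-y_{(\textnormal{USD})}\big)+\big(y_{(\textnormal{USD})}-y_{(\textnormal{EUR})}\big)$, integrating and exponentiating gives $P^{y_{(\textnormal{MXN},\textnormal{EUR})}}(t,x)=P^{y_{(\textnormal{MXN},\textnormal{USD})}}(t,x)\,P^{y_{(\textnormal{USD},\textnormal{EUR})}}(t,x)$. Substituting the three specialized identities and eliminating the unobservable factors $P^{c}_{\textnormal{MXN}}$, $P^{c}_{\textnormal{USD}}$ and the pure spread factors leaves
\begin{equation*}
P^{c_{(\textnormal{EUR})}}_{\textnormal{MXN}}(t,x)=P^{c_{(\textnormal{USD})}}_{\textnormal{MXN}}(t,x)\,\frac{P^{c_{(\textnormal{EUR})}}_{\textnormal{USD}}(t,x)}{P^{c_{(\textnormal{USD})}}_{\textnormal{USD}}(t,x)},
\end{equation*}
which is the claimed formula; here the last ratio equals precisely $P^{y_{(\textnormal{USD},\textnormal{EUR})}}(t,x)$, the USD-funded EUR-collateral factor that the EURUSD cross-currency market calibrates (cf. Section \ref{caseOfEUR}).

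I expect the substantive obstacle to be the justification of the factorization $P^{c_{(j)}}_{(i)}=P^{c}_{(i)}\,P^{y_{(i,j)}}$, i.e. pulling $e^{-\int y_{(i,j)}}$ out of the $\mathbb{T}^c_{(i)}$-expectation. This is legitimate only under the deterministic-spread assumption; if $y_{(i,j)}$ is stochastic and correlated with the collateral rates, the change of numéraire between $\mathbb{T}^c_{\textnormal{MXN}}$ and $\mathbb{T}^c_{\textnormal{USD}}$ produces a convexity/correlation correction that breaks the clean triangle product, and the formula would hold only approximately — exactly the refinement taken up in \cite{fujii2015choice}. Everything else is the deterministic telescoping of the spreads, which is routine.
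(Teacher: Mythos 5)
Your argument is correct, but it is not the route the paper takes. The paper proves Proposition \ref{prop:EURcoll} by an explicit replication: buy $X$ units of a $T$-maturity USD zero coupon bond collateralized in EUR, convert its USD cash flows into MXN with an FX swap whose forward rate is the USD-collateralized one, $S_T=S_t\,P^{c_\textnormal{(USD)}}_\textnormal{USD}(t,T)/P^{c_\textnormal{(USD)}}_\textnormal{MXN}(t,T)$, and choose $X$ so that the terminal MXN cash flow is $1$; matching the time-$t$ costs by no-arbitrage gives the formula. You instead work inside the Fujii--Shimada--Takahashi pricing formula \eqref{eq:PriceCollDer2Curr}, factor each cross-collateral bond as $P^{c_{(j)}}_{(i)}=P^{c}_{(i)}\,P^{y_{(i,j)}}$ under the deterministic-spread assumption, and close the currency triangle with the cocycle identity $y_{(\textnormal{MXN},\textnormal{EUR})}=y_{(\textnormal{MXN},\textnormal{USD})}+y_{(\textnormal{USD},\textnormal{EUR})}$; the algebra then checks out exactly. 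The trade-off is instructive: the paper's replication reads as model-free, but it silently combines instruments carrying different collateral currencies in one portfolio (the intermediate margin flows of the EUR-collateralized bond and the USD-collateralized FX forward do not cancel), which is legitimate precisely under the same deterministic-spread hypothesis you state explicitly. Your version therefore makes the hidden assumption visible and correctly flags where a convexity correction would enter if $y_{(i,j)}$ were stochastic and correlated with the collateral rates, at the cost of being tied to the specific valuation framework rather than to a trading strategy.
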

    \begin{proof}
    Let $P^{c_\textnormal{(EUR)}}_\textnormal{MXN}(t,T)$ be the $T$-maturity MXN zero coupon bond fully-collateralized in EUR. In other words, $P^{c_\textnormal{(EUR)}}_\textnormal{MXN}(t,T)$ is the present value of MXN\$1 at time $T$. Then, if we buy the zero coupon bond we have the following cash flows:
      \begin{table}[H]
        \footnotesize
            \begin{center}
                \begin{tabular}{|c|c|}
                \hline
                \textbf{Time} & \textbf{Cash Flow (MXN)}\\
                \hline
				$t$ & $-P^{c_\textnormal{(EUR)}}_\textnormal{MXN}(t,T)$\\
                $T$ & $+1$\\
                \hline
                \end{tabular}
            \end{center}
        \end{table}
    Now we have to build a trading strategy that replicates the previous cash flows. Suppose that we buy USD\$$X$ units of a $T$-maturity USD zero coupon bond fully-collateralized in EUR. Hence we have the following cash flows:
     \begin{table}[H]
        \footnotesize
            \begin{center}
                \begin{tabular}{|c|c|}
                \hline
                \textbf{Time} & \textbf{Cash Flow (USD)}\\
                \hline
				$t$ & $-XP^{c_\textnormal{(EUR)}}_\textnormal{USD}(t,T)$\\
                $T$ & $+X$\\
                \hline
                \end{tabular}
            \end{center}
        \end{table}
    Note that these last cash flows are denominated in USD currency. Then we could enter into a FX Swap (mid market quote) that allow us to exchange the USD cash flow at time $T$ for cahs flows in MXN currency. Therefore, using the spot and forward rates we have the following MXN cash flows:
    \begin{table}[H]
        \footnotesize
            \begin{center}
                \begin{tabular}{|c|c|}
                \hline
                \textbf{Time} & \textbf{Cash Flow (MXN)}\\
                \hline
				$t$ & $-S_t X P^{c_\textnormal{(EUR)}}_\textnormal{USD}(t,T)$\\
                $T$ & $+S_T X$\\
                \hline
                \end{tabular}
            \end{center}
        \end{table}
    \noindent where $S_t$ is the spot FX rate and $S_T$ is the forward FX rate. For USD-collateralized FX forwards we know that,
    \begin{equation}
    	S_T=S_t\frac{P^{c_\textnormal{(USD)}}_\textnormal{USD}(t,T)}{P^{c_\textnormal{(USD)}}_\textnormal{MXN}(t,T)}.
    \end{equation}
    Now, let us define $X$ as
    $$X:=\frac{P^{c_\textnormal{(USD)}}_\textnormal{MXN}(t,T)}{S_t P^{c_\textnormal{(USD)}}_\textnormal{USD}(t,T)}.$$
    If we substitute $X$ into the MXN cash flows of the replication trading strategy we have that:
      \begin{table}[H]
        \footnotesize
            \begin{center}
                \begin{tabular}{|c|c|}
                \hline
                \textbf{Time} & \textbf{Cash Flow (MXN)}\\
                \hline
				$t$ & $-\frac{P^{c_\textnormal{(USD)}}_\textnormal{MXN}(t,T) P^{c_\textnormal{(EUR)}}_\textnormal{USD}(t,T)}{P^{c_\textnormal{(USD)}}_\textnormal{USD}(t,T)}$\\
                $T$ & $+1$\\
                \hline
                \end{tabular}
            \end{center}
        \end{table}
     This bring us a trading strategy that give us the same cash flow of MXN\$1 at maturity $T$ when the collateral currency is EUR. Hence, by no-arbitrage arguments,
    \begin{equation*}
    	P^{c_\textnormal{(EUR)}}_\textnormal{MXN}(t,T)=\frac{P^{c_\textnormal{(EUR)}}_\textnormal{USD}(t,T) P^{c_\textnormal{(USD)}}_\textnormal{MXN}(t,x)}{P^{c_\textnormal{(USD)}}_\textnormal{USD}(t,T)}.
    \end{equation*}
     \end{proof}
	As we mention earlier, this proposition can be reproduced for other currencies such as: CAD, JPY, GBP, CHF, BRL, etc. For the construction of the curve, We just need the discounting curve of USD cash flows collateralized in the other currency. The next proposition generalizes the previous case of EUR currency as collateral.
   \begin{prop}
    The discount curve for \textnormal{MXN} cash flows collateralized in \textnormal{ABC} is given by the following formula:
    \begin{equation}
    	P^{c_\textnormal{(ABC)}}_\textnormal{MXN}(t,x)=\frac{P^{c_\textnormal{(ABC)}}_\textnormal{USD}(t,x) P^{c_\textnormal{(USD)}}_\textnormal{MXN}(t,x)}{P^{c_\textnormal{(USD)}}_\textnormal{USD}(t,x)},
    \end{equation}
    where \textnormal{ABC} denotes any currency code \textnormal{ISO 4217}.
    \end{prop}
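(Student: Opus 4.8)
The plan is to repeat the replication argument of Proposition \ref{prop:EURcoll} essentially verbatim, observing that nothing in that proof used any property specific to EUR beyond the existence of a USD zero coupon bond collateralized in the relevant currency and the USD-collateralized USDMXN forward relationship. Fixing a maturity $T$ (the generic $x$ in the statement), I would first record the target cash flows: buying one $T$-maturity MXN zero coupon bond fully collateralized in ABC produces $-P^{c_\textnormal{(ABC)}}_\textnormal{MXN}(t,T)$ at time $t$ and $+1$ unit of MXN at time $T$. The goal is to reproduce these two cash flows using only instruments whose prices are already known, namely $P^{c_\textnormal{(ABC)}}_\textnormal{USD}(t,x)$ (bootstrapped from the USDABC cross-currency market exactly as in Section \ref{caseOfEUR}) together with the USD-collateralized quantities $P^{c_\textnormal{(USD)}}_\textnormal{USD}$ and $P^{c_\textnormal{(USD)}}_\textnormal{MXN}$.

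Next I would buy $X$ units of a $T$-maturity USD zero coupon bond fully collateralized in ABC, producing USD cash flows $-XP^{c_\textnormal{(ABC)}}_\textnormal{USD}(t,T)$ at $t$ and $+X$ at $T$, and then enter a mid-market FX swap converting the USD flows into MXN. With spot rate $S_t$ and forward rate $S_T$ this yields MXN cash flows $-S_t X P^{c_\textnormal{(ABC)}}_\textnormal{USD}(t,T)$ at $t$ and $+S_T X$ at $T$. The crucial input is the USD-collateralized FX forward relationship
\begin{equation}
    S_T=S_t\frac{P^{c_\textnormal{(USD)}}_\textnormal{USD}(t,T)}{P^{c_\textnormal{(USD)}}_\textnormal{MXN}(t,T)},
\end{equation}
which is simply the no-arbitrage forward for the USDMXN pair and therefore does not depend on the collateral currency ABC of the bonds being held.

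I would then choose
\begin{equation}
    X:=\frac{P^{c_\textnormal{(USD)}}_\textnormal{MXN}(t,T)}{S_t P^{c_\textnormal{(USD)}}_\textnormal{USD}(t,T)},
\end{equation}
so that the time-$T$ inflow $S_T X$ collapses to exactly $+1$, matching the target bond. Substituting this $X$ into the time-$t$ MXN outflow gives $-P^{c_\textnormal{(USD)}}_\textnormal{MXN}(t,T)\,P^{c_\textnormal{(ABC)}}_\textnormal{USD}(t,T)/P^{c_\textnormal{(USD)}}_\textnormal{USD}(t,T)$, so the replicating portfolio reproduces both cash flows of the MXN-in-ABC bond. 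Since the two strategies deliver identical cash flows, no-arbitrage forces their time-$t$ prices to agree, which is precisely the claimed identity with $x=T$.

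The argument is essentially obstacle-free, being a transcription of the EUR case with the code ABC in place of EUR. The only point worth verifying carefully is that the forward formula invoked in the middle step is genuinely the USD-collateralized one and is itself independent of ABC; this is what allows the EUR proof to transfer unchanged. In particular, no hypothesis is placed on ABC beyond the existence of a liquid USDABC cross-currency market, so that $P^{c_\textnormal{(ABC)}}_\textnormal{USD}(t,x)$ can be obtained by the bootstrapping of Section \ref{caseOfEUR}.
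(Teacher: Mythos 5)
Your proposal is correct and follows exactly the route the paper takes: the paper's own proof of this proposition simply refers back to the replication argument of Proposition \ref{prop:EURcoll}, which you have transcribed faithfully with ABC in place of EUR, including the key observation that the USD-collateralized forward relation $S_T=S_t P^{c_\textnormal{(USD)}}_\textnormal{USD}(t,T)/P^{c_\textnormal{(USD)}}_\textnormal{MXN}(t,T)$ is independent of the collateral currency of the bonds held. No gaps.
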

    \begin{proof}
    See the proof of proposition \ref{prop:EURcoll}.
    \end{proof}
    \newpage
    \section{Results}\label{sec:results}
    In this section we present the results of the curve calibration and an analysis of the factors that affect directly the curve calibration in any collateral currency. In the first subsection we present the par swap rates for plain vanilla IRSs through different collateral currencies: USD, MXN, EUR and without collateral. Then we will show how the level quotes of the cnXCSs impact on the discount curves construction and hence on the par swap rates.
    
    \subsection{Swap Rates through Different Collateral Currencies}
    In this subsection we show the differences (in basis points) that we obtained when changing the collateral currency of the IRSs based on TIIE 28d. In the table \ref{SwapsCollateral}, we compute the par swap rates (the rate that makes the mark-to-market equal to zero) for every tenor considering the following collateral currencies: USD, MXN and EUR. Also we include the par swap rates when the IRSs are uncollateralized. The inputs for the rate curves calibration were taken from tables \ref{OIS:Bloomberg}, \ref{LIBOR3M:Bloomberg}, \ref{LIBOR1M:Bloomberg}, \ref{TIIE28D:Bloomberg} and \ref{USDMXNXCS:Bloomberg}.

        \begin{table}[H]
        \footnotesize
            \begin{center}
                \begin{tabular}{|r|c|c|c|c|}
                \hline
                \textbf{Tenor} & \textbf{USD} & \textbf{No Coll} & \textbf{MXN} & \textbf{EUR}\\
                \hline
84d & 3.3200 & 3.3200 & 3.3200 & 3.3200\\
168d & 3.4300 & 3.4300 & 3.4300 & 3.4300\\
252d & 3.5620 & 3.5620 & 3.5620 & 3.5620\\
364d & 3.7350 & 3.7350 & 3.7350 & 3.7350\\
728d & 4.2360 & 4.2325 & 4.2350 & 4.2350\\
1092d & 4.6710 & 4.6650 & 4.6650 & 4.6675\\
1456d & 5.0510 & 5.0400 & 5.0425 & 5.0475\\
1820d & 5.3610 & 5.3425 & 5.3475 & 5.3550\\
2548d & 5.8630 & 5.8300 & 5.8375 & 5.8525\\
3640d & 6.2380 & 6.1900 & 6.2000 & 6.2250\\
4368d & 6.4280 & 6.3700 & 6.3800 & 6.4125\\
5460d & 6.6320 & 6.5575 & 6.5725 & 6.6150\\
7280d & 6.8310 & 6.7350 & 6.7550 & 6.8100\\
10920d & 7.0210 & 6.8900 & 6.9150 & 6.9975\\
                \hline
                \end{tabular}
            \end{center}
          \caption{Par Swap Rates of IRSs based on TIIE 28d with different collateral currencies.}
          \label{SwapsCollateral}
        \end{table}

         \begin{figure}[H]
          \centering
          \vspace{-2mm}
          \includegraphics[scale=0.55]{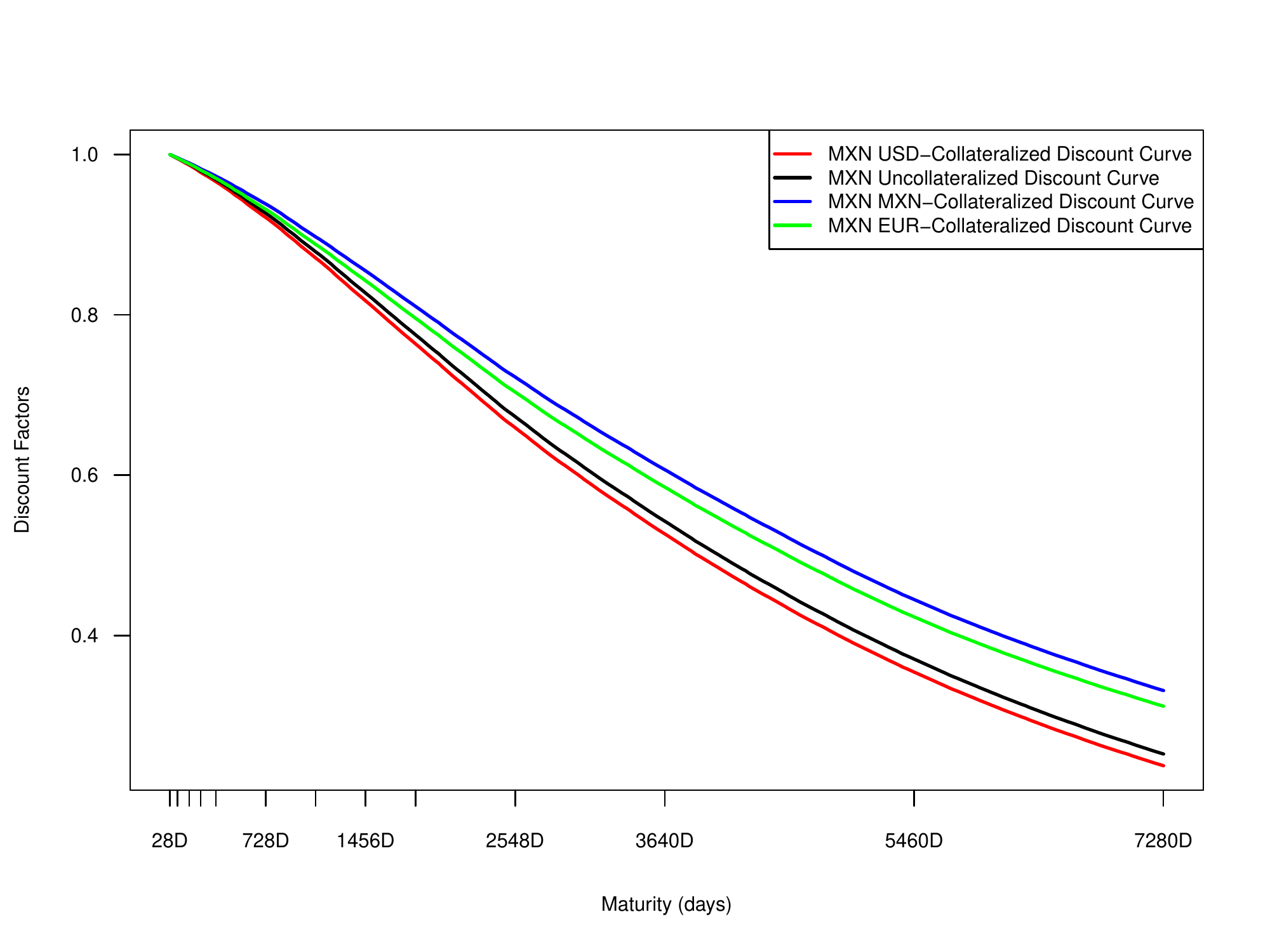}
          \caption[MXN Discount Curves in different collateral currencies]{MXN Discount Curves in different collateral currencies.}
          \vspace{-2mm}
        \end{figure}

        \begin{table}[H]
        \footnotesize
            \begin{center}
                \begin{tabular}{|r|c|c|c|}
                \hline
                \textbf{Tenor} & \textbf{No Coll} & \textbf{MXN} & \textbf{EUR}\\
                \hline
84d & 0.00 & 0.00 & 0.00\\
168d & 0.00 & 0.00 & 0.00\\
252d & 0.00 & 0.00 & 0.00\\
364d & 0.00 & 0.00 & 0.00\\
728d & -0.35 & -0.10 & -0.10\\
1092d & -0.60 & -0.60 & -0.35\\
1456d & -1.10 & -0.85 & -0.35\\
1820d & -1.85 & -1.35 & -0.60\\
2548d & -3.30 & -2.55 & -1.05\\
3640d & -4.80 & -3.80 & -1.30\\
4368d & -5.80 & -4.80 & -1.55\\
5460d & -7.45 & -5.95 & -1.70\\
7280d & -9.60 & -7.60 & -2.10\\
10920d & -13.10 & -10.60 & -2.35\\
                \hline
                \end{tabular}
            \end{center}
          \caption{Differences in basis points based on USD-collateralized par swap rates.}
          \label{SpreadsSwapsCollateral}
        \end{table}

    We can highlight that the greater the tenor is, the greater the difference in basis points is. It is important to point out that the only factor that affects the differences between the par swap rates across the different collateral currencies are the discount factors. However this discount factors are totally dependent of the cnXCSs and the USD swap market (OISs, IRSs based in LIBOR 3m and TS LIBOR 1m), recall that every swap, no matter which collateral currency we used, utilize the same TIIE 28d forward curve. In the case of EUR collateral currency the discount factors also are affected by the EUR swap market (OISs based on EONIA, mtmXCSs EURIBOR 3m vs LIBOR 3m, IRSs based on EURIBOR 6m and TSs EURIBOR 6m vs EURIBOR 3m). An analysis of the dependence of each curve can be studied through the sensitivities or deltas of each IRS. This work does not cover the calculation of sensitivities, however in the next subsection we will present an analysis of the effect of cnXCS on the curve construction is presented.

    
    \subsection{The Effect of Cross-Currency Swaps on Curve Construction}
    As we saw in the last subsection, the levels of the cnXCSs affect directly the differences between the swap rates through the different collateral currencies. So let us present an example of the effect of cnXCS on the curve construction. Using the end of day prices of IRSs based on TIIE 28d and USDMXN cnXCSs we will construct the par swap rates of the USD.collateralized swaps and the uncollateralized swaps. Additionally we will use a bunch of variations of the cnXCSs levels applying a factor $r\in (0,2]$, i.e. we will multiply the cnXCSs basis spreads by a factor $r$. In table we present the values of the cnXCS multiplied by the factor $r$.
       \FloatBarrier
        \begin{table}[H]
        	\footnotesize
            \begin{center}
                \begin{tabular}{|r|c|c|c|c|c|c|c|}
                \hline
                \textbf{Tenor} & $r=1.0$ & $r=0.1$ & $r=0.5$ & $r=0.7$ & $r=1.5$ & $r=2.0$\\
                \hline
84d & 0.5500 & 0.0550 & 0.2750 & 0.3850 & 0.8250 & 1.1000\\
168d & 0.6000 & 0.0600 & 0.3000 & 0.4200 & 0.9000 & 1.2000\\
252d & 0.6500 & 0.0650 & 0.3250 & 0.4550 & 0.9750 & 1.3000\\
364d & 0.6900 & 0.0690 & 0.3450 & 0.4830 & 1.0350 & 1.3800\\
728d & 0.8150 & 0.0815 & 0.4075 & 0.5705 & 1.2225 & 1.6300\\
1092d & 0.8900 & 0.0890 & 0.4450 & 0.6230 & 1.3350 & 1.7800\\
1456d & 0.9800 & 0.0980 & 0.4900 & 0.6860 & 1.4700 & 1.9600\\
1820d & 1.0200 & 0.1020 & 0.5100 & 0.7140 & 1.5300 & 2.0400\\
2548d & 1.0800 & 0.1080 & 0.5400 & 0.7560 & 1.6200 & 2.1600\\
3640d & 1.1000 & 0.1100 & 0.5500 & 0.7700 & 1.6500 & 2.2000\\
4368d & 1.1000 & 0.1100 & 0.5500 & 0.7700 & 1.6500 & 2.2000\\
5460d & 1.0800 & 0.1080 & 0.5400 & 0.7560 & 1.6200 & 2.1600\\
7280d & 1.0850 & 0.1085 & 0.5425 & 0.7595 & 1.6275 & 2.1700\\
10920d & 1.0850 & 0.1085 & 0.5425 & 0.7595 & 1.6275 & 2.1700\\
                \hline
                \end{tabular}
            \end{center}
          \caption{In this table we present the values of the cnXCS basis spreads that we use for the analysis if the effect .}
          \label{cnXCSbyr}
        \end{table}
    \FloatBarrier
         \begin{figure}[H]
          \centering
          \vspace{-2mm}
          \includegraphics[scale=0.55]{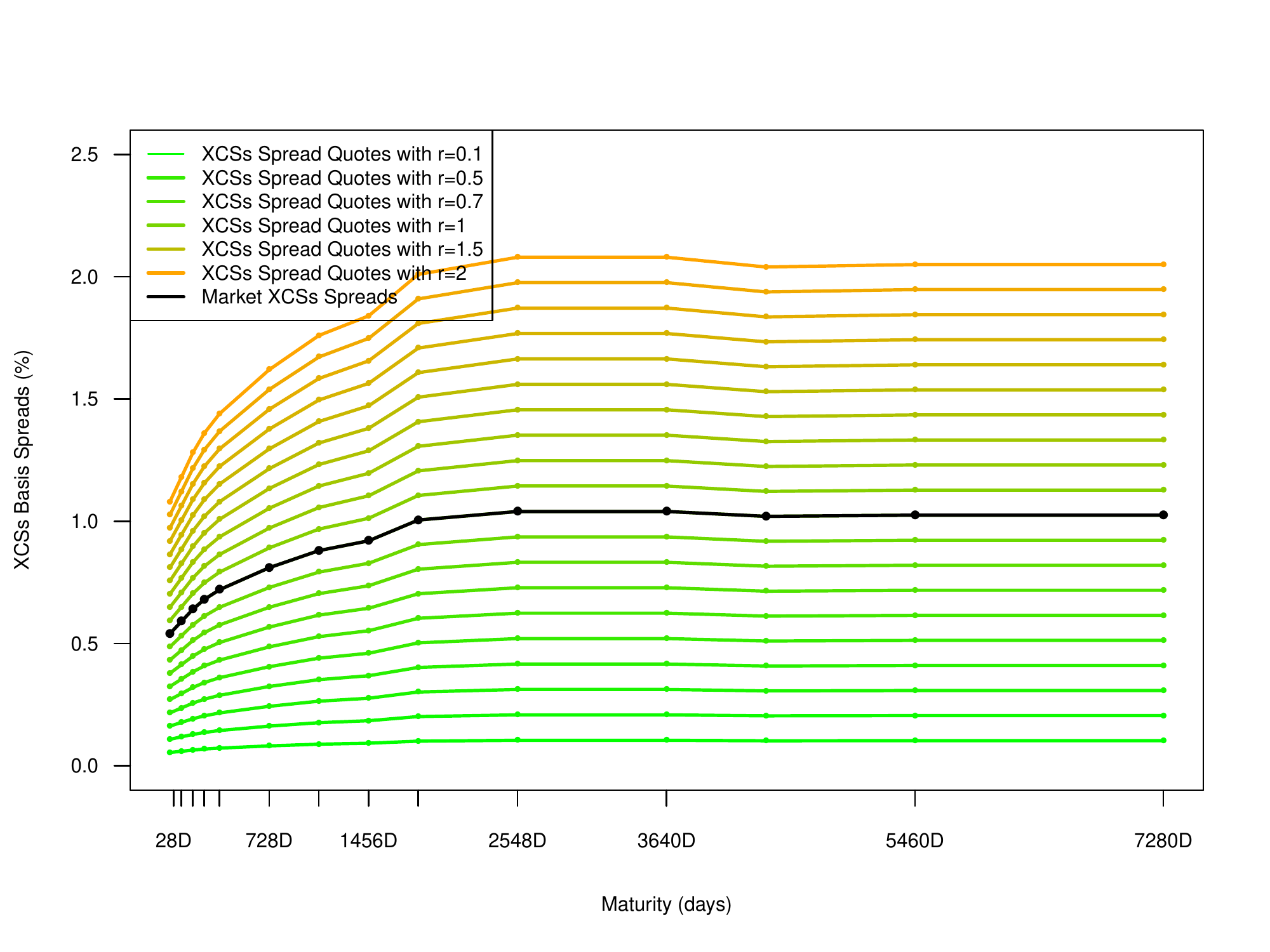}
          \caption[The Effect of XCSs in the Mexican Swap Market]{In this plot we present the levels used for the analysis of the effect of cnXCS. }
          \vspace{-2mm}
        \end{figure}
    
        \begin{figure}[H]
          \centering
          \vspace{-2mm}
          \includegraphics[scale=0.55]{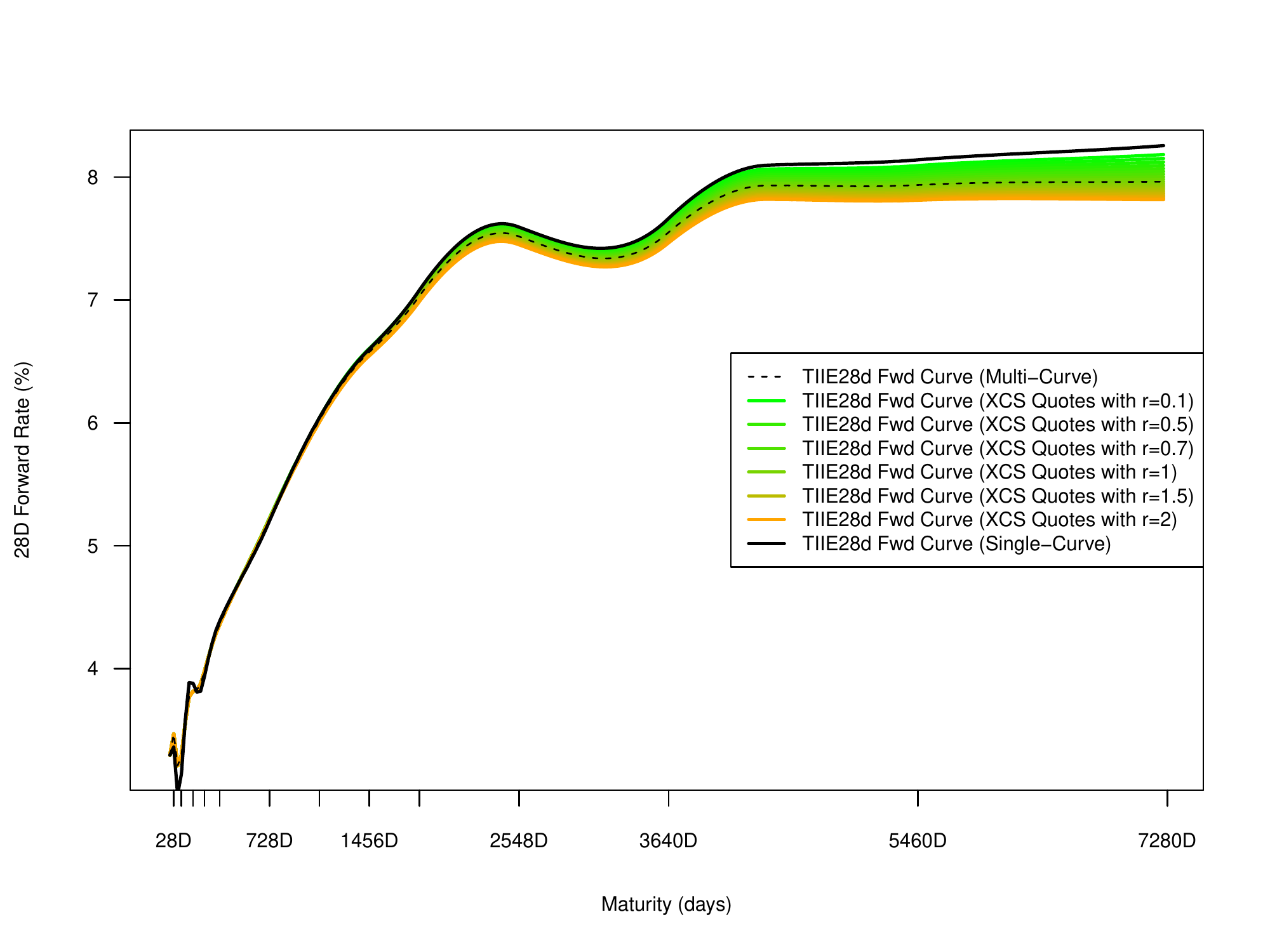}
          \caption[The Effect of XCSs in the TIIE 28d Forward Curve]{In this plot we could see the effect of the XCSs quotes level in the construction of the TIIE 28d Forward curve. It is easy to see that when the spread of the XCS is minimum the TIIE 28d forward curve obtained from the \emph{multi-curve framework} converges to the TIIE 28d forward curve obtained from the \emph{single-curve framework}. Additionally, as spreads of XCSs became bigger in scale, the TIIE 28d forward curve decreases, particularly in the long part of the curve.}
          \vspace{-2mm}
        \end{figure}
        
        \begin{figure}[H]
          \centering
          \vspace{-2mm}
          \includegraphics[scale=0.55]{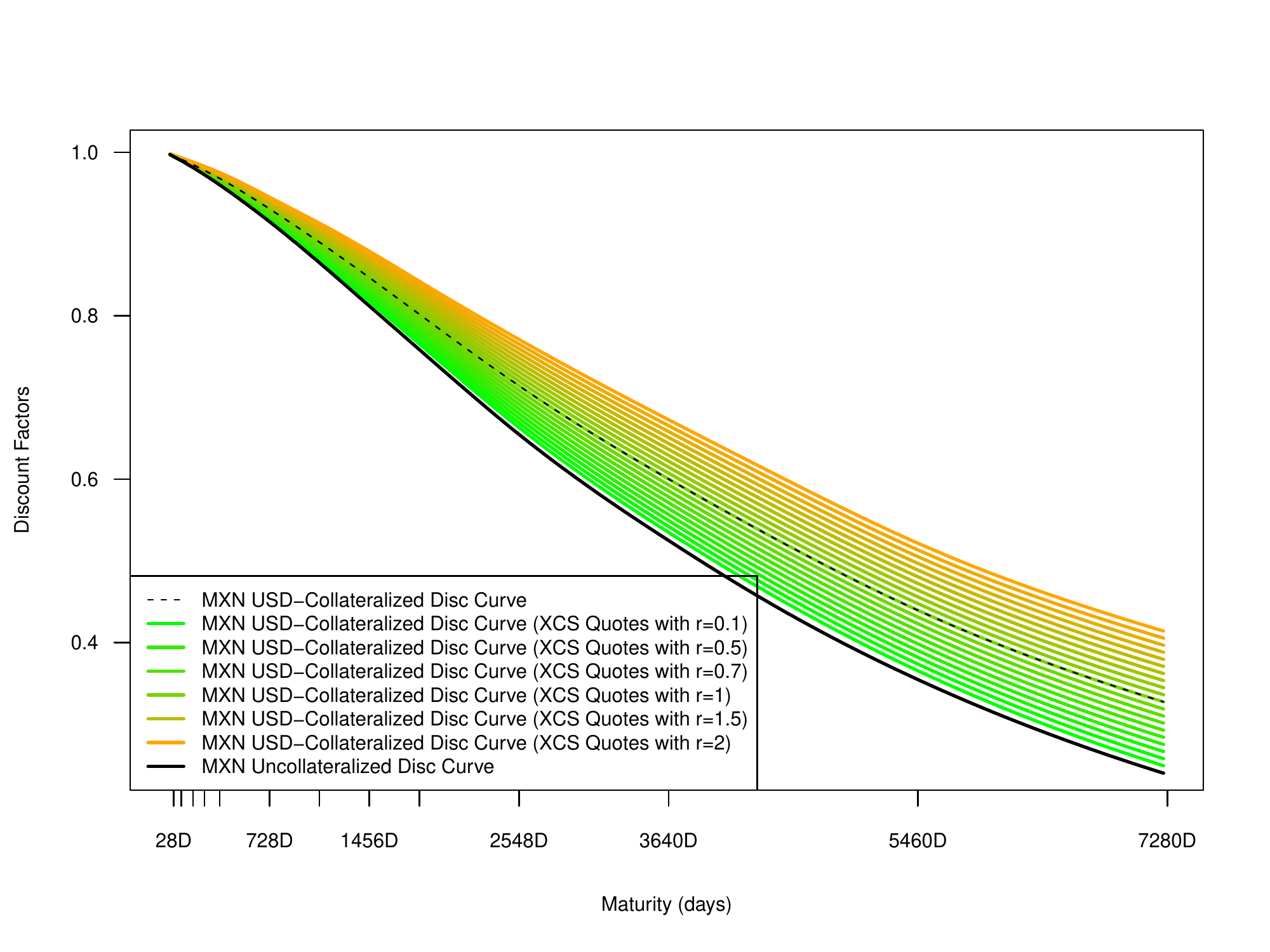}
          \caption[The Effect of XCSs in the Discount Curve]{In this plot we could see the effect of the XCSs quotes level in the construction of the discounting curves.}
          \vspace{-2mm}
        \end{figure}
    
         \begin{figure}[H]
          \centering
          \vspace{-2mm}
          \includegraphics[scale=0.55]{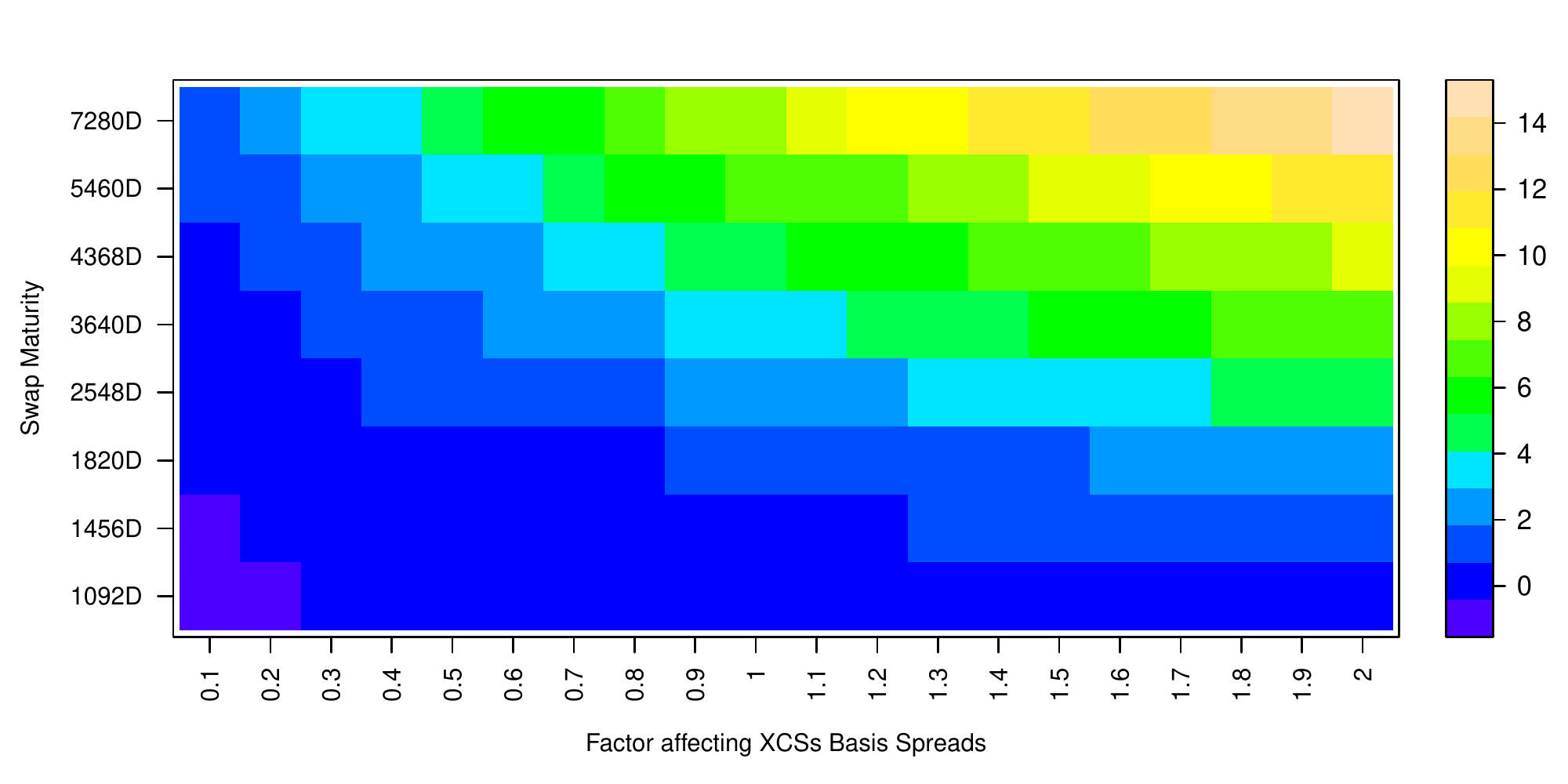}
          \caption[The Effect of cnXCSs: USD-collateralized vs uncollateralized swaps]{In this figure we present the effect of cnXCSs in the definition of the par swap rates when the IRS is uncollateralized. In $x$-axis, the factors that affect proportionally all the curve of cnXCS basis spreads are labeled, whereas $y$-axis displayed the maturities of the plain vanilla IRSs (we do not include maturities below 3 years since the effect of collateral is almost nil). We can conclude from this heatmap that, the greater the basis spread is, the greater the difference between USD-collateralized par swap rates and uncollateralized par swap rates is. Indeed, for $r=0.1$ we have that the greatest differences are for the tenors of 15 and 20 years, nevertheless this difference is of approximately 2 basis points. Now, when $r=2.0$, for the IRSs with maturity of 7 years have a discrepancy of 6 basis points with the USD-collateralized swap, while for an IRS of maturity of 20 years a difference is of about 14 basis points.}
          \vspace{-2mm}
        \end{figure}
        
        \begin{figure}[H]
          \centering
          \vspace{-2mm}
          \includegraphics[scale=0.55]{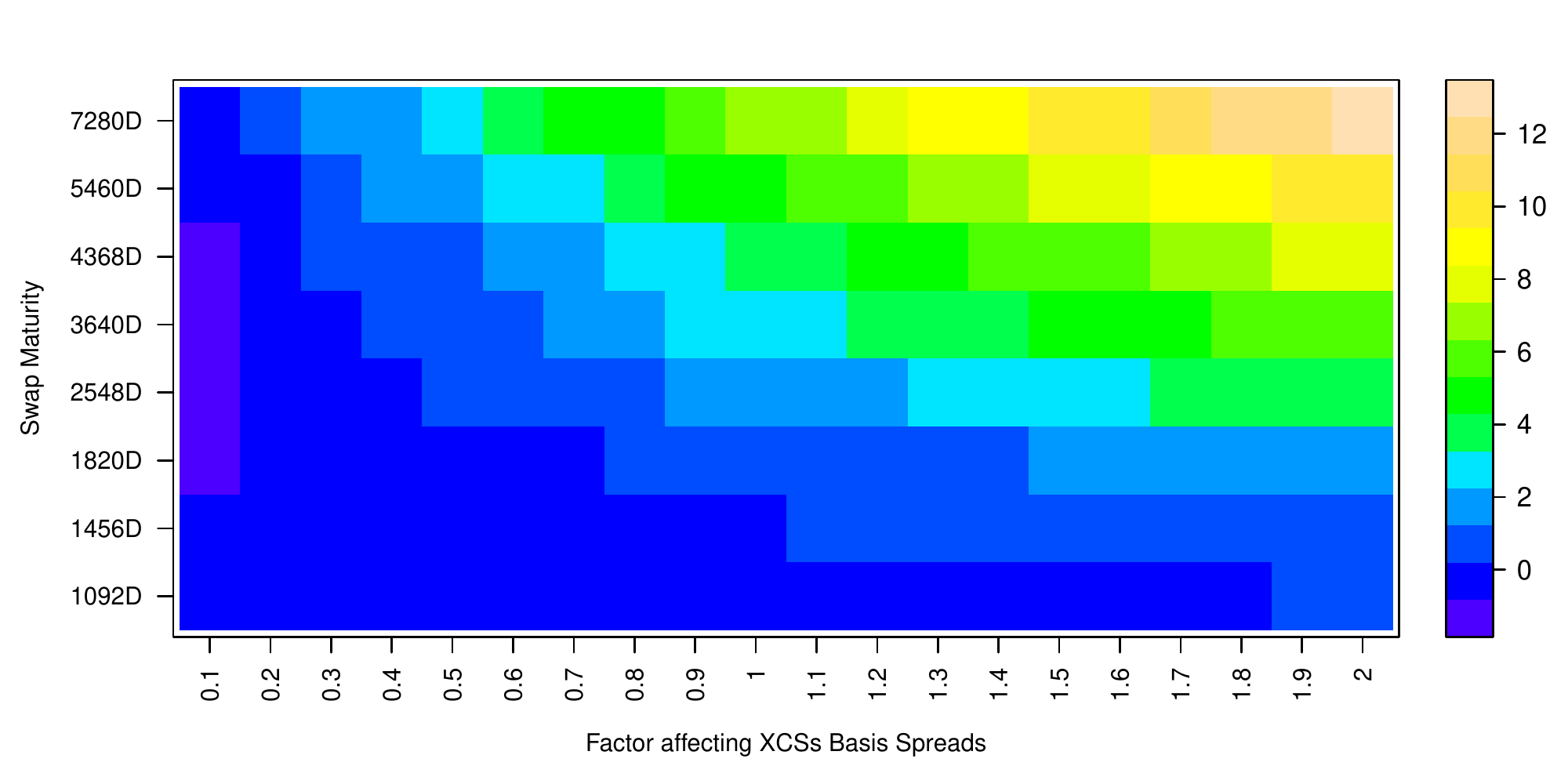}
          \caption[The Effect of cnXCSs: USD-collateralized vs MXN-collateralized swaps]{In this figure we present the effect of cnXCSs in the definition of the par swap rates when the IRS is collateralized in MXN. In $x$-axis, the factors that affect proportionally all the curve of cnXCS basis spreads are labeled, whereas $y$-axis displayed the maturities of the plain vanilla IRSs (we do not include maturities below 3 years since the effect of collateral is almost nil).}
          \vspace{-2mm}
        \end{figure}

    \newpage
    \section{Conclusions and Further Research}\label{sec:conclusions}
    As we have shown, the \emph{multi-curve} framework is totally different from the \emph{single-curve} framework since it requires more interest rate products for the curve construction. In this work, we have examined the assumptions of the \emph{multi-curve} framework in which a collateral account is included. The main idea of the \emph{multi-curve} framework consists on defining a coherent valuation framework that, with the absence of arbitrage, enable us to price interest rate product considering the currency of the collateral agreement of the counterparty.\par\medskip
    The purpose of this thesis was to apply the valuation framework in the Mexican swap market and to explore the differences in methodology for the curve calibration through distinct the collateral currencies.\par\medskip
    The contribution of this work compared to previous research, in particular \cite{mexder2014}, is that we explicitly presented the algorithms for the calibration of the discount and projection curves in the Mexican swap market. Moreover, we treated carefully the steps to follow in a simple and in a multiple bootstrappings, since we explained how to perform the interpolation of the curve through the iterative process. It is important to point out that the approach of this \emph{multi-curve} framework could be applied indistinctly for any emerging market currency. Indeed, throughout this work we can replace MXN for BRL, ZAR or RUB and the formulas and bootstrapping algorithms remain valid.\par\medskip
    It is concluded that, in contrast to a G7-currency market, the calibration of the \emph{multi-curve} framework in MXN currency is slightly difficult since it requires a multiple bootstrapping for the calibration of the forward and discount curves. Furthermore, we showed that the no-existence of an OIS market denominated in MXN currency exclude us to define the curves with a simple bootstrapping.\par\medskip
    
    
    \newpage
    \addcontentsline{toc}{section}{References}
    \bibliographystyle{apalike}
    \bibliography{reference}
    
    \newpage
    \appendix
    \addcontentsline{toc}{section}{Appendices}
    
    \section{Proofs }\label{App:Appendix0}
    \subsection{Proof of Theorem \ref{Teo:AlphaColl}}
    
    \begin{proof}
        Let us say that the forward measure $\mathbb{T}^{\alpha c}$ has associated the numéraire $P^{\alpha c}(t,T)$. This numéraire corresponds to the price of the partially collateralized ($\alpha$) zero coupon bond. Using theorem \ref{ChangeMeasureTheo} and the risk neutral measure $\mathbb{Q}$ with the numéraire $X(t):=\exp(-\int_t^T (1-\alpha)r(s) + \alpha c(s)ds)$ we have that
    \begin{equation}
        \zeta_\alpha(t)=\mathbb{E}^{\mathbb{Q}}_t \bigg( \frac{d\mathbb{T}^{\alpha c}}{d\mathbb{Q}} \bigg)=\frac{P^c(t,T)X(0)}{P^c(0,T)X(t)}.
    \end{equation}
    Using equation \eqref{eq5.9} and theorem \ref{teozeta} we get that
    \begin{align*}
        h(t)&=\mathbb{E}^\mathbb{Q}_t \left[ e^{-\int_t^T (1-\alpha)r(s) + \alpha c(s)ds} h(T) \right]\\
            &=\mathbb{E}_t^\mathbb{Q}\left[ X(t)h(T) \right]\\
            &=\mathbb{E}_t^\mathbb{Q}\left[ P^{\alpha c}(t,T)\frac{\zeta_\alpha(T)}{\zeta_\alpha(t)} h(T) \right]\\
            &=\zeta_{\alpha}^{-1}(t) \mathbb{E}_t^\mathbb{Q}\left[ \zeta_\alpha(T) P^{\alpha c}(t,T)h(T) \right]\\
            &=\mathbb{E}_t^{\mathbb{T}^{\alpha c}}\left[ P^{\alpha c}(t,T) h(T) \right]\\
            &=P^{\alpha c}(t,T) \mathbb{E}_t^{\mathbb{T}^{\alpha c}}\left[ h(T) \right].
    \end{align*}
    \end{proof}

    \newpage
    \section{Conventions \& Calendars}\label{App:AppendixA}
    In this appendix, we will define the conventions and calendars used within the work. For a more complete detailed information of the conventions and calendars throughout the different interest rates markets around world, see \cite{henrard2012interest}.
    \subsection{Day Count Conventions}
    \textbf{ACT/360:} The year fraction between two dates is computed by dividing the actual days between two dates by 360, i.e.,
    \begin{equation}
        \textnormal{Year Fraction }=\frac{\textnormal{Actual Days}}{360}.
    \end{equation}
    \textbf{ACT/365:} The year fraction between two dates is computed by dividing the actual days between two dates by 365, i.e.,
    \begin{equation}
        \textnormal{Year Fraction }=\frac{\textnormal{Actual Days}}{360}.
    \end{equation}
    \textbf{ACT/ACT (ISDA):} The year fraction is computed by making difference between days in a leap year and days in a non-leap year, i.e.,
    \begin{equation}
        \textnormal{Year Fraction }=\frac{\textnormal{Days in a leap year}}{366}+\frac{\textnormal{Days in a non-leap year}}{365}.
    \end{equation}
    \textbf{30/360:} The year fraction is computed by assuming that months have 30 days and years have 360, i.e.,
    \begin{equation}
        \textnormal{Year Fraction }=\frac{360\cdot(Y_2-Y_1)+30\cdot(M_2-M_1)+(D_2-D_1)}{360}.
    \end{equation}

    \subsection{Date Rolling Conventions}
    \noindent \textbf{Following:} If a payment date falls on a non-business day, the payment date is set to the next business day.\par \smallskip
    \noindent Example: Start date: 30-Oct-2014 with a period 1 month $\rightarrow$ End date: 01-Dec-2014\par \bigskip
    \noindent \textbf{Modified Following:} If a payment date falls on a non-business day, the payment date is set to the next business day with the exception that if the next business day falls in a new month then the payment date is set to the last business day.\par \smallskip
    \noindent Example: Start date: 30-Oct-2014 with a period 1 month $\rightarrow$ End date: 28-Nov-2014\par \smallskip
    \subsection{Calendars}
    In this work we used MX, US and UK calendars. The difference among them relies on the holidays asumed by each central bank or the responsible for determining the fixings of certain rates. Below you will find the holidays of each region.\par \bigskip
    \noindent \textbf{MX (Mexico):} New Year's Day, Constitution Day, Birth Anniversary of Benito Ju\'arez, Maundy Thursday, Good Friday, International Worker's Day, Mexican Independence Day, Day of the dead, Mexican Revolution Day, Bank Holiday and Christmas Day.\bigskip
    
    \noindent \textbf{US (United States of America):} New Year's Day, Martin Luther King, Jr. Day, Washington's Birthday, Good Friday, Memorial Day, United States of America Independence Day, Labor Day, Columbus Day, Veterans Day and Christmas Day.\bigskip
    
    \noindent \textbf{UK (United Kindom):}  New Year's Day, Good Friday, Easter Monday, Easter May Bank Holiday, Spring Bank Holiday, Summer Bank Holiday, Christmas Day and Boxing Day.\bigskip
    
    \noindent \textbf{TARGET (Trans-European Automated Real-time Gross settlement Express Transfer system):}  New Year's Day, Good Friday, Easter Monday, Labor Day, Christmas Day and Boxing Day.\bigskip
    
    It is important to mention that the MX Calendar is also known as Mexico City Calendar, whereas the US calendar is known as New York Calendar and the UK calendar is known as London  Calendar.
    
    \newpage
    \section{Methods of Interpolation}\label{App:AppendixB}
    Interpolation methods are essential for the construction of interest rates curves. Indeed, the bootstrapping techniques require an interpolation method. It is well-known that the simplest method for interpolating between two points is by connecting them through a straight line. This method can be applied to a variety of functions, such as the spot rate (zero curve), the discount factor curve, the forward curve, etc. In order to produce continuous forward rates, researchers (either in academia or industry) often apply cubic methods for interpolations. These cubic methods are fitted by cubic piecewise polynomials between knot (or nodal) points. The parameters of the polynomials can be computed by establishing a variety of conditions, such as continuity, differentiability, monotonicity, etc. For further references and a complete introduction of interpolation methods for interest rates curve see \cite{hagan2006interpolation}, \cite{hagan2008methods} and \cite{ron2000practical}. Specifically in \cite{hagan2006interpolation}, a wide array of possible interpolation techniques are discussed.
    
    \subsection{Linear Interpolation on Yield Curve}
    This technique assumes that the yield curve (zero rate curve) is linear between the nodal points and flat extrapolation outside nodal points. Consider $n+1$ knot points $(t_1, r_1), \dots, (t_{n+1}, r_{n+1})$. The linear interpolation function for $R(t)$ is defined as follows
    \begin{equation}\label{linearR}
        R(t)=\sum_{i=1}^{n} R_i(t) \mathbbm{1}_{ \{ t_{i} \leq t < t_{i+1} \} }(t), 
    \end{equation}
    where
    \begin{equation}
        R_i(t)=a_i+b_i(t-t_i), \hspace{2mm} a_i,b_i \in \mathbb{R}.
    \end{equation}
    $R_i(t)$ functions are called the linear piecewise functions. The piecewise functions require to to satisfy the following conditions:
    \begin{enumerate}
        \item $R(t_i)=r_i$ for all $i=1,\dots,n+1$ (function should returns all the inputs necessarily)
        \item $R \in \mathcal{C}^0(t_1,t_{n+1})$\footnote{$\mathcal{C}^0(a,b)=\{ f:(a,b)\rightarrow \mathbb{R} : f \text{ is continuous} \}$} (function should be continuous)
    \end{enumerate}
    The first condition may be satisfied by requering that $R_i(t_i)=r_i$ for all $i=1\dots,n$, therefore
    \begin{equation}
        R_i(t_i)=r_i \hspace{3mm} \forall i \hspace{1mm} \Longrightarrow \hspace{1mm} a_i:=r_i \hspace{3mm} \forall i.
    \end{equation}
    Then, condition of continuity can be achived by requiring that $R_i(t_{i+1})=r_{i+1}$ for all $i=1\dots,n$, hence
    \begin{align*}
        R_i(t_{i+1})=r_{i+1} \hspace{3mm} \forall i \hspace{1mm} \Longrightarrow & \hspace{1mm} r_i+b_i(t_{i+1}-t_{i})=r_{i+1} \hspace{3mm} \forall i, \\
                                    \Longleftrightarrow & \hspace{1mm} b_i:=\frac{r_{i+1}-r_{i}}{t_{i+1}-t_{i}} \hspace{3mm} \forall i.
    \end{align*}
    Consequently, the interpolation function is given by
    \begin{equation}
        R(t)=\sum_{i=1}^{n} \bigg( \frac{r_{i+1}-r_{i}}{t_{i+1}-t_{i}} (t-t_i) + r_i \bigg) \mathbbm{1}_{ \{ t_{i} \leq t < t_{i+1} \} }(t) +
        r_1 \mathbbm{1}_{ \{ t < t_1 \} }(t) + r_{n+1} \mathbbm{1}_{ \{ t > t_{n+1} \} }(t).
    \end{equation}
    \subsection{Linear Interpolation on Log Discount Factors}
    Similarily the technique previously mention, this method assumes that the logarithm of the discount curve is linear between the nodal points and flat extrapolation outside nodal points. Consider $n+1$ knot points $\{ (t_i, \log(P(t_i))) \}_{i=1}^{n+1}$. In this case, $P(t_1)$ corresponds to $P(t,t_1)$ following the notation used through the work, but to simplify the notation we will ignore the $t$. Following the same arguments mentioned in the linear yield curve technique, it can be shown that
    \begin{align*}
        \log (P(t))&=\sum_{i=1}^{n} \bigg( \frac{\log (P(t_{i+1}))-\log (P(t_{i}))}{t_{i+1}-t_{i}} (t-t_i) + \log (P(t_i)) \bigg) \mathbbm{1}_{ \{ t_{i} \leq t < t_{i+1} \} }(t)\\ &+
        \log (P(t_1)) \mathbbm{1}_{ \{ t < t_1 \} }(t) + \log (P(t_{n+1})) \mathbbm{1}_{ \{ t > t_{n+1} \} }(t).
    \end{align*}

    \subsection{Natural Cubic Splines Interpolation on Yield Curve}\label{app:naturalcubic}
    This technique assumes that the yield curve (zero rate curve) is a cubic spline . A cubic spline is a smooth polynomial function of degree three that is piecewise-defined. Consider $n+1$ knot points $(t_1, r_1), \dots, (t_{n+1}, r_{n+1})$. The cubic interpolation function for $R(t)$ is defined as follows
    \begin{equation}\label{cubicR}
        R(t)=\sum_{i=1}^{n} R_i(t) \mathbbm{1}_{ \{ t_{i} \leq t < t_{i+1} \} }(t), 
    \end{equation}
    where
    \begin{equation}
        R_i(t)=a_i+b_i(t-t_i)+c_i(t-t_i)^2+d_i(t-t_i)^3, \hspace{2mm} a_i,b_i,c_i,d_i \in \mathbb{R}.
    \end{equation}
    $R_i(t)$ functions are called the cubic piecewise functions.
    Using the notation of \cite{hagan2006interpolation} and \cite{du2011investigation} we define $h_i=t_{i+1}-t_i$ and $m_i=(a_{i+1}-a_i)/h_i$. Interpolation methods are constructed based on a criteria that satisfy some conditions (or constraints). These conditions allow us to build a unique and consistent piecewise continuous function. In the case of cubic interpolation, typically it is requiered to satisfy the following conditions:
    \begin{enumerate}
    \item $R(t_i)=r_i$ for all $i=1,\dots,n+1$ (function should returns all the inputs)
    \item $R \in \mathcal{C}^1(t_1,t_{n+1})$\footnote{$\mathcal{C}^n(a,b)=\{ f:(a,b)\rightarrow \mathbb{R} : f^{(n)}:(a,b)\rightarrow \mathbb{R} \text{ is continuous} \}$} (function should be continuous and differentiable)
    \end{enumerate}
    According to equation \eqref{cubicR}, the first condition is satisfied by requiring that 
    \begin{equation}\label{set1}
        R(t_i)=R_i(t_i)=r_i \hspace{1mm} \Longrightarrow \hspace{1mm} a_i:=r_i, \hspace{4mm} \text{for } i=1\dots,n
    \end{equation}
    Additionally, we need that the $n$th piecewise function $R_n(t)$ returns $r_{n+1}$ when $t=t_{n+1}$, i.e.,
    \begin{equation}\label{set2}
        a_{n}+b_{n}h_{n}+c_{n}h^2_{n}+d_{n}h^3_{n}=r_{n+1}:=a_{n+1}.
    \end{equation}
    Condition of continuity in the yield curve can be achieved by requering that
    \begin{equation}
        \lim_{x \rightarrow t_i^{-}} R(x) = \lim_{x \rightarrow t_i^{+}} R(x)
    \end{equation}
    This condition could be met by forcing that $R_{i}(t_{i+1})=R_{i+1}(t_{i+1})$, i.e.,
    \begin{equation}\label{set3}
        a_i+b_i h_i+c_i h_i^2+d_i h_i^3=a_{i+1}, \hspace{4mm} \text{for } i=1\dots,n-1.
    \end{equation}
    Then, condition of differentiability can be achieved by requiring that $R_{i}'(t_{i+1})=R_{i+1}'(t_{i+1})$, i.e.,
    \begin{equation}\label{set4}
        b_{i}+2c_ih_i+3d_ih_i^2=b_{i+1}, \hspace{4mm} \text{for } i=1\dots,n-1.
    \end{equation}
    Note that from the set of equations \eqref{set1}, \eqref{set2}, \eqref{set3} and \eqref{set4} we defined a system of $3n-1$\footnote{$n+1+(n-1)+(n-1)=3n-1$} equations with $4n$ unknown variables. Hence, we need an extra $n+1$ constraints to solve this system. The natural cubic splines method allows us to define natural boundary conditions to solve the system of equations. The conditions are as following:
    \begin{enumerate}
    \item $R \in \mathcal{C}^2(t_1,t_{n+1})$ (function should be twice differentiable)
    \item $R''(t_1)=0$ and $R''(t_{n+1})=0$ (second derivative is zero at the two extreme breaks)
    \end{enumerate}
    Condition of smoothness (twice differentiable) can be achieved by requiring that $R_{i}''(t_{i+1})=R_{i+1}''(t_{i+1})$ where $R_i''(t)=2c_i+6d_i(t-t_i)$, hence
    \begin{equation}\label{smoothness_equations}
        c_i+3d_ih_i=c_{i+1}, \hspace{4mm} \text{for } i=1\dots,n-1.
    \end{equation}
    Then, it is easy to see that the condition of smoothness on the extremes give us the following equations
    \begin{align}
        c_1&=0\\
        c_n+3d_nh_n&=0 \label{second_deriv_zero}
    \end{align}
    Note that the equation \eqref{smoothness_equations} claims that $c_n+3d_nh_n=c_{n+1}$ and \eqref{second_deriv_zero} that $c_n+3d_nh_n=0$, therefore $c_{n+1}=0$. By imposing these natural conditions, the system now has $4n$ equations with $4n$ unknown variables. This system has a unique solution if and only if the coefficient matrix of the system is nonsingular.
    \begin{defi}
    A matrix $A \in \mathbb{R}^{n \times n}$ is said to be diagonally dominant when
    \begin{equation}\label{defi_diag_domin}
        |a_{ii}|\geq \sum_{\substack{j=1 \\ j\neq i}}^n |a_{ij}|,
    \end{equation}
    holds for each $i=1,\dots,n$.
    \end{defi}
    A diagonally dominant matrix is said to be strictly diagonally dominant when the inequality in \eqref{defi_diag_domin} is strict for each $n$.
    \begin{teo}\label{teo_diag_domin}
    A strictly diagonally dominant matrix $A$ is nonsingular.
    \end{teo}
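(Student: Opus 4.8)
The plan is to argue by contradiction, exploiting the row on which a hypothetical null vector attains its maximum modulus. Suppose, for contradiction, that $A$ is strictly diagonally dominant yet singular. Then the homogeneous system $Ax = 0$ admits a nontrivial solution $x = (x_1,\dots,x_n)^{\top} \neq 0$. Since $x$ is nonzero, I would choose an index $k$ at which the modulus of the components is maximised, i.e.\ $|x_k| = \max_{1 \leq i \leq n} |x_i| > 0$. The whole argument hinges on isolating the diagonal term in the $k$-th equation and bounding the off-diagonal contributions by this maximal component.

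Next I would write out the $k$-th row of $Ax = 0$ explicitly, namely $\sum_{j=1}^{n} a_{kj} x_j = 0$, and move the diagonal term to one side to obtain $a_{kk} x_k = -\sum_{j \neq k} a_{kj} x_j$. Taking absolute values and applying the triangle inequality followed by the bound $|x_j| \leq |x_k|$ for every $j$ gives the chain
\begin{equation*}
|a_{kk}|\,|x_k| = \Bigl| \sum_{j \neq k} a_{kj} x_j \Bigr| \leq \sum_{j \neq k} |a_{kj}|\,|x_j| \leq \Bigl( \sum_{j \neq k} |a_{kj}| \Bigr) |x_k|.
\end{equation*}
Because $|x_k| > 0$ by the choice of $k$, I may divide through by $|x_k|$ to deduce $|a_{kk}| \leq \sum_{j \neq k} |a_{kj}|$.

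This last inequality directly contradicts the hypothesis of \emph{strict} diagonal dominance, which by definition forces $|a_{kk}| > \sum_{j \neq k} |a_{kj}|$ for every index, in particular for $i = k$. The contradiction shows that no nontrivial null vector can exist, so $Ax = 0$ implies $x = 0$, and hence $A$ is nonsingular. I do not anticipate a genuine obstacle here; the only point requiring care is the justification that $|x_k| > 0$ (so that division is legitimate), which follows immediately from $x \neq 0$ together with $|x_k|$ being the maximum modulus. It is worth emphasising that strictness of the inequality is essential: the weak version of diagonal dominance is insufficient, since a matrix such as the all-ones row structure can be weakly dominant and singular.
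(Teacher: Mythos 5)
Your argument is correct and complete: the paper itself offers no proof, deferring instead to \cite{burden2010numerical}, and the maximal-component contradiction you give is precisely the classical argument found there. The one delicate point, that $|x_k|>0$ justifies the division, is handled properly.
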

    \begin{proof}
    See \cite{burden2010numerical} page 412.
    \end{proof}
    Consequently, to prove that the system has a unique solution, is enough to show that it can be represent with a strictly diagonally dominant coeficients matrix. Indeed, solving for $d_i$ in equation \eqref{smoothness_equations} give us,
    \begin{equation}\label{smoothness_equations_mod}
        d_i=\frac{c_{i+1}-c{i}}{3h_i},\hspace{4mm} \text{for } i=1\dots,n-1.
    \end{equation}
    Substituing these values in \eqref{set3} and \eqref{set4} give us,
    \begin{align}
        a_{i+1}=a_i+b_ih_i+\frac{h_i^2}{3}(2c_i+c_{i+1}), \hspace{4mm} &\text{for } i=1\dots,n-1; \label{newEqA}\\
        b_{i+1}=b_{i}+h_i(c_i+c_{i+1}), \hspace{4mm} &\text{for } i=1\dots,n-1.\label{newEqB1}
    \end{align}
    Now from equation \eqref{newEqA} if we solve for $b_i$ we obtain
    \begin{equation}
        b_i=\frac{(a_{i+1}-a_i)}{h_i}-\frac{h_i}{3}(2c_i+c_{i+1}), \hspace{4mm} \text{for } i=1\dots,n-1. \label{newEqB2}
    \end{equation}
    With a reduction of the index $i$ we get
    \begin{equation}
        b_{i-1}=\frac{(a_{i}-a_{i-1})}{h_{i-1}}-\frac{h_{i-1}}{3}(2c_{i-1}+c_{i}), \hspace{4mm} \text{for } i=2\dots,n. \label{newEqB3}
    \end{equation}
    Thus, when substituing \eqref{newEqB2} and \eqref{newEqB3} in equation \eqref{newEqB1} we obtain
    \begin{equation}
        \frac{3}{h_i}(a_{i+1}-a_{i})-\frac{3}{h_{i-1}}(a_{i}-a_{i-1})=h_ic_{i+1}+2(h_i+h_{i-1})c_i+h_{i-1}c_{i-1}, \hspace{2mm} \text{for } i=2\dots,n. \label{system_eqs}
    \end{equation}
    Recall that $\{a_i\}_{i=1}^{n+1}$ and $\{h_i\}_{i=1}^{n}$ are given, therefore the system of equations \eqref{system_eqs} involves only the $\{c_i\}_{i=1}^{n}$ as unknowns. Once the values of $\{c_i\}_{i=1}^{n}$ are determined, it is a simple task to find the values of $\{c_i\}_{i=1}^{n}$ and $\{d_i\}_{i=1}^{n}$ from equations \eqref{newEqB2} and \eqref{smoothness_equations_mod}, respectively. Using the two equations $c_1=0$ and $c_{n+1}=0$ together with the equations in \eqref{system_eqs} we may produce a linear system described by the following vector equation
    \begin{equation}
        \begin{bmatrix}
            {1}      & {0}          & {0}             & {\cdots} & {\cdots}          & { 0 }   \\
            {h_1}    & {2(h_1+h_2)} & {h_2}           & {0}      & {\cdots}          & {\vdots}\\
            { 0 }    & {h_2}        & {2(h_2+h_3)}    & {h_3}    & {\ddots}          & {\vdots}\\
            {\vdots} & {\ddots}    & {\ddots}        & {\ddots} & {\ddots}           & {0}\\
            {\vdots} & {   }        & {\ddots}        &{h_{n-1}} & {2(h_{n-1}+h_n)}  & {h_n}\\
            { 0 }    & {\cdots}     & {\cdots}        & {0}      & {0}               & { 1 }\\
        \end{bmatrix}
        \begin{bmatrix}
            {c_1}  \\
            {c_2}  \\
            {c_3}  \\
            \vdots   \\
            {c_{n}}   \\
            {c_{n+1}}  \\
        \end{bmatrix}
        =
        \begin{bmatrix}
            {0}  \\
            {\frac{3}{h_2}(a_{3}-a_{2})-\frac{3}{h_{1}}(a_{2}-a_{1})}  \\
            {\frac{3}{h_3}(a_{4}-a_{3})-\frac{3}{h_{2}}(a_{3}-a_{2})}  \\
            \vdots   \\
            {\frac{3}{h_n}(a_{n+1}-a_{n})-\frac{3}{h_{n-1}}(a_{n}-a_{n-1})}   \\
            {0}  \\
        \end{bmatrix}
    \end{equation}
    It is easy to see that the coefficient matrix is stricly diagonally dominant, using theorem \ref{teo_diag_domin} we conclude that the system is solvable and in particular the solution is unique. Now we include an algorithm in R to solve the natural cubic splines numerically..\par \medskip
    \noindent \textbf{Natural Cubic Splines Solution Algorithm (in R)}
    \begin{lstlisting}[language=R]
CubicInterpolation<-function(x,y){
    n=length(x) # number of knot points
          
    if (length(y) == n){
        alpha=rep(0,n)
        h=rep(0,n)
        L=rep(0,n)
        mu=rep(0,n)
        Z=rep(0,n)
        Coef=matrix(0,ncol=4,nrow=n-1)
        # Coef: [ ai | bi | ci | di ]
                
        for (i in 1:(n-1)){
            h[i]=x[i+1]-x[i]
            Coef[i,1]=y[i]
        }
              
        for (i in 2:(n-1)){
            alpha[i] = (3/h[i]*(y[i+1]-y[i])) - (3/h[i-1]*(y[i] - y[i-1]))
        }
              
        L[1]=1
        mu[1]=0
        Z[1]=0
              
        for (i in 2:(n-1)){
            L[i]=2*(x[i+1]-x[i-1]) - h[i-1]*mu[i-1]
            mu[i]=h[i]/L[i]
            Z[i]=(alpha[i]-h[i-1]*Z[i-1])/L[i]
        }
              
        L[n]=1
        Z[n]=0
                
        for (i in seq(from=n-1,to=1,by=-1)){
            if (i == n-1){
                Coef[i,3]=Z[i]
                Coef[i,2]=(y[i+1]-y[i])/h[i] - h[i]*2*Coef[i,3]/3
                Coef[i,4]=(-Coef[i,3])/(3*h[i])
            } else{
                Coef[i,3]=Z[i]- mu[i]*Coef[i+1,3]
                Coef[i,2]=(y[i+1]-y[i])/h[i]-h[i]*(Coef[i+1,3]+2*Coef[i,3])/3
                Coef[i,4]=(Coef[i+1,3]-Coef[i,3])/(3*h[i])
            }
        }
        return(Coef)
    } else{
        print("Error5")
    }
}
    \end{lstlisting}
    
    \newpage
    \section{Example of a Bootstrapping Algorithm (OIS Curve)} \label{App:AppendixC}
    
    \begin{algorithm}[H]
     \SetAlgoLined
     \KwData{Swap Curve: Tenors ($\overbar{X}=(\textnormal{ON,TN,1W,2W,}\dots\textnormal{,50Y)}$ and Swap Rates ($\overbar{k_X}=(k_{\textnormal{ON}},\dots,k_{\textnormal{50Y}})$)}
     \KwResult{$P^c(t,T)$ for all $t+1 \leq T \leq 50$}
     Calculate $P^c(t,t+1)$ and $P^c(t,t+2)$ using ON and TN rates\;
     $P^c(t,T_0)=P^c(t,t+2)$\;
     \For{$\overbar{X}(i)=\textnormal{1W}$ \textnormal{to} $\overbar{X}(i)=\textnormal{1Y}$}{
        $P^c(t,T_{\overbar{X}(i)})=P^c(t,T_0)/(1+\overbar{k_X}(i) \tau(T_0,T_{\overbar{X}(i)}))$\;
     }
     \For{$\overbar{X}(i)=\textnormal{18M}$ \textnormal{to} $\overbar{X}(i)=\textnormal{10Y}$}{
        $N_i=$\textnormal{ number of coupons}\;
        $P^c(t,T_{\overbar{X}(i)})=\frac{P^c(t,T_0)-\overbar{k_X}(i)\sum_{j=1}^{N_i-1} \tau(T_{(j-1)^{(\overbar{X}(i))}},T_{(j)^{(\overbar{X}(i))}}) P^c(t,T_{(j)^{(\overbar{X}(i))}})} {1+\overbar{k_X}(i)\tau(T_{(N_i-1)^{(\overbar{X}(i))}},T_{(N_i)^{(\overbar{X}(i))}})}$\;
     }
     \For{$\overbar{X}(i)=\textnormal{12Y}$ \textnormal{to} $\overbar{X}(i)=\textnormal{50Y}$}{
        $P^c(t,T_{\overbar{X}(i)})=P_0(T_{\overbar{X}(i)})$ (initial guess)\;
        Define unknown variables: $\overbar{S}=\{S_m : S_m \textnormal{ is a payment date and } \overbar{X}(i-1) < S_m < \overbar{X}(i) \}$\;
        $P_1(T_{\overbar{X}(i)})=P_0(T_{\overbar{X}(i)})+0.1$\;
        \While{$|P_1(T_{\overbar{X}(i)})-P_0(T_{\overbar{X}(i)})|<10^{-9}$}{
            $P_1(T_{\overbar{X}(i)})=P_0(T_{\overbar{X}(i)})$\;
            Calculate $R^c(t,T)=\frac{-\ln(P^c(t,T))}{\tau(t,T)}$ for all $T \in \{ T_0,T_{\overbar{X}(i)} \} $\;
            Interpolate $R^c(t,T)$\;
            Calculate $R^c(t,S_m)$ for all $S_m \in \overbar{S}$\;
            Calculate $P^c(t,S_m)$ for all $S_m \in \overbar{S}$\;
            $P_0(T_{\overbar{X}(i)})=\frac{P^c(t,T_0)-\overbar{k_X}(i)\sum_{j=1}^{N_i-1} \tau(T_{(j-1)^{(\overbar{X}(i))}},T_{(j)^{(\overbar{X}(i))}}) P^c(t,T_{(j)^{(\overbar{X}(i))}})} {1+\overbar{k_X}(i)\tau(T_{(N_i-1)^{(\overbar{X}(i))}},T_{(N_i)^{(\overbar{X}(i))}})}$\;
            $P^c(t,T_{\overbar{X}(i)})=P_0(T_{\overbar{X}(i)})$
        }
     }
     \caption{USD OIS Bootstrapping}
    \end{algorithm}
    
    \newpage
    \section{Bloomberg \& SuperDerivatives Data} \label{App:AppendixD}
    \begin{table}[h]
    \footnotesize
    \centering
    \begin{tabular}{|c c c c c|}
      \hline
     \multicolumn{1}{|>{\centering\arraybackslash}m{10mm}}{\textbf{Maturity}} & 
     \multicolumn{1}{>{\centering\arraybackslash}m{10mm}}{\textbf{Date}} &
     \multicolumn{1}{>{\centering\arraybackslash}m{25mm}}{\textbf{Swap Rate}} & 
     \multicolumn{1}{>{\centering\arraybackslash}m{25mm}}{\textbf{Discount Factor}} &
     \multicolumn{1}{>{\centering\arraybackslash}m{25mm}|}{\textbf{Continuously Compounded Zero Rate}}\\ \hline
     
    1m & 2015-07-02 & 0.12800 & 0.99988 & 0.12923 \\
    2m & 2015-08-03 & 0.13200 & 0.99976 & 0.13329 \\
    3m & 2015-09-02 & 0.14200 & 0.99962 & 0.14316 \\
    4m & 2015-10-02 & 0.15400 & 0.99946 & 0.15512 \\
    5m & 2015-11-02 & 0.17000 & 0.99926 & 0.17110 \\
    6m & 2015-12-02 & 0.18600 & 0.99904 & 0.18714 \\
    9m & 2016-03-02 & 0.25100 & 0.99808 & 0.25239 \\
    1y & 2016-06-02 & 0.32300 & 0.99671 & 0.32477 \\
    18m$^{\ast}$ & 2016-12-02 & 0.48300 & 0.99266 & 0.48632 \\
    2y & 2017-06-02 & 0.64700 & 0.98696 & 0.65204 \\
    3y & 2018-06-04 & 0.94258 & 0.97167 & 0.95201 \\
    4y & 2019-06-03 & 1.19070 & 0.95274 & 1.20536 \\
    5y & 2020-06-02 & 1.39274 & 0.93157 & 1.41304 \\
    7y & 2022-06-02 & 1.69758 & 0.88570 & 1.72982 \\
    10y & 2025-06-02 & 1.98045 & 0.81607 & 2.02866 \\
    12y & 2027-06-02 & 2.09662 & 0.77200 & 2.15300 \\
    15y & 2030-06-03 & 2.21173 & 0.71025 & 2.27716 \\
    20y & 2035-06-04 & 2.32231 & 0.61866 & 2.39735 \\
    25y & 2040-06-04 & 2.37697 & 0.54082 & 2.45516 \\
    30y & 2045-06-02 & 2.40250 & 0.47497 & 2.47897 \\
    40y & 2055-06-02 & 2.41850 & 0.36979 & 2.48468 \\
    50y & 2065-06-02 & 2.40200 & 0.29453 & 2.44244 \\
    \hline
    \end{tabular}
    \caption{SuperDerivatives market data of the USD OIS Swap Curve (see Section \ref{USDDiscountCurveSection}). Quotes are End of Day prices from May 29, 2015. The quotes were taken from \url{www.superderivatives.com} on June 21, 2015. $^{\ast}$The 18m OIS swap convention has an upfront short stub, i.e., each leg has two coupons: the first with an accrual period of 6m and the second with an accrual period of 12m (1y).}
    \label{OIS:SuperDerivatives}
    \end{table}
    
    \begin{sidewaystable}
    \footnotesize
    \centering
    \begin{tabular}{|l c c c c c c c c c c r|}
      \hline
     \multicolumn{1}{|>{\centering\arraybackslash}m{10mm}}{\textbf{Index}} & 
     \multicolumn{1}{>{\centering\arraybackslash}m{10mm}}{\textbf{Spot Lag}} &
     \multicolumn{1}{>{\centering\arraybackslash}m{10mm}}{\textbf{Tenor}} & 
     \multicolumn{1}{>{\centering\arraybackslash}m{10mm}}{\textbf{Rate}} &
     \multicolumn{1}{>{\centering\arraybackslash}m{10mm}}{\textbf{Type}} &
     \multicolumn{1}{>{\centering\arraybackslash}m{10mm}}{\textbf{Floating Period}} &
     \multicolumn{1}{>{\centering\arraybackslash}m{15mm}}{\textbf{Floating Day Rolling}} &
     \multicolumn{1}{>{\centering\arraybackslash}m{10mm}}{\textbf{Floating Day Count}} &
     \multicolumn{1}{>{\centering\arraybackslash}m{10mm}}{\textbf{Fixed Period}} &
     \multicolumn{1}{>{\centering\arraybackslash}m{15mm}}{\textbf{Fixed Day Rolling}} &
     \multicolumn{1}{>{\centering\arraybackslash}m{10mm}}{\textbf{Fixed Day Count}} &
     \multicolumn{1}{>{\centering\arraybackslash}m{15mm}|}{\textbf{Bloomberg Ticker}}\\ \hline
     
    FEDL01 & 0d & 1d & 0.08000 & Cash & - & Following & ACT/360 & - & - & - & FEDL01 Index \\
    FEDL11 & 1d & 1d & 0.08000 & Cash & - & Following & ACT/360 & - & - & - & FEDL01 Index \\ \hline
    USSO1Z & 2d & 1w & 0.12000 & OI Swap & 1y & Mod Fol & ACT/360 & 1y & Mod Fol & ACT/360 & USSO1Z Curncy \\
    USSO2Z & 2d & 2w & 0.12380 & OI Swap & 1y & Mod Fol & ACT/360 & 1y & Mod Fol & ACT/360 & USSO2Z Curncy \\
    USSO3Z & 2d & 3w & 0.12600 & OI Swap & 1y & Mod Fol & ACT/360 & 1y & Mod Fol & ACT/360 & USSO3Z Curncy \\
    USSOA & 2d & 1m & 0.12800 & OI Swap & 1y & Mod Fol & ACT/360 & 1y & Mod Fol & ACT/360 & USSOA Curncy \\
    USSOB & 2d & 2m & 0.13400 & OI Swap & 1y & Mod Fol & ACT/360 & 1y & Mod Fol & ACT/360 & USSOB Curncy \\
    USSOC & 2d & 3m & 0.14100 & OI Swap & 1y & Mod Fol & ACT/360 & 1y & Mod Fol & ACT/360 & USSOC Curncy \\
    USSOD & 2d & 4m & 0.15400 & OI Swap & 1y & Mod Fol & ACT/360 & 1y & Mod Fol & ACT/360 & USSOD Curncy \\
    USSOE  & 2d & 5m & 0.16900 & OI Swap & 1y & Mod Fol & ACT/360 & 1y & Mod Fol & ACT/360 & USSOE  Curncy \\
    USSOF & 2d & 6m & 0.18730 & OI Swap & 1y & Mod Fol & ACT/360 & 1y & Mod Fol & ACT/360 & USSOF Curncy \\
    USSOG & 2d & 7m & 0.20700 & OI Swap & 1y & Mod Fol & ACT/360 & 1y & Mod Fol & ACT/360 & USSOG Curncy \\
    USSOH & 2d & 8m & 0.22800 & OI Swap & 1y & Mod Fol & ACT/360 & 1y & Mod Fol & ACT/360 & USSOH Curncy \\
    USSOI & 2d & 9m & 0.24900 & OI Swap & 1y & Mod Fol & ACT/360 & 1y & Mod Fol & ACT/360 & USSOI Curncy \\
    USSOJ & 2d & 10m & 0.27400 & OI Swap & 1y & Mod Fol & ACT/360 & 1y & Mod Fol & ACT/360 & USSOJ Curncy \\
    USSOK & 2d & 11m & 0.29450 & OI Swap & 1y & Mod Fol & ACT/360 & 1y & Mod Fol & ACT/360 & USSOK Curncy \\
    USSO1 & 2d & 1y & 0.31900 & OI Swap & 1y & Mod Fol & ACT/360 & 1y & Mod Fol & ACT/360 & USSO1 Curncy \\
    USSO1F & 2d & 18m & 0.48100 & OI Swap & 1y & Mod Fol & ACT/360 & 1y & Mod Fol & ACT/360 & USSO1F Curncy \\
    USSO2 & 2d & 2y & 0.64900 & OI Swap & 1y & Mod Fol & ACT/360 & 1y & Mod Fol & ACT/360 & USSO2 Curncy \\
    USSO3 & 2d & 3y & 0.95500 & OI Swap & 1y & Mod Fol & ACT/360 & 1y & Mod Fol & ACT/360 & USSO3 Curncy \\
    USSO4 & 2d & 4y & 1.19680 & OI Swap & 1y & Mod Fol & ACT/360 & 1y & Mod Fol & ACT/360 & USSO4 Curncy \\
    USSO5 & 2d & 5y & 1.39600 & OI Swap & 1y & Mod Fol & ACT/360 & 1y & Mod Fol & ACT/360 & USSO5 Curncy \\
    USSO6 & 2d & 6y & 1.56000 & OI Swap & 1y & Mod Fol & ACT/360 & 1y & Mod Fol & ACT/360 & USSO6 Curncy \\
    USSO7 & 2d & 7y & 1.69700 & OI Swap & 1y & Mod Fol & ACT/360 & 1y & Mod Fol & ACT/360 & USSO7 Curncy \\
    USSO8 & 2d & 8y & 1.80300 & OI Swap & 1y & Mod Fol & ACT/360 & 1y & Mod Fol & ACT/360 & USSO8 Curncy \\
    USSO9 & 2d & 9y & 1.89300 & OI Swap & 1y & Mod Fol & ACT/360 & 1y & Mod Fol & ACT/360 & USSO9 Curncy \\
    USSO10 & 2d & 10y & 1.97800 & OI Swap & 1y & Mod Fol & ACT/360 & 1y & Mod Fol & ACT/360 & USSO10 Curncy \\
    USSO12 & 2d & 12y & 2.08400 & OI Swap & 1y & Mod Fol & ACT/360 & 1y & Mod Fol & ACT/360 & USSO12 Curncy \\
    USSO15 & 2d & 15y & 2.20300 & OI Swap & 1y & Mod Fol & ACT/360 & 1y & Mod Fol & ACT/360 & USSO15 Curncy \\
    USSO20 & 2d & 20y & 2.32000 & OI Swap & 1y & Mod Fol & ACT/360 & 1y & Mod Fol & ACT/360 & USSO20 Curncy \\
    USSO25 & 2d & 25y & 2.38100 & OI Swap & 1y & Mod Fol & ACT/360 & 1y & Mod Fol & ACT/360 & USSO25 Curncy \\
    USSO30 & 2d & 30y & 2.40800 & OI Swap & 1y & Mod Fol & ACT/360 & 1y & Mod Fol & ACT/360 & USSO30 Curncy \\
    USSO40 & 2d & 40y & 2.42800 & OI Swap & 1y & Mod Fol & ACT/360 & 1y & Mod Fol & ACT/360 & USSO40 Curncy \\
    USSO50 & 2d & 50y & 2.41800 & OI Swap & 1y & Mod Fol & ACT/360 & 1y & Mod Fol & ACT/360 & USSO50 Curncy \\ \hline
    \end{tabular}
    \caption{In this table, we present the Bloomberg market data used for the construction of the USD discount curve (see Section \ref{USDDiscountCurveSection}). Quotes are End of Day prices from May 29, 2015. The quotes were taken from a Bloomberg Terminal on June 21, 2015. $^{\ast}$The 18m OIS swap convention has an upfront short stub, i.e., each leg has two coupons: the first with an accrual period of 6m and the second with an accrual period of 12m (1y).}
    \end{sidewaystable}
    
    \FloatBarrier
    \begin{table}
    \footnotesize
    \centering
    \begin{tabular}{|c c c c c c|}
      \hline
     \multicolumn{1}{|>{\centering\arraybackslash}m{10mm}}{\textbf{Maturity}} & 
     \multicolumn{1}{>{\centering\arraybackslash}m{10mm}}{\textbf{Date}} &
     \multicolumn{1}{>{\centering\arraybackslash}m{15mm}}{\textbf{Rate}} & 
     \multicolumn{1}{>{\centering\arraybackslash}m{15mm}}{\textbf{Price}} & 
     \multicolumn{1}{>{\centering\arraybackslash}m{20mm}}{\textbf{Discount Factor}} &
     \multicolumn{1}{>{\centering\arraybackslash}m{30mm}|}{\textbf{Continuously Compounded Zero Rate}}\\ \hline
     
O/N & 2015-06-01 & 0.12100 & - & 0.99999 & 0.12268 \\
T/N & 2015-06-02 & 0.12100 & - & 0.99999 & 0.12268 \\
1w & 2015-06-09 & 0.15025 & - & 0.99996 & 0.14155 \\ \hline
3m & 2015-09-02 & 0.28375 & - & 0.99926 & 0.28072 \\ \hline
Jun-15 & 2015-09-17 & 0.29250 & 99.70750 & 0.99915 & 0.28126 \\ 
Sep-15 & 2015-12-16 & 0.40000 & 99.60000 & 0.99814 & 0.33740 \\
Dec-15 & 2016-03-16 & 0.57000 & 99.43000 & 0.99671 & 0.41216 \\
Mar-16 & 2016-06-16 & 0.74500 & 99.25500 & 0.99482 & 0.49404 \\
Jun-16 & 2016-09-15 & 0.95500 & 99.04500 & 0.99242 & 0.58474 \\
Sep-16 & 2016-12-21 & 1.17000 & 98.83000 & 0.98931 & 0.68568 \\
Dec-16 & 2017-03-21 & 1.37500 & 98.62500 & 0.98594 & 0.78050 \\
Mar-17 & 2017-06-15 & 1.53000 & 98.47000 & 0.98236 & 0.86837 \\ \hline
1y & 2016-06-02 & 0.48050 & - & 0.99510 & 0.48418 \\
2y & 2017-06-02 & 0.84842 & - & 0.98290 & 0.85636 \\
3y & 2018-06-04 & 1.16383 & - & 0.96555 & 1.16106 \\
4y & 2019-06-03 & 1.42070 & - & 0.94449 & 1.42179 \\
5y & 2020-06-02 & 1.62899 & - & 0.92130 & 1.63408 \\
6y & 2021-06-02 & 1.79750 & - & 0.89692 & 1.80811 \\
7y & 2022-06-02 & 1.93882 & - & 0.87180 & 1.95539 \\
8y & 2023-06-02 & 2.04908 & - & 0.84702 & 2.07106 \\
9y & 2024-06-03 & 2.14464 & - & 0.82208 & 2.17160 \\
10y & 2025-06-02 & 2.22420 & - & 0.79765 & 2.25651 \\
12y & 2027-06-02 & 2.34238 & - & 0.75091 & 2.38340 \\
15y & 2030-06-03 & 2.45548 & - & 0.68634 & 2.50513 \\
20y & 2035-06-04 & 2.56606 & - & 0.59100 & 2.62570 \\
25y & 2040-06-04 & 2.61595 & - & 0.51161 & 2.67695 \\
30y & 2045-06-02 & 2.64250 & - & 0.44415 & 2.70236 \\
40y & 2055-06-02 & 2.65850 & - & 0.33828 & 2.70714 \\
50y & 2065-06-02 & 2.64200 & - & 0.26376 & 2.66299 \\
60y & 2075-06-03 & 2.63134 & - & 0.20565 & 2.63355 \\
    \hline
    \end{tabular}
    \caption{SuperDerivatives market data of the USD LIBOR 3m Swap Curve (see Section \ref{USDForwardCurveSection}). Quotes are End of Day prices from May 29, 2015. The quotes were taken from \url{www.superderivatives.com} on June 21, 2015.}
    \label{OIS:Bloomberg}
    \end{table}
    \FloatBarrier
    
    \begin{table}
    \footnotesize
    \centering
    \begin{tabular}{|c c c c c|}
      \hline
     \multicolumn{1}{|>{\centering\arraybackslash}m{10mm}}{\textbf{Maturity}} & 
     \multicolumn{1}{>{\centering\arraybackslash}m{10mm}}{\textbf{Date}} &
     \multicolumn{1}{>{\centering\arraybackslash}m{20mm}}{\textbf{Swap Rate}} & 
     \multicolumn{1}{>{\centering\arraybackslash}m{20mm}}{\textbf{Discount Factor}} &
     \multicolumn{1}{>{\centering\arraybackslash}m{30mm}|}{\textbf{Continuously Compounded Zero Rate}}\\ \hline
     
    O/N & 2015-06-01 & 0.12100 & 0.99999 & 0.12268 \\
    T/N & 2015-06-02 & 0.12100 & 0.99999 & 0.12268 \\ \hline
    1w & 2015-06-09 & 0.15025 & 0.99996 & 0.14155 \\
    1m & 2015-07-02 & 0.18400 & 0.99983 & 0.17903 \\
    1y & 2016-06-02 & 0.37550 & 0.99619 & 0.37706 \\
    2y & 2017-06-02 & 0.72717 & 0.98554 & 0.72318 \\
    3y & 2018-06-04 & 1.03258 & 0.96936 & 1.03078 \\
    4y & 2019-06-03 & 1.28570 & 0.94960 & 1.28766 \\
    5y & 2020-06-02 & 1.49149 & 0.92763 & 1.49747 \\
    6y & 2021-06-02 & 1.65950 & 0.90435 & 1.67101 \\
    7y & 2022-06-02 & 1.80258 & 0.88011 & 1.82019 \\
    8y & 2023-06-02 & 1.91608 & 0.85601 & 1.93940 \\
    9y & 2024-06-03 & 2.01564 & 0.83157 & 2.04436 \\
    10y & 2025-06-02 & 2.09670 & 0.80775 & 2.13088 \\
    12y & 2027-06-02 & 2.22138 & 0.76167 & 2.26509 \\
    15y & 2030-06-03 & 2.33923 & 0.69809 & 2.39210 \\
    20y & 2035-06-04 & 2.44856 & 0.60480 & 2.51052 \\
    25y & 2040-06-04 & 2.49895 & 0.52652 & 2.56218 \\
    30y & 2045-06-02 & 2.52250 & 0.46037 & 2.58294 \\
    40y & 2055-06-02 & 2.53249 & 0.35621 & 2.57809 \\
    50y & 2065-06-02 & 2.50999 & 0.28290 & 2.52299 \\
    60y & 2075-06-03 & 2.49331 & 0.22523 & 2.48214 \\
    \hline
    \end{tabular}
    \caption{SuperDerivatives market data of the USD LIBOR 1m Swap Curve (see Section \ref{USDForwardCurveSection}). Quotes are End of Day prices from May 29, 2015. The quotes were taken from \url{www.superderivatives.com} on June 21, 2015.}
    \end{table}
    \FloatBarrier   
    
    \begin{sidewaystable}
    \footnotesize
    \centering
    \begin{tabular}{|l c c c c c c c c c c r|}
      \hline
     \multicolumn{1}{|>{\centering\arraybackslash}m{10mm}}{\textbf{Index}} & 
     \multicolumn{1}{>{\centering\arraybackslash}m{10mm}}{\textbf{Spot Lag}} &
     \multicolumn{1}{>{\centering\arraybackslash}m{10mm}}{\textbf{Tenor}} & 
     \multicolumn{1}{>{\centering\arraybackslash}m{10mm}}{\textbf{Rate}} &
     \multicolumn{1}{>{\centering\arraybackslash}m{10mm}}{\textbf{Type}} &
     \multicolumn{1}{>{\centering\arraybackslash}m{10mm}}{\textbf{Floating Period}} &
     \multicolumn{1}{>{\centering\arraybackslash}m{15mm}}{\textbf{Floating Day Rolling}} &
     \multicolumn{1}{>{\centering\arraybackslash}m{10mm}}{\textbf{Floating Day Count}} &
     \multicolumn{1}{>{\centering\arraybackslash}m{10mm}}{\textbf{Fixed Period}} &
     \multicolumn{1}{>{\centering\arraybackslash}m{15mm}}{\textbf{Fixed Day Rolling}} &
     \multicolumn{1}{>{\centering\arraybackslash}m{10mm}}{\textbf{Fixed Day Count}} &
     \multicolumn{1}{>{\centering\arraybackslash}m{15mm}|}{\textbf{Bloomberg Ticker}}\\ \hline
     
    US00O/N & 0d & 1d & 0.12100 & Cash & - & Following & ACT/360 & - & - & - & US00O/N Index \\
    USDR2T  & 1d & 1d & 0.17000 & Cash & - & Following & ACT/360 & - & - & - & USDR2T Index \\ \hline
    US0003M & 2d & 1m & 0.28375 & Cash & 3m & Mod Fol & ACT/360 & - & - & - & US0003M Index \\ \hline
    USSWF & 2d & 6m & 0.33700 & Swap & 3m & Mod Fol & ACT/360 & 6m & Mod Fol & 30/360 & USSWF Curncy \\
    USSWI & 2d & 9m & 0.40630 & Swap & 3m & Mod Fol & ACT/360 & 6m & Mod Fol & 30/360 & USSWI Curncy \\
    USSW1 & 2d & 1y & 0.48730 & Swap & 3m & Mod Fol & ACT/360 & 6m & Mod Fol & 30/360 & USSW1 Curncy \\
    USSW1C & 2d & 15m & 0.57620 & Swap & 3m & Mod Fol & ACT/360 & 6m & Mod Fol & 30/360 & USSW1C Curncy \\
    USSW1F & 2d & 18m & 0.66570 & Swap & 3m & Mod Fol & ACT/360 & 6m & Mod Fol & 30/360 & USSW1F Curncy \\
    USSW1I & 2d & 21m & 0.75710 & Swap & 3m & Mod Fol & ACT/360 & 6m & Mod Fol & 30/360 & USSW1I Curncy \\
    USSW2 & 2d & 2y & 0.84950 & Swap & 3m & Mod Fol & ACT/360 & 6m & Mod Fol & 30/360 & USSW2 Curncy \\
    USSW3 & 2d & 3y & 1.17050 & Swap & 3m & Mod Fol & ACT/360 & 6m & Mod Fol & 30/360 & USSW3 Curncy \\
    USSW4 & 2d & 4y & 1.42800 & Swap & 3m & Mod Fol & ACT/360 & 6m & Mod Fol & 30/360 & USSW4 Curncy \\
    USSW5 & 2d & 5y & 1.63350 & Swap & 3m & Mod Fol & ACT/360 & 6m & Mod Fol & 30/360 & USSW5 Curncy \\
    USSW6 & 2d & 6y & 1.80150 & Swap & 3m & Mod Fol & ACT/360 & 6m & Mod Fol & 30/360 & USSW6 Curncy \\
    USSW7 & 2d & 7y & 1.93750 & Swap & 3m & Mod Fol & ACT/360 & 6m & Mod Fol & 30/360 & USSW7 Curncy \\
    USSW8 & 2d & 8y & 2.04880 & Swap & 3m & Mod Fol & ACT/360 & 6m & Mod Fol & 30/360 & USSW8 Curncy \\
    USSW9 & 2d & 9y & 2.13850 & Swap & 3m & Mod Fol & ACT/360 & 6m & Mod Fol & 30/360 & USSW9 Curncy \\
    USSW10 & 2d & 10y & 2.21450 & Swap & 3m & Mod Fol & ACT/360 & 6m & Mod Fol & 30/360 & USSW10 Curncy \\
    USSW12 & 2d & 12y & 2.33350 & Swap & 3m & Mod Fol & ACT/360 & 6m & Mod Fol & 30/360 & USSW12 Curncy \\
    USSW15 & 2d & 15y & 2.45250 & Swap & 3m & Mod Fol & ACT/360 & 6m & Mod Fol & 30/360 & USSW15 Curncy \\
    USSW20 & 2d & 20y & 2.56800 & Swap & 3m & Mod Fol & ACT/360 & 6m & Mod Fol & 30/360 & USSW20 Curncy \\
    USSW25 & 2d & 25y & 2.62150 & Swap & 3m & Mod Fol & ACT/360 & 6m & Mod Fol & 30/360 & USSW25 Curncy \\
    USSW30 & 2d & 30y & 2.64950 & Swap & 3m & Mod Fol & ACT/360 & 6m & Mod Fol & 30/360 & USSW30 Curncy \\
    USSW40 & 2d & 40y & 2.66700 & Swap & 3m & Mod Fol & ACT/360 & 6m & Mod Fol & 30/360 & USSW40 Curncy \\
    USSW50 & 2d & 50y & 2.64950 & Swap & 3m & Mod Fol & ACT/360 & 6m & Mod Fol & 30/360 & USSW50 Curncy \\ \hline

    \end{tabular}
    \caption{In this table, we present the Bloomberg market data used for the construction of the LIBOR 3m forward curve (see Section \ref{USDForwardCurveSection}). Quotes are End of Day prices from May 29, 2015. The quotes were taken from a Bloomberg Terminal on June 21, 2015.}
    \label{LIBOR3M:Bloomberg}
    \end{sidewaystable}

    \begin{sidewaystable}
    \scriptsize
    \centering
    \begin{tabular}{|c c c c c c c c c c c r|}
      \hline
     \multicolumn{1}{|>{\centering\arraybackslash}m{10mm}}{\textbf{Index}} & 
     \multicolumn{1}{>{\centering\arraybackslash}m{10mm}}{\textbf{Spot Lag}} &
     \multicolumn{1}{>{\centering\arraybackslash}m{10mm}}{\textbf{Tenor}} & 
     \multicolumn{1}{>{\centering\arraybackslash}m{10mm}}{\textbf{Rate}} &
     \multicolumn{1}{>{\centering\arraybackslash}m{10mm}}{\textbf{Type}} &
     \multicolumn{1}{>{\centering\arraybackslash}m{10mm}}{\textbf{Leg 1 Period}} &
     \multicolumn{1}{>{\centering\arraybackslash}m{15mm}}{\textbf{Leg 1 Day Rolling}} &
     \multicolumn{1}{>{\centering\arraybackslash}m{10mm}}{\textbf{Leg 1 Day Count}} &
     \multicolumn{1}{>{\centering\arraybackslash}m{10mm}}{\textbf{Leg 2 Period}} &
     \multicolumn{1}{>{\centering\arraybackslash}m{15mm}}{\textbf{Leg 2 Day Rolling}} &
     \multicolumn{1}{>{\centering\arraybackslash}m{10mm}}{\textbf{Leg 2 Day Count}} &
     \multicolumn{1}{>{\centering\arraybackslash}m{12mm}|}{\textbf{Bloomberg Ticker}}\\ \hline
     
    US00O/N & 0d & 1d & 0.12100 & Cash & - & Following & ACT/360 & - & - & - & US00O/N Index \\
    USDR2T  & 1d & 1d & 0.17000 & Cash & - & Following & ACT/360 & - & - & - & USDR2T Index \\ \hline
    US0001M & 2d & 1m & 0.18400 & Swap & 1m & Mod Fol & ACT/360 & - & - & - & US0001M Index \\ \hline
    USSWBV1 & 2d & 2m & 0.18600 & Swap & 1m & Mod Fol & ACT/360 & 1y (Fixed) & Mod Fol & ACT/360 & USSWBV1 Curncy \\
    USSWCV1 & 2d & 3m & 0.19300 & Swap & 1m & Mod Fol & ACT/360 & 1y (Fixed) & Mod Fol & ACT/360 & USSWCV1 Curncy \\
    USSWDV1 & 2d & 4m & 0.19900 & Swap & 1m & Mod Fol & ACT/360 & 1y (Fixed) & Mod Fol & ACT/360 & USSWDV1 Curncy \\
    USSWEV1 & 2d & 5m & 0.22000 & Swap & 1m & Mod Fol & ACT/360 & 1y (Fixed) & Mod Fol & ACT/360 & USSWEV1 Curncy \\
    USSWFV1 & 2d & 6m & 0.23600 & Swap & 1m & Mod Fol & ACT/360 & 1y (Fixed) & Mod Fol & ACT/360 & USSWFV1 Curncy \\
    USSWGV1 & 2d & 7m & 0.26000 & Swap & 1m & Mod Fol & ACT/360 & 1y (Fixed) & Mod Fol & ACT/360 & USSWGV1 Curncy \\
    USSWHV1 & 2d & 8m & 0.28100 & Swap & 1m & Mod Fol & ACT/360 & 1y (Fixed) & Mod Fol & ACT/360 & USSWHV1 Curncy \\
    USSWIV1 & 2d & 9m & 0.30130 & Swap & 1m & Mod Fol & ACT/360 & 1y (Fixed) & Mod Fol & ACT/360 & USSWIV1 Curncy \\
    USSWJV1 & 2d & 10m & 0.32400 & Swap & 1m & Mod Fol & ACT/360 & 1y (Fixed) & Mod Fol & ACT/360 & USSWJV1 Curncy \\
    USSWKV1 & 2d & 11m & 0.34900 & Swap & 1m & Mod Fol & ACT/360 & 1y (Fixed) & Mod Fol & ACT/360 & USSWKV1 Curncy \\ \hline
    USBA1 & 2d & 1y & 0.10369 & Tenor Swap & 3m & Mod Fol & ACT/360 & 3m (Reset 1m) & Mod Fol & ACT/360 & USBA1 Curncy \\
    USBA1F & 2d & 18m & 0.11156 & Tenor Swap & 3m & Mod Fol & ACT/360 & 3m (Reset 1m) & Mod Fol & ACT/360 & USBA1F Curncy \\
    USBA2 & 2d & 2y & 0.12011 & Tenor Swap & 3m & Mod Fol & ACT/360 & 3m (Reset 1m) & Mod Fol & ACT/360 & USBA2 Curncy \\
    USBA3 & 2d & 3y & 0.13154 & Tenor Swap & 3m & Mod Fol & ACT/360 & 3m (Reset 1m) & Mod Fol & ACT/360 & USBA3 Curncy \\
    USBA4 & 2d & 4y & 0.13563 & Tenor Swap & 3m & Mod Fol & ACT/360 & 3m (Reset 1m) & Mod Fol & ACT/360 & USBA4 Curncy \\
    USBA5 & 2d & 5y & 0.13805 & Tenor Swap & 3m & Mod Fol & ACT/360 & 3m (Reset 1m) & Mod Fol & ACT/360 & USBA5 Curncy \\
    USBA6 & 2d & 6y & 0.13811 & Tenor Swap & 3m & Mod Fol & ACT/360 & 3m (Reset 1m) & Mod Fol & ACT/360 & USBA6 Curncy \\
    USBA7 & 2d & 7y & 0.13625 & Tenor Swap & 3m & Mod Fol & ACT/360 & 3m (Reset 1m) & Mod Fol & ACT/360 & USBA7 Curncy \\
    USBA8 & 2d & 8y & 0.13337 & Tenor Swap & 3m & Mod Fol & ACT/360 & 3m (Reset 1m) & Mod Fol & ACT/360 & USBA8 Curncy \\
    USBA9 & 2d & 9y & 0.12987 & Tenor Swap & 3m & Mod Fol & ACT/360 & 3m (Reset 1m) & Mod Fol & ACT/360 & USBA9 Curncy \\
    USBA10 & 2d & 10y & 0.12688 & Tenor Swap & 3m & Mod Fol & ACT/360 & 3m (Reset 1m) & Mod Fol & ACT/360 & USBA10 Curncy \\
    USBA12 & 2d & 12y & 0.12177 & Tenor Swap & 3m & Mod Fol & ACT/360 & 3m (Reset 1m) & Mod Fol & ACT/360 & USBA12 Curncy \\
    USBA15 & 2d & 15y & 0.11802 & Tenor Swap & 3m & Mod Fol & ACT/360 & 3m (Reset 1m) & Mod Fol & ACT/360 & USBA15 Curncy \\
    USBA20 & 2d & 20y & 0.11801 & Tenor Swap & 3m & Mod Fol & ACT/360 & 3m (Reset 1m) & Mod Fol & ACT/360 & USBA20 Curncy \\
    USBA25 & 2d & 25y & 0.11875 & Tenor Swap & 3m & Mod Fol & ACT/360 & 3m (Reset 1m) & Mod Fol & ACT/360 & USBA25 Curncy \\
    USBA30 & 2d & 30y & 0.11938 & Tenor Swap & 3m & Mod Fol & ACT/360 & 3m (Reset 1m) & Mod Fol & ACT/360 & USBA30 Curncy \\
\hline

    \end{tabular}
    \caption{In this table, we present the Bloomberg market data used for the construction of the LIBOR 1m forward curve (see Section \ref{USDForwardCurveSection}). Recall that plain vanilla TSs (LIBOR 1m vs LIBOR 3m) have a payment frequency of 3 months (quarterly) for both legs. However, in the LIBOR 1m leg, the fixing resets on a monthly basis and then is compounded to define the quarterly payment. Quotes are End of Day prices from May 29, 2015. The quotes were taken from a Bloomberg Terminal on June 21, 2015.}
    \label{LIBOR1M:Bloomberg}
    \end{sidewaystable}

    \begin{table}[h]
    \footnotesize
    \centering
    \begin{tabular}{|c c c|}
      \hline
     \multicolumn{1}{|>{\centering\arraybackslash}m{10mm}}{\textbf{Maturity}} & 
     \multicolumn{1}{>{\centering\arraybackslash}m{10mm}}{\textbf{Date}} &
     \multicolumn{1}{>{\centering\arraybackslash}m{30mm}|}{\textbf{Basis Spread (\%)}}\\ \hline
    84d    & 2015-08-24 & 0.9015\\
    168d   & 2015-11-17 & 0.8530\\
    252d   & 2016-02-08 & 0.9030\\
    364d   & 2016-05-31 & 0.6900\\
    728d   & 2017-05-30 & 0.9135\\
    1092d  & 2018-05-29 & 1.0160\\
    1456d  & 2019-05-28 & 1.0520\\
    1820d  & 2020-05-26 & 1.1070\\
    2548d  & 2022-05-23 & 1.0800\\
    3640d  & 2025-05-19 & 1.1000\\
    5460d  & 2030-05-13 & 1.0800\\
    7280d  & 2035-05-07 & 1.0800\\
    10920d & 2045-04-24 & 1.0800\\
    \hline
    \end{tabular}
    \caption{SuperDerivatives market data of the USDMXN Cross-Currency Swaps (Constant Notional). Quotes are End of Day prices from May 29, 2015. The quotes were taken from \url{www.superderivatives.com} on June 21, 2015.}
    \end{table}
    
        \begin{table}[h]
    \footnotesize
    \centering
    \begin{tabular}{|c c c c c|}
      \hline
     \multicolumn{1}{|>{\centering\arraybackslash}m{10mm}}{\textbf{Maturity}} & 
     \multicolumn{1}{>{\centering\arraybackslash}m{10mm}}{\textbf{Date}} &
     \multicolumn{1}{>{\centering\arraybackslash}m{20mm}}{\textbf{Swap Rate}} & 
     \multicolumn{1}{>{\centering\arraybackslash}m{20mm}}{\textbf{Discount Factor}} &
     \multicolumn{1}{>{\centering\arraybackslash}m{30mm}|}{\textbf{Continuously Compounded Zero Rate}}\\ \hline
ON & 2015-06-01 & 2.6095 & 0.99978 & 2.64546\\
1d & 2015-06-29 & 3.3000 & 0.99722 & 3.27418\\ \hline
84d & 2015-08-24 & 3.3250 & 0.99207 & 3.34202\\
168d & 2015-11-17 & 3.4350 & 0.98381 & 3.46406\\
252d & 2016-02-08 & 3.5600 & 0.97520 & 3.59466\\
364d & 2016-05-30 & 3.7300 & 0.96280 & 3.77051\\
728d & 2017-05-29 & 4.2350 & 0.91758 & 4.29479\\
1092d & 2018-05-28 & 4.6700 & 0.86708 & 4.75416\\
1456d & 2019-05-27 & 5.0500 & 0.81347 & 5.16457\\
1820d & 2020-05-25 & 5.3600 & 0.75953 & 5.50727\\
2548d & 2022-05-23 & 5.8650 & 0.65342 & 6.08870\\
3640d & 2025-05-19 & 6.2400 & 0.52088 & 6.53483\\
5460d & 2030-05-13 & 6.6300 & 0.34754 & 7.06126\\
7280d & 2035-05-07 & 6.8250 & 0.23028 & 7.35954\\
10920d & 2045-04-24 & 7.0550 & 0.09549 & 7.84857\\
    \hline
    \end{tabular}
    \caption{SuperDerivatives market data of the TIIE 28d Swap Curve. Quotes are End of Day prices from May 29, 2015. The quotes were taken from \url{www.superderivatives.com} on June 21, 2015.}
    \end{table}

    \begin{sidewaystable}
    \scriptsize
    \centering
    \begin{tabular}{|l c c c c c c c c c c r|}
      \hline
     \multicolumn{1}{|>{\centering\arraybackslash}m{10mm}}{\textbf{Index}} & 
     \multicolumn{1}{>{\centering\arraybackslash}m{10mm}}{\textbf{Spot Lag}} &
     \multicolumn{1}{>{\centering\arraybackslash}m{10mm}}{\textbf{Tenor}} & 
     \multicolumn{1}{>{\centering\arraybackslash}m{10mm}}{\textbf{Rate}} &
     \multicolumn{1}{>{\centering\arraybackslash}m{10mm}}{\textbf{Type}} &
     \multicolumn{1}{>{\centering\arraybackslash}m{10mm}}{\textbf{Floating Period}} &
     \multicolumn{1}{>{\centering\arraybackslash}m{15mm}}{\textbf{Floating Day Rolling}} &
     \multicolumn{1}{>{\centering\arraybackslash}m{10mm}}{\textbf{Floating Day Count}} &
     \multicolumn{1}{>{\centering\arraybackslash}m{10mm}}{\textbf{Fixed Period}} &
     \multicolumn{1}{>{\centering\arraybackslash}m{15mm}}{\textbf{Fixed Day Rolling}} &
     \multicolumn{1}{>{\centering\arraybackslash}m{10mm}}{\textbf{Fixed Day Count}} &
     \multicolumn{1}{>{\centering\arraybackslash}m{15mm}|}{\textbf{Bloomberg Ticker}}\\ \hline
     
    MXONBR & 0d & 1d & 3.05000 & Cash & - & Following & ACT/360 & - & - & - & MXONBR Index \\
    MXTNBR  & 1d & 1d & 3.05000 & Cash & - & Following & ACT/360 & - & - & - & MXTNBR Index \\ 
    MXIBTIIE  & 1d & 28d & 3.29500 & Cash & - & Following & ACT/360 & - & - & - & MXIBTIIE Index \\ \hline
    MPSWC & 1d & 84d & 3.32000 & Swap & 28d & Following & ACT/360 & 28d & Following & ACT/360 & MPSWC Curncy \\
    MPSWF & 1d & 168d & 3.43000 & Swap & 28d & Following & ACT/360 & 28d & Following & ACT/360 & MPSWF Curncy \\
    MPSWI & 1d & 252d & 3.56200 & Swap & 28d & Following & ACT/360 & 28d & Following & ACT/360 & MPSWI Curncy \\
    MPSW1A & 1d & 364d & 3.73500 & Swap & 28d & Following & ACT/360 & 28d & Following & ACT/360 & MPSW1A Curncy \\
    MPSW2B & 1d & 728d & 4.23600 & Swap & 28d & Following & ACT/360 & 28d & Following & ACT/360 & MPSW2B Curncy \\
    MPSW3C & 1d & 1092d & 4.67100 & Swap & 28d & Following & ACT/360 & 28d & Following & ACT/360 & MPSW3C Curncy \\
    MPSW4D & 1d & 1456d & 5.05100 & Swap & 28d & Following & ACT/360 & 28d & Following & ACT/360 & MPSW4D Curncy \\
    MPSW5E & 1d & 1820d & 5.36100 & Swap & 28d & Following & ACT/360 & 28d & Following & ACT/360 & MPSW5E Curncy \\
    MPSW7G & 1d & 2548d & 5.86300 & Swap & 28d & Following & ACT/360 & 28d & Following & ACT/360 & MPSW7G Curncy \\
    MPSW10K & 1d & 3640d & 6.23800 & Swap & 28d & Following & ACT/360 & 28d & Following & ACT/360 & MPSW10K Curncy \\
    MPSW156M & 1d & 4368d & 6.42800 & Swap & 28d & Following & ACT/360 & 28d & Following & ACT/360 & MPSW156M Curncy \\
    MPSW16C & 1d & 5460d & 6.63200 & Swap & 28d & Following & ACT/360 & 28d & Following & ACT/360 & MPSW16C Curncy \\
    MPSW21H & 1d & 7280d & 6.83100 & Swap & 28d & Following & ACT/360 & 28d & Following & ACT/360 & MPSW21H Curncy \\
    MPSW32F & 1d & 10920d & 7.02100 & Swap & 28d & Following & ACT/360 & 28d & Following & ACT/360 & MPSW32F Curncy \\ \hline

    \end{tabular}
    \caption{In this table, we present the Bloomberg market data of the TIIE 28d IRSs used for the construction of the MXN-discount curve (collateralized in USD) and the TIIE 28d forward curve. Quotes are End of Day prices from May 29, 2015. The quotes were taken from a Bloomberg Terminal on June 21, 2015.}
    \label{TIIE28D:Bloomberg}
    \end{sidewaystable}
    
    \begin{sidewaystable}
    \footnotesize
    \centering
    \begin{tabular}{|l c c c c c c c c c c r|}
      \hline
     \multicolumn{1}{|>{\centering\arraybackslash}m{10mm}}{\textbf{Index}} & 
     \multicolumn{1}{>{\centering\arraybackslash}m{10mm}}{\textbf{Spot Lag}} &
     \multicolumn{1}{>{\centering\arraybackslash}m{10mm}}{\textbf{Tenor}} & 
     \multicolumn{1}{>{\centering\arraybackslash}m{10mm}}{\textbf{Rate}} &
     \multicolumn{1}{>{\centering\arraybackslash}m{10mm}}{\textbf{Type}} &
     \multicolumn{1}{>{\centering\arraybackslash}m{10mm}}{\textbf{Leg 1 Period}} &
     \multicolumn{1}{>{\centering\arraybackslash}m{15mm}}{\textbf{Leg 1 Day Rolling}} &
     \multicolumn{1}{>{\centering\arraybackslash}m{10mm}}{\textbf{Leg 1 Day Count}} &
     \multicolumn{1}{>{\centering\arraybackslash}m{10mm}}{\textbf{Leg 2 Period}} &
     \multicolumn{1}{>{\centering\arraybackslash}m{15mm}}{\textbf{Leg 2 Day Rolling}} &
     \multicolumn{1}{>{\centering\arraybackslash}m{10mm}}{\textbf{Leg 2 Day Count}} &
     \multicolumn{1}{>{\centering\arraybackslash}m{15mm}|}{\textbf{Bloomberg Ticker}}\\ \hline
     
    MPBSC & 2d & 84d & 0.54000 & Swap & 28d & Following & ACT/360 & 28d & Following & ACT/360 & MPBSC Curncy \\
    MPBSF & 2d & 168d & 0.59000 & Swap & 28d & Following & ACT/360 & 28d & Following & ACT/360 & MPBSF Curncy \\
    MPBSI & 2d & 252d & 0.64000 & Swap & 28d & Following & ACT/360 & 28d & Following & ACT/360 & MPBSI Curncy \\
    MPBS1A & 2d & 364d & 0.68000 & Swap & 28d & Following & ACT/360 & 28d & Following & ACT/360 & MPBS1A Curncy \\
    MPBS2B & 2d & 728d & 0.72000 & Swap & 28d & Following & ACT/360 & 28d & Following & ACT/360 & MPBS2B Curncy \\
    MPBS3C & 2d & 1092d & 0.81000 & Swap & 28d & Following & ACT/360 & 28d & Following & ACT/360 & MPBS3C Curncy \\
    MPBS4D & 2d & 1456d & 0.88000 & Swap & 28d & Following & ACT/360 & 28d & Following & ACT/360 & MPBS4D Curncy \\
    MPBS5E & 2d & 1820d & 0.92000 & Swap & 28d & Following & ACT/360 & 28d & Following & ACT/360 & MPBS5E Curncy \\
    MPBS7G & 2d & 2548d & 1.00500 & Swap & 28d & Following & ACT/360 & 28d & Following & ACT/360 & MPBS7G Curncy \\
    MPBS10J & 2d & 3640d & 1.04000 & Swap & 28d & Following & ACT/360 & 28d & Following & ACT/360 & MPBS10J Curncy \\
    MPBS13 & 2d & 4368d & 1.04000 & Swap & 28d & Following & ACT/360 & 28d & Following & ACT/360 & MPBS13 Curncy \\
    MPBS16C & 2d & 5460d & 1.02000 & Swap & 28d & Following & ACT/360 & 28d & Following & ACT/360 & MPBS16C Curncy \\
    MPBS21H & 2d & 7280d & 1.02500 & Swap & 28d & Following & ACT/360 & 28d & Following & ACT/360 & MPBS21H Curncy \\
    MPBS32F & 2d & 10920d & 1.02500 & Swap & 28d & Following & ACT/360 & 28d & Following & ACT/360 & MPBS32F Curncy \\ \hline

    \end{tabular}
    \caption{In this table, we present the Bloomberg market data of the USDMXM cnCXS used for the construction of the MXN-discount curve (collateralized in USD) and the TIIE 28d forward curve. Quotes are End of Day prices from May 29, 2015. The quotes were taken from a Bloomberg Terminal on June 21, 2015.}
    \label{USDMXNXCS:Bloomberg}
    \end{sidewaystable}

\end{document}